\documentclass[11pt,reqno]{amsart}
\usepackage[greek.polutoniko, english]{babel}
\usepackage{mathtools,amsmath,amssymb}
\usepackage{mathrsfs,dsfont}
\usepackage{enumitem}
\usepackage{soul,xcolor}
\usepackage{tikz-cd}
\usepackage{stmaryrd}
\usepackage{slashed}
\usepackage[utf8]{inputenc}
\usepackage{aligned-overset}
\usepackage[sort]{natbib}
\usepackage[pdfencoding=auto, psdextra]{hyperref}
\hypersetup{colorlinks=true,allcolors=[rgb]{0,0,0.6}}
\usepackage[hmargin=2.7cm,vmargin=2.6cm]{geometry}
\usepackage{comment}

\usepackage{dsfont}
\usepackage{amsthm}
\usepackage{xcolor}
\usepackage[normalem]{ulem}

\newcommand{\N}{\mathbb{N}}
\newcommand{\R}{\mathbb{R}}
\newcommand{\C}{\mathbb{C}}

\newcommand{\Id}{\mathds{1}}

\newcommand{\Hi}{\mathscr{H}}
\newcommand{\eps}{\varepsilon}

\newcommand\norm[1]{\left\lVert#1\right\rVert}
\newcommand\abs[1]{\left\lvert#1\right\rvert}

\newcommand{\di}{\mathrm{d}}
\newcommand\weyl[1]{ W_\eps\left( \frac{\sqrt{2}}{i \eps} #1 \right)}

\newcommand\ket[1]{\left| #1 \right\rangle}

\numberwithin{equation}{section}

\theoremstyle{plain} 
\newtheorem{theorem}{Theorem}[section]
\newtheorem{corollary}[theorem]{Corollary}
\newtheorem{lemma}[theorem]{Lemma}
\newtheorem{proposition}[theorem]{Proposition}
\newtheorem*{proposition*}{Proposition}
\newtheorem{thm}[theorem]{Theorem}

\newtheorem{lem}[theorem]{Lemma}
\newtheorem{prop}[theorem]{Proposition}

\theoremstyle{definition}
\newtheorem{definition}[theorem]{Definition}

\theoremstyle{remark}
\newtheorem{rem}[theorem]{Remark}
\newtheorem{remark}[theorem]{Remark}
\newtheorem*{remark*}{Remark}
\newtheorem*{remarks*}{Remarks}
\newtheorem{ex}[theorem]{Example}

\begin{document}
\tolerance=2000 \setlength{\emergencystretch}{1em}

\newenvironment{sistema}%
{\left\lbrace\begin{array}{@{}l@{}}}%
    {\end{array}\right.}
\title{Measuring a quantum system without problems}

\author[M.\ Falconi]{Marco Falconi} \address{Politecnico di
  Milano\\D-Mat\\P.zza da Vinci 32\\20133 Milano\\Italy}
\email{marco.falconi@polimi.it} \urladdr{https://www.mfmat.org/}

\author[C.\ Fermanian Kammerer]{Clotilde Fermanian Kammerer}
\address{LAREMA\\UMR 6093\\Université d'Angers, CNRS\\France}
\email{clotilde.fermanian@univ-angers.fr}

\author[R.\ Gautier]{Raphaël Gautier} \address{Université de
  Rennes\\CNRS\\IRMAR - UMR 6625\\F-35000 Rennes\\France \& Politecnico di
  Milano\\P.zza da Vinci 32\\20133 Milano\\Italy}
\email{raphael.gautier@univ-rennes.fr}
\email{raphael.gautier@polimi.it}

% \keywords{}

\date{\today}

\begin{abstract}
  The process of measuring quantum observables has been plagued, since the
  inception of quantum mechanics, by the so-called measurement problem: it is
  impossible to read a definite outcome on a quantum scale. In its
  mathematical formulation, the problem takes the form of a no-go theorem,
  preventing the realization of quantum mechanical measurement schemes that
  comply both with the measurement and unitarity axioms. Building upon the
  idea of measurement that dates back to the early days of quantum mechanics,
  we overcome this century-old problem by proving the existence of a
  measurement scheme in which the probe (a quantized field) undergoes a quantifiable
  semiclassical transition, thus allowing, within error limits, for
  both an objective readout of the measuring device and the compliance with
  measurement axioms.
\end{abstract}

\maketitle
%\tableofcontents

% \maketitle

% \setcounter{tocdepth}{1}
 
% \section{Introduction}
% \label{sec:introduction}

In N.~Bohr's interpretation of quantum mechanics, measurements play a
central and special role. They are central in attributing meaning to the
theory itself, which puts a crucial focus on the role of \emph{observables}:
these are the only quantities that have physical meaning, exactly because
they can be measured. Measurements are also special, because to be carried
out they require the interaction between the observed quantum system and an
observer that instead obeys the laws of \emph{classical physics}. Whether
this choice was only pragmatic or more fundamental is subject of great debate
in the physical and philosophical community \citep[see, \emph{e.g.}][and
references therein]{laudisa2024epjh}, nevertheless it stimulated a deep
reflection on the universality of quantum mechanics: if quantum mechanics is
a universal and fundamental theory, it shall be able to explain the process
of measurement within itself, and not          to such a sharp
quantum/classical cut between observed and observer.

The debate is exacerbated by the fact that Bohr himself never made any
attempt at formalizing measurement processes in mathematical terms; this came
soon after when J.~von Neumann, in his seminal 1932 book \citep{MR223138},
proposed a measurement scheme that is universally quantum. J.~von Neumann's
scheme is -- with few modifications and refinements -- the ``golden standard''
of measurement schemes, and its key features are universally accepted. There
is, however, a price to pay in having such a quantum universal measurement
scheme: one incurs in the well-known \emph{measurement problem}. As first
observed by von Neumann himself and later discussed by several authors --
notably, E.~P.~Wigner \citep{wigner1963ajp} -- the measurement problem
highlights the impossibility of objectively assigning definite values to
quantum mechanical objects, and it takes the form of a \emph{no-go theorem}
for the existence of fully quantum measurement schemes within the standard
framework of quantum mechanics. Even though this problem has been the subject
of thorough investigation, up to date, there is still no solution that does
not stray away very significantly from the standard framework (either by
giving up the universal character of unitary dynamics in isolated systems or
by drastically modifying the interpretation of quantum mechanical objects).

In this paper, we devise a measurement scheme that bridges von Neumann's
scheme with Bohr's prescription: our probe undergoes a (smooth)
\emph{quantum-to-classical transition}, in the form of a semiclassical
limit. We prove that thanks to such a transition (and to the adiabatic
coupling between the measured system and the probe), we can overcome the
measurement problem, and provide a projective quantum measurement scheme that
complies with all operational and foundational requirements of quantum
mechanics. To the best of our knowledge, this is the first successful and
fully rigorous attempt at a universal quantum measurement scheme that solves
the measurement problem, fully within the standard framework of quantum
mechanics, up to quantifiable and small errors (that are unavoidable in view
of the aforementioned no-go theorem).

For the sake of concreteness and clarity of exposition, we develop our scheme
for the measurement of energy in a two-level system, using a semiclassical
bosonic scalar field as a probe. This setting already models faithfully
enough the measurement of energy in a (two-level) atom or molecule through
induced fluorescence, but it shall be intended as a proof of concept: the
mathematical tools we use are general enough that analogous schemes shall
easily be devised to measure any observable with discrete spectrum through
the use of a generic semiclassical probe, and with suitable modifications it
should also model the measurement of observables with continuous spectrum.

\subsection*{Acknowledgments}
\label{sec:acknowledgements}

{\footnotesize The authors warmly thank R.~Carles, V.~Jak\v{s}ić, and
  M.~Zworski for fruitful discussions.  MF acknowledges the support of PNRR
  Italia Domani and Next Generation EU through the ICSC National Research
  Centre for High Performance Computing, Big Data and Quantum Computing. MF
  and RG also acknowledge the support: by the Italian Ministry of University
  and Research (MUR) through the grant “Dipartimento di Eccellenza 2023-2027”
  of Dipartimento di Matematica, Politecnico di Milano, and the PRIN 2022
  grant ``OpeN and Effective quantum Systems (ONES)''; and by the Research
  Institute for Mathematical Sciences (RIMS) of the University of Kyoto
  through the RIMS research project 2026 ``The Mathematical Roads to
  QFT''. CFK acknowledges the support of the Région Pays de la Loire via the
  Connect Talent Project HiFrAn 2022 07750, the grants ANR-23-CE40-0016
  (OpART) and ANR-25-CE40-7296 (La Gabare).  CFK and RG benefit from the
  France 2030 program, Centre Henri Lebesgue ANR-11-LABX-0020-01.}

\section{Measurement schemes and their existence}
\label{sec:measurement-schemes}

In this section, we define various concepts of measurement schemes: we start
from von Neumann's, and then introduce a formalization of Bohr's scheme. We
conclude by defining a semiclassical measurement scheme that links the two
together and state our main result: the existence of a semiclassical scheme
that complies with the requirements of quantum mechanics.

\subsection{von Neumann schemes}
\label{sec:meas-schem-foll}

Throughout this and the following subsections, we closely follow the
presentation in \citep{busch2016tmp}, which in turn is a modern take on von
Neumann's scheme \citep{MR223138}; we invite the reader to refer to the
former monograph for a complete and detailed bibliography on the topics
discussed here.

Let $\mathcal{S}$ be a quantum system, characterized by a Hilbert space $\mathscr{H}_{\mathcal{S}}$ and
states $\varrho$ (realized as positive trace class operators on $\mathscr{H}_{\mathcal{S}}$ with trace
one); furthermore, let $\mathcal{O}$ be an observable of $\mathcal{S}$ (realized as a
self-adjoint operator on $\mathscr{H}_{\mathcal{S}}$) we would like to measure. To measure $\mathcal{O}$,
we use a probe $\mathcal{P}$: the probe is another quantum system characterized by a
Hilbert space $\mathscr{H}_{\mathcal{P}}$, states $\varsigma$ and an observable of $\mathscr
H_{\mathcal P}$ called the \emph{pointer observable} $\mathcal{Z}$. The probe measures $\mathcal{O}$
by coupling to $\mathcal{S}$ through a unitary operator $U: \mathscr{H}_{\mathcal{S}}\otimes \mathscr{H}_{\mathcal{P}}\to \mathscr{H}_{\mathcal{S}}\otimes
\mathscr{H}_{\mathcal{P}}$, sometimes called \emph{measurement coupling}. This leads to the
following definition.
\begin{definition}[Measurement scheme]
  \label{def:1}
  Let $\mathcal{S}$ be a system (with an observable $\mathcal{O}$). A
  \emph{measurement scheme} $\mathscr{M}$ for $\mathcal{S}$ is a quadruple
  \begin{equation*}
    \mathscr{M}=(\mathscr{H}_{\mathcal{P}},\varsigma,\mathcal{Z},U)
  \end{equation*}
  where $\mathscr{H}_{\mathcal{P}}$ is the probe space, $\varsigma$ its
  zero-of-the-scale state, $\mathcal{Z}$ a pointer observable, and $U$ the
  measurement coupling.
\end{definition}
In this definition, the emphasis is placed on the probe $\mathcal{P}$, since that is
what we can engineer in experiments: in fact, the pointer observable $\mathcal{Z}$
induces the observable $\mathcal{O}$ that is measured in~$\mathcal{S}$. More precisely, given any
measurement scheme $\mathscr{M}$, there is a unique observable $\mathcal{O}_{\mathscr{M}}$ in $\mathcal{S}$ that such
a scheme measures \citep[see][Prop.\ 10.1 for details]{busch2016tmp}. It is
clear that we can devise several measurement schemes for the same system $\mathcal{S}$
(and, eventually, observable $\mathcal{O}$), however, the focus here will be on
devising a \emph{specific and realistic} scheme for a given observable,
rather than studying \emph{a priori} properties of measurement schemes;
nonetheless, such \emph{a priori} considerations are of great importance and
will serve as a strong motivation on the relevance of our work.

The outcome of the measurement $\mathscr{M}$ is given in $\mathcal{S}$ and
$\mathcal{P}$, respectively, by looking at the partial traces
\begin{gather*}
  \varrho_{\mathrm{f}}= \mathrm{tr}_{\mathscr{H}_{\mathcal{P}}} U(\varrho\otimes \varsigma)U^{*}\\
  \varsigma_{\mathrm{f}}= \mathrm{tr}_{\mathscr{H}_{\mathcal{S}}} U(\varrho\otimes \varsigma)U^{*}\;.
\end{gather*}
The state $\varrho_{\mathrm{f}}$ describes the system after the measurement,
while $\varsigma_{\mathrm{f}}$ encodes the probe readout. A first
foundational result on general measurements is that any observable can be
measured: given any observable $\mathcal{O}$ on $\mathcal{S}$, there exists a
measurement scheme $\mathscr{M}$ such that
$\mathcal{O}=\mathcal{O}_{\mathscr{M}}$.

Let us now focus on the case of a discrete observable $\mathcal{Z}$,
\emph{i.e.}\ let us suppose that it has \emph{discrete spectrum} (let us
further take it to be non-degenerate and finite, for simplicity of
exposition). The standard measurement scheme of a discrete observable is that
of a \emph{projective measurement}: after the readout of a given eigenvalue
$\lambda_{i}$ of $\mathcal{Z}$ , the system $\mathcal{S}$ is projected on the
corresponding eigensubspace. To formalize the projective measurement, it is
useful to introduce the spectral measures $\mathrm{d}\mathrm{Z}$ and
$\mathrm{d}\mathrm{O}$ associated respectively with the self-adjoint
operators $\mathcal{Z}$ and $\mathcal{O}$, and the notion of an
\emph{instrument} $\mathfrak{I}_{\mathscr{M}}$ associated with the
measurement scheme. The instrument is a map from the measurable subsets of
the spectrum of $\mathcal{Z}$ to the set of continuous linear operators on
the trace class operators on $\mathscr{H}_{\mathcal{S}}$ that describes how
reading the pointer observable affects $\mathcal{S}$: given a state~$\varrho$
of $\mathcal{S}$ and measurable $\Sigma \subset \sigma(\mathcal{Z})$,
\begin{equation*}
  \mathfrak{I}_{\mathscr{M}}(\Sigma)[\varrho]= \mathrm{tr}_{\mathscr{H}_{\mathcal{P}}}\Bigl(U(\varrho\otimes \varsigma)U^{*} \bigl(1\otimes \mathrm{Z}(\Sigma)\bigr)\Bigr)\;.
\end{equation*}
\begin{definition}[Projective measurement]
  \label{def:2}
  Let $\mathscr{M}=(\mathscr{H}_{\mathcal{P}},\varsigma,\mathcal{Z},U)$ be a measurement scheme for the observable $\mathcal{O}_{\mathscr{M}}$,
  with the spectral measure $\mathrm{d}\mathrm{O}$. Then $\mathscr{M}$ is
  \emph{projective} if and only if for any measurable $\Sigma\subset \sigma(\mathcal{Z})$ and $\varrho$ a
  state on $\mathcal{S}$,
  \begin{equation*}
    \mathfrak{I}_{\mathscr{M}}(\Sigma)[\varrho]=  \mathrm{O}(\Sigma)\varrho\,\mathrm{O}(\Sigma)\;.
  \end{equation*}
  % If the spectrum of $\mathcal{Z}$ is discrete and non-degenerate, this can
  % be rewritten as
  % \begin{equation*}
  %   \mathfrak{I}_{\mathscr{M}}(\lambda_i)[\varrho]=\mathrm{tr}_{\mathscr{H}_{\mathcal{P}}}\Bigl(U(\varrho\otimes \varsigma)U^{*} \bigl(1\otimes \mathrm{Z}_i\bigr)\Bigr)= p_i(\varrho)\, \mathrm{O}_i\;,
  % \end{equation*}
  % where $\mathrm{O}_i$ is the rank one orthogonal projection on the
  % eigensubspace corresponding to the eigenvalue $\lambda_i$, and
  % $p_i(\varrho)= \mathrm{tr}_{\mathscr{H}_{\mathcal{S}}} (\varrho\,
  % \mathrm{O}_i)$ is the associated transition probability, \emph{i.e.}\ the
  % probability of measuring the value $\lambda_i$ on $\varrho$.
  The projection of the $\mathcal{S}$-state $\varrho$ onto the eigensubspace
  corresponding to a readout of $\lambda\in \Sigma$ is commonly called in
  physics the \emph{wavefunction collapse}.
\end{definition}
\begin{ex}[Discrete observable]
  \label{ex:discrete}
  The most common example of an observable $\mathcal{O}$ is one with discrete
  (and non-degenerate) spectrum:
  \begin{equation*}
    \mathcal{O}= \sum_i^{}\lambda_i \mathrm{O}_i\;,
  \end{equation*}
  where the eigenvalues $\lambda_i$ are all distinct, and the $\mathrm{O}_i$
  are rank one orthogonal projectors on $\mathscr{H}_{\mathcal{S}}$, onto the
  corresponding eigensubspace of $\lambda_i$. Correspondingly, the pointer
  observable in this case has the same type of decomposition
  \begin{equation*}
    \mathcal{Z}=\sum_i^{}\mu_i \mathrm{Z}_i\;,
  \end{equation*}
  where the eigenvalues $\mu_i$ are in bijection with the eigenvalues $\lambda_i$ of
  $\mathcal{O}$, and $\mathrm{Z}_i$ are orthogonal projectors on $\mathscr{H}_{\mathcal{P}}$.

  The instrument corresponding to a projective measurement in this case can
  be rewritten as
  \begin{equation*}
    \mathfrak{I}_{\mathscr{M}}(\lambda_i)[\varrho]=\mathrm{tr}_{\mathscr{H}_{\mathcal{P}}}\Bigl(U(\varrho\otimes \varsigma)U^{*} \bigl(1\otimes \mathrm{Z}_i\bigr)\Bigr)= p_i(\varrho)\, \mathrm{O}_i\;,
  \end{equation*}
  where
  \begin{equation*}
    p_i(\varrho):= \mathrm{tr}_{\mathscr{H}_{\mathcal{S}}} (\varrho\, \mathrm{O}_i)
  \end{equation*}
  is the transition probability, \emph{i.e.}\ the probability of measuring
  the value $\lambda_i$ on $\varrho$; the wavefunction collapse here being
  the projection on the rank-one eigensubspace $\mathrm{O}_i$. The final
  state of the system~$\mathcal{S}$ can thus be decomposed, in this case, as
  the statistical mixture
  \begin{equation*}
    \varrho_{\mathrm{f}}= \sum_i^{} \mathfrak{I}_{\mathscr{M}}(\lambda_i)[\varrho]= \sum_i^{}p_i(\varrho) \mathrm{O}_i\;. 
  \end{equation*}
\end{ex}

Even if one is able to define a projective measurement scheme according to
the prescription above, it is still unclear how the readout of the pointer
observable shall be done. A first pragmatic question, going back to Bohr's
prescription, is the following: we are macroscopic observers, so how shall we
perform the readout of a (microscopic) quantum pointer $\mathcal{Z}$? Shall
we devise another measurement scheme $\mathscr{M}'$ to measure $\mathcal{Z}$
through a new pointer $\mathcal{Z}'$?  And then yet another $\mathscr{M}''$
to measure $\mathcal{Z}'$, and so on \emph{ad libitum}? Is there a point, in
such a series, that allows to transition from a microscopic to a macroscopic
probe, or is the abrupt introduction of a classical probe a necessity? This
line of questioning exemplifies the \emph{measurement problem}: the problem
of getting definite outcomes when reading out a system that is inherently a
quantum one and, as such, has superposition and uncertainty ingrained in
it. The measurement problem can be formalized in the von Neumann scheme when
considering, as anticipated, the readout of the pointer observable.

For $\mathscr{M}$ to be a realistic measurement scheme, a (classical) observer shall be
able to assign a definite value to the pointer observable when performing the
readout, and to associate such value with one of the possible outcomes for
the observable $\mathcal{O}$ subjected to the measurement. This can be formulated as
the following necessary condition.
\begin{definition}[Readable measurement]
  \label{def:3}
  Let $\mathscr{M}=(\mathscr{H}_{\mathcal{P}},\varsigma,\mathcal{Z},U)$ be the measurement of an observable~$\mathcal{O}$ with non-degenerate
  discrete spectrum. Then $\mathscr{M}$ is \emph{readable} if and only if for any state
  $\varrho$ on~$\mathcal{S}$,
  \begin{equation*}
    U(\varrho\otimes \varsigma)U^{*}= \sum_i^{}\alpha_i \varrho_i\otimes  \mathrm{Z}_i\;,
  \end{equation*}
  where $\mathrm{Z}_i$ are the spectral projectors of $\mathcal{Z}$, $\varrho_i$ states on
  $\mathcal{S}$, and $0\leq \alpha_i\leq 1$, $\sum_i^{}\alpha_i=1$.
\end{definition}
Observe that in a readable measurement, the final state of the probe is a
convex combination of mutually orthogonal projectors: it is essentially a
\emph{classical state}. Furthermore, readability is just a necessary
condition, since for the scheme to measure $\mathcal{O}$ projectively we must further
require that $\varrho_i= \mathrm{O}_i$ and $\alpha_i=p_i(\varrho)$. Unfortunately, readable
projective measurement schemes cannot be realized.
\begin{theorem}[Measurement problem {\citep[see][Thm.\
    22.2-3]{busch2016tmp}}]
  \label{thm:1}
  There can be no nontrivial readable measurement scheme $\mathscr{M}=(\mathscr{H}_{\mathcal{P}},\varsigma,\mathcal{Z},U)$,
  unless $\mathcal{Z}$ is a \emph{classical observable} (\emph{i.e.}, it commutes with
  all other probe observables).
\end{theorem}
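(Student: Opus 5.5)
We would prove Theorem~\ref{thm:1} by combining the unitarity of $U$, the linearity of $\varrho\mapsto U(\varrho\otimes\varsigma)U^*$, and the readability decomposition of Definition~\ref{def:3}. The key observation is that readability forces the map $\varrho\mapsto \varsigma_{\mathrm f}$ to take values in the commutative set of states diagonal in the spectral decomposition of $\mathcal Z$. Unitarity, in turn, forces pure inputs to produce pure outputs. We first reduce to pure states: for $\varrho=\ket{\varphi}\langle\varphi|$ and $\varsigma$ pure (the general case by purification, or by decomposing $\varsigma$ into pure components, each of which must itself satisfy the readability form by extremality), the output $U(\varrho\otimes\varsigma)U^*$ is a rank-one projector. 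A rank-one projector of the form $\sum_i\alpha_i\varrho_i\otimes\mathrm Z_i$, with mutually orthogonal $\mathrm Z_i$, must have exactly one nonzero $\alpha_i$. Moreover, that $\mathrm Z_i$ must be rank one and $\varrho_i$ pure. Hence each pure input $\varphi$ is sent to $\psi_\varphi\otimes\zeta_{i(\varphi)}$, with $\zeta_i$ spanning $\mathrm Z_i$.

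The second step is a continuity and superposition argument. Take two pure states $\varphi,\varphi'$ with $i(\varphi)\neq i(\varphi')$, and their superposition $\chi=a\varphi+b\varphi'$. By linearity of $U$,
\begin{equation*}
U(\chi\otimes\sigma)=a\,\psi_\varphi\otimes\zeta_{i(\varphi)}+b\,\psi_{\varphi'}\otimes\zeta_{i(\varphi')}\;,
\end{equation*}
where $\sigma$ is the vector with $\varsigma=\ket{\sigma}\langle\sigma|$. This vector is entangled, hence not a product, unless $ab=0$; that contradicts the conclusion of the first step applied to $\chi$. Therefore $i(\varphi)$ is constant on the unit sphere of $\mathscr H_{\mathcal S}$. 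Consequently, the pointer reading is independent of the system state, so $\varsigma_{\mathrm f}$ carries no information about $\varrho$: the measured observable $\mathcal O_{\mathscr M}$ is trivial (a multiple of the identity, in the sense of Busch et al.). This establishes the dichotomy for a nonclassical $\mathcal Z$ whose projectors are genuine vector states.

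The remaining case, which we expect to be the main obstacle, is when the $\mathrm Z_i$ are allowed to be infinite-rank, or when $\mathcal Z$ lies in a larger algebra. One must then show that the only way to escape the argument above is for the pointer states to be mixtures supported on mutually orthogonal sectors that no probe observable can connect, i.e.\ superselection sectors. Here we would follow Busch et al.\ (Thm.~22.2--3). We would show that nontriviality plus readability forces the off-diagonal terms $\mathrm Z_iA\,\mathrm Z_j$ to vanish for all probe observables $A$; otherwise the interference terms coming from superpositions, as above, would be detectable by some $A$, contradicting the mixture form. This yields $[\mathcal Z,A]=0$ for every probe observable, so $\mathcal Z$ is classical. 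The delicate point is making the passage from ``the state is diagonal'' to ``all observables are block diagonal'' rigorous. This requires a careful choice of the class of states $\varsigma$ over which readability is demanded, which is also where nontriviality of $\mathscr M$ enters.
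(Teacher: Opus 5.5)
Your Steps 1--2 are correct for a pure probe state. The genuine gap is the passage to mixed $\varsigma$, and your final paragraph puts the difficulty in the wrong place. The paper gives no proof of Theorem~\ref{thm:1}; it only cites Busch et al., so your argument has to stand on its own.

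\textbf{The mixed-$\varsigma$ reduction fails.} For $\varsigma=\lvert\sigma\rangle\langle\sigma\rvert$ the output is pure, so readability puts it in a single pointer sector. Superposing two inputs from different sectors then gives a vector spread over two sectors; this is the standard insolubility argument. Your reduction of a mixed $\varsigma$ to this case does not work, because readability does not pass to pure components (readable states are not a face of the state space). Take $\mathscr{H}_{\mathcal{P}}=\mathbb{C}^2$, $\mathcal{Z}=\sigma_z$, $U=\mathds{1}$, and $\varsigma=\tfrac12\mathds{1}$ written as the average of the two $\sigma_x$-eigenprojectors. Then $U(\varrho\otimes\varsigma)U^*=\tfrac12\varrho\otimes\mathrm{Z}_++\tfrac12\varrho\otimes\mathrm{Z}_-$ is readable, but neither component $\varrho\otimes\lvert\pm_x\rangle\langle\pm_x\rvert$ is. So the label $i(\varphi)$ is undefined and Step~2 cannot start. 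Purification fails for the same reason: readability constrains only the state after the purifying ancilla is traced out, so the pure purified output need not be readable.

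\textbf{How to close it.} Use readability for all inputs at once. With $P_i:=\mathds{1}\otimes\mathrm{Z}_i$, readability gives $P_iU(\varrho\otimes\varsigma)U^*P_j=0$ for $i\neq j$, and this is linear in $\varrho$. Write $\varsigma=\sum_kw_k\lvert\sigma_k\rangle\langle\sigma_k\rvert=JDJ^*$ with $Je_k:=\sigma_k$ and $D:=\mathrm{diag}(w_k)$ on $\ell^2$, and set $X_i:=P_iU(\mathds{1}\otimes JD^{1/2})$. The argument then runs as follows.
\begin{enumerate}
\item $X_i(T\otimes\mathds{1})X_j^*=0$ for every trace-class, hence every bounded, $T$ on $\mathscr{H}_{\mathcal{S}}$.
\item The closed $\mathscr{B}(\mathscr{H}_{\mathcal{S}})\otimes\mathds{1}$-invariant subspace generated by $\mathrm{ran}\,X_i^*$ has the form $\mathscr{H}_{\mathcal{S}}\otimes L_i$, with $L_i\perp L_j$.
\item Compressing $\sum_iX_i^*X_i=\mathds{1}\otimes D$ to $\mathscr{H}_{\mathcal{S}}\otimes L_i$ gives $X_i^*X_i=\mathds{1}\otimes\Pi_{L_i}D\Pi_{L_i}$.
\item Hence $\mathrm{tr}\bigl(U(\varrho\otimes\varsigma)U^*P_i\bigr)=\mathrm{tr}(\Pi_{L_i}D)$ for all $\varrho$.
\end{enumerate}
So every readable scheme is trivial, for any $\varsigma$; your Step~2 is the rank-one-$D$ case.

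\textbf{The last paragraph is either unnecessary or wrong.} Definition~\ref{def:3} is an identity between density operators. Under it, the argument above gives readable $\Rightarrow$ trivial with no hypothesis on $\mathcal{Z}$. Theorem~\ref{thm:1} therefore holds because nontrivial readable schemes do not exist, and nothing about $\mathcal{Z}$ needs to be derived. Infinite-rank $\mathrm{Z}_i$ are not an obstacle either. The trace condition in Definition~\ref{def:3} already forces every $\mathrm{Z}_i$ with $\alpha_i>0$ to have rank one. Under the intended reading (a mixture of states supported in $\mathscr{H}_{\mathcal{S}}\otimes\mathrm{ran}\,\mathrm{Z}_i$), the argument above uses only block-diagonality and applies verbatim.

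Your proposed inference, ``nontrivial plus readable forces $\mathrm{Z}_iA\mathrm{Z}_j=0$ for all probe observables $A$'', is not an argument: block-diagonality of a family of states constrains no observable. If readability is weakened to agreement on a restricted probe algebra $\mathcal{A}_{\mathcal{P}}$, the inference is actually false. Take the following setup:
\begin{itemize}
\item $\mathscr{H}_{\mathcal{S}}=\mathbb{C}^2$ and $\mathscr{H}_{\mathcal{P}}=\mathbb{C}^2\otimes\mathbb{C}^2$ (pointer and ancilla);
\item $\mathcal{A}_{\mathcal{P}}=\mathscr{B}(\mathbb{C}^2)\otimes\mathds{1}$ and $\mathcal{Z}=\sigma_z\otimes\mathds{1}$;
\item $\varsigma$ the projector onto $\lvert 0\rangle\otimes\lvert 0\rangle$;
\item $U$ two successive CNOTs, system to pointer and then pointer to ancilla.
\end{itemize}
The input $a\lvert 0\rangle+b\lvert 1\rangle$ goes to $a\lvert 000\rangle+b\lvert 111\rangle$. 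On $\mathscr{B}(\mathscr{H}_{\mathcal{S}})\otimes\mathcal{A}_{\mathcal{P}}$ this state agrees with the readable mixture $\lvert a\rvert^2\lvert 00\rangle\langle 00\rvert+\lvert b\rvert^2\lvert 11\rangle\langle 11\rvert$. The scheme is nontrivial, yet $\sigma_x\otimes\mathds{1}\in\mathcal{A}_{\mathcal{P}}$ does not commute with $\mathcal{Z}$. Drop that paragraph and prove readable $\Rightarrow$ trivial for every $\varsigma$, as above.
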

Observe that if $\mathcal{Z}$ is a classical observable, the unitary operator $U$
cannot be generated by an observable Hamiltonian operator: the measurement
scheme $\mathscr{M}$ is no more universally quantum, and we have shifted to a Bohr-type
measurement scheme. Furthermore, the measurement problem is not restricted to
this simplified setting: the ``no-go theorem'' above is just a special case
of a more general result, due to Wigner in its original form
\citep{wigner1963ajp}, that forbids essentially all universally quantum
measurement schemes to objectively assign values to the pointer observable,
and that amounts to the incompatibility between having a linear dynamics in
$\mathscr{H}_{\mathcal{S}}\otimes \mathscr{H}_{\mathcal{P}}$, and a statistical mixture of unentangled $\mathcal{S}$ and $\mathcal{P}$ states
as an outcome \citep[see][\textsection22 for additional details]{busch2016tmp}.

\subsection{Bohr schemes}
\label{sec:bohrmeas}

Since, as discussed above, the measurement problem naturally leads to Bohr's
idea of a classical probe, let us formalize \emph{Bohr measurement schemes}
$\mathscr{M}_0$, in which the system $\mathcal{S}$ is measured through a
classical probe $\mathcal{P}_0$. Since -- contrarily to the von Neumann
scheme -- we are not aware of previous attempts at such a formalization, let
us go through some details.

Let us start by discussing the concept of pointer in a classical device. A
classical probe is completely characterized -- as any classical system -- by
its \emph{phase space} $\mathfrak{X}_{\mathcal{P}_0}$: phase space points $x\in \mathfrak{X}_{\mathcal{P}_0}$ encode full
knowledge about $\mathcal{P}_0$, since observables are simply real-valued phase space
functions $F:\mathfrak{X}_{\mathcal{P}_0}\to \mathbb{R}$. Classical states are probabilities $\mu\in \mathscr{P}(\mathfrak{X}_{\mathcal{P}_0})$
on the phase space, and pure states are measures $\delta_x$ (the system is with
certainty in configuration $x$). It is convenient to define classical
pointers as \emph{pointer functions} \[
\sigma(\mathcal{O})\ni\lambda\overset{\mathrm{1-1}}{\longmapsto} x(\lambda)\in
\mathfrak{X}_{\mathcal{P}_0}\;,
\]
\emph{i.e.}, injective functions from the spectrum of the measured
observable $\mathcal{O}$ to phase space points. The physical meaning of such a pointer
function is that to each possible measurement outcome $\lambda$ there corresponds a
\emph{single} classical configuration of the probe $x(\lambda)$: we can thus
objectively read the measurement outcome on the classical probe $\mathcal{P}_0$. Once
the pointer function $x(\lambda)$ has been specified, the concrete observation can
take place through any observable $Z:\mathfrak{X}_{\mathcal{P}_0}\to \mathbb{R}$ such that $Z(x(\lambda))\neq
Z(x(\lambda'))$ for all $\lambda\neq\lambda'$: this is why, contrarily to the von Neumann scheme,
in Bohr schemes, we specify a pointer function rather than a pointer
observable.

The last ingredient is {\it measurement coupling}. The measurement coupling shall
be a map that takes as input the initial quantum $\mathcal{S}$ configuration and the
initial classical $\mathcal{P}_0$ configuration and mixes them in a
\emph{quantum/classical state}. What is a quantum/classical state? The
matrix-valued semiclassical measures of \citep{gerard1997cpam} and the
two-scales Wigner measures of \citep{MR3357117} are examples of
quantum/classical states: they are instances of \emph{state-valued measures}
\citep[see][for a systematic
definition]{correggi2022quasi,correggi2023apde}. While the interested reader
shall refer to \citep{correggi2022quasi} for the details, a state-valued
measure $\mathfrak{m}\in \mathscr{P}\bigl(\mathfrak{X}_{\mathcal{P}_0},\mathfrak{S}^1_+(\mathscr{H}_{\mathcal{S}})\bigr)$ is defined as a vector measure
of the form
\begin{equation*}
  \mathrm{d}\mathfrak{m}(x)= \varrho(x)\, \mathrm{d}\mu(x)\;,
\end{equation*}
where $\mu\in \mathscr{P}(\mathfrak{X}_{\mathcal{P}_0})$ is a classical state
(\emph{i.e.}, a Radon Borel probability on the phase space) and, for
$\mu$-a.a.\ $x$, $\varrho(x)$ is an $\mathcal{S}$ quantum state: a positive,
trace class operator on $\mathscr{H}_{\mathcal{S}}$, with trace
one. Intuitively, a state value measure prescribes the state of the system
$\mathcal{S}$ depending on each classical configuration $x$, and such
configurations are weighted by the probability $\mu$. Whenever
$\varrho(x)=\varrho$ does not depend on $x$, the quantum and classical state
are unentangled; the reduced $\mathcal{S}$ and $\mathcal{P}_0$ states are
given, respectively, by
\begin{gather*}
  \varrho_{\mathrm{f}}= \int_{\mathfrak{X}_{\mathcal{P}_0}}^{}  \mathrm{d}\mathfrak{m}(x)\;,\\
  \mathrm{d}\mu_{\mathrm{f}}(x)= \mathrm{tr}_{\mathcal{S}} (\mathrm{d} \mathfrak{m}(x))= \mathrm{d}\mu(x)\;.
\end{gather*}
Observe that even if $\mathfrak{m}$ is entangled, the reduced state on
$\mathcal{P}_0$ is always given by $\mu$.

Now that we have defined quantum/classical states, we can also define the
measurement coupling. In its minimal form, the measurement coupling is a map
from unentangled and classically pure state-valued measures to more general
state-valued measures. Let $\varrho$ be a state of $\mathcal{S}$, and~$\delta_y$ a pure classical state of $\mathcal{P}_0$, identified by the phase
space point $y\in \mathfrak{X}_{\mathcal{P}_0}$; the associated state-valued
measure is thus
\begin{equation*}
  \varrho\, \mathrm{d}\delta_y(x)\;.
\end{equation*}
The measurement coupling $\mathfrak{m}:(\varrho,y)\mapsto
\mathfrak{m}[\varrho,y]$, associates to the initial
$\mathcal{S}+\mathcal{P}_0$ configuration another $\mathcal{S}+\mathcal{P}_0$
configuration (a state-valued measure).

Before giving a precise definition of Bohr measurement schemes, let us
introduce the notion of instrument for% both a pointer observable $Z$ and
a pointer configuration function $x(\cdot )$. Let $\Sigma\subseteq \sigma(\mathcal{O})$ be measurable, then
% in the first case
% \begin{equation*}
%   \mathfrak{I}_Z(\Sigma)[\varrho]= \int_{\{x\,,\, Z(x)\in \Sigma\}}^{}\mathrm{d}\mathfrak{m}[\varrho,y](x)\;,
% \end{equation*}
% while in the second case
\begin{equation*}
  \mathfrak{I}(\Sigma)[\varrho]= \int_{\{x(\lambda)\,,\,\lambda\in \Sigma\}}^{}  \mathrm{d}\mathfrak{m}[\varrho,y](x)\;.
\end{equation*}
The considerations above lead to the following definition; let us remark that
some of the terminology we adopt here might differ slightly from the one used
in the context of von Neumann measurements.
\begin{definition}[Bohr measurement schemes]
  \label{def:4}
  Let $\mathcal{S}$ be a quantum system with observable $\mathcal{O}$. A
  \emph{Bohr measurement scheme} $\mathscr{M}_0$ for $\mathcal{O}$ is a
  quadruple
  \begin{equation*}
    \mathscr{M}_0= \bigl(\mathfrak{X}_{\mathcal{P}_0}, y, x(\lambda), \mathfrak{m}[\cdot ,y]\bigr)\;,
  \end{equation*}
  where $\mathfrak{X}_{\mathcal{P}_0}$ is the classical phase space of the
  probe, $y\in \mathfrak{X}_{\mathcal{P}_0}$ the zero of the scale
  configuration\footnote{By zero of the scale we mean that, concretely, if
    $\varrho$ is the initial state of $\mathcal{S}$, the initial state-valued
    measure for $\mathcal{S}+\mathcal{P}_0$ is
    $\varrho\,\mathrm{d}\delta_y$.}, $\sigma(\mathcal{O})\ni
  \lambda\overset{\mathrm{1-1}}{\longmapsto} x(\lambda)\in
  \mathfrak{X}_{\mathcal{P}_0}$ the pointer configuration function, and
  $\mathfrak{m}[\cdot , \cdot ]$ the measurement coupling.
\smallskip 

 \begin{itemize}
    \setlength{\itemsep}{3mm}
  \item 
%\noindent 
 A Bohr measurement scheme $\mathscr{M}_0$ is \emph{projective} if and
    only if for all $\Sigma\subseteq \sigma(\mathcal{O})$,
    \begin{equation*}
      \mathfrak{I}(\Sigma)[\varrho]= \mathrm{O}(\Sigma)\varrho\,\mathrm{O}(\Sigma)\;.
    \end{equation*}
    Recalling the notation of Example~\ref{ex:discrete}, when the spectrum of
    $\mathcal{O}$ is discrete and non-degenerate, this becomes
    \begin{equation*}
      \mathfrak{I}(\lambda_i)[\varrho]= p_i(\varrho)\,\mathrm{O}_i\;.
    \end{equation*}

  \item
%\noindent
 A Bohr measurement scheme $\mathscr{M}_0$ is

    \vspace{2mm}
    
    \begin{itemize}
      \setlength{\itemsep}{1mm}
    \item[--] \emph{Unreadable} if and only if for any state $\varrho$ on
      $\mathcal{S}$,
      \begin{equation*}
        \mathrm{tr}_{\mathcal{S}}\Bigl(\int_{\{x(\lambda)\,,\,\lambda\in \sigma(\mathcal{O})\}}^{}  \mathrm{d}\mathfrak{m}[\varrho,y](x)\Bigr)=0
      \end{equation*}
    \item[--]\emph{Fuzzy} if and only if for any state $\varrho$ on
      $\mathcal{S}$,
      \begin{equation*}
        0<\mathrm{tr}_{\mathcal{S}}\Bigl(\int_{\{x(\lambda)\,,\,\lambda\in \sigma(\mathcal{O})\}}^{}  \mathrm{d}\mathfrak{m}[\varrho,y](x)\Bigr)<1
      \end{equation*}
    \item[--]\emph{Sharp} if and only if for any state $\varrho$ on
      $\mathcal{S}$,
      \begin{equation*}
        \mathrm{tr}_{\mathcal{S}}\Bigl(\int_{\{x(\lambda)\,,\,\lambda\in \sigma(\mathcal{O})\}}^{}  \mathrm{d}\mathfrak{m}[\varrho,y](x)\Bigr)=1
      \end{equation*}
      and there is no measurable proper subset $\Sigma\subset \{x(\lambda)\,,\,\lambda\in \sigma(\mathcal{O})\}$ of
      full measure.
    \end{itemize}
  \end{itemize}
\end{definition}
\begin{remark*}
  Any Bohr measurement scheme $\mathscr{M}_0$ is readable, unless it is unreadable: if
  it is unreadable, no reading of $\mathcal{P}_0$ can yield one of the configurations
  associated with the spectrum of $\mathcal{O}$. If the measurement scheme is fuzzy, we
  still have a measurement problem: we cannot objectively assign to all
  readouts on the probe a definite value for the observable $\mathcal{O}$ (and there
  could be inaccessible spectral values). If the measurement scheme is sharp,
  on the other hand, there is no measurement problem: all readouts of the
  probe correspond to one and only one of the spectral values of $\mathcal{O}$, and all
  spectral values are readable.
\end{remark*}
\begin{proposition}
  \label{prop:1}
  Let $\mathcal{S}$ be a quantum system with an observable $\mathcal{O}$ with
  discrete and non-degenerate spectrum (see Example~\ref{ex:discrete}). Then
  any projective and sharp Bohr measurement scheme
  $\mathscr{M}_0=\bigl(\mathfrak{X}_{\mathcal{P}_0},y,x_i\equiv
  x(\lambda_i),\mathfrak{m}[\cdot ,y]\bigr)$ satisfies, for all states
  $\varrho$ on $\mathcal{S}$,
  \begin{equation*}
    \mathfrak{m}[\varrho,y]= \sum_i^{}p_i(\varrho)\mathrm{O}_i\,\delta_{x_i}\;.
  \end{equation*}
\end{proposition}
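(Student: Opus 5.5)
The plan is to use sharpness to reduce $\mathfrak{m}[\varrho,y]$ to a purely atomic measure supported on the finite (or countable) set $A:=\{x_i\}$, and then to read off each atom from projectivity. Write $\mathrm{d}\mathfrak{m}[\varrho,y](x)=\varrho(x)\,\mathrm{d}\mu(x)$ with $\mu\in\mathscr{P}(\mathfrak{X}_{\mathcal{P}_0})$ and $\varrho(x)$ a state for $\mu$-a.a.\ $x$.

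First, the sharpness condition says
\[
\mathrm{tr}_{\mathcal{S}}\Bigl(\int_A\mathrm{d}\mathfrak{m}[\varrho,y]\Bigr)=\int_A\mathrm{tr}\,\varrho(x)\,\mathrm{d}\mu(x)=\mu(A)=1.
\]
Hence $\mu$ is concentrated on $A$. Since the spectrum is discrete, $A$ is countable and its points are distinct by injectivity of $x(\cdot)$. (We tacitly assume singletons are Borel, e.g.\ a Hausdorff phase space.) It follows that $\mu=\sum_i\mu(\{x_i\})\,\delta_{x_i}$, and therefore
\[
\mathfrak{m}[\varrho,y]=\sum_i \mu(\{x_i\})\,\varrho(x_i)\,\delta_{x_i}.
\]
Here we set $\varrho(x_i)$ arbitrarily when $\mu(\{x_i\})=0$. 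This is legitimate because the density is only defined $\mu$-a.e., and the product $\mu(\{x_i\})\varrho(x_i)$ is what is intrinsically determined. The additional clause of sharpness (no proper full-measure subset) is not needed for the formula itself. It only guarantees that every $x_i$ carries positive mass, consistent with $p_i(\varrho)>0$ generically.

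Second, I apply projectivity with $\Sigma=\{\lambda_i\}$. The definition of the instrument gives
\[
\mathfrak{I}(\lambda_i)[\varrho]=\int_{\{x_i\}}\mathrm{d}\mathfrak{m}[\varrho,y]=\mu(\{x_i\})\,\varrho(x_i).
\]
Projectivity says this equals $p_i(\varrho)\mathrm{O}_i$. Substituting back yields $\mathfrak{m}[\varrho,y]=\sum_i p_i(\varrho)\mathrm{O}_i\,\delta_{x_i}$. As a consistency check, taking traces gives $\mu(\{x_i\})=p_i(\varrho)$, since $\mathrm{tr}\,\mathrm{O}_i=1$, and $\varrho(x_i)=\mathrm{O}_i$ whenever $p_i(\varrho)>0$.

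The main obstacle is measure-theoretic bookkeeping rather than a conceptual difficulty. One point is to justify that a probability measure concentrated on a countable set of points is atomic. Another is to handle the identity between vector measures carefully, by evaluating them on arbitrary Borel sets $B$ via $B\cap A$ and countable additivity. Both steps are short, but I will state them explicitly so that the identity holds as an equality of state-valued measures, not merely on singletons.
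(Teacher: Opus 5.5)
Your proposal is correct and follows essentially the same route as the paper. Sharpness forces $\mathfrak{m}[\varrho,y]$ to be concentrated on the atoms $\{x_i\}$, and projectivity evaluated on each singleton $\{\lambda_i\}$ identifies the corresponding atom as $p_i(\varrho)\mathrm{O}_i$. Your version only adds more careful measure-theoretic bookkeeping, and you correctly observe that the ``no proper full-measure subset'' clause, which the paper uses only to assert $p_i\neq 0$, is not needed for the formula.
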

\begin{proof}
  Since $\mathscr{M}_0$ is sharp, $\mathfrak{m}[\varrho,y]$ is such that
  $\mathfrak{m}[\varrho,y]=\sum_i^{}p_i\,\varrho_i\, \delta_{x_i}$, with
  $p_i\neq 0$ for all $i$, and $\varrho_i$ some $\mathcal{S}$ states: in
  fact, the set $\{x_i\}_i$ is of full measure, and there is no proper subset
  of full measure. Now, since $\mathscr{M}_0$ is also projective, it follows
  that $p_i\,\varrho_i= p_i(\varrho)\, \mathrm{O}_i$ .
\end{proof}
In light of the above, the physical interpretation of the outcome in a
projective and sharp Bohr measurement scheme is transparent: after the
measurement process, the system $\mathcal{S}+\mathcal{P}_0$ is described by a
quantum/classical state that is the convex combination of sharp pointer
values, each objectively corresponding to an eigenvalue $\lambda_i$ of
$\mathcal{O}$; furthermore, each value occurs with the ``right'' probability
$p_i(\varrho)$ dictated by the $\mathcal{S}$ initial state, and its readout
causes such a state to collapse to the $i$-th eigensubspace
$\mathrm{O}_i$. This is exactly the postulated behavior of measurements in
quantum mechanics, indeed elevating the stature of Bohr's original idea of
quantum measurements.

\subsection{Bridging von Neuman to Bohr: semiclassical schemes}
\label{sec:bridging-von-neuman}

To save quantum universality and still overcome the measurement problem, we
need to transform a von Neumann scheme in a Bohr scheme: remarkably, this can
be done by making the probe (and consequently, the measurement coupling)
transition from quantum to classical.

\medskip

On one hand, let us take a von Neumann measurement scheme, and introduce a parameter $\varepsilon>0$ in the
notation:
\begin{equation*}
  \mathscr{M}_{\varepsilon}=\bigl(\mathscr{H}_{\mathcal{P}_{\varepsilon}}, \varsigma_{\varepsilon},\mathcal{Z}_{\varepsilon},U_{\varepsilon}\bigr)\;.
\end{equation*}
The parameter $\varepsilon$ quantifies, in converging to zero, the
quantum-to-classical transition: it is a so-called \emph{semiclassical
  parameter}. On the other hand, let us take a Bohr measurement
scheme~$\mathscr{M}_0$. We couple them together in a single measurement
scheme by requiring that $\mathscr{M}_{\varepsilon}$ transition to
$\mathscr{M}_0$ as $\varepsilon\to 0$.

\begin{definition}[Semiclassical measurement scheme]
  \label{def:5}
  Let $\mathcal{S}$ be a quantum system with observable~$\mathcal{O}$. A
  \emph{semiclassical measurement scheme} $\mathscr{M}_{\varepsilon\to 0}$
  for $\mathcal{O}$ is a couple of schemes, one von Neumann and one Bohr:
  \begin{equation*}
    \mathscr{M}_{\varepsilon\to 0}= \bigl(\mathscr{M}_{\varepsilon};\mathscr{M}_0\bigr)=\bigl(\mathscr{H}_{\mathcal{P}_{\varepsilon}},\varsigma_{\varepsilon},\mathcal{Z}_{\varepsilon},U_{\varepsilon};\mathfrak{X}_{\mathcal{P}_0},y,x(\lambda),\mathfrak{m}[\cdot ,y]\bigr)\;;
  \end{equation*}
  such that, for any state $\varrho$ on $\mathcal{S}$:
  \[
  \varrho\otimes \varsigma_{\varepsilon}\underset{\varepsilon\to
      0}{\longrightarrow}\varrho\, \mathrm{d}\delta_y\;\;\;\mbox{and}\;\;\;
      U_{\varepsilon}(\varrho\otimes
\varsigma_{\varepsilon})U_{\varepsilon}^{*}\underset{\varepsilon\to
      0}{\longrightarrow} \mathfrak{m}[\varrho,y]\;.
  \]
 % \begin{itemize}
 %   \setlength{\itemsep}{2mm}
 % \item $\varrho\otimes \varsigma_{\varepsilon}\underset{\varepsilon\to
 %     0}{\longrightarrow}\varrho\, \mathrm{d}\delta_y\;$,
  %\item $U_{\varepsilon}%(\varrho\otimes\varsigma_{\varepsilon})U_{\varepsilon}^{*}\underset{\varepsilon\to      0}{\longrightarrow} \mathfrak{m}[\varrho,y]\;$.
%  \end{itemize}

 \medskip

\noindent  A semiclassical measurement scheme $\mathscr{M}_{\varepsilon\to 0}$ is:
  \begin{itemize}
    \setlength{\itemsep}{2mm}
  \item \emph{Projective} if and only if $\mathscr{M}_0$ is projective.
  \item \emph{Sharp/Fuzzy/Unreadable} if and only if $\mathscr{M}_0$ is
    sharp/fuzzy/unreadable.
  \item \emph{Semiclassically accurate at least of order $f(\varepsilon)$}
    (with $f(\varepsilon)\to 0$) if and only if there exists a distance
    function 
    \[
    d:\mathfrak{S}^1_+(\mathscr{H}_{\mathcal{S}}\otimes
    \mathscr{H}_{\mathcal{P}_{\varepsilon}})\times
    \mathscr{P}(\mathfrak{X}_{\mathcal{P}_0},\mathfrak{S}^1_+(\mathscr{H}_{\mathcal{S}}))\to
    \mathbb{R}_+
    \]
    and a constant $C>0$ such that
    \begin{equation*}
      d\bigl(\varrho\otimes \varsigma_{\varepsilon},\varrho\, \mathrm{d}\delta_y\bigr)+ d\bigl(U_{\varepsilon}(\varrho\otimes \varsigma_{\varepsilon})U_{\varepsilon}^{*},\mathfrak{m}[\varrho,y]\bigr)\leq Cf(\varepsilon)\;.
    \end{equation*}
  \end{itemize}
\end{definition}

\begin{remarks*}
 $\phantom{i}$
  \begin{itemize}
    \setlength{\itemsep}{2mm}  
  \item[(i)] We are purposefully vague on the meaning of the convergences 
  \[
  \varrho\otimes
    \varsigma_{\varepsilon}\to \varrho \mathrm{d}\delta_y\;\;\mbox{and}\;\; U_{\varepsilon}(\varrho\otimes \varsigma_{\varepsilon})U_{\varepsilon}^{*}\to \mathfrak{m}[\varrho,y],
    \]
    and of
    the ``distance function'' $d$, although we indeed have specific types of
    convergence in mind, which will be discussed below in the framework of
    semiclassical analysis (where $d$ will not be a proper distance, but
    rather it will be related to a family of functions inducing the
    topology). This is due to the fact that there are other notions of
    quantum-to-classical transition in the literature (such as decoherence),
    and thus a general albeit more vague definition could be able to
    encompass also those situations.
  \item[(ii)] The quantum pointer observable $\mathcal{Z}_{\varepsilon}$ does not play a role in
    the above definition, in the sense that we only require convergence of
    states, and we exploit the fact that the state-valued measures of sharp
    Bohr schemes concentrate on the range of the pointer configuration
    function. Thanks to the semiclassical probe structure however, the
    quantum pointer observable $\mathcal{Z}_{\varepsilon}$ can always be chosen such that it
    converges, at least in a weak sense, to a classical pointer observable
    $Z$, satisfying $Z(x(\lambda))\neq Z(x(\lambda'))$ for all $\lambda\neq\lambda'$ (see the comments
    after Theorem~\ref{thm:3} and Corollary~\ref{cor:1} for a concrete
    example).
  \end{itemize}
\end{remarks*}
In their general form, semiclassical measurement schemes may well \emph{fail}
to measure $\mathcal{O}$, although they exist. In fact, the quasi-classical
convergence of unitary dynamics proved in \citep{correggi2022quasi} yields --
at any time -- a semiclassical measurement scheme. However, such a
measurement scheme has a ``trivial'' measurement coupling
$\mathfrak{m}[\varrho,y]= \varrho' \,\mathrm{d}\delta_{y'}$: after the
measurement the pointer has a single possible configuration, $y'$, and thus
it is not possible to read all the spectral values of the observable
$\mathcal{O}$ with frequencies dictated by $\varrho$; in the language above,
this is an unreadable semiclassical measurement scheme. In fact, the coupling
in the models considered in \citep{correggi2022quasi} is too weak to affect
the probe in the limit $\varepsilon\to 0$, so while $\mathcal{S}$ is driven
by $\mathcal{P}_0$, the latter acts as a thermodynamic environment rather
than a probe.

Projective and sharp semiclassical measurement schemes, on the other hand,
behave as prescribed: their Bohr measurement scheme $\mathscr{M}_0$ complies with all
the prescriptions of a quantum measurement and is free of the measurement
problem (see Proposition \ref{prop:1} and the preceding remark), and $\mathscr{M}_0$
emerges from the unitary dynamics of an isolated, fully quantum system
$\mathcal{S}+\mathcal{P}_{\varepsilon}$ as $\varepsilon\to 0$. Suppose further that the scheme is semiclassically
accurate (at least of order $f(\varepsilon)$): we can then quantify the error that the
von Neumann measurement scheme $\mathscr{M}_{\varepsilon}$ introduces, in view of the measurement
problem's no-go theorem, in parting from the ``perfect'' scheme
$\mathscr{M}_0$. Observe that we can never realize, in a real experiment, the scheme
$\mathscr{M}_0$ exactly, but only the fully quantum scheme $\mathscr{M}_{\varepsilon}$ (perhaps with an
extremely small, but always non-zero $\varepsilon$); therefore, having a quantifiable
accuracy is very important for concrete realizations.

The main result of our paper is that, indeed, a projective and sharp
semiclassical measurement scheme exists.
\begin{theorem}
  \label{thm:2}
  Let $\mathcal{S}$ be a two level system ($\mathscr{H}_{\mathcal{S}}=\mathbb{C}^2$), and let $\mathcal{O}=\sigma_z=
  \begin{pmatrix}
    1&0\\0&-1
  \end{pmatrix}
  $. Then, a projective and sharp semiclassical measurement scheme $\mathscr{M}_{\varepsilon\to 0}$
  for $\mathcal{O}$ exists, and it is semiclassically accurate at least of order $\lvert\ln
  \varepsilon\rvert^{-\frac{1}{2}}$.
\end{theorem}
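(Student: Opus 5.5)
The idea is to build the scheme explicitly: a spin $\mathcal{S}=\mathbb{C}^2$ coupled to a bosonic scalar field $\mathcal{P}_\varepsilon$ on the Fock space $\Gamma_s(L^2(\mathbb{R}^d))$, with $\varepsilon$ playing the role of $\hbar$. Set $a^\#_\varepsilon=\sqrt{\varepsilon}\,a^\#$, so that $W_\varepsilon(\cdot)$ are the $\varepsilon$-scaled Weyl operators. The zero-of-the-scale state is the coherent state $\varsigma_\varepsilon=|\weyl{y}\Omega\rangle\langle\weyl{y}\Omega|$. It converges, in the sense of Wigner measures, to $\delta_y$, and more precisely $\varrho\otimes\varsigma_\varepsilon\to\varrho\,\mathrm{d}\delta_y$ as state-valued measures in the sense of \citep{correggi2022quasi}.

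For the coupling I would take a linear, Pauli-Fierz/spin-boson type interaction that is diagonal in $\sigma_z$:
\[
H_\varepsilon=\mathrm{d}\Gamma_\varepsilon(\omega)+\frac{1}{\sqrt{\varepsilon}}\,\sigma_z\otimes\bigl(a^*_\varepsilon(\lambda)+a_\varepsilon(\lambda)\bigr).
\]
An alternative is an adiabatically switched version with coupling strength of order $\varepsilon^{-1/2}$ acting over a time window. The reason for this scaling is that it makes the coupling strong enough to displace the classical field by an $O(1)$ amount, unlike the weak coupling of \citep{correggi2022quasi}, which yields an unreadable scheme.

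Since the interaction commutes with $\sigma_z$, the dynamics splits as $U_\varepsilon=\sum_{i=\pm}\mathrm{O}_i\otimes U^{(i)}_\varepsilon$. Each $U^{(i)}_\varepsilon$ is a van Hove–type quadratic-plus-linear evolution, which maps coherent states to coherent states up to a phase:
\[
U^{(i)}_\varepsilon\weyl{y}\Omega=e^{i\theta_i/\varepsilon}\weyl{x_i}\Omega.
\]
The point $x_\pm$ is the solution of the classical linear field equation $\dot x=-i\omega x\mp i\lambda$ with datum $y$. One then checks that $x_+\neq x_-$ whenever $\lambda\neq0$, which provides the injective pointer function $x(\pm1)=x_\pm$.

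Writing $\varrho=\sum_{ij}\rho_{ij}|i\rangle\langle j|$, the final state is
\[
\sum_{ij}\rho_{ij}\,|i\rangle\langle j|\otimes e^{i(\theta_i-\theta_j)/\varepsilon}\,|\weyl{x_i}\Omega\rangle\langle\weyl{x_j}\Omega|.
\]
The diagonal terms converge to $p_i(\varrho)\mathrm{O}_i\,\delta_{x_i}$. The off-diagonal terms are cross-Wigner terms between two coherent states centred at different points; because $|\langle W(x_+)\Omega,W(x_-)\Omega\rangle|=e^{-|x_+-x_-|^2/(2\varepsilon)}$, they vanish against any symbol. Hence
\[
\mathfrak{m}[\varrho,y]=\sum_i p_i(\varrho)\,\mathrm{O}_i\,\delta_{x_i}.
\]
This limit is exactly of the form singled out in Proposition~\ref{prop:1}: it is projective, and it is sharp because it is supported on $\{x_+,x_-\}$ with both weights $p_i(\varrho)$ present.

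It remains to quantify the convergence. I would take $d$ to be the supremum, over test functions in a class that induces the weak topology, of $|\mathrm{tr}(\cdot\,\mathrm{Op}^{\mathrm{Weyl}}_\varepsilon(F))-\int F\,\mathrm{d}\mathfrak{m}|$, with $F$ ranging over, say, Fourier transforms of finite measures with bounded first moment (cylindrical symbols). For the diagonal terms, $\langle W(x)\Omega,\mathrm{Op}_\varepsilon(F)W(x)\Omega\rangle$ equals $F$ convolved with a Gaussian of width $\sqrt{\varepsilon}$, which gives an error $O(\sqrt{\varepsilon})$.

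\textbf{The main obstacle} is the off-diagonal terms and the infinite-dimensional setting. On the Fourier side the cross term is $\int e^{i\ldots}\hat F(\xi)e^{-\varepsilon|\xi|^2/4}\ldots$, and the decay $e^{-|x_+-x_-|^2/2\varepsilon}$ is not uniform over the test class. Roughly, one has to truncate frequencies at $|\xi|\lesssim\sqrt{|\ln\varepsilon|/\varepsilon}$ so that the Gaussian weight beats the phase growth, and then estimate the tail with the first-moment bound on $\hat F$. That tail estimate costs $(\varepsilon\cdot|\ln\varepsilon|/\varepsilon)^{-1/2}$, i.e.\ $|\ln\varepsilon|^{-1/2}$, which is the rate claimed in Theorem~\ref{thm:2}. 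I would first try to optimise this truncation by hand. If the linear model turns out to be too crude (for example, because of infrared problems with $\lambda/\omega\notin L^2$), I would impose an IR cutoff and $\lambda\in\mathrm{D}(\omega^{-1})$, and handle any adiabatic switching by a standard Dyson-expansion/Duhamel comparison with the exactly solvable dynamics.
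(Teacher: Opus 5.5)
Your construction is a genuinely different route from the paper's. Once two points are repaired it proves Theorem~\ref{thm:2} as stated, which is only an existence claim, and with a much better rate.

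\textbf{First correction: the scaling.} Your Hamiltonian is inconsistent with the classical equation you use. Since $a^\#_\varepsilon=\sqrt{\varepsilon}\,a^\#$, the prefactor $\varepsilon^{-1/2}$ turns your coupling into the unscaled $\sigma_z\otimes(a^*(\lambda)+a(\lambda))$. Under $e^{-itH_\varepsilon/\varepsilon}$ this displaces the classical field by $\mathrm{O}(\varepsilon^{-1/2})$, so there is no limit. Under $e^{-itH_\varepsilon}$ it displaces it by $\mathrm{O}(\sqrt{\varepsilon})$, so the two pointer positions merge and the cross terms do not vanish. The equation $\dot x=-i\omega x\mp i\lambda$ corresponds instead to $H_\varepsilon=\mathrm{d}\Gamma_\varepsilon(\omega)+\sigma_z\otimes(a^*_\varepsilon(\lambda)+a_\varepsilon(\lambda))$ with $U_\varepsilon=e^{-it^*H_\varepsilon/\varepsilon}$ at a fixed $t^*>0$. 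This is the same order of coupling as the paper's $\mathfrak{g}\,\sigma_x\otimes\phi_\varepsilon(g)$. Then $x_+(t^*)-x_-(t^*)=-2i\int_0^{t^*}e^{-is\omega}\lambda\,\mathrm{d}s$, so $0\neq\lambda\in L^2$ suffices (e.g.\ for $\omega=-\Delta$). No infrared cutoff or Dyson expansion is needed.

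\textbf{Second correction: the rate.} Your ``main obstacle'' is not one, and the arithmetic producing $\lvert\ln\varepsilon\rvert^{-1/2}$ is not an estimate. The operator $e^{i\phi_\varepsilon(\xi)}$ is itself a Weyl operator shifting the centre by $\mathrm{O}(\varepsilon\xi)$. Writing $(x)_\varepsilon=e^{\frac{1}{\varepsilon}(a^*_\varepsilon(x)-a_\varepsilon(x))}\Omega$ and completing the square gives
\[
\bigl\lvert\langle (x_+)_\varepsilon\,,\,e^{i\phi_\varepsilon(\xi)}(x_-)_\varepsilon\rangle\bigr\rvert\leq\exp\Bigl(-\frac{1}{2\varepsilon}\max\bigl\{\lVert x_+-x_-\rVert_2-\varepsilon\lVert\xi\rVert_2\,,\,0\bigr\}^2\Bigr)\;.
\]
So the cross term is negligible for all $\lVert\xi\rVert_2\ll\lVert x_+-x_-\rVert_2/\varepsilon$, not only up to $\sqrt{\lvert\ln\varepsilon\rvert/\varepsilon}$.
\begin{itemize}
\item For $\xi$ in bounded sets it is exponentially small uniformly; this covers the paper's $d_{K,\xi}$ and $d_{\kappa_1,\kappa_2}$.
\item For your first-moment class, cut at $\lVert\xi\rVert_2=\lVert x_+-x_-\rVert_2/(2\varepsilon)$ and apply Markov's inequality to the tail; this gives $\mathrm{O}(\varepsilon)$.
\item Even your own cut at $\sqrt{\lvert\ln\varepsilon\rvert/\varepsilon}$ with the first-moment bound gives $\sqrt{\varepsilon/\lvert\ln\varepsilon\rvert}$, not $\lvert\ln\varepsilon\rvert^{-1/2}$.
\end{itemize}
Your scheme is therefore accurate of order $\varepsilon$ for the paper's distances, and of order $\sqrt{\varepsilon}$ for your class (limited by the diagonal terms). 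This yields the stated accuracy \emph{a fortiori}. The logarithm in the paper comes from mechanisms your model does not have.

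\textbf{Comparison with the paper's route.} In the paper the coupling $\mathfrak{g}\,\sigma_x\otimes\phi_\varepsilon(g)$ does not commute with the free term $\sigma_z$, so there is no exact block decomposition. The authors use the gap $\lambda^\varepsilon_+-\lambda^\varepsilon_-\geq 2$, superadiabatic projectors and the parallel transport $\mathcal{R}_\pm$. With these they propagate squeezed coherent states along the nonlinear trajectories $u_\pm(t)$, with error $C_1\sqrt{\varepsilon}\,e^{C_2t}$ (Proposition~\ref{prop:3}).
\begin{itemize}
\item Projectivity only emerges as $t\to\infty$. It needs $\mathrm{Re}\langle\underline{u},g\rangle_2=0$, so that $\pi^0_\pm=\lvert\pm\rangle\langle\pm\rvert$. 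It also needs the decay $\mathrm{Re}\langle u_\pm(t),g\rangle_2=\mathrm{O}(t^{-1})$, so that $\pi^t_\pm$ and $\mathcal{R}_\pm(t)$ become diagonal.
\item The pointer values are the scattering states $u^\infty_\pm$, whose existence and separation require small $\mathfrak{g}$.
\item The error $\sqrt{\varepsilon}\,e^{Ct}$ forces $t_\varepsilon\lesssim\lvert\ln\varepsilon\rvert$, and the $t_\varepsilon^{-1/2}$ scattering rate then gives the $\lvert\ln\varepsilon\rvert^{-1/2}$ accuracy.
\end{itemize}
Only the vanishing of the cross terms (the paper's $D_3$) uses the same Gaussian-overlap mechanism as yours.

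Your quantum-nondemolition coupling makes $\sigma_z$ a conserved quantity. The populations are frozen, and each block is an exactly solvable van Hove evolution mapping coherent states to coherent states. So you need no smallness of the coupling, no orthogonality condition and no long times, and you get a polynomial rate.

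The price is that projectivity is put into the Hamiltonian by hand: this is von Neumann's textbook pointer coupling with a field as pointer. The paper's point is different. It shows that a projective and sharp Bohr limit emerges dynamically for an energy-exchanging, fluorescence-type coupling, where the measured observable is the system's own Hamiltonian and is not conserved. Its argument does not rely on exact solvability.
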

\begin{remarks*}
  $\phantom{i}$
  \begin{itemize}
    \setlength{\itemsep}{2mm}
  \item[(i)] We focus on a two-level system to avoid cumbersome notations;
    our result can be adapted straightforwardly to provide measurement
    schemes for any quantum observable (Hermitian matrix) in $\mathbb{C}^n$, $n\geq 2$.
  \item[(ii)] With (hopefully only) minor modifications, our scheme shall
    also be adaptable to the measurement of any quantum observable with
    purely discrete spectrum. The measurement of observables with essential
    spectrum, on the other hand, would probably require some new ideas.
  \item[(iii)] Our existence result is in fact constructive: we construct a
    realistic model that measures the two-level system using macroscopic
    light (adiabatically coupled to it) as a probe, by inducing fluorescence
    and allowing it to scatter. Therefore, our measurement scheme could be,
    in principle, realized in the lab.
  \item[(iv)] We did not strive for an optimal semiclassical accuracy, and we
    expect that the bounds could indeed be improved. Also, a more realistic
    model that takes into account the decoherence caused on the probe by the
    environment might as well lead to a more refined measurement scheme, with
    improved accuracy.
  \end{itemize}
\end{remarks*}

\subsection{Error limits and experimental tests of quantum mechanical
  postulates}
\label{sec:exper-tests-post}

From a physical standpoint, the measurement process enters in the
foundational framework of quantum mechanics through two postulates:
\emph{Born's rule} and the \emph{collapse rule}, postulating respectively the
frequency at which an observed value shall occur, and the state of the
measured system after the measurement has taken place. As detailed above, due
to the measurement problem these postulates are \emph{incompatible} with a
von Neumann scheme with an objective readout; however, von Neumann
measurement schemes are the natural ones that comply with another crucial
quantum mechanical postulate: \emph{unitarity} (for an isolated system, the
dynamics is always dictated by a unitary operator).

The measurement postulates are only incompatible with unitarity because the
former are taken as \emph{sharp} statements: Theorem~\ref{thm:2} shows that
they become compatible if taken to be true \emph{up to a small
  error}. Coincidentally, observe that E.\ Schrödinger, in its ``cat
paradox'' papers, gives the following definition of measurements:
\begin{quote}
  \emph{The systematically arranged interaction of two systems (measured
    object and measuring instrument) is called a measurement on the first
    system, if a directly-sensible variable feature of the second (pointer
    position) is always reproduced \underline{within certain error limits}
    when the process is immediately repeated (on the same object, which in
    the meantime must not be exposed to any additional influences).}
  \citep[][underline emphasis added]{schroedingercat}
\end{quote}
This is further evidence that our approach conforms with the original
formulation of quantum mechanics, as devised by its first proponents.

Very interestingly, there are experimental tests of the validity of quantum
mechanical postulates; in particular, state-of-the-art experimental
techniques in quantum optics provide a new platform to test such validity --
with a focus on Born's rule -- through the so-called \emph{Sorkin test}
\citep{sorkin1994}. This has led in recent years to a renewed and strong
interest in performing very refined experimental tests of quantum postulates,
through a plethora of techniques including single photons, nuclear magnetic
resonance, atomic matter waves, and attosecond photoionization, to mention
just a few \citep[see, \emph{e.g.}][and references therein for additional
details]{foerderer2025,sinha2018,PhysRevA.95.012107,PhysRevA.97.023601,Conlon_2024}.

Some of the aforementioned experimental setups are remarkably similar to the
concrete theoretical model we use to prove Theorem~\ref{thm:2} and that we
describe in \textsection\ref{sec:meas-scheme-thro} below (especially in the case of
photoionization): there is at present \emph{no experimental evidence} of the
violation of Born's rule. It is possible to view Theorem~\ref{thm:2} as the
first rigorous justification of this experimental evidence. On the other
hand, the very same experimental evidence suggests that the approximation of
von Neumann schemes through Bohr schemes is extremely robust, surely more
than what we are able to quantify: we expect the sharp semiclassical accuracy
to be much better than the one yielded by our bound.

\subsection{Comparison with the existing literature on the mathematics of
  measurements}
\label{sec:comp-with-exist}

Before delving into the details of our semiclassical measurement scheme, let
us briefly comment on the existing literature on the measurement problem, and
its relation to foundations of quantum mechanics. Since it would be
impossible to provide a complete overview, we focus on the literature that is
most mathematically oriented; the interested reader might complement our
references by looking in turn at the references therein contained.

Let us start by observing that since von Neumann's seminal book
\citep{MR223138}, the mathematics of quantum measurements has been widely
developed throughout the years \citep[see,
\emph{e.g.}][]{lueders1950,ozawa1984jmp,MR1419313,busch2016tmp,MR4696874}. The
measurement problem as well emerged already from von Neumann's book, and
no-go Theorem \ref{thm:1} is the result of years of refinement from the
already cited seminal argument by Wigner \citep{PhysRevD.9.2321,
  PhysRevD.2.2783, brown1986foop, MR1445744, MR1029208, MR1419313}.
  \smallskip

The no-go theorem (called ``insolubility proof of the measurement problem''
in the literature) greatly influenced mathematicians, physicists, and
philosophers, in so that most attempts of overcoming the problem focused on
new interpretations or axioms that modify (in a more or less radical way)
standard quantum mechanics. In view of the fact that quantum mechanics is one
of the most predictive, accurate, and powerful theories about the physical
world, these attempts always create a great debate, and they are often met
with a good dose of skepticism (and this is why we highlight the fact that
our solution is fully within the standard quantum mechanical
framework). Nonetheless, let us briefly discuss (some of) them.

\subsubsection*{Weakening objectiveness}

A first approach to the measurement problem is to try to ``bypass'' it: the
measurement problem becomes less relevant -- or dissipates completely -- if we
weaken the notion of objectively assigning values to quantum
observable. These weakened notions of objectiveness have been formulated
within modal and many-world interpretations of quantum mechanics, however as
far as we know they are mostly speculative: the interested reader might refer
to the monograph \citep{MR1464691}, where there even is an attempt at
quantifying how much objectiveness is weakened, through value-attribution
rules.

\subsubsection*{Measurements via stochastic processes}

Another important aspect of the measurement problem concerns the
incompatibility of a unitary quantum evolution with the wavefunction collapse
on the measured system. A way of achieving the wavefunction collapse is by
modeling the measurement coupling (at least restricted to $\mathcal{S}$) with a
suitable \emph{continuous-time stochastic process}. This idea has been
formulated by N.~Gisin \citep{MR741987, MR761901} from a physical
perspective, while, notably, J.~Fröhlich and Z.~Gang recently proved that it
indeed induces the wavefunction collapse asymptotically \citep{MR4699907},
see also \citep{froehlich2026sigal} where a model of the two-slit experiment
is discussed. As noted by Gisin in his original paper, it remains unclear
what deterministic ``hidden'' dynamics -- if any -- would induce this
stochastic process, but he infers it shall be [\dots]\emph{very sensitive to
  the microscopic state of the system plus apparatus} \citep{MR741987};
Fröhlich, Gang and Pizzo motivate the appearance of such a stochastic process
through the so-called ``ETH approach'' to quantum mechanics: a proposed
completion of the quantum mechanical first principles developed by the
authors and collaborators \citep[see][]{MR4381182, MR4927824}. It would be
interesting to understand if such a process might emerge from our
semiclassical scheme, however this goes beyond the scope of the present
paper. Let us also mention that there are other attempts at overcoming the
measurement problem through stochastic processes, such as the stochastic
filtering equation upon which V.~Belavkin's \emph{nondemolition
  superselection principle} is based \citep[see][]{belavkin2005arxiv}.

\subsubsection*{Quantum-to-classical transitions of measurement devices via decoherence}

The last approach that we want to discuss is the one that might be seen as
closer in spirit to our solution: the quantum-to-classical transition of the
measurement device induced by coupling with an environment -- the so-called
\emph{decoherence}. Decoherence is a quantum-to-classical transition induced
by the environment: in real lab situations, a system is never truly isolated,
and it is the exchange with the environment that makes it become classical,
or more properly a classical statistical mixture of rank one projectors
\citep[see, \emph{e.g.}][]{zee1970foop,habib1998prl}. Since its formulation,
decoherence has been strongly linked with measurements and the measurement
problem \citep{zurek2005sp} (see also \citep{glauber1986anyas} for a
different approach, using quantum optical amplifiers), and it is still widely
used in physics in this context \citep[see,
\emph{e.g.}][]{Everitt_2011,2md2-nxz3}. It is, however, unclear how the
coupling with an environment $\mathcal{E}$ could make the system $\mathcal{S}+\mathcal{P}+\mathcal{E}$ yield a
quantum-to-classical transition for $\mathcal{P}$ only, as long as the full microscopic
dynamics of the total system is unitary: von Neumann already noted in his
book \citep[][\textsection6.3 of the english translation]{MR223138} that adding a third
or further subsystems shall not change significantly the scheme and its
outcome; even more so, due the added layer of complication introduced by
adding the environment, there is no rigorous and complete formulation of such
a ``decoherence measurement scheme''. In the light of our present results, it
appears clear that the correspondence principle itself is a sufficient -- and
in our opinion, the \emph{most natural} -- actor for the quantum-to-classical
transition of probes $\mathcal{P}$ in measurement schemes; coupling with an environment
-- for example by making the probe dynamics a Markov semigroup -- might help in
improving the semiclassical accuracy: as recently proved in
\citep{hernandez2025cmp, MR4954260}, it allows to go beyond Ehrenfest times.

\section{Measuring a two-level system through semiclassical fluorescence}
\label{sec:meas-scheme-thro}

In this section we construct the semiclassical scheme of Theorem \ref{thm:2},
and formulate the concrete bounds that prove its accuracy. In order to do so,
we first introduce the von Neumann and Bohr schemes separately, and briefly
review the semiclassical techniques used to prove the convergence of the
former to the latter. Throughout the rest of the paper, we might omit the
argument of functional spaces $L^p, \mathscr{S}, \dotsc$ as it should be clear from the
context.

\subsection{The von Neumann subscheme $\mathscr{M}_\varepsilon$}
\label{sec:von-neumann-part}

The von Neumann subscheme consists of a two level system $\mathcal{S}$
interacting with ``light'' (more precisely, a scalar force-carrying field) as
a probe $\mathcal{P}_{\varepsilon}$. The Hilbert space of the two-level
system is, as anticipated,
\begin{equation*}
  \mathscr{H}_{\mathcal{S}}= \mathbb{C}^2\;,
\end{equation*}
and the observable to be measured is the third Pauli matrix
\begin{equation*}
  \sigma_z=
  \begin{pmatrix}
    1&0\\0&-1
  \end{pmatrix}\;.
\end{equation*}
The Hilbert space of the probe is the symmetric Fock space over
$L^2(\mathbb{R}^3)$:
\begin{equation*}
  \mathscr{H}_{\mathcal{P}_{\varepsilon}}:= \mathscr{F}\bigl(L^2(\mathbb{R}^3)\bigr)= \bigoplus_{n \in \N} L^2_{\mathrm{sym}}(\R^{3n})\;,
\end{equation*}
where $L^2_{\mathrm{sym}}(\mathbb{R}^{3n})$ stands for the space of
square-integrable functions $\psi(x_1,\dotsc,x_n)$ that are symmetric under
the permutation of any triple of variables $x_i$ and $x_j$ in $\mathbb{R}^3$
(and $L^2_{\mathrm{sym}}(\mathbb{R}^0)\equiv \mathbb{C}$). Therefore, the
total Hilbert space of the $\mathcal{S}+\mathcal{P}_{\varepsilon}$ system is
given by
\begin{equation*}
  \mathscr{H}= \mathscr{H}_{\mathcal{S}}\otimes \mathscr{H}_{\mathcal{P}_{\varepsilon}}= \mathbb{C}^2\otimes \mathscr{F}\bigl(L^2(\mathbb{R}^3)\bigr)\;.
\end{equation*}
The unitary measurement coupling $U_{\varepsilon}$ is defined through the
so-called \emph{spin-boson} Hamiltonian operator. The spin-boson Hamiltonian
is formally defined as follows:
\begin{equation*}
  H_{\varepsilon}(\mathfrak{g})= \sigma_z\otimes \mathds{1} + \mathds{1}\otimes \mathrm{d}\Gamma_{\varepsilon}(-\Delta)+ \mathfrak{g}\, \sigma_x\otimes \phi_{\varepsilon}(g)\;;
\end{equation*}
where:
\begin{itemize}
  \setlength{\itemsep}{2mm}
\item $\mathfrak{g}\in \mathbb{R}$ is the coupling constant, measuring the strength of the coupling
  between the system and the probe; the sign of $\mathfrak{g}$ does not determine the
  sign of the interaction, so for convenience we assume $\mathfrak{g}>0$.
\item $\sigma_x=
  \begin{pmatrix}
    0&1\\1&0
  \end{pmatrix}
  $ is the first Pauli matrix.
\item $g\in L^2(\mathbb{R}^3)$ is the charge distribution of $\mathcal{S}$
  (with respect to the force field, that is typically electromagnetic).
\item we postpone a precise definition of the field operators
  $\mathrm{d}\Gamma_{\varepsilon}(-\Delta)$ and $\phi_{\varepsilon}(\cdot )$
  to \textsection\ref{sectioncoherentstates} below, but they correspond,
  respectively, to the free field Hamiltonian and field operator.
\item the $\varepsilon$-dependence in the Hamiltonian comes through
  $\mathrm{d}\Gamma_{\varepsilon}$ and $\phi_{\varepsilon}$, \emph{i.e.}\
  from the fact that we impose $\varepsilon$-dependent semiclassical
  canonical commutation relations in the Fock space:
  $[a_{\varepsilon}(f),a^{*}_{\varepsilon}(f')]=\varepsilon \langle f , f'
  \rangle_2$; see again \textsection\ref{sectioncoherentstates} for further
  details.
\end{itemize}
The spin-boson thus describes the interaction between the two-level system
$\mathcal{S}$, for which the measured observable $\sigma_z$ is the free
Hamiltonian, so in this picture its eigenvalues correspond to energy levels,
and a force-carrying bosonic field $\mathcal{P}_{\varepsilon}$ whose
excitations are created and absorbed by $\mathcal{S}$, through the coupled
field operator $\sigma_x\otimes \phi_{\varepsilon}$.

In physics, the spin-boson model is referred to as ``ubiquitous''
\citep{leggett1987rmp}, as it describes many models of interest in quantum
optics, condensed matter and quantum chemistry, such as the interaction of a
qubit with its environment. Most notably, the idea of using the spin-boson to
address/test the measurement problem is not new \citep{leggett1985prl},
although this was done in a completely different way, by trying to consider
the two levels in $\mathcal{S}$ as ``macroscopically distinct'': the authors
argue that this cannot be the case, and thus $\mathcal{S}$ can indeed serve
as a \emph{bona fide} quantum system, that must be subjected to quantum
measurement.

Although, as already noted, we postpone some precise (but somewhat standard)
definitions about symmetric Fock spaces to
\textsection\ref{sectioncoherentstates}, we are in a position to state the
result concerning self-adjointness of $H_{\varepsilon}(\mathfrak{g})$, whose
proof is standard \citep[see, \emph{e.g.},][for the essential
self-adjointness part]{falconi2015mpag}. We denote by
$\mathscr{F}_{\mathrm{fin}}$ the so-called \emph{finite particle vectors} in
the Fock space, \emph{i.e.}\ the vectors with only finitely many non-zero
components in the direct sum $\bigoplus_{n\in
  \mathbb{N}}L^2_{\mathrm{sym}}(\mathbb{R}^{3n})$.
\begin{proposition}
  \label{prop:2}
  For all $\mathfrak{g}\in \mathbb{R}$ and $g\in L^2(\mathbb{R}^3)$, the
  spin-boson Hamiltonian $H_{\varepsilon}(\mathfrak{g})$ is essentially
  self-adjoint on $D(\mathds{1}\otimes
  \mathrm{d}\Gamma_{\varepsilon}(-\Delta))\cap \mathbb{C}^2\otimes
  \mathscr{F}_{\mathrm{fin}}$. If $g\in \dot{H}^{-1}(\mathbb{R}^3)\cap
  L^2(\mathbb{R}^3)$, then $H_{\varepsilon}(\mathfrak{g})$ is self-adjoint on
  $D(\mathds{1}\otimes \mathrm{d}\Gamma_{\varepsilon}(-\Delta))$ and bounded
  from below.
\end{proposition}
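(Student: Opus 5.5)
We split $H_{\varepsilon}(\mathfrak{g})=H_0+V$ with
\[
H_0=\sigma_z\otimes \mathds{1}+\mathds{1}\otimes \mathrm{d}\Gamma_{\varepsilon}(-\Delta),\qquad V=\mathfrak{g}\,\sigma_x\otimes \phi_{\varepsilon}(g),
\]
and treat the two statements separately. $H_0$ is self-adjoint and non-negative up to the bounded shift $\sigma_z$ on $D(\mathds{1}\otimes\mathrm{d}\Gamma_{\varepsilon}(-\Delta))$. Since $\mathrm{d}\Gamma_\varepsilon(-\Delta)$ preserves each $n$-particle sector and acts there as a Laplacian, $\mathbb{C}^2\otimes\mathscr{F}_{\mathrm{fin}}\cap D(H_0)$ is a core for $H_0$. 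We use the standard number estimates
\[
\|a_\varepsilon(f)\psi\|\le \|f\|_2\,\|N_\varepsilon^{1/2}\psi\|,\qquad \|a^*_\varepsilon(f)\psi\|\le \|f\|_2\,\|(N_\varepsilon+\varepsilon)^{1/2}\psi\|,
\]
where $N_\varepsilon=\mathrm{d}\Gamma_\varepsilon(\mathds{1})$. In the massless case $-\Delta$ does not control $N_\varepsilon$, so we cannot use Kato--Rellich naively.

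\emph{Essential self-adjointness for $g\in L^2$.} We apply Nelson's commutator theorem with comparison operator
\[
T=H_0+N_\varepsilon+c,
\]
with $c$ large. Equivalently, we can view $\phi_\varepsilon(g)$ as an operator with finite propagation in particle number: it maps the $n$-sector to the $n\pm1$ sectors with norm $\lesssim \sqrt{\varepsilon(n+1)}\,\|g\|_2$. The first Nelson condition, $\|H_\varepsilon\psi\|\le C\|T\psi\|$, follows from the number estimates, because $\|\phi_\varepsilon(g)\psi\|\lesssim\|(N_\varepsilon+1)^{1/2}\psi\|\lesssim\|T\psi\|$. For the commutator condition, $H_0$ commutes with $N_\varepsilon$, so
\[
[H_\varepsilon,T]=\mathfrak{g}\,\sigma_x\otimes\bigl([\phi_\varepsilon(g),\mathrm{d}\Gamma_\varepsilon(-\Delta)]+[\phi_\varepsilon(g),N_\varepsilon]\bigr).
\]
The term $[\phi_\varepsilon(g),N_\varepsilon]\sim \varepsilon\, i\pi_\varepsilon(g)$ is fine. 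The term $[\phi_\varepsilon(g),\mathrm{d}\Gamma_\varepsilon(-\Delta)]$ formally involves $\phi(\Delta g)$, which need not make sense for $g\in L^2$.

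This is the main obstacle. To avoid it, we would use instead an analytic-vector argument: a Nelson-type analytic vector theorem on finite-particle vectors. Take $\psi\in \mathbb{C}^2\otimes\mathscr{F}_{\mathrm{fin}}$ with each component having compact Fourier support, cut off at momentum $\le R$. Then $\mathrm{d}\Gamma(-\Delta)$ is bounded by $nR^2$ on $n$-particle vectors, but applying $\phi(g)$ destroys the compact support. We therefore follow \citep{falconi2015mpag}, which the statement cites, and verify that vectors of the form
\[
\sum_{n\le M}\psi_n,\qquad \psi_n\in D(\mathrm{d}\Gamma(-\Delta)),
\]
give a core via a resolvent/dense-range argument. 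Concretely, we show that $\operatorname{Ran}(H_\varepsilon\pm i)$ is dense on that domain. Suppose $\chi\perp\operatorname{Ran}(H_\varepsilon\pm i)$. Using the tridiagonal structure in $n$, we solve recursively for the sectors of $\chi$ and show that the bound $\|\phi\|_{n\to n\pm1}\lesssim\sqrt n$ forces $\chi=0$, by a Carleman-type $\sum 1/\sqrt n=\infty$ criterion for Jacobi-type block matrices. This is the step we expect to be delicate.

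\emph{Self-adjointness on $D(H_0)$ for $g\in\dot H^{-1}\cap L^2$.} Here we use Kato--Rellich. The standard bound is
\[
\|a_\varepsilon(g)\psi\|\le \|\,|k|^{-1}g\|_2\;\|\mathrm{d}\Gamma_\varepsilon(|k|^2)^{1/2}\psi\|,
\]
obtained by Cauchy--Schwarz with weight $|k|$ (more precisely with $\omega=|k|^2$, i.e.\ $\|\omega^{-1/2}g\|=\|g\|_{\dot H^{-1}}$). Adding the $\varepsilon\|g\|_2^2$ commutator term gives the same bound for $a^*_\varepsilon(g)$ up to $\varepsilon^{1/2}\|g\|_2\|\psi\|$. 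Hence $V$ is infinitesimally $H_0$-form bounded, and in fact infinitesimally operator bounded:
\[
\|V\psi\|\le \delta\|H_0\psi\|+C_\delta\|\psi\|,\qquad \text{using } \|\mathrm{d}\Gamma^{1/2}\psi\|^2\le \delta\|\mathrm{d}\Gamma\psi\|^2/\mathrm{const}+C\|\psi\|^2 .
\]
Kato--Rellich then gives self-adjointness on $D(H_0)$ and boundedness below.
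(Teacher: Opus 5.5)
Your route is the standard one the paper points to; the paper itself gives no argument beyond calling the result standard and citing Falconi's Fock-space criterion for the first part. Your second part is complete and correct: $\|a_\varepsilon(g)\psi\|\le\|g\|_{\dot H^{-1}}\|\mathrm{d}\Gamma_\varepsilon(-\Delta)^{1/2}\psi\|$, the CCR add $\sqrt{\varepsilon}\|g\|_2\|\psi\|$ for $a^*_\varepsilon(g)$, and Kato--Rellich with relative bound zero gives self-adjointness on $D(H_0)$ and semiboundedness. For the first part you correctly see why Nelson's commutator theorem fails for general $g\in L^2$, and you land on the right mechanism: a Carleman-type argument using that $V$ shifts particle number by $\pm1$ with sector norms $O(\sqrt n)$. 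The Nelson-commutator and analytic-vector detours can be dropped.

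\textbf{The gap.} The step you call delicate is only announced, and it cannot be done by ``solving recursively for the sectors of $\chi$''. The block coupling sector $n+1$ back to sector $n$ is an annihilation operator with infinite-dimensional kernel, so $\chi_{n+1}$ is not determined by $\chi_n,\chi_{n-1}$. The diagonal blocks are also unbounded. Recursion must be replaced by an imaginary-part identity, as follows.

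Let $H_\varepsilon^*\chi=i\chi$, and let $P_n$ be the projection onto sectors with at most $n$ particles.

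\emph{Domain step.} Since $V$ is bounded from sector $n$ to sectors $n\pm1$, the map $\psi_n\mapsto\langle H_0\psi_n,\chi_n\rangle=i\langle\psi_n,\chi_n\rangle-\langle V\psi_n,\chi_{n-1}+\chi_{n+1}\rangle$ is bounded on $D(H_0)\cap(\mathbb{C}^2\otimes L^2_n)$. Since $H_0$ restricted to each sector is self-adjoint, every $\chi_n\in D(H_0)$. Hence $P_n\chi$ lies in the claimed core.

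\emph{Identity.} $H_\varepsilon P_n\chi$ has components only up to sector $n+1$, and $\langle H_\varepsilon P_n\chi,P_n\chi\rangle\in\mathbb{R}$ by symmetry. Therefore
\[
\|P_n\chi\|^2=\mathrm{Im}\,\langle H_\varepsilon P_n\chi,\chi\rangle=\mathrm{Im}\,\langle V_{n+1,n}\chi_n,\chi_{n+1}\rangle\le |\mathfrak{g}|\,\|g\|_2\sqrt{\varepsilon(n+1)/2}\;\|\chi_n\|\,\|\chi_{n+1}\|\;.
\]

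\emph{Conclusion.} If $\chi\neq0$, then $\|P_n\chi\|^2\ge c>0$ for $n\ge n_0$. This gives $\|\chi_n\|^2+\|\chi_{n+1}\|^2\ge 2c\,(|\mathfrak{g}|\,\|g\|_2)^{-1}\sqrt{2/(\varepsilon(n+1))}$, which is not summable, contradicting $\chi\in\mathscr{H}$. If $\mathfrak{g}g=0$, the inequality forces $\chi=0$ directly. The case $-i$ is identical. This yields essential self-adjointness for every $\mathfrak{g}\in\mathbb{R}$ and $g\in L^2$.

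With this filled in, your proposal is a complete proof.
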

The unitary evolution operator associated with the spin-boson Hamiltonian is
\begin{equation}\label{def:propagation_time}
  U_{\varepsilon}(t)= e^{-i \frac{t}{\varepsilon}H_{\varepsilon}(\mathfrak{g})}\;;
\end{equation}
where the $\varepsilon$-dependence reflects the so-called Born-Oppenheimer or
adiabatic regime, as we discuss below. The unitary $U_{\varepsilon}(t)$ is
not, by itself, the measurement coupling; to obtain the measurement coupling,
we shall allow the field to scatter, and thus take a suitably long time
$t\gg1$, passing to the so-called \emph{interaction representation}.
\begin{equation*}
  \widetilde{U}_{\varepsilon}(t)= e^{i \frac{t}{\varepsilon}\mathds{1}\otimes \mathrm{d}\Gamma_{\varepsilon}(-\Delta)} e^{-i \frac{t}{\varepsilon}H_{\varepsilon}(\mathfrak{g})}\;.
\end{equation*}
We take a (large) semiclassical time $t_{\varepsilon}$ to be fixed precisely
below (let us anticipate that $t_{\varepsilon}= \mathrm{O}(-\ln
\varepsilon)$); then, we define the measurement coupling $U_{\varepsilon}$
by:
\begin{equation*}
  U_{\varepsilon}:= \widetilde{U}_{\varepsilon}(t_{\varepsilon})\;.
\end{equation*}
Observe that it could be possible to define another measurement scheme by
taking $U_{\varepsilon}=\Omega^+_{\varepsilon}$, the latter being the
spin-boson wave operator (whenever it exists):
\begin{equation*}
  \Omega^+_{\varepsilon}= \underset{t\to \infty}{\mathrm{s-lim}}\; \widetilde{U}_{\varepsilon}(t)\;.
\end{equation*}
As a matter of fact, our choice approximates this wave operator in
logarithmic times, but only as $\varepsilon\to 0$. We plan to study whether
$U_{\varepsilon}=\Omega^+_{\varepsilon}$ could still define a semiclassical measurement scheme in
future works \citep[drawing inspiration from][where semiclassical scattering
is studied for a related model, but with a different
scaling]{ammari2021arxiv}; it might be that one needs to introduce the
interaction with the environment and thus a Lindblad-type dynamics, to be
able to take the limits $t\to \infty$ and $\varepsilon\to 0$ in any order.

It remains to choose the semiclassical zero of the scale state
$\varsigma_{\varepsilon}$ and the pointer observable $\mathcal{Z}_{\varepsilon}$. Let $\underline{u}\in L^2(\mathbb{R}^3)$, to be
fixed later; we set
\begin{equation*}
  \varsigma_{\varepsilon}:= \lvert \underline{u}_{\varepsilon}\rangle\langle \underline{u}_{\varepsilon}\rvert
\end{equation*}
to be the coherent state centered in the classical field configuration
$\underline{u}$. More precisely,
\begin{equation*}
  \lvert \underline{u}_{\varepsilon}\rangle = e^{\frac{1}{\varepsilon}(a^{*}_{\varepsilon}(\underline{u})-a_{\varepsilon}(\underline{u}))}\Omega\;,
\end{equation*}
where $\Omega=(1,0,0,\dotsc,0,\dotsc)\in \mathscr{F}\bigl(L^2(\mathbb{R}^3)\bigr)$ is the Fock vacuum
vector. Coherent states are crucial in quantum optics: they are the ``most
classical'' quantum states (they have minimal
uncertainty)~\cite{Hagedorn_71,Combescure_Robert_book,Fe_LeRo}, and they are
used to model the lasing states that can be prepared in the lab. The pointer
observable $\mathcal{Z}_{\varepsilon}$ is chosen as the field operator $\phi_{\varepsilon}(f)$, for a suitable
test function $f\in L^2$ to be fixed later.

This leads to the following definition.
\begin{definition}[Concrete von Neumann scheme
  $\mathscr{M}_{\varepsilon}(\sigma_z)$]
  \label{def:6}
  The von Neumann subscheme $\mathscr{M}_{\varepsilon}(\sigma_z)$ for the
  semiclassical observation of the third Pauli matrix $\sigma_z$ is given by
  the triple:
  \begin{equation*}
    \mathscr{M}_{\varepsilon}(\sigma_z)= (\mathscr{H}_{\mathcal{P}_{\varepsilon}}\,,\,\varsigma_{\varepsilon}\,,\,\mathcal{Z}_{\varepsilon}\,,\,U_{\varepsilon})= \Bigl(\mathscr{F}\bigl(L^2(\mathbb{R}^3)\bigr)\,,\, \lvert \underline{u}_{\varepsilon}\rangle\langle \underline{u}_{\varepsilon}\rvert\,,\,\phi_{\varepsilon}(f)\,,\, e^{i \frac{t_{\varepsilon}}{\varepsilon}\mathds{1}\otimes \mathrm{d}\Gamma_{\varepsilon}(-\Delta)}\,e^{-i \frac{t_{\varepsilon}}{\varepsilon}H_{\varepsilon}(\mathfrak{g})}\Bigr)\;,
  \end{equation*}
  where $\mathfrak{g}\in \mathbb{R}$, $\underline{u},g\in L^2(\mathbb{R}^3)$
  and $t_{\varepsilon}$ are to be chosen suitably below.
\end{definition}

\subsection{The Bohr subscheme $\mathscr{M}_0$}
\label{sec:bohr-subscheme}

The classical probe $\mathcal{P}_0$ consists of the classical counterpart of
the quantum ``light'' (force-carrying field) of
\textsection\ref{sec:von-neumann-part}.

Since a classical field has infinitely many degrees of freedom, its phase
space is a functional space. In this specific case,
\begin{equation*}
  \mathfrak{X}_{\mathcal{P}_0}:= L^2(\mathbb{R}^3)\;,
\end{equation*}
and the initial classical configuration of the field is
\begin{equation*}
  y:= \underline{u}\;,
\end{equation*}
where $\underline{u}$ is the function that appears in the coherent state
vector $\lvert \underline{u}_{\varepsilon}\rangle$ of the von Neumann scheme
$\mathscr{M}_{\varepsilon}(\sigma_z)$, see Definition~\ref{def:6}.

Now, in order to define the spectral pointer configuration function
$x(\cdot)$ and the measurement coupling $\mathfrak{m}[\cdot ,\cdot ]$, we
shall understand how the system $\mathcal{S}+\mathcal{P}_{\varepsilon}$
evolves in the limit $\varepsilon\to 0$ (and then $t\to \infty$). Although we
discuss in details the semiclassical limit below, it is well known that, in a
suitable sense,
\begin{equation*}
  \lim_{\varepsilon\to 0}\lvert \underline{u}_{\varepsilon}\rangle\langle \underline{u}_{\varepsilon}\rvert = \delta_{\underline{u}}\;.
\end{equation*}
Therefore, before turning on the coupling $e^{-i \frac{t}{\varepsilon}H_{\varepsilon}(\mathfrak{g})}$, the
probe $\mathcal{P}_0$ is indeed in the configuration~$\underline{u}$. Furthermore, we
can see the Hamiltonian $H_{\varepsilon}(\mathfrak{g})$ as a operator-valued matrix
\begin{equation*}
  H_\eps(\mathfrak g) = \begin{pmatrix}
    1 + \di \Gamma_\eps(-\Delta) & \mathfrak g \,\phi_\eps (g) \\ \mathfrak g \,\phi_\eps(g) & -1 + \di \Gamma_\eps(-\Delta)
  \end{pmatrix}\;,
\end{equation*}
whose formal matrix eigenvalues are given by
\begin{equation}\label{def:lambda_eps_pm}
  \lambda_{\pm}^\eps = \di \Gamma_\eps(-\Delta) \pm \sqrt{ 1 + \mathfrak g^2 \,\phi_\eps (g)^2}\;.
\end{equation}
If we substitute $\mathrm{d}\Gamma_{\varepsilon}(-\Delta)$ and $\phi_{\varepsilon}(g)$ with their classical
symbols
\begin{gather*}
  \mathrm{d}\Gamma_{\varepsilon}(-\Delta)\rightsquigarrow \langle\, \cdot \,  , -\Delta\, \cdot \, \rangle_2\\
  \phi_{\varepsilon}(g) \rightsquigarrow  \sqrt{2}\, \mathrm{Re}\, \langle\, \cdot \,  , g \rangle_2,
\end{gather*}
we obtain two classical Hamiltonians
\begin{equation*}
  \lambda_{\pm}(u) = \langle u  , -\Delta u \rangle_2 \pm \sqrt{1+ 2\mathfrak{g}^2\, \mathrm{Re}\, \langle u  , g \rangle_2^2 }\;,
\end{equation*}
whose corresponding Hamilton equations read
\begin{equation*}
  i \partial_t{u}_\pm(t) = -\Delta u_\pm(t) \pm \mathfrak g^2 \frac{\mathrm{Re}\langle u_\pm(t),g \rangle_2\, g}{\sqrt{1+2 \mathfrak g^2 \mathrm{Re} \langle u_\pm(t),g\rangle_2^2}}\;.
\end{equation*}
From this we can see that as soon as the interaction between $\mathcal{S}$ and $\mathcal{P}_0$ is
turned on, there are two bifurcating trajectories in the phase space,
starting from $\underline{u}$: the (unique) solutions $t\mapsto u_{\pm}(t)$ to Cauchy
problems
\begin{equation}
  \label{eq:1}
  \begin{cases}
    \: i \partial_t{u}_\pm(t) = -\Delta u_\pm(t) \pm \mathfrak g^2 \frac{\mathrm{Re}\langle u_\pm(t),g \rangle_2\, g}{\sqrt{1+2 \mathfrak g^2 \mathrm{Re} \langle u_\pm(t),g\rangle^2_2}}\\
    \: u_{\pm}(0)=\underline{u}
  \end{cases}\;.
\end{equation}
It could then seem natural to identify $x_+=u_+(t^{*})$ and $x_-=u_-(t^{*})$,
at some (arbitrary?) time $t^{*}>0$, with the pointer configurations
corresponding respectively to the $+1$ and $-1$ eigenvalues of $\sigma_z$. In
order to do that, we first need $u_+(t^{*})\neq u_-(t^{*})$; now, suppose this
is the case (in fact, it is), then we also need to check whether this choice
leads to a projective and sharp measurement scheme. A general measurement
coupling $\mathfrak{m}_{t^{*}}[\varrho,\underline{u}]$ that splits the initial classical
configuration in the two trajectories is of the form
\begin{multline*}
  \mathfrak{m}_{t^{*}}[\varrho,\underline{u}]= \Bigl(\,(p^+_{++}(t^{*})\lvert+\rangle\langle+\lvert) \;+\; (p^+_{--}(t^{*})\lvert-\rangle\langle-\rvert) \;+\; (\zeta^+_{+-}(t^{*})\lvert+\rangle\langle-\rvert)\;+\;(\bar{\zeta}^+_{+-}(t^{*})\lvert-\rangle\langle+\rvert)\,\Bigr) \,\delta_{u_+(t^{*})} \\+ \Bigl(\,(p^-_{++}(t^{*})\lvert+\rangle\langle+\lvert) \;+\; (p^-_{--}(t^{*})\lvert-\rangle\langle-\rvert )\;+\; (\zeta^-_{+-}(t^{*})\lvert+\rangle\langle-\rvert)\;+\;(\bar{\zeta}^-_{+-}(t^{*})\lvert-\rangle\langle+\rvert)\,\Bigr)\, \delta_{u_-(t^{*})}\;,
\end{multline*}
where $\lvert+\rangle=(1,0)$ and $\lvert-\rangle=(0,1)$ are the eigenvectors of $\sigma_z$, and the
coefficients satisfy: $0\leq p^{\pm}_{++}(t^{*}),p^{\pm}_{--}(t^{*})\leq 1$
and $0\leq \lvert\zeta^{\pm}_{+-}(t^{*})\rvert\leq 1$. Without loss of
generality, we can assume that $\varrho=\lvert \psi\rangle\langle\psi\rvert$,
with $\psi=\alpha_+\lvert+\rangle + \alpha_-\lvert-\rangle$, and $\lvert
\alpha_+ \rvert_{}^2+\lvert \alpha_- \rvert_{}^2=1$. Then by
Proposition~\ref{prop:1} the measurement scheme is projective and sharp if
and only if
\begin{gather*}
  p^+_{++}(t^{*})= \lvert \alpha_+  \rvert_{}^2\;, \; p^-_{--}(t^{*})= \lvert \alpha_-  \rvert_{}^2\;;\; p^+_{--}(t^{*})=p^-_{++}(t^{*})=\zeta^+_{+-}(t^{*})=\zeta^-_{+-}(t^*)=0\;.
\end{gather*}
This \emph{is not true} for any $t^{*}>0$, but it becomes true as $t^{*}\to
\infty$; furthermore, the classical fields $u_{\pm}(t)$ indeed scatter: there
exist asymptotic states $u_{\pm}^{\infty}\in L^2(\mathbb{R}^3)$ such that
\begin{equation*}
  \lim_{t\to \infty} \lVert u_{\pm}(t)- e^{it\Delta}u_{\pm}^{\infty}  \rVert_2^{}=0\;.
\end{equation*}
This leads to the following definition of measurement coupling: let
$p_{\pm}(\varrho)= \langle \pm , \varrho\, \pm \rangle_{\mathcal{S}}$, then
\begin{equation*}
  \mathfrak{m}[\varrho, \underline{u}]:= p_+(\varrho)\, \lvert +\rangle\langle +\rvert \, \delta_{u_+^{\infty}}\:+\: p_-(\varrho)\, \lvert -\rangle\langle -\rvert \, \delta_{u_-^{\infty}}\;.
\end{equation*}
Summing all this up, we arrive to the following Bohr subscheme.
\begin{definition}[Concrete Bohr scheme $\mathscr{M}_0(\sigma_z)$]
  \label{def:7}
  The Bohr subscheme $\mathscr{M}_0(\sigma_z)$ for the semiclassical
  observation of the third Pauli matrix $\sigma_z$ is given by the quadruple:
  \begin{equation*}
    \mathscr{M}_0(\sigma_z)= (\mathfrak{X}_{\mathcal{P}_0}, y, x(\lambda), \mathfrak{m}[\cdot ,y])= \Bigl(L^2(\mathbb{R}^3)\,,\, \underline{u}\,,\, u_{\pm}^{\infty}\,,\, p_+(\cdot )\, \lvert +\rangle\langle +\rvert \, \delta_{u_+^{\infty}}\:+\: p_-(\cdot )\, \lvert -\rangle\langle -\rvert \, \delta_{u_-^{\infty}}\Bigr)\;.
  \end{equation*}
\end{definition}

\subsection{The limit $\varepsilon\to 0$}
\label{sec:limit-varepsilonto-0}

To study the limit $\varepsilon\to 0$, we use techniques from semiclassical
analysis. A systematic study of the semiclassical analysis of models of
particles in interaction with a force-carrying field has been initiated by
one of the authors and Z.~Ammari in \citep{ammari2014jsp,ammari2017sima},
and then in the context of so-called quasi-classical limits with M.~Correggi
and M.~Olivieri in \citep{correggi2022quasi,correggi2023apde}.

The scaling chosen in~\eqref{def:propagation_time} is related to the
\emph{adiabatic approximation} \citep[see][]{Teufel_book}. Indeed, within the
semiclassical time scale $\tau= \frac{t}{\varepsilon}$ physical quantities $F(t)$ exhibit
slow variations: when $\tau$ is fixed, any $\tau$-derivative of $F(t)=F(\eps\tau)$ is damped by the
small parameter $\eps$. These small variations are indicative of adiabatic
thermodynamic processes. More precisely, the term \emph{adiabatic} derives
from the Greek term {\tt
  \textgreek{<ad$\mspace{-2mu}$i$\mspace{-2.3mu}$'abatoc}}, which means
\emph{impassable}. Indeed, the formal matrix eigenvalues defined
in~\eqref{def:lambda_eps_pm}, are separated by an impassable gap:
$\lambda^\eps_+-\lambda^\eps_-\geq 2>0$. It is proved that for adiabatic standard systems of
semiclassical PDEs
\citep[see][]{Spohn_Teufel,Martinez_Sordoni,Combescure_Robert_book}, the
general evolution is split at leading order in $\eps$ along the eigenmodes
and, in each of them, governed by Hamiltonians given by the eigenvalues of
the symbols. This idea arises in the early 30-s in the context of the
Born-Oppenheimer approximation for molecular dynamics~\cite{born1927adp}. It
is at the roots of our scheme: we will prove that, at leading order
in~$\eps$, the evolution driven by $U_\eps(t)$ in each mode reduces to those
dictated by the operators $\lambda^\eps_\pm$.

There is a convenient notion of convergence for state-valued measures in this
context, which stems from seminal ideas of both standard and
infinite-dimensional semiclassical analysis
\citep{lions1993rmi,GerLeich93,gerard1997cpam,ammari2008}, and was adapted to
composite systems in
\citep{fermanian2002bsmf,falconi2017arxiv,correggi2022quasi}. Fully quantum
states and state-valued measures can be put on the same grounds by taking
Fourier transforms: let $\xi\in \mathscr{S}(\mathbb{R}^3)$, $s\in \mathbb{R}$,
$\Gamma_{\varepsilon}$ be a state on $\mathscr{H}_{\mathcal{S}}\otimes
\mathscr{F}(L^2(\mathbb{R}^3))$, and $\mathfrak{m}\in
\mathscr{P}\bigl(L^2(\mathbb{R}^3),
\mathfrak{S}^1_+(\mathscr{H}_{\mathcal{S}})\bigr)$ a state-valued measure;
define the Fourier transforms as
\begin{gather*}
  \hat{\Gamma}_{\varepsilon}(\xi,s):= e^{i\varepsilon s}\,\mathrm{tr}_{\mathscr{F}(L^2)}\bigl(e^{i\phi_{\varepsilon}(\xi)}\Gamma_{\varepsilon}\bigr)\\
  \hat{\mathfrak{m}}(\xi,s):= \int_{L^2}^{}e^{i\sqrt{2}\, \mathrm{Re}\,\langle \xi  , x \rangle_2}  \mathrm{d}\mathfrak{m}(x)\;.
\end{gather*}
Both $\hat{\Gamma}_{\varepsilon}$ and $\hat{\mathfrak{m}}$ are positive definite functions on the
infinite-dimensional Heisenberg group $\mathbf{H}(\mathscr{S}, \mathrm{Im}\langle \cdot , \cdot \rangle_2)$
\citep{bekka2020msm} with values in the states $\mathfrak{S}^1_{+,1}(\mathscr{H}_{\mathcal{S}})$ on
$\mathcal{S}\,$.\footnote{The infinite-dimensional Heisenberg group is $\mathscr{S}(\mathbb{R}^3)\times \mathbb{R}$ -- the
  former being the Schwartz space of rapidly decreasing functions -- with the
  product
  %\begin{equation*}
   $ (\xi,s)\cdot (\xi',s')= (\xi+\xi',s+s'- \mathrm{Im}\,\langle \xi  , \xi' \rangle_2)\;.$
  %\end{equation*}
  } The GNS representation induced by $\hat{\Gamma}_{\varepsilon}$ has central character $\varepsilon$, while
the representation induced by $\hat{\mathfrak{m}}$ has trivial central character~$0$ (it is in fact a positive definite function on the abelian group
$(\mathscr{S},+)$, and thus it is independent of~$s$). They are also both weak-* continuous, \emph{i.e.}\ they are both
continuous in $\mathscr{S}\times \mathbb{R}$, if
$\mathfrak{S}^1(\mathscr{H}_{\mathcal{S}})$ is endowed with the weak-* topology\footnote{The Fourier
  transforms $\hat{\Gamma}_{\varepsilon}$ of quantum states are not always continuous, but
  in general only pseudo-continuous (\emph{i.e.}, continuous when restricted
  to finite dimensional subspaces). The states we consider here, however,
  are all $\mathscr{S}$-continuous (actually, $L^2$-continuous), even uniformly with respect
  to $\varepsilon$.}.
\begin{definition}[Semiclassical convergence in the sense of Fourier
  transforms]
  \label{def:8}
  Let $\Gamma_{\varepsilon}$ be a quantum state, and $\mathfrak{m}$ a state-valued measure. Then
  $\Gamma_{\varepsilon}$ converges to $\mathfrak{m}$ \emph{in the sense of Fourier transforms}, denoted
  by
  \begin{equation*}
    \Gamma_{\varepsilon}\xrightarrow[\varepsilon\to 0]{\mathcal{F}} \mathfrak{m}\;,
  \end{equation*}
  if and only if $\hat{\Gamma}_{\varepsilon}$ converges weak-* pointwise to $\hat{\mathfrak{m}}$: for
  all $s\in \mathbb{R}$, $\xi\in \mathscr{S}$, and compact $K\in \mathfrak{S}^{\infty}(\mathscr{H}_{\mathcal{S}})$,
  \begin{equation*}
    \lim_{\varepsilon\to 0}\mathrm{tr}_{\mathscr{H}_{\mathcal{S}}}\bigl(\bigl(\hat{\Gamma}_{\varepsilon}(\xi,s)-\hat{\mathfrak{m}}(\xi,s)\bigr) K\bigr)=0\;.
  \end{equation*}
\end{definition}

There is a weaker notion of semiclassical convergence, by testing with the
quantization of smooth classical symbols (that was used originally to define
these \emph{semiclassical} or \emph{Wigner measures}); however, since we are
able to establish this stronger convergence, we do not delve further into
details (the interested reader shall refer to
\citep{GerLeich93,lions1993rmi,ammari2008}). We do formulate, however a very
important general result on the ``decoherence'' caused by the limit $\varepsilon\to 0$:
if the semiclassical measures associated to two quantum vectors are mutually
singular, then their quantum superposition \emph{vanishes in the limit}; this
result is the infinite-dimensional adaptation of \citep[][Prop.\
1.5]{gerard1997cpam}, its proof is given in the appendix.
\begin{proposition}
  \label{prop:4}
  Let $\mathscr{K}$ be a separable Hilbert space, and let $\psi_{\varepsilon},\varphi_{\varepsilon}\in \mathscr{F}(\mathscr{K})$ be
  normalized and such that there exist $C>0$ and $\delta>0$ so that
  \begin{equation*}
    \langle \psi_{\varepsilon}  , \bigl(\mathrm{d}\Gamma_{\varepsilon}(1)+1)^{\delta}\psi_{\varepsilon} \rangle_{\mathscr{F}(\mathscr{K})}+ \langle \varphi_{\varepsilon}  , \bigl(\mathrm{d}\Gamma_{\varepsilon}(1)+1)^{\delta}\varphi_{\varepsilon} \rangle_{\mathscr{F}(\mathscr{K})}\leq C\;.
  \end{equation*}
  Furthermore, suppose that $\mu,\nu\in \mathscr{P}(\mathscr{K})$ so that
  \begin{gather*}
    \lvert\psi_{\varepsilon}\rangle\langle\psi_{\varepsilon}\rvert \xrightarrow[\varepsilon\to 0]{\mathcal{F}}\mu\;\;\mbox{and}\;\;
\lvert\varphi_{\varepsilon}\rangle\langle\varphi_{\varepsilon}\rvert \xrightarrow[\varepsilon\to 0]{\mathcal{F}}\nu\;.
  \end{gather*}
 If in addition $\mu\perp\nu$, then:
  \begin{itemize}
    \setlength{\itemsep}{1mm}
  \item For all $\alpha,\beta\in \mathbb{C}$,
    \begin{equation*}
      \lvert \alpha \psi_{\varepsilon}+\beta\varphi_{\varepsilon}\rangle\langle\alpha\psi_{\varepsilon}+\beta\varphi_{\varepsilon}\rvert\xrightarrow[\varepsilon\to 0]{\mathscr{F}} \lvert \alpha  \rvert_{}^2\mu+\lvert \beta  \rvert_{}^2\nu\;;
    \end{equation*}
  \item For all $\xi\in \mathscr{K}$,
    \begin{equation*}
      \lim_{\varepsilon\to 0}\langle \varphi_{\varepsilon}  ,e^{i\phi_{\varepsilon}(\xi)} \psi_{\varepsilon} \rangle_{\mathscr{F}(\mathscr{K})}=0\;.
    \end{equation*}
  \end{itemize}
\end{proposition}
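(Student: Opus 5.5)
We reduce both claims to an estimate on the off-diagonal term $\langle \varphi_\varepsilon, e^{i\phi_\varepsilon(\xi)}\psi_\varepsilon\rangle$. The first claim then follows directly. Expand the Fourier transform of the superposition:
\begin{equation*}
\mathrm{tr}\bigl(e^{i\phi_\varepsilon(\xi)}\lvert \alpha\psi_\varepsilon+\beta\varphi_\varepsilon\rangle\langle \alpha\psi_\varepsilon+\beta\varphi_\varepsilon\rvert\bigr)=\lvert\alpha\rvert^2\langle\psi_\varepsilon,e^{i\phi_\varepsilon(\xi)}\psi_\varepsilon\rangle+\lvert\beta\rvert^2\langle\varphi_\varepsilon,e^{i\phi_\varepsilon(\xi)}\varphi_\varepsilon\rangle+2\,\mathrm{Re}\bigl(\bar\beta\alpha\langle \varphi_\varepsilon,e^{i\phi_\varepsilon(\xi)}\psi_\varepsilon\rangle\bigr)\text{-type terms}.
\end{equation*}
The diagonal terms converge to $\lvert\alpha\rvert^2\hat\mu+\lvert\beta\rvert^2\hat\nu$ by hypothesis, and the phase $e^{i\varepsilon s}\to 1$. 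So everything rests on the second claim, and we state it for $\xi\in\mathscr{S}$, or for $\xi$ in a dense set, and extend by the uniform continuity granted by the moment bound.

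For the second claim, we adapt the finite-dimensional argument of Gérard. First, mutual singularity gives, for every $\eta>0$, a Borel set $A$ with $\mu(A^c)<\eta$ and $\nu(A)<\eta$. We want a bounded continuous cutoff, so we use a cylindrical approximation. Since $\mathscr{K}$ is separable, the $\mu,\nu$-measure of $A$ can be approximated by sets depending only on finitely many coordinates. Pick an orthogonal projection $P$ of finite rank $d$ such that the image measures $P_\#\mu$ and $P_\#\nu$ on $P\mathscr{K}\cong\mathbb{C}^d$ are nearly singular. Then take $\chi\in C_c^\infty(\mathbb{C}^d)$, $0\le\chi\le1$, with
\begin{equation*}
\int(1-\chi)^2\,dP_\#\mu+\int\chi^2\,dP_\#\nu<\eta.
\end{equation*}
The near-singularity is obtained by regularity of the measures together with the martingale convergence of the conditional densities along an increasing sequence of projections $P_n\uparrow 1$. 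The tightness needed here comes from the $\delta$-moment bound: it makes $\mu,\nu$ have finite $\lVert x\rVert^{2\delta}$ moments and ensures no mass is lost in the limit.

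Next, we quantize $\chi$. Let $\chi^{W}_\varepsilon$ be the Weyl quantization in the variables $P\mathscr{K}$, expressed via the finite-dimensional Fourier transform as $\int\hat\chi(\zeta)e^{i\phi_\varepsilon(\zeta)}d\zeta$ with $\zeta\in P\mathscr{K}$. This is a bounded operator, uniformly in $\varepsilon$. We then split
\begin{equation*}
\langle\varphi_\varepsilon,e^{i\phi_\varepsilon(\xi)}\psi_\varepsilon\rangle=\langle\varphi_\varepsilon,e^{i\phi_\varepsilon(\xi)}\chi^W_\varepsilon\psi_\varepsilon\rangle+\langle\varphi_\varepsilon,e^{i\phi_\varepsilon(\xi)}(1-\chi^W_\varepsilon)\psi_\varepsilon\rangle.
\end{equation*}
In the first term we commute $\chi^W_\varepsilon$ past $e^{i\phi_\varepsilon(\xi)}$. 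By the Weyl relations this produces $\chi(\cdot+\text{shift})^W_\varepsilon$ with shift $\sqrt2\,\varepsilon$-independent translation by $P\xi$ (imaginary part), up to $O(\varepsilon)$ errors. We can absorb this shift by choosing $\chi$ so that the estimate also holds for its translate, for instance by replacing $\chi$ by a product of translates, or simply by noting that translation preserves near-singularity of the supports. Cauchy–Schwarz then bounds the first term by $\lVert\tilde\chi^W_\varepsilon\varphi_\varepsilon\rVert+O(\varepsilon)$ and the second by $\lVert(1-\chi^W_\varepsilon)\psi_\varepsilon\rVert$. Their squares are expectations of $(\tilde\chi^2)^W_\varepsilon$ and $((1-\chi)^2)^W_\varepsilon$ up to $O(\varepsilon)$ symbolic calculus errors. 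These expectations converge, by the Fourier-transform convergence integrated against $\hat\chi$ (dominated convergence), to $\int\tilde\chi^2d\nu$ and $\int(1-\chi)^2d\mu$, both smaller than $\eta$. Hence $\limsup\lvert\langle\varphi_\varepsilon,e^{i\phi_\varepsilon(\xi)}\psi_\varepsilon\rangle\rvert\le2\sqrt\eta$, and since $\eta$ is arbitrary the limit is zero.

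\textbf{Main obstacle.} The main difficulty is the infinite-dimensional reduction: producing the finite-rank $P$ and cutoff $\chi$ that nearly separate $\mu$ and $\nu$, while handling the shift by $\xi$. Both $P$ and $\chi$ must remain compatible with the moment bound, which is what guarantees tightness and the validity of the finite-dimensional symbolic calculus uniformly in $\varepsilon$.
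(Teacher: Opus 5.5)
Your route is genuinely different from the paper's and is sound in substance, but one step is misdescribed, and the remedy you propose for it is false. Conjugation by $e^{i\phi_\varepsilon(\xi)}$ with $\xi$ fixed is not an order-one translation of phase space. Since $[a_\varepsilon(f),\phi_\varepsilon(\xi)]=\tfrac{\varepsilon}{\sqrt2}\langle f,\xi\rangle$, one has $e^{-i\phi_\varepsilon(\xi)}a_\varepsilon(f)e^{i\phi_\varepsilon(\xi)}=a_\varepsilon(f)+\tfrac{i\varepsilon}{\sqrt2}\langle f,\xi\rangle$, so the symbol moves only by $\mathrm{O}(\varepsilon)$; order-one translations are the operators $e^{\frac1\varepsilon(a^*_\varepsilon(u)-a_\varepsilon(u))}$. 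Concretely, the Weyl relations give $e^{i\phi_\varepsilon(\xi)}e^{i\phi_\varepsilon(\zeta)}e^{-i\phi_\varepsilon(\xi)}=e^{-i\varepsilon\,\mathrm{Im}\langle\xi,\zeta\rangle}e^{i\phi_\varepsilon(\zeta)}$, hence
\begin{equation*}
\bigl\lVert[e^{i\phi_\varepsilon(\xi)},\chi^W_\varepsilon]\bigr\rVert\le\varepsilon\,\lVert\xi\rVert\int_{P\mathscr{K}}\lvert\hat\chi(\zeta)\rvert\,\lVert\zeta\rVert\,d\zeta\;.
\end{equation*}
The first term is therefore bounded by $\lVert\chi^W_\varepsilon\varphi_\varepsilon\rVert+\mathrm{O}(\varepsilon)$, and there is no shift to absorb. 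Drop that discussion entirely. A fixed translate of $\chi$ need not be small on $\mathrm{supp}\,\nu$: take $\mu,\nu$ to be Dirac masses differing by exactly that translation. Indeed, no choice of $\chi$ could absorb an order-one shift, because the conclusion would then fail for two coherent states centred at points differing by that shift.

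With this correction the rest goes through. In operator norm, $(\chi^W_\varepsilon)^2=(\chi^2)^W_\varepsilon+\mathrm{O}(\varepsilon)$ and $(1-\chi^W_\varepsilon)^2=1-2\chi^W_\varepsilon+(\chi^2)^W_\varepsilon+\mathrm{O}(\varepsilon)$. Dominated convergence against $\hat\chi\in L^1$ then yields the limits $\int\chi(Px)^2\,d\nu$ and $\int(1-\chi(Px))^2\,d\mu$. A minor point on the first claim: the cross terms are $\bar\alpha\beta\langle\psi_\varepsilon,e^{i\phi_\varepsilon(\xi)}\varphi_\varepsilon\rangle+\alpha\bar\beta\langle\varphi_\varepsilon,e^{i\phi_\varepsilon(\xi)}\psi_\varepsilon\rangle$, not a real part. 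Both are nonetheless covered by the second claim applied with $\pm\xi$.

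The paper argues abstractly instead. It applies Ammari--Nier's Prop.~6.4 to the non-positive trace-class operators $\lvert\psi_\varepsilon\rangle\langle\varphi_\varepsilon\rvert$; this is where the $\delta$-moment bound enters. Along a subsequence it obtains a $2\times2$ matrix of limits whose off-diagonal entry is a complex measure $\eta_{\mathbb{C}}$. Positivity of the Wigner measures of all superpositions $\Gamma_\varepsilon(\alpha,\beta)$ then forces $\lvert\eta_{\mathbb{C}}(\mathfrak{B})\rvert^2\le\mu(\mathfrak{B})\nu(\mathfrak{B})$. Hence $\eta_{\mathbb{C}}\ll\mu$ and $\eta_{\mathbb{C}}\ll\nu$, so $\eta_{\mathbb{C}}=0$ when $\mu\perp\nu$; the full limit follows by the usual sub-subsequence argument. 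Given the cited machinery this is shorter, and it identifies the whole joint limit at once.

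Your Gérard-style cutoff argument is self-contained and elementary, and it needs no extraction. In fact it never uses the moment bound:
\begin{itemize}
\item $\mu$ and $\nu$ are assumed to be probability measures, hence Radon, so there is no tightness issue;
\item the Weyl-calculus errors for $\chi\in C_c^\infty$ are $\mathrm{O}(\varepsilon)$ in operator norm, independently of the states;
\item the argument works for every $\xi\in\mathscr{K}$ directly, so no density extension is needed.
\end{itemize}
Your closing remark attributing tightness and uniformity to the moment bound is therefore inaccurate, though harmless. The price of your route is the cylindrical approximation of the separating set in $L^2(\mu+\nu)$, which you sketch correctly: martingale or monotone-class approximation, followed by regularity of measures on $\mathbb{C}^d$.
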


The existence of a semiclassical
measurement of $\sigma_z$ is now formulated as follows. The assumption of
orthogonality between $g$ and $\underline{u}$ below is technical but crucial
to make the scheme projective (see \textsection\ref{sec:semicl-accur} below).
\begin{thm}
  \label{thm:3}
  For all $g\in \mathscr{S}(\mathbb{R}^3)$ and all $\underline{u}\in L^2(\mathbb{R}^3)$ such that $
  \mathrm{Re}\,\langle \underline{u} , g \rangle_2=0$ and there exists $t>0$ such that
  $\mathrm{Re}\,\langle e^{it\Delta}\underline{u} , g \rangle_2\neq0$,\footnote{This assumption
    can be replaced, equivalently, by the assumption $\mathrm{Re}\,\langle \Delta
    \underline{u} , g \rangle_2\neq0$.} there exists $\mathfrak{g}_0>0$ such that for all $\varrho\in
  \mathfrak{S}^1_{+,1}(\mathbb{C}^2)$ and $0<\lvert \mathfrak{g} \rvert_{}^{}<\mathfrak{g}_0$,
  \begin{gather*}
    \varrho\otimes \lvert \underline{u}_{\varepsilon}\rangle\langle \underline{u}_{\varepsilon}\rvert \xrightarrow[\varepsilon\to 0]{\mathcal{F}} \varrho \,\mathrm{d}\delta_{\underline{u}}\\
    U_{\varepsilon}\left(\varrho\otimes \lvert \underline{u}_{\varepsilon}\rangle\langle \underline{u}_{\varepsilon}\rvert\right)U_{\varepsilon}^{*} \xrightarrow[\varepsilon\to 0]{\mathcal{F}} p_+(\varrho)\,\lvert+\rangle\langle+\rvert\,\delta_{u_+^{\infty}}\:+\: p_-(\varrho)\,\lvert-\rangle\langle-\rvert\,\delta_{u_-^{\infty}}\;,
  \end{gather*}
  and $u_+^{\infty}\neq u_-^{\infty}$ (the scheme is sharp).
\end{thm}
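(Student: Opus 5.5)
The first convergence is standard and I dispose of it first. Since $e^{i\phi_\varepsilon(\xi)}$ is a Weyl operator, the coherent state satisfies
\[
\langle \underline{u}_\varepsilon , e^{i\phi_\varepsilon(\xi)}\underline{u}_\varepsilon\rangle = e^{-\frac{\varepsilon}{4}\|\xi\|_2^2}\, e^{i\sqrt{2}\,\mathrm{Re}\langle \xi,\underline{u}\rangle_2}.
\]
This converges to $\hat{\delta}_{\underline{u}}(\xi)$ with an $O(\varepsilon)$ error, and tensoring with $\varrho$ is harmless.

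For the dynamics, I decompose $\varrho$ into the four matrix elements $|\pm\rangle\langle\pm|$ and $|\pm\rangle\langle\mp|$ and work with the vector states $U_\varepsilon(|\pm\rangle\otimes \underline{u}_\varepsilon)$.

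\textbf{Step 1: adiabatic reduction.} The key is an adiabatic decoupling in the spirit of \citep{Spohn_Teufel,Teufel_book}. The operator-valued matrix $H_\varepsilon(\mathfrak{g})$ has formal eigenvalues $\lambda^\varepsilon_\pm$, separated by a gap $\geq 2$. Its eigenprojectors $\Pi^\varepsilon_\pm$ are functions of the single commuting field operator $\phi_\varepsilon(g)$, so they are explicit: rotations by an angle $\theta$ with $\tan 2\theta = \mathfrak{g}\phi_\varepsilon(g)$. I would construct a superadiabatic unitary $W_\varepsilon = e^{i\varepsilon S}$, with $S$ built from $[\mathrm{d}\Gamma_\varepsilon(-\Delta),\phi_\varepsilon(g)] = -\varepsilon\,\phi_\varepsilon(i\Delta g)$-type commutators, which is where $g\in\mathscr{S}$ is used. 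After this conjugation $H_\varepsilon$ is block diagonal up to $O(\varepsilon)$, so that on times $t/\varepsilon$,
\[
U_\varepsilon(t)\,(|\pm\rangle\otimes\underline{u}_\varepsilon) \approx \Pi_\pm\, e^{-i\frac{t}{\varepsilon}\lambda^\varepsilon_\pm}\,(\ldots) + O(\varepsilon e^{Ct}).
\]

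The hypothesis $\mathrm{Re}\langle\underline{u},g\rangle_2=0$ enters here. At $t=0$ the classical mixing angle vanishes, so $\Pi_\pm$ restricted to coherent states near $\underline{u}$ is $|\pm\rangle\langle\pm|$ up to $O(\sqrt{\varepsilon})$. Hence $|+\rangle$ enters the $\lambda_+$ mode only. The second key input is that the coupling term in the Hamilton equations \eqref{eq:1} is $O(\mathfrak{g}^2)$ and decays, since $\mathrm{Re}\langle u_\pm(t),g\rangle_2$ decays by dispersion of $e^{it\Delta}$ for small $\mathfrak{g}$. Consequently the angle $\theta(u_\pm(t))\to 0$ as $t\to\infty$, and the asymptotic projectors again become $|\pm\rangle\langle\pm|$.

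\textbf{Step 2: coherent-state propagation in each mode.} Along each mode I would propagate coherent states for the scalar Hamiltonians $\lambda^\varepsilon_\pm$, following \citep{Hagedorn_71,Combescure_Robert_book} in infinite dimensions. The aim is to show that $e^{-i\frac{t}{\varepsilon}\lambda^\varepsilon_\pm}\underline{u}_\varepsilon$ is close to a squeezed coherent state centered at $u_\pm(t)$, with error $O(\sqrt{\varepsilon}\,e^{Ct})$ by a Gronwall argument. The nonlinearity is a smooth function of the one-dimensional quantity $\mathrm{Re}\langle u,g\rangle$, so it is Lipschitz uniformly.

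Then I conjugate by the free flow in the interaction picture. The small-$\mathfrak{g}$ Duhamel argument gives existence of $u^\infty_\pm$ and the rate $\|e^{-it\Delta}u_\pm(t)-u_\pm^\infty\|\lesssim \int_t^\infty |\mathrm{Re}\langle e^{is\Delta}\underline{u},g\rangle|\,ds$. Choosing $t_\varepsilon = c|\ln\varepsilon|$ with $c$ small balances $\varepsilon^{1/2}e^{Ct_\varepsilon}$ against the scattering tail.

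\textbf{Step 3: diagonal and off-diagonal terms.} The diagonal terms give $p_\pm(\varrho)\,|\pm\rangle\langle\pm|\,\delta_{u_\pm^\infty}$ in the $\mathcal{F}$-sense. The off-diagonal terms vanish by Proposition~\ref{prop:4}, since $\delta_{u_+^\infty}\perp\delta_{u_-^\infty}$; the number-operator moment bound it requires is a standard estimate for coherent states propagated by $H_\varepsilon$. This requires sharpness, which I derive from the hypothesis $\mathrm{Re}\langle e^{it\Delta}\underline{u},g\rangle_2\neq 0$ at some $t$. The first-order Duhamel expansion gives
\[
u_+^\infty-u_-^\infty = -2i\mathfrak{g}^2\int_0^\infty e^{-is\Delta}\,\mathrm{Re}\langle e^{is\Delta}\underline{u},g\rangle_2\, g\,ds + O(\mathfrak{g}^4).
\]
Pairing with a suitable test function shows this is nonzero for $\mathfrak{g}$ small, which also fixes $\mathfrak{g}_0$.

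\textbf{Main obstacle.} I expect the main obstacle to be Step 1 combined with the long-time limit. The adiabatic error must be controlled uniformly up to times $t_\varepsilon\sim|\ln\varepsilon|$ with unbounded field operators, and the mixing angle must be shown to stay small for all such times. I would first handle this via a time-dependent Gronwall estimate on $\|(1-\Pi_\pm)U_\varepsilon(t)(|\pm\rangle\otimes\underline{u}_\varepsilon)\|$, using the smallness of $\mathfrak{g}$ and number-operator bounds on the propagated states.
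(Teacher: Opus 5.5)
Your proposal is correct and follows essentially the paper's route. This includes adiabatic decoupling of the $\lambda^\varepsilon_\pm$ modes: your diagonalizing real rotation, evaluated along $u_\pm(t)$, is precisely the paper's parallel transport $\mathcal{R}_\pm(t)$. It also includes squeezed coherent-state propagation up to $t_\varepsilon=c\lvert\ln\varepsilon\rvert$; cross terms handled via Proposition~\ref{prop:4}, which the paper offers as an alternative to the explicit Gaussian overlap of Lemma~\ref{lemma:4}; and sharpness by pairing the $\mathfrak{g}^2$ Duhamel term with $\underline{u}$ itself, which gives $\int_0^\infty(\mathrm{Re}\langle e^{is\Delta}\underline{u},g\rangle_2)^2\,\mathrm{d}s>0$. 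One caution on your last paragraph: the Gronwall step must be run on the superadiabatic objects of Step~1, whose generator error is $\mathrm{O}(\varepsilon^2)$ and not $\mathrm{O}(\varepsilon)$, as in Lemma~\ref{adiabaticprop}, because a Gronwall bound on the leakage through the exact $\Pi_\pm$ using small $\mathfrak{g}$ only gives an $\varepsilon$-independent bound of order $\mathfrak{g}\,t$; moreover, the mixing angle need not be small at intermediate times, it only has to vanish at $t=0$ and as $t\to\infty$.
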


The limit $\eps\to 0$ shows a splitting on the modes, here indexed by $+$ and
$-$, illustrating the adiabatic decoupling discussed above. As a corollary,
convergence in the sense of Fourier transforms also yields convergence in
expectation of the pointer observable $\phi_{\varepsilon}(f)$ to its classical counterpart
$2 \mathrm{Re} \langle f , u \rangle_{}$ \citep[see][]{ammari2008,correggi2022quasi}.
\begin{corollary}
  \label{cor:1}
  For all $K\in M_2(\mathbb{C})$ and $f\in L^2(\mathbb{R}^3)$, we have that:
  \begin{equation*}
    \lim_{\varepsilon\to 0} \mathrm{tr}_{\mathscr{H}}\Bigl(\bigl(\varrho\otimes \lvert \underline{u}_{\varepsilon}\rangle\langle \underline{u}_{\varepsilon}\rvert\bigr)\bigl(K\otimes \phi_{\varepsilon}(f)\bigr)\Bigr)= 2 \mathrm{Re}\langle f  , \underline{u} \rangle_2\: \mathrm{tr}_{\mathscr{H}_{\mathcal{S}}}(\varrho K)\;,
  \end{equation*}
  \begin{multline*}
    \lim_{\varepsilon\to 0} \mathrm{tr}_{\mathscr{H}}\Bigl(U_{\varepsilon}\bigl(\varrho\otimes \lvert \underline{u}_{\varepsilon}\rangle\langle \underline{u}_{\varepsilon}\rvert\bigr)U_{\varepsilon}^{*}\bigl(K\otimes \phi_{\varepsilon}(f)\bigr)\Bigr)\\=2 \mathrm{Re}\langle f  , u_+^{\infty} \rangle_2\: p_+(\varrho)\,\langle +  , K\, + \rangle_{}+ 2 \mathrm{Re}\langle f  , u_-^{\infty} \rangle_2\: p_-(\varrho)\,\langle -  , K\, - \rangle_{}\;.
  \end{multline*}
\end{corollary}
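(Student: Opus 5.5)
The plan is to deduce both limits from the convergence in the sense of Fourier transforms (Theorem~\ref{thm:3}). The idea is to read the field operator as a derivative of the Weyl operators, $\phi_{\varepsilon}(f)=-i\frac{\mathrm{d}}{\mathrm{d}s}e^{is\phi_{\varepsilon}(f)}\big|_{s=0}$, and then exchange the limit $\varepsilon\to 0$ with the $s$-derivative. Since $f\in L^2$ rather than $\mathscr{S}$, I first reduce to $f\in\mathscr{S}$. The bound $\lVert \phi_{\varepsilon}(f)\psi\rVert\le C\lVert f\rVert_2\lVert (\mathrm{d}\Gamma_{\varepsilon}(1)+1)^{1/2}\psi\rVert$, together with the uniform moment bound of the third paragraph, makes the left-hand sides Lipschitz in $f$ uniformly in $\varepsilon$. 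The right-hand sides are trivially continuous in $f\in L^2$, so a density argument suffices.

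\emph{First limit.} Here no limit is really needed. Conjugating by the Weyl operator generating $\lvert \underline{u}_{\varepsilon}\rangle$ shifts $\phi_{\varepsilon}(f)$ by its classical symbol evaluated at $\underline{u}$, i.e.\ by the multiple of $\mathrm{Re}\langle f,\underline{u}\rangle_2$ fixed by the paper's normalization. The vacuum expectation of $\phi_{\varepsilon}(f)$ vanishes. Hence the trace equals exactly the classical value times $\mathrm{tr}(\varrho K)$ for every $\varepsilon$.

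\emph{Second limit.} Set $\Gamma_{\varepsilon}=U_{\varepsilon}(\varrho\otimes\lvert \underline{u}_{\varepsilon}\rangle\langle \underline{u}_{\varepsilon}\rvert)U_{\varepsilon}^{*}$ and $F_{\varepsilon}(s)=\mathrm{tr}\bigl(\Gamma_{\varepsilon}(K\otimes e^{is\phi_{\varepsilon}(f)})\bigr)$. By Theorem~\ref{thm:3}, $F_{\varepsilon}(s)\to F_0(s)=\sum_{\pm}p_{\pm}(\varrho)\langle\pm,K\pm\rangle\,e^{is\sqrt2\,\mathrm{Re}\langle f,u^{\infty}_{\pm}\rangle_2}$ for each fixed $s$. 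Write $e^{isx}=1+isx+R(s,x)$ with $\lvert R\rvert\le s^2x^2/2$, applied through the spectral theorem. This gives
\begin{equation*}
\Bigl\lvert \tfrac{F_{\varepsilon}(s)-F_{\varepsilon}(0)}{s}-i\,\mathrm{tr}\bigl(\Gamma_{\varepsilon}(K\otimes\phi_{\varepsilon}(f))\bigr)\Bigr\rvert\le \tfrac{\lvert s\rvert}{2}\lVert K\rVert\,\mathrm{tr}\bigl(\Gamma_{\varepsilon}(\mathds{1}\otimes\phi_{\varepsilon}(f)^2)\bigr).
\end{equation*}
If the right-hand trace is bounded by $C$ uniformly in $\varepsilon$, I would first let $\varepsilon\to0$ and then $s\to0$. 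This squeezes $\lim_{\varepsilon}\mathrm{tr}(\Gamma_{\varepsilon}(K\otimes\phi_{\varepsilon}(f)))$ between difference quotients of $F_0$ up to $O(s)$. Since $F_0$ is smooth, the limit exists and equals $-iF_0'(0)$, which is the stated formula.

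\emph{Main obstacle: the uniform second moment.} Since $\phi_{\varepsilon}(f)^2\le C\lVert f\rVert_2^2(\mathrm{d}\Gamma_{\varepsilon}(1)+\varepsilon)$, it suffices to bound $\langle \mathrm{d}\Gamma_{\varepsilon}(1)\rangle$ in $\Gamma_{\varepsilon}$ uniformly. A naive Gronwall bound from the commutator $\frac{i}{\varepsilon}[H_{\varepsilon}(\mathfrak{g}),\mathrm{d}\Gamma_{\varepsilon}(1)]$ gives growth polynomial in $t$. Since $t_{\varepsilon}\sim\lvert\ln\varepsilon\rvert$, that is not uniform. I would instead extract the bound from the proof of Theorem~\ref{thm:3}. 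There the evolved state is approximated in norm, up to a vanishing error, by $\alpha_{\pm}$-weighted superpositions of squeezed coherent-type states centred on the classical trajectories $u_{\pm}(t)$. Those states have number moments controlled by $\lVert u_{\pm}(t)\rVert_2^2+O(\varepsilon)$. The trajectories are uniformly bounded in $L^2$ for all $t\ge0$ because they scatter, and conjugation by the free evolution preserves $\mathrm{d}\Gamma_{\varepsilon}(1)$. The remaining norm error is then absorbed using the crude polynomial-in-$t_{\varepsilon}$ moment bound. This works provided that error decays faster than any power of $\lvert\ln\varepsilon\rvert$, which is what I expect the accuracy estimates to deliver; if not, I would fall back on a Cauchy–Schwarz interpolation.
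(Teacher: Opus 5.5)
Your argument is correct and follows the same route as the paper. The paper obtains the corollary from Theorem~\ref{thm:3} in one line, by invoking the Wigner-measure results of \citep{ammari2008,correggi2022quasi}: convergence of characteristic functions, together with a uniform bound on a moment of $\mathrm{d}\Gamma_\varepsilon(1)$, gives convergence of expectations of $\phi_\varepsilon(f)$. You do two things beyond that citation. First, you prove this implication by hand. Your remainder inequality is right as written: factor $R(s,\phi_\varepsilon(f))=\lvert R\rvert^{1/2}V\lvert R\rvert^{1/2}$ with $V$ a phase commuting with $K\otimes\mathds{1}$, so only the second moment of $\phi_\varepsilon(f)$ enters. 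Second, and more importantly, you address the uniform moment hypothesis at the $\varepsilon$-dependent time $t_\varepsilon$. The paper leaves this implicit, although the naive bound there is only $\langle \mathrm{d}\Gamma_\varepsilon(1)\rangle\lesssim\lvert\ln\varepsilon\rvert^2$.

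\textbf{Absorbing the norm error.} Proposition~\ref{prop:3} controls the error $E_\varepsilon$ only in norm. It says nothing about $\lVert \mathrm{d}\Gamma_\varepsilon(1)^{1/2}E_\varepsilon\rVert$, so a crude bound on the first moment cannot absorb it directly. The interpolation $\langle E_\varepsilon,\mathrm{d}\Gamma_\varepsilon(1)E_\varepsilon\rangle\le\lVert E_\varepsilon\rVert\,\lVert\mathrm{d}\Gamma_\varepsilon(1)E_\varepsilon\rVert$ is the actual mechanism, not a fallback. It requires a crude polynomial bound on the \emph{second} moment along the true dynamics. That bound holds by the same Gronwall argument, because $\frac{i}{\varepsilon}[H_\varepsilon(\mathfrak g),\mathrm{d}\Gamma_\varepsilon(1)]=-\mathfrak g\,\sigma_x\otimes\phi_\varepsilon(ig)$ carries no factor $\varepsilon^{-1}$; this gives $\langle(\mathrm{d}\Gamma_\varepsilon(1)+\varepsilon)^2\rangle_t\lesssim(1+t)^4$. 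At $t_\varepsilon=-\frac{\ln\varepsilon}{2C_2+2C_4}$ one has $\lVert E_\varepsilon\rVert\le C_1\varepsilon^{\frac12-\frac{C_2}{2C_2+2C_4}}$, a positive power of $\varepsilon$, so the product tends to zero.

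\textbf{Squeezing contribution.} The squeezed vacuum $U_\pm(t,0)_\varepsilon\Omega$ contributes $\mathrm{O}(\varepsilon e^{Ct})$ to the number moments, not $\mathrm{O}(\varepsilon)$ (see \eqref{claim}). This is still $o(1)$ at $t_\varepsilon$, but only because $t_\varepsilon$ is chosen logarithmically.

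\textbf{Normalization.} Your computation yields $\sqrt{2}\,\mathrm{Re}\langle f,\cdot\rangle_2$, which is what the paper's conventions actually give: $\phi_\varepsilon(f)=\frac{1}{\sqrt2}(a^*_\varepsilon(f)+a_\varepsilon(f))$, and the phase in $\hat{\mathfrak m}$ is $e^{i\sqrt2\,\mathrm{Re}\langle\xi,x\rangle_2}$. The factor $2$ in the statement is a normalization slip in the paper, so do not claim to recover the stated formula verbatim.
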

It thus follows that the choice $\mathcal{Z}_{\varepsilon}=\phi_{\varepsilon}(f)$ is meaningful whenever $f$
separates between $u_+^{\infty}$ and $u_-^{\infty}$, \emph{i.e.}, whenever $
\mathrm{Re} \langle f , u_+^{\infty} \rangle_2\neq \mathrm{Re} \langle f , u_-^{\infty} \rangle_2$. Indeed, in
this case $\mathcal{Z}_{\varepsilon}$ converges weakly to the classical pointer observable $Z(u)=
2\mathrm{Re}\langle f , u \rangle_2$ that satisfies the separation condition $Z(u_+^{\infty})\neq
Z(u_-^{\infty})$.

\subsection{Semiclassical accuracy}
\label{sec:semicl-accur}

It is possible to quantify the accuracy of our measurement scheme, thanks to
a norm control on the semiclassical propagation of the coherent state $\lvert
\underline{u}_{\varepsilon}\rangle$, which also shows the emergence of an
Ehrenfest characteristic time $t_{\varepsilon}= \mathrm{O}(-\ln \varepsilon)$
for measurement to take place.

Without loss of generality, let us now consider an
$\mathscr{H}_{\mathcal{S}}$ vector $\psi= \alpha_+\lvert + \rangle
+\alpha_-\lvert -\rangle$, coupled together with the initial coherent vector
$\lvert \underline{u}_{\varepsilon}\rangle$, to form the initial vector
$\Psi_{\varepsilon}$ of $\mathscr{H}_{\mathcal{S}}\otimes
\mathscr{H}_{\mathcal{P}_{\varepsilon}}$:
\begin{equation*}
  \Psi_{\varepsilon}= \psi\otimes \lvert \underline{u}_{\varepsilon}\rangle\;.
\end{equation*}
Its evolution is given by
\begin{equation*}
  \Psi_{\varepsilon}(t)= U_{\varepsilon}(t)\Psi_{\varepsilon}\;.
\end{equation*}
Also, recall the formal eigenvalues $\lambda^{\varepsilon}_{\pm}$ of
$H_{\varepsilon}(\mathfrak{g})$ (see~\eqref{def:Lambda}), and observe that,
again formally, the latter can be written as
\begin{equation*}
  H_{\varepsilon}(\mathfrak{g})= \lambda^{\varepsilon}_+\pi^{\varepsilon}_+ + \lambda^{\varepsilon}_-\pi^{\varepsilon}_-\;,
\end{equation*}
with
\begin{equation*}
  \pi^{\varepsilon}_{\pm}= \frac{1}{2}\biggl(\mathds{1}_{\mathscr{H}_{\mathcal{S}}}\pm \frac{\sigma_z + \mathfrak{g}\, \sigma_x\otimes \phi_{\varepsilon}(\mathfrak{g})}{\sqrt{1+\mathfrak{g}^2\,\phi_{\varepsilon}(g)^2}}\biggr)\;.
\end{equation*}
These ``$\mathscr{H}_{\mathcal{S}}$-projectors'' have a semiclassical
counterpart
\begin{equation*}
  \pi_{\pm}(u)= \frac{1}{2}\biggl(\mathds{1}_{\mathscr{H}_{\mathcal{S}}}\pm \frac{\sigma_z + \sqrt{2}\,\mathfrak{g}\,  \mathrm{Re}\,\langle u  , g \rangle_2\, \sigma_x}{\sqrt{1+2\mathfrak{g}^2\,\langle u  , g \rangle_2^2}}\biggr)
\end{equation*}
that are now truly orthogonal projectors on $\mathscr{H}_{\mathcal{S}}$ for
all $u\in L^2(\mathbb{R}^3)$. Along the trajectories $u_{\pm}(t)$, solution
of \eqref{eq:1}, we thus define the projectors $\pi_{\pm}^t$ as follows:
%\begin{gather*}
 \[
 \pi_{\pm}^0:= \pi_{\pm}(\underline{u})\;\;\mbox{and}\;\;
  \pi_{\pm}^t:= \pi_{\pm}\bigl(u_{\pm}(t)\bigr)\;,\; \forall t>0\;.
\]
We need the assumption that $g$ and $\underline{u}$ are orthogonal
(in $L^2$-sense) to make the scheme projective, and this yields
\begin{equation*}
  \pi_{\pm}^0= \lvert\pm \rangle\langle\pm\rvert\;.
\end{equation*}
Such an assumption is not too restrictive or unphysical, since we have
freedom in building our probe $\mathcal{P}_{\varepsilon}$ (the measurement
scheme must be engineered suitably). Under the evolution $e^{-i
  \frac{t}{\varepsilon}H_{\varepsilon}(\mathfrak{g})}$, the vector
$\Psi_{\varepsilon}$ stays coherent in the field's degrees of freedom but it
changes along a specific direction: it is in fact described by the vector
\begin{equation*}
  e^{\frac{i}{\varepsilon}S_{\pm}(t)}\lvert \widetilde{u}_{\pm}(t)_{\varepsilon}\rangle\;,
\end{equation*}
where the \emph{action} $S_{\pm}\in \mathscr{C}(\mathbb{R}_+,\mathbb{R})$ is
defined as
\begin{equation}
  \label{eq:2}
  S_{\pm}(t)=  \mp \int_0^t \frac{\mathfrak g^2 (\mathrm{Re}\langle u_\pm(s),g\rangle)^2+1}{\sqrt{1+2\mathfrak g^2 (\mathrm{Re}\langle u_\pm(s),g\rangle)^2}}ds\;,
\end{equation}
and $\lvert \widetilde{u}_{\pm}(t)_{\varepsilon}\rangle$ is a \emph{squeezed
  coherent state}, defined by
\begin{equation}
  \label{eq:3}
  \lvert \widetilde{u}_{\pm}(t)_{\varepsilon}\rangle= e^{\frac{1}{\varepsilon}(a_{\varepsilon}^{*}(u_{\pm}(t))-a_{\varepsilon}(u_{\pm}(t)))}U_{\pm}(t,0)_{\varepsilon}\,\Omega\;,
\end{equation}
where $U_{\pm}(t,0)_{\varepsilon}$ is the two-parameter unitary group on
$\mathscr{H}_{\mathcal{P}_{\varepsilon}}$ generated by the quadratic
time-dependent Hamiltonian
\begin{equation*}
  A_{\pm}(t)_{\varepsilon}= \mathrm{d}\Gamma_{\varepsilon}(-\Delta)\pm \frac{\mathfrak{g}^2}{2}\frac{\phi_{\varepsilon}(g)^2}{(1+2\mathfrak{g}^2\, \mathrm{Re}\, \langle u_{\pm}(t)  , g \rangle_2^2)^{3/2}}\;.
\end{equation*}
In $\mathcal{S}$, the evolution induces the unitary operator
$\mathcal{R}_{\pm}(t)$, which satisfies the intertwining relation
\begin{equation*}
  \mathcal{R}_{\pm}(t)\pi^0_{\pm}= \pi^t_{\pm}\mathcal{R}_{\pm}(t)\;.
\end{equation*}
The remarkable stability of the structure of coherent states leads to the
following result.

\begin{proposition}
  \label{prop:3}
  For all $g\in \mathscr{S}(\mathbb{R}^3)$ and all $\underline{u}\in L^2(\mathbb{R}^3)$ such that $
  \mathrm{Re}\,\langle \underline{u} , g \rangle_2=0$, there exist $\mathfrak{g}_0>0$ and $C_1,C_2>0$
  such that for all $\mathscr{H}_{\mathcal{S}}\ni \psi=\alpha_+\lvert + \rangle +\alpha_-\lvert -\rangle$, $t>0$ and $0<\mathfrak{g}<\mathfrak{g}_0$,
  \begin{equation*}
    \Bigl\lVert \Psi_{\varepsilon}(t)\,-\mspace{-10mu} \sum_{j\in \{+,-\}}^{}\mspace{-10mu}\alpha_j\,\pi_j^t\,\mathcal{R}_j(t)\,\lvert j \rangle\:\otimes\: e^{\frac{i}{\varepsilon}S_{j}(t)}\,\lvert \widetilde{u}_{j}(t)_{\varepsilon}\rangle \Bigr\rVert_{\mathscr{H}_{\mathcal{S}}\otimes \mathscr{H}_{\mathcal{P}_{\varepsilon}}}^{}\leq C_1 \sqrt{\varepsilon}\, e^{C_2t}\;.
  \end{equation*}
\end{proposition}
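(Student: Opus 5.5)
The plan is to treat the approximate vector
\[
\Phi_\varepsilon(t):=\sum_{j\in\{+,-\}}\alpha_j\,\pi_j^t\,\mathcal{R}_j(t)\lvert j\rangle\otimes e^{\frac{i}{\varepsilon}S_j(t)}\lvert\widetilde u_j(t)_\varepsilon\rangle
\]
as an approximate solution of $i\varepsilon\partial_t\Phi=H_\varepsilon(\mathfrak g)\Phi$. Then $\Phi_\varepsilon(0)=\Psi_\varepsilon$, because $\pi^0_\pm=\lvert\pm\rangle\langle\pm\rvert$ (using $\mathrm{Re}\langle\underline u,g\rangle_2=0$), $\mathcal R_\pm(0)=\mathds 1$, $S_\pm(0)=0$ and $U_\pm(0,0)_\varepsilon=\mathds 1$. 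Duhamel's formula gives
\[
\lVert\Psi_\varepsilon(t)-\Phi_\varepsilon(t)\rVert\le\frac1\varepsilon\int_0^t\lVert(i\varepsilon\partial_s-H_\varepsilon)\Phi_\varepsilon(s)\rVert\,ds ,
\]
so it suffices to show that the residual $\mathcal E_\varepsilon(s):=(i\varepsilon\partial_s-H_\varepsilon)\Phi_\varepsilon(s)$ satisfies $\lVert\mathcal E_\varepsilon(s)\rVert\le C\varepsilon^{3/2}e^{Cs}$. The mode $\pm$ can be handled separately, since everything is linear in $\alpha_j$.

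For each mode, conjugate by the Weyl operator $W(u_j(s))=e^{\frac1\varepsilon(a^*_\varepsilon(u_j)-a_\varepsilon(u_j))}$. This shifts $a_\varepsilon\mapsto a_\varepsilon+u_j(s)$, and therefore $\phi_\varepsilon(g)\mapsto\phi_\varepsilon(g)+\sqrt2\,\mathrm{Re}\langle u_j,g\rangle_2$. The time derivative of the Weyl operator produces the classical drift terms and a scalar phase. In this frame $H_\varepsilon$ becomes an operator-valued matrix $h(u_j)+\varepsilon$-expansion. Concretely, $h(u)=\sigma_z+\sqrt2\,\mathfrak g\,\mathrm{Re}\langle u,g\rangle\sigma_x+\langle u,-\Delta u\rangle$ is the classical symbol, which acts on $\pi_j(u)$ as $\lambda_j(u)$. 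The next term is linear in the fluctuation field $\phi_\varepsilon(g)$, $a_\varepsilon(-\Delta u)$, and so on. Then come the quadratic term $\mathrm{d}\Gamma_\varepsilon(-\Delta)$, and the exact remainder $\mathfrak g\sigma_x\phi_\varepsilon(g)$ is only linear in the field.

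I then Taylor expand $\lambda_j$ and $\pi_j$ around $u_j(s)$ in the fluctuation variable. The $O(1)$ term is cancelled by $\partial_s S_j$; \eqref{eq:2} is precisely $\lambda_j$ minus the Legendre-type term coming from the Weyl derivative. The linear terms in the scalar (mode $j$) component are cancelled by the Hamilton equation \eqref{eq:1} for $u_j$. The quadratic terms along $\pi_j$ are cancelled by the generator $A_j(s)_\varepsilon$ of $U_j(s,0)_\varepsilon$, whose second derivative of $\sqrt{1+2\mathfrak g^2x^2}$ gives the $(1+\cdots)^{-3/2}$ coefficient. The off-diagonal linear piece $\pi_{-j}\,(\mathfrak g\sigma_x\phi_\varepsilon(g))\,\pi_j$ is $O(\sqrt\varepsilon)$ on the squeezed vacuum. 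It is handled as in adiabatic perturbation theory: $\mathcal R_j$ is chosen to solve $i\partial_s\mathcal R_j=[\,i\dot\pi_j,\pi_j]\mathcal R_j$ (parallel transport), which gives the intertwining $\mathcal R_j\pi^0_j=\pi^s_j\mathcal R_j$. The remaining off-diagonal term is only first order in the field, $O(\varepsilon^{1/2})$. Here one needs care: to reach $\varepsilon^{3/2}$ one adds a first-order superadiabatic corrector $\varepsilon^{1/2}\pi_{-j}(\ldots)\pi_j/(\lambda_j-\lambda_{-j})$. This is bounded because of the gap $\lambda_+-\lambda_-\ge2$, and its norm is $O(\sqrt\varepsilon)$, so it can be discarded in the final estimate at cost $C\sqrt\varepsilon$.

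The remainders are cubic and higher in the fluctuation field $\phi_\varepsilon(g)$ (Taylor remainder of a smooth bounded function of $\phi_\varepsilon(g)$, controlled via functional calculus, since $x\mapsto\sqrt{1+\mathfrak g^2x^2}$ has bounded derivatives of order $\ge2$). They are estimated by $\lVert\phi_\varepsilon(g)^k U_j(s,0)_\varepsilon\Omega\rVert\le C_k\varepsilon^{k/2}\lVert\mathcal V_j(s)\rVert^k$, where $\mathcal V_j$ is the symplectic (Bogoliubov) flow associated with $A_j$. This flow grows at most like $e^{Cs}$ by Gronwall, since the quadratic perturbation is bounded by $\mathfrak g^2\lVert g\rVert_2^2$. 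The classical trajectory is globally defined with $\lVert u_j(s)\rVert_2=\lVert\underline u\rVert_2$ and $|\mathrm{Re}\langle u_j,g\rangle|$ bounded. The smallness $\mathfrak g<\mathfrak g_0$ is used to keep the coefficients of the Taylor expansion (and the derivatives of $\pi_j$) uniformly bounded and the quadratic form $A_j$ controlled. One has to make sure all commutators with $\mathrm{d}\Gamma_\varepsilon(-\Delta)$ close on $\mathscr S$-valued quantities, using $g\in\mathscr S$.

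I expect the main obstacle to be the cubic remainder estimate with exponential-in-time constants. One must bound moments of the field in the squeezed state $U_j(s,0)_\varepsilon\Omega$ uniformly, which requires an explicit Bogoliubov/metaplectic representation of $U_j$. One must also handle the unbounded operator $\phi_\varepsilon(g)$ inside a nonlinear functional calculus, where the operator square root does not commute with the $\sigma$-matrices. I would first try to write the exact operator-valued spectral decomposition of the $2\times2$ block, with $\phi_\varepsilon(g)$ commuting with itself so that $\pi^\varepsilon_\pm$ is a genuine function of $\phi_\varepsilon(g)$. Only the free part $\mathrm{d}\Gamma_\varepsilon(-\Delta)$ fails to commute, and its commutator with $F(\phi_\varepsilon(g))$ is $\varepsilon F'(\phi_\varepsilon(g))\,\pi_\varepsilon(i\Delta g)$-type, hence of lower order.
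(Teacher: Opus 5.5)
Your route differs from the paper's. You build one frozen-spinor ansatz in the Weyl-shifted frame and close with Duhamel. The paper first decouples the modes at the operator level with the field-dependent superadiabatic projectors $\pi^\varepsilon_\pm\pm\varepsilon\pi^{1,\varepsilon}$ (Lemmas~\ref{lem:superadiproj} and~\ref{adiabaticprop}), and only then propagates coherent states under the scalar operators $\lambda^\varepsilon_\pm$ (Proposition~\ref{propfieldexpansion}).

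\textbf{The gap.} Your route can work, but as set up the residual is not $O(\varepsilon^{3/2})$. Your $\sqrt\varepsilon$-corrector is, in the shifted frame, $(\lambda_j-\lambda_{-j})^{-1}\pi_{-j}\,\mathfrak g\sigma_x\,\mathcal R_j(s)\lvert j\rangle\otimes\phi_\varepsilon(g)U_j(s,0)_\varepsilon\Omega$, with projectors evaluated at $u_j(s)$. After adding it, two terms of exact order $\varepsilon$ survive in $\mathrm{Ran}\,\pi_{-j}$, so Duhamel gives only $\varepsilon^{-1}\int_0^t O(\varepsilon)\,ds=O(t)$.
\begin{itemize}
\item First, parallel transport removes only the diagonal part of the spinor's time derivative. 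Differentiating the intertwining relation gives $(1-\pi^s_j)\,\partial_s\bigl(\mathcal R_j(s)\lvert j\rangle\bigr)=\dot\pi^s_j\,\mathcal R_j(s)\lvert j\rangle$. So $i\varepsilon\partial_s$ leaves the nonadiabatic term $i\varepsilon\,\dot\pi^s_j\mathcal R_j(s)\lvert j\rangle\otimes e^{\frac i\varepsilon S_j(s)}\lvert\widetilde u_j(s)_\varepsilon\rangle$. This is nonzero as soon as $\frac{d}{ds}\mathrm{Re}\langle u_j(s),g\rangle_2\neq0$. It is not, as you assert, only the field-linear term that remains.
\item Second, with $f_{\mathfrak g}(x)=\sqrt{1+\mathfrak g^2x^2}$, the field-linear part of $i\varepsilon\partial_s-H_\varepsilon$ in the shifted frame is $\bigl(\pm f'_{\mathfrak g}(\sqrt2\,\mathrm{Re}\langle u_j(s),g\rangle_2)-\mathfrak g\sigma_x\bigr)\phi_\varepsilon(g)$ for $j=\pm$. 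The Hamilton equation \eqref{eq:1} kills its diagonal block only on $\mathrm{Ran}\,\pi_j$. On $\mathrm{Ran}\,\pi_{-j}$ it equals $\pm2f'_{\mathfrak g}(\cdot)\,\phi_\varepsilon(g)$, and applied to your corrector it gives a term of size $\varepsilon$.
\end{itemize}
The missing step is a second off-diagonal corrector of size $\varepsilon$ in $\mathrm{Ran}\,\pi_{-j}$, obtained by dividing these two terms by the gap $\lambda_j-\lambda_{-j}$. Its own residual is $O(\varepsilon^{3/2}e^{Cs})$, and it is dropped at the end. This is exactly the role of $\varepsilon\pi^{1,\varepsilon}$ in the paper. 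Your $\sqrt\varepsilon$-corrector is what the paper gets automatically by using $\pi^\varepsilon_\pm$ rather than $\pi_\pm(u_j(s))$.

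\textbf{Bookkeeping of quadratic and cubic terms.} Your accounting here is off for your own ansatz. Since $h(u)$ is affine in $u$, $W(u_j)^*H_\varepsilon W(u_j)$ is exactly $h(u_j)+\langle u_j,-\Delta u_j\rangle_2+\sqrt2\,\phi_\varepsilon(-\Delta u_j)+\mathfrak g\sigma_x\phi_\varepsilon(g)+\mathrm{d}\Gamma_\varepsilon(-\Delta)$. It contains no $\phi_\varepsilon(g)^2$ term and has no Taylor remainder at all.
\begin{itemize}
\item The term $\pm\frac{\mathfrak g^2}{2}(1+2\mathfrak g^2\mathrm{Re}\langle u_j,g\rangle_2^2)^{-3/2}\phi_\varepsilon(g)^2$ of $A_j(s)_\varepsilon$ is therefore not cancelled by $H_\varepsilon$. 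It is cancelled by the back-coupling $\pi_j\,\mathfrak g\sigma_x\phi_\varepsilon(g)\,\pi_{-j}$ of the $\sqrt\varepsilon$-corrector. This is second-order perturbation theory: since $\partial^2h=0$, one has $\lambda_j''=2\lvert\langle\chi_{-j},\mathfrak g\sigma_x\chi_j\rangle\rvert^2/(\lambda_j-\lambda_{-j})$, with derivatives in $\sqrt2\,\mathrm{Re}\langle u,g\rangle_2$.
\item The cubic remainder you single out as the main obstacle belongs to the paper's functional-calculus route through $f_{\mathfrak g}(\phi_\varepsilon(g))$. In your scheme the $O(\varepsilon^{3/2})$ residual consists of explicit polynomials in the field, generated by the correctors and applied to $U_j(s,0)_\varepsilon\Omega$. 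Your Bogoliubov moment bounds handle these.
\end{itemize}

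\textbf{Minor point.} $\lVert u_j(s)\rVert_2$ is not conserved, because $\lambda_\pm$ is not gauge invariant. Indeed $\frac{d}{ds}\lVert u_j\rVert_2^2$ is proportional to $c(s)\,\mathrm{Im}\langle u_j(s),g\rangle_2$, where $c$ is the real coefficient of $g$ in \eqref{eq:1}. This is harmless: the nonlinearity is bounded by $\mathfrak g\lVert g\rVert_2/\sqrt2$, so $\lVert u_j(s)\rVert_2\le\lVert\underline u\rVert_2+Cs$, which is all you need.
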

\begin{remark*}
  The assumption that there exists $t>0$ such that $\mathrm{Re}\,\langle
  e^{it\Delta}\underline{u} , g \rangle_2\neq0$ is only needed to prove separation of the
  asymptotic states: $u_+^{\infty}\neq u_-^{\infty}$, and it is thus not needed for the
  proposition above.
\end{remark*}
The above norm bound is the key ingredient, combined with a quantitative
study of the long time asymptotics of $u_{\pm}(t)$, to give a bound on the
semiclassical accuracy of our measurement scheme $\mathscr{M}_{\varepsilon\to
  0}(\sigma_z)$: it is then clear that, for being able to take the limit
$t\to \infty$ ``after'' the limit $\varepsilon\to 0$, we must choose an
asymptotic time $t_{\varepsilon}$ that diverges at most logarithmically in
$\varepsilon$.

It remains for us to discuss the ``distance function'' $d$ that quantifies
the semiclassical accuracy. In this case, there is a family of distance
functions $d_{K,\xi}$, indexed by compact operators $K\in \mathfrak{S}^{\infty}(\mathscr{H}_{\mathcal{S}})$ and
rapidly decreasing functions $\xi\in \mathscr{S}$: given a state $\Gamma_{\varepsilon}$ on $\mathcal{S}+\mathcal{P}_{\varepsilon}$ and a
state-valued measure $\mathfrak{m}$ on $\mathcal{S}+\mathcal{P}_0$, define
\begin{equation*}
  d_{K,\xi}(\Gamma_{\varepsilon},\mathfrak{m}):= \Bigl\lvert \mathrm{tr}_{\mathscr{H}_{\mathcal{S}}}\bigl(\bigl(\hat{\Gamma}_{\varepsilon}(\xi,0)-\hat{\mathfrak{m}}(\xi,0)\bigr) K\bigr)  \Bigr\rvert_{}^{}\;.
\end{equation*}
Let us also define, for all $\kappa_1,\kappa_2>0$, the distances
\begin{equation*}
  d_{\kappa_1,\kappa_2}(\Gamma_{\varepsilon},\mathfrak{m}):= \sup_{\substack{\lVert K  \rVert_{}^{}\leq \kappa_1\\\lVert \xi  \rVert_{H^1}^{}+\lVert \xi  \rVert_1^{}\leq \kappa_2}}\: d_{K,\xi}(\Gamma_{\varepsilon},\mathfrak{m})\;,
\end{equation*}
where $H^1(\mathbb{R}^3)$ is the non-homogeneous Sobolev space. These definitions lead
to the following concrete accuracy result. Observe that in order to have a
precise rate, we need to ask some additional regularity on the initial
configuration $\underline{u}$.
\begin{thm}
  \label{thm:4}
  For all $g\in \mathscr{S}(\mathbb{R}^3)$ and all $\underline{u}\in L^2(\mathbb{R}^3)\cap L^{\frac{6}{5}}(\mathbb{R}^3)$ such that $
  \mathrm{Re}\,\langle \underline{u} , g \rangle_2=0$ and there exists $t>0$ such that
  $\mathrm{Re}\,\langle e^{it\Delta}\underline{u} , g \rangle_2\neq0$, there exists $\mathfrak{g}_0>0$ such
  that for all $0<\lvert \mathfrak{g} \rvert_{}^{}<\mathfrak{g}_0$, all $K\in \mathfrak{S}^{\infty}(\mathscr{H}_{\mathcal{S}})$, and $\xi\in \mathscr{S}(\mathbb{R}^3)$
  there exists $C(K,\xi)>0$ such that:
  \begin{multline*}
    d_{K,\xi}\Bigl(\:\varrho\otimes \lvert \underline{u}_{\varepsilon}\rangle\langle \underline{u}_{\varepsilon}\rvert\;,\; \varrho\, \mathrm{d}\delta_{\underline{u}}\:\Bigr)\\+ d_{K,\xi}\Bigl(\:U_{\varepsilon}(\varrho\otimes \lvert \underline{u}_{\varepsilon}\rangle\langle \underline{u}_{\varepsilon}\rvert)U_{\varepsilon}^{*}\;,\; p_+(\varrho)\,\lvert+\rangle\langle+\rvert\,\delta_{u_+^{\infty}}\:+\: p_-(\varrho)\,\lvert-\rangle\langle-\rvert\,\delta_{u_-^{\infty}}\:\Bigr)\leq \frac{C(K,\xi)}{\lvert\ln \varepsilon\rvert^{\frac{1}{2}}}\;.
  \end{multline*}
  Furthermore, for all $\kappa_1,\kappa_2>0$ there exists
  $C(\kappa_1,\kappa_2)>0$ such that
  \begin{multline*}
    d_{\kappa_1,\kappa_2}\Bigl(\:\varrho\otimes \lvert \underline{u}_{\varepsilon}\rangle\langle \underline{u}_{\varepsilon}\rvert\;,\; \varrho\, \mathrm{d}\delta_{\underline{u}}\:\Bigr)\\+ d_{\kappa_1,\kappa_2}\Bigl(\:U_{\varepsilon}(\varrho\otimes \lvert \underline{u}_{\varepsilon}\rangle\langle \underline{u}_{\varepsilon}\rvert)U_{\varepsilon}^{*}\;,\; p_+(\varrho)\,\lvert+\rangle\langle+\rvert\,\delta_{u_+^{\infty}}\:+\: p_-(\varrho)\,\lvert-\rangle\langle-\rvert\,\delta_{u_-^{\infty}}\:\Bigr)\leq \frac{C(\kappa_1,\kappa_2)}{\lvert\ln \varepsilon\rvert^{\frac{1}{2}}}\;.
  \end{multline*}
\end{thm}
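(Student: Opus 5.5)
The plan is to reduce everything to the norm estimate of Proposition~\ref{prop:3}, combined with quantitative long-time asymptotics of the classical trajectories $u_\pm(t)$, and then choose $t_\varepsilon = c\lvert\ln\varepsilon\rvert$ to balance the two errors.

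\emph{Step 1: initial term.} For the first distance one computes exactly. For coherent states, $\langle \underline{u}_\varepsilon, e^{i\phi_\varepsilon(\xi)}\underline{u}_\varepsilon\rangle = e^{-\varepsilon\lVert\xi\rVert_2^2/4}\, e^{i\sqrt2\,\mathrm{Re}\langle \xi,\underline u\rangle_2}$. Hence $d_{K,\xi}\le \lVert K\rVert\,\lvert 1-e^{-\varepsilon\lVert\xi\rVert^2/4}\rvert\le C\varepsilon$, uniformly on the sets defining $d_{\kappa_1,\kappa_2}$.

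\emph{Step 2: final state at finite time.} By linearity and convexity, it suffices to take $\varrho=\lvert\psi\rangle\langle\psi\rvert$. Write $\Psi_\varepsilon(t)=\Phi_\varepsilon(t)+R_\varepsilon(t)$, where $\Phi_\varepsilon(t)$ is the approximate vector of Proposition~\ref{prop:3} and $\lVert R_\varepsilon(t)\rVert\le C_1\sqrt\varepsilon\, e^{C_2 t}$. Since $e^{i\phi_\varepsilon(\xi)}$ is unitary, replacing $\Psi_\varepsilon$ by $\Phi_\varepsilon$ costs at most $2\lVert K\rVert\,\lVert R\rVert$. The free evolution $e^{i t\,\mathrm{d}\Gamma_\varepsilon(-\Delta)/\varepsilon}$ maps a coherent state centered at $u$ to one centered at $e^{-it\Delta}u$, and it preserves the Bogoliubov squeezing up to conjugation. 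The Fourier transform of the state built from $\Phi_\varepsilon$ then splits into two kinds of terms.

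\textbf{Diagonal terms} ($j=j'$). These equal $\lvert\alpha_j\rvert^2\,\pi^t_j\mathcal R_j\lvert j\rangle\langle j\rvert\mathcal R_j^*\pi^t_j$ multiplied by $e^{i\sqrt2\mathrm{Re}\langle\xi,e^{-it\Delta}u_j(t)\rangle}\,\langle \Omega, U_j^*W U_j\Omega\rangle$, where $W$ is a Weyl operator of size $\sqrt\varepsilon\,\xi$. The squeezing factor is $1+O(\varepsilon\lVert\xi\rVert^2e^{Ct})$, since the quadratic flow grows at most exponentially.

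\textbf{Cross terms} ($j\neq j'$). These involve overlaps of coherent states centered at $u_+(t)$ and $u_-(t)$, which are bounded by $\exp(-\lVert u_+(t)-u_-(t)\rVert^2/(4\varepsilon))$ times squeezing constants. Since $\lVert u_+(t)-u_-(t)\rVert\ge c>0$ for $t\ge t_0$, thanks to the separation hypothesis and the continuity of the asymptotics, these terms are exponentially small. This is a quantitative version of Proposition~\ref{prop:4}.

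\emph{Step 3: classical asymptotics, the main obstacle.} We need rates for two quantities:
\begin{itemize}
\item $\lVert e^{-it\Delta}u_\pm(t)-u_\pm^\infty\rVert_2$, which controls the phase factor through $\lvert\langle\xi,\cdot\rangle\rvert$ with $\xi\in H^1\cap L^1$;
\item $\lVert\pi^t_\pm-\lvert\pm\rangle\langle\pm\rvert\rVert$, together with the convergence of $\mathcal R_\pm(t)$ up to a phase on $\lvert\pm\rangle$.
\end{itemize}
Both reduce to the decay of $r_\pm(t)=\mathrm{Re}\langle u_\pm(t),g\rangle$. By Duhamel,
\[
e^{-it\Delta}u_\pm(t)=\underline u \mp i\mathfrak g^2\int_0^t e^{-is\Delta}\frac{r_\pm(s)\,g}{\sqrt{1+2\mathfrak g^2r_\pm^2}}\,ds .
\]
The dispersive estimate $\lvert\langle e^{is\Delta}\underline u,g\rangle\rvert\lesssim s^{-3/2}\lVert\underline u\rVert_{L^{6/5}}$-type bounds, or $s^{-1/2}$ via $L^{6/5}\to L^6$, give decay of the free part. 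A bootstrap for small $\mathfrak g$ (this is the source of $\mathfrak g_0$), using $\lvert\langle e^{i(t-s)\Delta}g,g\rangle\rvert\lesssim\langle t-s\rangle^{-3/2}$, then yields $\lvert r_\pm(t)\rvert\lesssim\langle t\rangle^{-1}$, hence integrable tails. I expect the resulting rate to be only polynomial, of order $t^{-1/2}$. Obtaining this decay uniformly, and controlling $\mathcal R_\pm$, which solves a non-autonomous adiabatic ODE in $\mathbb C^2$ with $\dot\pi^t$ integrable, is the delicate step.

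\emph{Step 4: choice of time.} Collecting the bounds, the error is at most $C(\sqrt\varepsilon\,e^{C_2t}+\varepsilon e^{Ct}+t^{-1/2})$. Choosing $t_\varepsilon=\delta\lvert\ln\varepsilon\rvert$ with $\delta<1/(2C_2)$ makes the first terms powers of $\varepsilon$, so the dominant term is $\lvert\ln\varepsilon\rvert^{-1/2}$. All constants depend on $K$ and $\xi$ only through $\lVert K\rVert$, $\lVert\xi\rVert_{H^1}$ and $\lVert\xi\rVert_1$, which gives the uniform statement for $d_{\kappa_1,\kappa_2}$.
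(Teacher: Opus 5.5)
Your overall architecture is the paper's. Step~1 is Lemma~\ref{lemma:1}. Replacing $\Psi_\varepsilon(t_\varepsilon)$ by the approximant of Proposition~\ref{prop:3} is the term $D_1$. Your diagonal terms are $D_2$, $D_4$, $D_5$, and your cross terms are $D_3$. The choice $t_\varepsilon=\delta\lvert\ln\varepsilon\rvert$ with $\delta$ small is also the paper's. Where you differ is the squeezing factor. Since $U_j(t,0)_\varepsilon$ is generated by a quadratic Hamiltonian without linear part, it acts on field operators by a real-linear symplectic map. Hence $\langle\Omega,U_j^{*}e^{i\phi_\varepsilon(\eta)}U_j\Omega\rangle=e^{-\varepsilon\lVert\tilde\eta\rVert_2^2/4}$ with $\lVert\tilde\eta\rVert_2\le e^{Ct}\lVert\xi\rVert_2$. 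This gives $O(\varepsilon e^{2Ct}\lVert\xi\rVert_2^2)$, which is cleaner and sharper than the paper's Duhamel/commutator bound in Lemma~\ref{lemma:2}, namely $O(\sqrt{\varepsilon}\,e^{C_4t})$ with $H^1$ norms of $\xi$.

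Step~3, however, does not go through as written.

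\emph{Exponents.} $L^{6/5}\to L^6$ dispersion gives $\lvert\langle e^{is\Delta}\underline u,g\rangle_2\rvert\lesssim s^{-1}\lVert\underline u\rVert_{6/5}\lVert g\rVert_{6/5}$, not $s^{-1/2}$, and $s^{-3/2}$ would need $\underline u\in L^1$. With $s^{-1}$, your bootstrap against $\langle t-s\rangle^{-3/2}$ does give $\lvert r_\pm(t)\rvert\lesssim\langle t\rangle^{-1}$, as in the proof of Lemma~\ref{lemma:3}.

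\emph{Missing rate for the phase factor.} $\langle t\rangle^{-1}$ is not integrable, so there are no ``integrable tails''. Bounding $\lVert e^{-it\Delta}u_\pm(t)-u_\pm^\infty\rVert_2$ by $\mathfrak g^2\lVert g\rVert_2\int_t^\infty\lvert r_\pm(s)\rvert\,ds$ therefore gives nothing, and your reduction produces no rate. You need a genuinely dispersive input. One option is dual Strichartz: $\bigl\lVert\int_t^\infty e^{-is\Delta}F(s)g\,ds\bigr\rVert_2\lesssim\lVert F\rVert_{L^2(t,\infty)}\lVert g\rVert_{6/5}\lesssim t^{-1/2}$, which uses only square-integrable tails of $r_\pm$. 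The other is the paper's route in Lemma~\ref{lemma:5}: pair with $\xi\in L^1$ and use $\lVert\partial_s(e^{-is\Delta}u_\pm(s))\rVert_\infty\le\mathfrak g\lVert e^{-is\Delta}g\rVert_\infty\lesssim s^{-3/2}\lVert g\rVert_1$. That needs no decay of $r_\pm$ at all, and it is why $\lVert\xi\rVert_1$ appears in $d_{\kappa_1,\kappa_2}$.

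\emph{Unnecessary step.} The step you flag as delicate is not needed. By the intertwining relation~\eqref{intertwining} and unitarity, $\mathcal R_j(t)\lvert j\rangle\langle j\rvert\mathcal R_j(t)^{*}=\mathcal R_j\pi^0_j\mathcal R_j^{*}=\pi^t_j$. So the spin part of each diagonal term is exactly $\lvert\alpha_j\rvert^2\pi^t_j$, with error $O(\lVert K\rVert\,\lvert r_j(t)\rvert)=O(\lVert K\rVert t^{-1})$. No asymptotics of $\mathcal R_\pm$ is required. The integrability of $\dot\pi^t$ you invoke is in any case not implied by $\lvert r_\pm\rvert\lesssim t^{-1}$.

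\emph{Cross terms.} Squeezing also degrades the exponent of the cross-term overlap, roughly to $e^{-2Ct}\lVert u_+(t)-u_-(t)\rVert_2^2/\varepsilon$, not just the prefactor. So $\delta$ must be small relative to that rate as well. This is harmless for $t_\varepsilon=\delta\lvert\ln\varepsilon\rvert$.
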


\subsection{The physical interpretation of
  $\mathscr{M}_{\varepsilon\to0}(\sigma_z)$}
\label{sec:phys-interpr-mathscr}

From a physical perspective, our concrete measurement scheme
$\mathscr{M}_{\varepsilon\to0}(\sigma_z)$ corresponds to an experimental
setup that is, at least in theory, realizable; let us briefly describe it
here, as a conclusion to \textsection\ref{sec:meas-scheme-thro}.
\smallskip 

\begin{itemize}
  \setlength{\itemsep}{3mm}
\item 
The isolated system $\mathcal{S}+\mathcal{P}_{\varepsilon}$ consists of
  a two-level quantum sample (\emph{e.g.} an atom) in adiabatic interaction
  with a force-carrying bosonic field (\emph{e.g.} light) acting as the
  probe. The non-relativistic dispersion $k^2$ for the probe is used purely
  for mathematical convenience, a more realistic dispersion could be taken;
  the only point in which one should take some care is in modifying the
  scattering theory of $u_{\pm}(t)$ accordingly. Let us also remark that the
  adiabatic approximation is very well suited to describe realistic systems:
  it was originally introduced by Born and Oppenheimer \citep{born1927adp},
  and it is of widespread use in quantum chemistry to study molecular
  dynamics.
  %\smallskip 
\item
Before measurement takes place, the probe is in a lasing state $\lvert
  \underline{u}_{\varepsilon}\rangle\langle \underline{u}_{\varepsilon}\rvert$, that is prepared with a very small
  $\varepsilon>0$: such lasing states well describe classical (macroscopic) light
  pulses and can be realized experimentally with a very high level of
  precision. We further assume that the path of our light pulse is highly
  focused, and that the sample $\mathcal{S}$ is initially not in such path.
%\smallskip
\item 
The measurement is carried out as the von Neumann scheme $\mathscr{M}_{\varepsilon}(\sigma_z)$
  by putting the sample in the path of the pulse, and letting them interact
  for a sufficiently long time: this is achieved by placing the light
  detectors far enough from the pulse's path.
%\smallskip 
\item
Due to the absorption and subsequent emission of light (fluorescence)
  in $\mathcal{S}$, the pulse is split into two classical light pulses, whose
  trajectories are described (up to a $\mathrm{O}(\sqrt{\varepsilon})$ error) by the
  distinct classical configurations $u_{\pm}(t)$.
%  \smallskip 
\item
These classical pulse configurations are detected with frequencies
  $p_{\pm}(t)\sim_{\mathrm{O}(t^{-1})} p_{\pm}(\varrho)$. Moreover, their detection forces
  a collapse in $\mathcal{S}$ onto subspaces $\pi^t_{\pm}\sim_{\mathrm{O}(t^{-1})} \lvert\pm\rangle\langle\pm\rvert$.
  %\smallskip  
\item
The distance at which the light detectors are placed from the
  original pulse's path is tuned to make the detection time $t_{\varepsilon}=
  \mathrm{O}(-\ln \varepsilon)$. This way, one obtains the best approximation
  of the ``perfect'' sharp and projective Bohr scheme
  $\mathscr{M}_0(\sigma_z)$: we have a realistic quantum measurement scheme
  that does not need to introduce new axioms or interpretations to quantum
  mechanics but still overcomes the measurement problem, at least within an
  accuracy of order $1/\lvert\ln \varepsilon\rvert^{\frac{1}{2}}$, all the while respecting
  the no-go Theorem~\ref{thm:1}.
\end{itemize}

\section{Adiabatic propagation of coherent states}
\label{sectioncoherentstates}

In this section, we focus on studying the finite time evolution of a coherent
wave packet through the adiabatic evolution $e^{-i
  \frac{t}{\varepsilon}H_{\varepsilon}(\mathfrak{g})}$, and prove
Proposition~\ref{prop:3}. Throughout this section, the existence and
uniqueness of solutions $u_{\pm}(t)$ to the Cauchy problem \eqref{eq:1} will
be taken for granted: it will be proved in
\textsection\ref{sec:longtime}. Beforehand, let us recall some basic
properties of Fock spaces.

\subsection{Bosonic Fock spaces}
\label{sec:bosonic-fock-spaces}

The \emph{Bosonic Fock space} $\mathscr{F}(L^2(\mathbb{R}^3))$ is the
Hilbert space
\begin{equation*}
  \mathscr{F}(L^2):= \bigoplus_{n\in \mathbb{N}} L^2_n\;, 
\end{equation*}
where $L^2_0=\mathbb{C}$, and the $n$-particle space is
\begin{equation*}
  L^2_n=\underbrace{L^2\otimes_{\mathrm{s}}\dotsm\otimes_{\mathrm{s}}L^2  }_n\;,
\end{equation*}
where $\otimes_{\mathrm{s}}$ stands for the symmetric tensor product:
$L^2_n=L^2_{\mathrm{sym}}(\mathbb{R}^{3n})$. Let us denote vectors $\Psi\in
\mathscr{F}(L^2)$ as
\begin{equation*}
  \Psi=(\Psi_0,\Psi_1\dotsc,\Psi_n,\dotsc)\;,
\end{equation*}
where $\Psi_n\in L^2_n$. The vector
\begin{equation*}
  \Omega=(1,0,\dotsc,0,\dotsc)
\end{equation*}
is called the \emph{Fock vacuum vector}; it belongs to an important dense
subspace of vectors, the so-called \emph{finite particle vectors}
\begin{equation*}
  \mathscr{F}_{\mathrm{fin}}=\Bigl\{\Psi\in \mathscr{F}(L^2)\;,\; \exists \underline{n}\in \mathbb{N}\text{ such that } \forall n\geq \underline{n}\;,\; \Psi_n=0\Bigr\}\;.
\end{equation*}
The annihilation and creation operators $a,a^{*}:L^2\to
\mathrm{Cl}(\mathscr{F}(L^2))$ are maps from the one-particle space to the
closed operators in the Fock space defined, for all $f\in L^2$, on the common
core $\mathscr{F}_{\mathrm{fin}}$ by
\begin{gather*}
  \bigl(a(f)\Psi\bigr)_n(x_1,\dotsc,x_n)= \sqrt{n+1}\int_{\mathbb{R}^3}^{}\bar{f}(x)\Psi_{n+1}(x,x_1,\dotsc,x_n)  \mathrm{d}x\;,\\
  \bigl(a^{*}(f)\Psi\bigr)_n(x_1,\dotsc,x_n)= \frac{1}{\sqrt{n}} \sum_{j=1}^n f(x_j) \Psi_{n-1}(x_1,\dotsc,\hat{x}_j,\dotsc,x_n)\;,
\end{gather*}
where $\hat{x}_j$ stands for the fact that such variable is missing. The
creation and annihilation $a(f)$ and $a^{*}(f)$ are one adjoint to
another. The \emph{semiclassical}, $\varepsilon$-scaled creation and
annihilation operators are
\begin{equation*}
  a_{\varepsilon}(f)= \sqrt{\varepsilon}\, a(f)\;,\; a_{\varepsilon}^{*}(f)= \sqrt{\varepsilon}\, a^{*}(f)
\end{equation*}
and they satisfy the canonical commutation relations
\begin{equation*}
  [a_{\varepsilon}(f),a_{\varepsilon}^{*}(f')]=\varepsilon\langle f , f'
  \rangle_2\;,
\end{equation*}
thus $\varepsilon$ plays the role of a ``Planck's constant'' $\hslash$,
measuring the non-commutativity of quantum observables. For any $f\in L^2$,
the \emph{field operator} $\phi_{\varepsilon}(f)$ is the essentially
self-adjoint operator on $\mathscr{F}_{\mathrm{fin}}$ defined as
\begin{equation*}
  \phi_{\varepsilon}(f):= \frac{1}{\sqrt{2}}\bigl(a^{*}_{\varepsilon}(f)+a_{\varepsilon}(f)\bigr)\;.
\end{equation*}
Finally, let $T$ be a self-adjoint operator on $L^2$, with domain
$D(T)$. Then, define $\mathscr{F}_{\mathrm{fin}}(T)$ to be the set
\begin{multline*}
  \mathscr{F}_{\mathrm{fin}}(T)=\Bigl\{\Psi\in \mathscr{F}(L^2)\;,\; \exists \underline{n}\in \mathbb{N}\text{ such that } \forall n\geq \underline{n}\;,\; \Psi_n=0 \\\text{ and }\forall m<\underline{n}\;,\; \Psi_m\in \underbrace{D(T)\otimes_{\mathrm{s}}\dotsm\otimes_{\mathrm{s}}D(T)  }_m\Bigr\}\;. 
\end{multline*}
Then, $\mathrm{d}\Gamma_{\varepsilon}(T)$ is the operator on
$\mathscr{F}(L^2)$, essentially self-adjoint on
$\mathscr{F}_{\mathrm{fin}}(T)$, defined as
\begin{equation*}
  \bigl(\mathrm{d}\Gamma_{\varepsilon}(T)\Psi\bigr)_n= \Bigl(\eps \sum_{j=1}^n\mathds{1}\otimes \dotsm\otimes \mathds{1}\otimes \underbrace{T}_j\otimes \mathds{1}\otimes \dotsm\otimes \mathds{1}\Bigr)\Psi_n\;.
\end{equation*}
% In this section, we assume that there exist solutions $u_\pm(t)$ to
% equation~\eqref{PDEpm}. This point is proved in the next section (see
% Proposition~\ref{globalwellposedness}). Our aim in this section is to prove
% Theorem~\ref{approximateevolution} under this assumption. We first
% construct in Section~\ref{sec:classical} the classical quantities (which
% depend on $u_\pm(t)$) that we need to establish
% Theorem~\ref{approximateevolution}, which we prove in
% Section~\ref{sec:proofth1}.

\subsection{Superadiabatic projectors}
\label{sec:classical}

Let us consider $\Psi_{\varepsilon}\in \mathscr{H}=\mathbb{C}^2\otimes
\mathscr{F}(L^2)$, and define its microscopic evolution as
\begin{equation*}
  \Psi_{\varepsilon}(t):=U_{\varepsilon}(t)\Psi_{\varepsilon}= e^{-i \frac{t}{\varepsilon}H_{\varepsilon}(\mathfrak{g})}\,\Psi_{\varepsilon}\;,
\end{equation*}
that is the unique $\mathscr{H}$-solution of the Schrödinger equation
\begin{equation*}
  i\varepsilon\partial_t\Psi_{\varepsilon}(t)= H_{\varepsilon}(\mathfrak{g})\,\Psi_{\varepsilon}(t)\;
\end{equation*}
with initial datum $\Psi_\eps$.
% If $\Psi_\eps(t)$ is a solution of Schrödinger's equation $i \eps
% \partial_t \Psi_\eps(t)= H_\eps \Psi_\eps(t)$,
From now on, let us omit the explicit dependence on $\mathfrak{g}$ of
$H_{\varepsilon}(\mathfrak{g})$, for ease of notation.

Our first aim is to understand how this state behaves along the
eigenprojectors, that is, if $\pi_\pm^\eps \Psi_\eps (t)$ solves a possibly
different Schrödinger equation.  As discussed in the introduction, since the
operator $\lambda^\eps_+-\lambda^\eps_-\geq 1$, we can use an adiabatic strategy
aiming at separating the modes.
% it is well-studied in the $L^2(\R^d)$-setting
% (see~\cite{Martinez_Sordoni,Spohn_Teufel} and the
% monograph~\cite{Teufel_book}).
We follow the ideas developed in Chapter~14 of~\cite{Combescure_Robert_book}
for coherent states. The formal computation
$$
i \partial_t (\pi_\pm^\eps \Psi_\eps(t)) = \pi_\pm^\eps H_\eps \Psi_\eps(t) =
\left[\pi_\pm^\eps, H_\eps \right] \Psi_\eps (t) + \lambda_\pm^\eps
\Psi_\eps(t)
$$
leads us to study with precision the commutator $ \left[\pi_\pm^\eps, H_\eps
\right]$.
\begin{lem} \label{commutatorpiH} In $D(\mathrm{d}\Gamma_\eps(1))$, the
  commutator
  \[
    B_\eps := \left[\pi_+^\eps, H_\eps \right] =- \left[\pi_-^\eps, H_\eps
    \right]
  \]
  is a skew-symmetric off-diagonal operator, in the sense that $\pi_+^\eps
  B_\eps \pi_+^\eps=0$ and $\pi_-^\eps B_\eps \pi_-^\eps = 0$.
  Furthermore, %\footnote{C. I have added the $\mathfrak g$, and added comments in the proof. }
  \begin{align*}
    B_\eps &= \frac{\eps\mathfrak g}{2 i} \phi_\eps(i \Delta g) \left(  \frac{\sigma_x}{\sqrt{1+\mathfrak g^2\phi_\eps(g)^2}} -(\sigma_z +\mathfrak g \sigma_x \phi_\eps(g)) \frac{\mathfrak g\phi_\eps(g)}{(1+\mathfrak g^2\phi_\eps(g)^2)^{3/2}}  \right) + \mathrm O(\eps^2)
    \\
    &= \eps \Lambda_{+-}^\eps +\eps \Lambda_{-+}^\eps + \mathrm O(\eps^2).
  \end{align*}
  % \footnote{Shouldn't it be
  % \[
  %   B_\eps = \frac{\eps\mathfrak g}{2 i} I_2\otimes\frac{\phi_\eps(i \Delta
  %   g)}{\sqrt{1+\mathfrak g^2\phi_\eps(g)^2}} \left( \sigma_x\otimes\Id
  %     -(\sigma_z\otimes\Id +\mathfrak g \sigma_x \otimes \phi_\eps(g))
  %     \left(I_2\otimes \frac{\mathfrak g\phi_\eps(g)}{1+\mathfrak
  %       g^2\phi_\eps(g)^2} \right) \right) + \mathcal O(\eps^2)
  % \]}
  with the off-diagonal components given by
  \begin{equation}\label{def:Lambda}
    \begin{cases}
      \Lambda_{+-}^\eps := \pi_+^\eps B_\eps \pi_-^\eps = \frac{\mathfrak
        g}{2 i} \phi_\eps(i \Delta g) (1+\mathfrak g^2\phi_\eps(g)^2)^{-1/2}
      \pi_+^\eps \sigma_x \pi_-^\eps,
      \\
      \Lambda_{-+}^\eps := \pi_-^\eps B_\eps \pi_+^\eps = \frac{\mathfrak g}{ 2 i}
      \phi_\eps(i \Delta g) (1+\mathfrak g^2\phi_\eps(g)^2)^{-1/2} \pi_-^\eps
      \sigma_x \pi_+^\eps.
    \end{cases}
  \end{equation}
\end{lem}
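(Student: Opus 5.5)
We write $H_\eps = \mathrm{d}\Gamma_\eps(-\Delta) + M_\eps$, where $M_\eps := \sigma_z + \mathfrak g\,\sigma_x\phi_\eps(g)$. Set $D_\eps := (1+\mathfrak g^2\phi_\eps(g)^2)^{1/2}$, so that $\pi^\eps_\pm = \frac12(1\pm M_\eps D_\eps^{-1})$.

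The first observation is algebraic. $M_\eps^2 = 1+\mathfrak g^2\phi_\eps(g)^2 = D_\eps^2$, because $\sigma_z\sigma_x+\sigma_x\sigma_z=0$. All entries of $M_\eps$ are functions of the single self-adjoint operator $\phi_\eps(g)$, so $M_\eps$ commutes with $D_\eps$. It follows that $\pi^\eps_\pm$ are genuine orthogonal projectors summing to $1$, and that $[\pi^\eps_\pm, M_\eps]=0$. Hence
\[
B_\eps = [\pi^\eps_+, \mathrm{d}\Gamma_\eps(-\Delta)] = \tfrac12 [M_\eps D_\eps^{-1}, \mathrm{d}\Gamma_\eps(-\Delta)],
\]
and $[\pi^\eps_-,H_\eps] = -B_\eps$ is immediate from $\pi^\eps_-=1-\pi^\eps_+$. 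Skew-symmetry holds because $\pi^\eps_+$ and $H_\eps$ are symmetric. Off-diagonality is the general fact that for any projector $P$ and operator $A$ one has $P[P,A]P = PAP - PAP = 0$; applying it with $P=\pi^\eps_+$ and with $P=\pi^\eps_-$, and expanding $B_\eps=(\pi_+^\eps+\pi_-^\eps)B_\eps(\pi_+^\eps+\pi_-^\eps)$, gives $B_\eps = \pi_+^\eps B_\eps \pi_-^\eps + \pi_-^\eps B_\eps\pi_+^\eps$ exactly. This exact identity, together with the later leading-order formula, yields the decomposition $\eps\Lambda^\eps_{+-}+\eps\Lambda^\eps_{-+}+\mathrm O(\eps^2)$.

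Next I compute the commutator with the free field. The standard identity $[\mathrm{d}\Gamma_\eps(T), a^*_\eps(f)] = \eps\, a^*_\eps(Tf)$ and its adjoint give
\[
[\mathrm{d}\Gamma_\eps(-\Delta), \phi_\eps(g)] = -\eps\, \tfrac{1}{i}\phi_\eps(i\Delta g) \quad\text{(up to the sign convention)}.
\]
So $C_\eps := [\phi_\eps(g),\mathrm{d}\Gamma_\eps(-\Delta)]$ equals $\frac{\eps}{i}\phi_\eps(i\Delta g)$, an $\mathrm O(\eps)$ field operator. Its commutator with $\phi_\eps(g)$ is the scalar $\eps\,\mathrm{Im}\langle g, i\Delta g\rangle$ times a constant, which is itself $\mathrm O(\eps)$. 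For a smooth function $F$ I then expand $[F(\phi_\eps(g)),\mathrm{d}\Gamma_\eps(-\Delta)]$. Since $C_\eps$ commutes with $\phi_\eps(g)$ up to a scalar of order $\eps$, a Helffer--Sjöstrand (or Fourier) representation $F(\phi)=\int \hat F(k)e^{ik\phi}\,dk$ combined with Duhamel's formula $[e^{ik\phi},\mathrm{d}\Gamma]=\int_0^1 e^{isk\phi}\,ik\,C_\eps\, e^{i(1-s)k\phi}\,ds$ gives
\[
[F(\phi_\eps(g)),\mathrm{d}\Gamma_\eps(-\Delta)] = F'(\phi_\eps(g))\,C_\eps + \mathrm O(\eps^2).
\]
Moving $C_\eps$ past $e^{isk\phi}$ costs only the $\mathrm O(\eps)$ scalar, and the resulting error is bounded by $\eps^2\int |k|^2|\hat F(k)|\,dk$. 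I would apply this to $F_1(x)=(1+\mathfrak g^2x^2)^{-1/2}$ (the $\sigma_z$ part) and $F_2(x)=x(1+\mathfrak g^2x^2)^{-1/2}$ (the $\sigma_x$ part). Both have bounded derivatives and are smooth symbols of order $0$, so the Fourier integral has to be understood in the symbol class; alternatively one can use the Helffer--Sjöstrand almost-analytic extension.

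Plugging in,
\[
B_\eps = \tfrac12\bigl(\sigma_z F_1'(\phi) + \mathfrak g\sigma_x F_2'(\phi)\bigr)C_\eps + \mathrm O(\eps^2),
\]
with $F_2' = D^{-1} - \mathfrak g^2\phi^2 D^{-3}$ and $F_1' = -\mathfrak g^2\phi D^{-3}$. Regrouping reproduces exactly the bracket
\[
\frac{\sigma_x}{D_\eps} - (\sigma_z+\mathfrak g\sigma_x\phi_\eps(g))\frac{\mathfrak g\phi_\eps(g)}{D_\eps^3}
\]
times $\frac{\eps\mathfrak g}{2i}\phi_\eps(i\Delta g)$, which is the first claimed formula. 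For the formula for $\Lambda^\eps_{+-}$, note that the bracket equals $D^{-1}\bigl(\sigma_x - M_\eps\,\mathfrak g\phi\,D^{-2}\bigr)$, and that $\pi_+^\eps M_\eps\pi_-^\eps = 0$ since $M_\eps = D_\eps(\pi_+^\eps-\pi_-^\eps)$. Sandwiching therefore kills the second term and leaves $D^{-1}\pi_+^\eps\sigma_x\pi_-^\eps$. The field factor $\phi_\eps(i\Delta g)$ has to be commuted through $\pi^\eps_\pm$, but this again costs only $\mathrm O(\eps)$ and is absorbed into the $\mathrm O(\eps^2)$ remainder.

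The main obstacle is making the $\mathrm O(\eps^2)$ remainders rigorous, because $C_\eps$ is an unbounded field operator. I would interpret $\mathrm O(\eps^2)$ as bounded from $D(\mathrm{d}\Gamma_\eps(1)^{1/2})$ to $\mathscr H$ with norm $\lesssim\eps^2$, using the standard bound $\|\phi_\eps(f)\Psi\|\lesssim\|f\|_2\|(\mathrm{d}\Gamma_\eps(1)+1)^{1/2}\Psi\|$. The assumption $g\in\mathscr S$ ensures $\Delta g,\Delta^2 g\in L^2$. All commutator manipulations should first be carried out on $\mathscr F_{\mathrm{fin}}(-\Delta)$ and then extended by density.
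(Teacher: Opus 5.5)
Your argument is correct, and it has the same overall structure as the paper's proof. You reduce to $B_\eps=\tfrac12[M_\eps D_\eps^{-1},\mathrm{d}\Gamma_\eps(-\Delta)]$, extract the leading-order commutator, and identify the off-diagonal blocks through $P[P,A]P=0$ and $\pi^\eps_\pm M_\eps\pi^\eps_\mp=0$ (the paper's $\pi^\eps_\pm A_\eps\pi^\eps_\mp=0$).

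The real difference is how you commute functions of $\phi_\eps(g)$ with $\mathrm{d}\Gamma_\eps(-\Delta)$.

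\textbf{The paper's route.} It splits by Leibniz, $[M_\eps D_\eps^{-1},\mathrm{d}\Gamma_\eps(-\Delta)]=M_\eps[D_\eps^{-1},\mathrm{d}\Gamma_\eps(-\Delta)]+[M_\eps,\mathrm{d}\Gamma_\eps(-\Delta)]D_\eps^{-1}$, so only $D_\eps^{-1}$ needs functional calculus. It then treats $D_\eps^{-1}$ in Lemma~\ref{lemmacommutatorphiH} via the Gaussian representation $(1+\phi^2)^{-1/2}=\pi^{-1/2}\int_{\R}e^{-(1+\phi^2)x^2}\,\di x$. This gives a closed formula for that particular function, but the computation goes through conjugation by the non-unitary $e^{\pm\mathfrak g^2x^2\phi_\eps(g)^2}$.

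\textbf{Your route.} The Weyl-operator/Duhamel argument uses only unitary conjugations. Because $[\phi_\eps(g),C_\eps]$ is a scalar, it gives an exact identity: $[F(\phi_\eps(g)),\mathrm{d}\Gamma_\eps(-\Delta)]=F'(\phi_\eps(g))\,C_\eps+(\text{const})\,\eps^2F''(\phi_\eps(g))$ for every $F$ with $\int(1+k^2)\lvert\hat F(k)\rvert\,\di k<\infty$. One small slip: that scalar is of order $\eps^2$, not $\eps$ as you wrote, though your final $\eps^2$ bound already reflects this.

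\textbf{What your route buys.} It is more general than the paper's function-specific formula. It also shows that every remainder in the lemma is a bounded operator of norm $\mathrm O(\eps^2)$. This includes the reorderings of $\phi_\eps(i\Delta g)$ past functions $G(\phi_\eps(g))$, which produce scalar multiples of $\eps\,G'(\phi_\eps(g))$. So your reading of $\mathrm O(\eps^2)$ as relatively bounded with respect to $\mathrm{d}\Gamma_\eps(1)^{1/2}$ is more cautious than needed.

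\textbf{What it costs.} You need Fourier integrability. $F_1$ is fine, since its transform is a multiple of $K_0(\lvert k\rvert/\mathfrak g)$. But $F_2(x)=x(1+\mathfrak g^2x^2)^{-1/2}$ does not decay, which is what forces your symbol-class or Helffer--Sj\"{o}strand detour. You can drop that detour entirely by writing $F_2(x)=xF_1(x)$ and using the paper's Leibniz split, so your commutator formula is only ever applied to $F_1$.
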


\begin{remark}\label{rem:dpi}
  The expression inside the parentheses can be formally rewritten as
    $$
    \frac{\sigma_x}{2 \sqrt{1+\mathfrak{g}^2\phi_\eps(g)^2}} -(\sigma_z + \sigma_x
    \mathfrak{g} \phi_\eps(g)) \frac{2\mathfrak{g} \phi_\eps(g)}{(1+\mathfrak{g}^2\phi_\eps(g)^2)^{3/2}} =
    \frac{\partial}{\partial \phi_\eps(g)} \left( \frac{\sigma_z + \sigma_x
        \mathfrak{g}\phi_\eps(g)}{2 \sqrt{1+\mathfrak{g}^2\phi_\eps(g)^2}} \right) = \frac{\partial
      \pi_+^\eps}{\partial \phi_\eps(g)}
    $$
  \end{remark}

\begin{proof}
  By Lemma \ref{lemmacommutatorphiH} and the commutation relation
  \begin{equation} \label{commutatordiGammaphi} [\phi_\eps(g), \di
    \Gamma_\eps(-\Delta)] = i \eps \phi_\eps (-i \Delta g) = \frac{\eps}{i}
    \phi_\eps(i \Delta g),
  \end{equation}
  on $D(\mathrm{d}\Gamma_\eps(1))$ we have
  \begin{multline*}
    \left[\pi_+^\eps, H_\eps \right]= \frac{1}{2} \left[ \frac{\sigma_z + \mathfrak g \sigma_x\otimes \phi_\eps(g)}{\sqrt{1+\mathfrak g ^2\phi_\eps(g)^2}},\di \Gamma_\eps(-\Delta) + \sigma_z +\mathfrak g  \sigma_x \otimes \phi_\eps(g) \right] \\
    =\frac{1}{2} (\sigma_z + \mathfrak g \sigma_x\otimes \phi_\eps(g)) \left[(1+\mathfrak g ^2\phi_\eps(g)^2)^{-1/2},\di\Gamma_\eps(-\Delta) \right]\\+ \frac{\mathfrak g }{2} \left[ \sigma_x \otimes \phi_\eps(g), \di \Gamma_\eps (-\Delta)  \right] (1+\mathfrak g ^2\phi_\eps(g)^2)^{-1/2}
    \\=\frac{1}{2} (\sigma_z +\mathfrak g  \sigma_x \otimes \phi_\eps(g)) \left[(1+\mathfrak g ^2\phi_\eps(g)^2)^{-1/2},\di \Gamma_\eps (-\Delta)  \right] \\+\frac{\eps\mathfrak g }{2 i}\sigma_x \otimes \phi_\eps(i \Delta g)(1+\mathfrak g ^2\phi_\eps(g)^2)^{-1/2}\\
    =\frac{\eps}{2 i} \phi_\eps(-i \Delta g) (\sigma_z + \mathfrak g \sigma_x \phi_\eps(g)) \frac{\phi_\eps(g)}{(1+\mathfrak g ^2\phi_\eps(g)^2)^{3/2}} \\+\frac{\eps\mathfrak g }{2 i}\sigma_x \otimes \phi_\eps(i \Delta g)(1+\mathfrak g ^2\phi_\eps(g)^2)^{-1/2} + \mathrm O(\eps^2),
  \end{multline*}
  where the $\mathrm O(\eps^2)$ is to be understood in operator norm. Since
  $(\pi_+^\eps)^2 =\pi_+^\eps$, the commutator $\left[\pi_+^\eps, H_\eps
  \right]$ is off-diagonal. We investigate these off-diagonal components for
  the first term on the second line above. By the canonical commutation
  relations, on $D(\mathrm{d}\Gamma_\eps(1))$ it holds that
  \begin{multline*}
    \pi_\pm^\eps \phi_\eps(i \Delta g) (\sigma_z +\mathfrak g  \sigma_x \phi_\eps(g)) \frac{\phi_\eps(g)}{(1+\mathfrak g ^2\phi_\eps(g)^2)^{3/2}}  \pi_\mp^\eps \\= \phi_\eps(i \Delta g) \frac{\phi_\eps(g)}{1+\phi_\eps(g)^2} \pi_\pm^\eps \frac{\sigma_z+\mathfrak g \sigma_x \phi_\eps(g)}{\sqrt{1+\mathfrak g ^2\phi_\eps(g)^2}} \pi_\mp^\eps +\mathrm O(\eps).
  \end{multline*}
  We denote $A_\eps = (\sigma_z+\mathfrak g \sigma_x
  \phi_\eps(g))(1+\mathfrak g ^2\phi_\eps(g)^2)^{-1/2} \in \mathscr
  B(\mathscr H)$. Since $A_\eps^2 = \Id$, a straightforward computation shows
  that,
  \begin{equation*}
    \pi_\pm^\eps A_\eps \pi_\mp^\eps = \frac{1}{4} \left( I_2 \pm A_\eps \right) A_\eps  \left( I_2 \mp A_\eps \right) = \frac{A_\eps}{4} (I_2-A_\eps^2)=0.
  \end{equation*}
  Therefore, the only relevant terms in the commutator are %in a sense
  coming from $ \pi_+^\eps \sigma _x \pi_-^\eps $ and $\pi_-^\eps \sigma_x
  \pi_+^\eps$. We write in $D(\mathrm{d}\Gamma_\eps(1))$,
  \begin{align*}
    \pi_\pm^\eps \sigma_x \otimes \phi_\eps(i \Delta g) (1+\mathfrak g ^2\phi_\eps(g)^2)^{-1/2} \pi_\mp^\eps = \phi_\eps(i \Delta g) (1+\mathfrak g ^2\phi_\eps(g)^2)^{-1/2} \pi_\pm^\eps \sigma_x \pi_\mp^\eps+ \mathrm O(\eps).
  \end{align*}
  The exact expression of $\pi_\pm^\eps \sigma_x \pi_\mp^\eps$ can be
  computed, but we will not need it in our analysis, as we only need to
  control its classical limit $\varepsilon\to 0$.
\end{proof}

\begin{remark*}\label{rem:commutator}
  Note that we have proved that $[\pi^\eps_+,
  \mathrm{d}\Gamma_\eps(-\Delta)]= \mathrm O(\eps)$, when acting on regular
  enough vectors in $D(\mathrm{d}\Gamma_\eps(-\Delta))$.
\end{remark*}

From the above it is clear that we do not get exactly a Schrödinger equation
for $\pi_{\pm}^\eps \Psi_\eps(t)$, since it is impossible to insert
$\pi_{\pm}^\eps$ to the right of $\Lambda_{\pm\mp}^\eps$. To address this
issue, we slightly modify the eigenprojectors in order to absorb this
contribution. Following~\cite{Teufel_book,Combescure_Robert_book}, this
modified projector is called a \emph{superadiabatic projector}.

% \color{black}
\begin{lem}[Superadiabatic projectors]
  \label{lem:superadiproj}
  Let
$$
\pi^{1,\eps} := \left( \Lambda_{+-}^\eps - \Lambda_{-+}^\eps
\right)(\lambda_+^\eps - \lambda_-^\eps)^{-1},
$$
Then, the superadiabatic projectors $\pi_+^\eps + \eps \pi^{1,\eps}$ satisfy
in $D(\di \Gamma_\eps(-\Delta))$ the relation,
$$
(\pi_+^\eps + \eps \pi^{1,\eps}) H_\eps = (\lambda_+^\eps + \eps
(\Lambda_{-+}^\eps- \Lambda_{+-}^\eps)) (\pi_+^\eps + \eps \pi^{1,\eps}) +
\mathrm O(\eps^2).
$$
\end{lem}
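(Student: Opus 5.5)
We need $(\pi_+ + \eps\pi^1)H = (\lambda_+ + \eps(\Lambda_{-+}-\Lambda_{+-}))(\pi_+ + \eps\pi^1) + \mathrm O(\eps^2)$, where all operators carry the index $\eps$. The idea is to expand both sides in powers of $\eps$, treating every commutator between functions of $\phi_\eps(g)$, the matrices, and $\di\Gamma_\eps(-\Delta)$ as $\mathrm O(\eps)$ on regular vectors; this uses the Remark after Lemma~\ref{commutatorpiH}. Only first-order terms need to be matched.

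\textbf{Zeroth order.} Write $H = \lambda_+\pi_+ + \lambda_-\pi_-$ up to commutator corrections. Lemma~\ref{commutatorpiH} gives the exact identity
\[
\pi_+ H = H\pi_+ + B_\eps = \lambda_+\pi_+ + [H,\pi_+]\text{-type terms},
\]
so the $\mathrm O(1)$ parts of the two sides coincide. More precisely, $\pi_+H = H\pi_+ + B_\eps$ with $B_\eps = \eps(\Lambda_{+-}+\Lambda_{-+}) + \mathrm O(\eps^2)$. The task then reduces to showing that $H\pi_+$ differs from $\lambda_+\pi_+$ by something we can control. Since $H\pi_+ = \lambda_+\pi_+$ holds only formally, I would instead use the decomposition $\pi_+ H \pi_+ + \pi_+ H\pi_-$ and compute $\pi_+H\pi_-$ by the same trick. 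Because $\pi_+\pi_- = 0$, we get
\[
\pi_+ H\pi_- = \pi_+[H,\pi_-] = \pi_+B_\eps\pi_- + \mathrm O(\eps^2) = \eps\Lambda_{+-} + \mathrm O(\eps^2).
\]
Similarly, $\pi_+H\pi_+ = \lambda_+\pi_+ + \mathrm O(\eps^2)$, after checking that the diagonal $\mathrm O(\eps)$ part vanishes because $B_\eps$ is off-diagonal. This gives
\[
\pi_+H = \lambda_+\pi_+ + \eps\Lambda_{+-} + \mathrm O(\eps^2).
\]

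\textbf{First order.} We need
\[
\pi^1 H = \lambda_+\pi^1 + (\Lambda_{-+}-\Lambda_{+-})\pi_+ - \Lambda_{+-} + \mathrm O(\eps),
\]
where the correction $\eps(\Lambda_{-+}-\Lambda_{+-})\eps\pi^1$ has already been absorbed into $\mathrm O(\eps^2)$. At leading order everything commutes, so all operators can be treated as commuting functions of $\phi_\eps(g)$ times matrices, and $\pi_\pm$ can be reordered freely. Using $\Lambda_{+-} = \pi_+\Lambda_{+-}\pi_-$, we have
\[
\pi^1 H \approx \Lambda_{+-}(\lambda_+-\lambda_-)^{-1}\lambda_- - \Lambda_{-+}(\lambda_+-\lambda_-)^{-1}\lambda_+,
\]
and
\[
\lambda_+\pi^1 \approx (\lambda_+-\lambda_-)^{-1}\lambda_+(\Lambda_{+-}-\Lambda_{-+}).
\]
Subtracting gives $-\Lambda_{+-}$, which matches $(\Lambda_{-+}-\Lambda_{+-})\pi_+ - \Lambda_{+-} = -\Lambda_{+-}-\Lambda_{+-}\pi_+$. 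Note that $\Lambda_{+-}\pi_+ = 0$ while $\Lambda_{-+}\pi_+ = \Lambda_{-+}$, so the identity should close after carefully tracking which terms survive. I expect a sign or convention check to be needed here, possibly with the roles of $\Lambda_{\pm\mp}$ interchanged; this would be adjusted in the bookkeeping.

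\textbf{Main obstacle.} The algebra is formal; the real difficulty is justifying the reorderings as $\mathrm O(\eps)$ in operator norm on $D(\di\Gamma_\eps(-\Delta))$. The operator $\lambda_\pm$ contains the unbounded $\di\Gamma_\eps(-\Delta)$, and $\Lambda$ contains the unbounded $\phi_\eps(i\Delta g)$. I would handle the commutators $[f(\phi_\eps(g)),\di\Gamma_\eps(-\Delta)]$ via functional calculus and the Helffer--Sjöstrand formula together with~\eqref{commutatordiGammaphi}, using $g\in\mathscr S$ so that $\Delta g\in L^2$. The bound $(\lambda_+-\lambda_-)^{-1}\le 1/2$ comes from the gap. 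Commutators of $\phi_\eps(i\Delta g)$ with bounded functions of $\phi_\eps(g)$ are $\mathrm O(\eps)$ by the CCR. Relative boundedness of $\phi_\eps(i\Delta g)$ with respect to $(\di\Gamma_\eps(1)+1)^{1/2}$ then controls the remainders in the stated sense.
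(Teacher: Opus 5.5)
Your zeroth-order step contains an error, and it is what produces the first-order mismatch you flagged.

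First, $H_\eps\pi_+^\eps=\lambda_+^\eps\pi_+^\eps$ holds exactly, not just formally. With $A_\eps=(\sigma_z+\mathfrak g\sigma_x\phi_\eps(g))(1+\mathfrak g^2\phi_\eps(g)^2)^{-1/2}$ one has $A_\eps\pi_+^\eps=\pi_+^\eps$. Hence $(\sigma_z+\mathfrak g\sigma_x\phi_\eps(g))\pi_+^\eps=(1+\mathfrak g^2\phi_\eps(g)^2)^{1/2}\pi_+^\eps$, and so $\pi_+^\eps H_\eps=\lambda_+^\eps\pi_+^\eps+B_\eps$ exactly. Your claim $\pi_+^\eps H_\eps\pi_+^\eps=\lambda_+^\eps\pi_+^\eps+\mathrm O(\eps^2)$ is then false:
\[
\pi_+^\eps H_\eps\pi_+^\eps=\lambda_+^\eps\pi_+^\eps+B_\eps\pi_+^\eps=\lambda_+^\eps\pi_+^\eps+\pi_-^\eps B_\eps\pi_+^\eps=\lambda_+^\eps\pi_+^\eps+\eps\Lambda_{-+}^\eps+\mathrm O(\eps^2).
\]
The off-diagonality of $B_\eps$ does not remove this term, because $\lambda_+^\eps\pi_+^\eps$ is itself not block-diagonal. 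Indeed $\pi_-^\eps\lambda_+^\eps\pi_+^\eps=\pi_-^\eps[\di\Gamma_\eps(-\Delta),\pi_+^\eps]=-\pi_-^\eps B_\eps\pi_+^\eps$, which is of order $\eps$, the very order you are matching.

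The correct zeroth-order identity is therefore $\pi_+^\eps H_\eps=\lambda_+^\eps\pi_+^\eps+\eps(\Lambda_{+-}^\eps+\Lambda_{-+}^\eps)+\mathrm O(\eps^2)$, not $\lambda_+^\eps\pi_+^\eps+\eps\Lambda_{+-}^\eps+\mathrm O(\eps^2)$. With your version, the $\eps$-terms would match only if $\pi^{1,\eps}H_\eps-\lambda_+^\eps\pi^{1,\eps}=\Lambda_{-+}^\eps-\Lambda_{+-}^\eps$. Your leading-order computation correctly gives $-\Lambda_{+-}^\eps$ instead. The leftover $\Lambda_{-+}^\eps$ is a genuine $\mathrm O(1)$ operator, and no sign convention or relabelling of $\Lambda_{\pm\mp}$ removes it. The line $(\Lambda_{-+}-\Lambda_{+-})\pi_+-\Lambda_{+-}=-\Lambda_{+-}-\Lambda_{+-}\pi_+$ is an algebra slip: since $\Lambda_{-+}\pi_+=\Lambda_{-+}$ and $\Lambda_{+-}\pi_+=0$, the left side equals $\Lambda_{-+}-\Lambda_{+-}$.

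With the correct zeroth-order identity your plan closes and becomes the paper's proof. When you expand the right-hand side:
\begin{itemize}
\item $\eps\Lambda_{-+}^\eps\pi_+^\eps=\eps\Lambda_{-+}^\eps$ cancels the extra term above;
\item $\eps\Lambda_{+-}^\eps\pi_+^\eps=0$;
\item the product $\eps^2(\Lambda_{-+}^\eps-\Lambda_{+-}^\eps)\pi^{1,\eps}$ is $\mathrm O(\eps^2)$.
\end{itemize}
What remains is exactly $\pi^{1,\eps}H_\eps-\lambda_+^\eps\pi^{1,\eps}=-\Lambda_{+-}^\eps+\mathrm O(\eps)$, i.e.\ Lemma~\ref{pi^1Hcommutator}. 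Your leading-order calculation establishes this: you commute $\lambda_\pm^\eps$ past $\pi_\pm^\eps$ and $\Lambda_{\pm\mp}^\eps$ and use $\Lambda_{+-}^\eps=\pi_+^\eps\Lambda_{+-}^\eps\pi_-^\eps$.

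For the commutator estimates, the paper uses the Gaussian integral representation of $(1+\mathfrak g^2\phi_\eps(g)^2)^{-1/2}$ (Lemma~\ref{lemmacommutatorphiH}) instead of Helffer--Sj\"ostrand. Either works for this purpose.
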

\begin{remark*}\label{rem:diag}
  Observe that, also,
    $$
    (\pi_-^\eps - \eps \pi^{1,\eps}) H_\eps = (\lambda_-^\eps - \eps
    (\Lambda_{-+}^\eps- \Lambda_{+-}^\eps)) (\pi_-^\eps - \eps \pi^{1,\eps})
    + \mathrm O(\eps^2).
    $$
    Furthermore, at this stage of the proof, we see that the vectors
   $\Psi_{\eps,\pm}(t):=(\pi^\eps_\pm\pm\eps\pi^{1,\eps})\Psi_\eps(t)$ verify
   \[
     i\eps\partial_t \Psi_{\eps,\pm}(t)= (\lambda_\pm^\eps\pm \eps
     (\Lambda_{+-}^\eps- \Lambda_{-+}^\eps))\Psi_{\eps,\pm}(t) +\mathrm
     O(\eps^2)\;\;\mbox{in}\;\; \Hi.
   \]
\end{remark*}
\begin{proof}
  % \footnote{The rigorous proof is given in the
  % Appendix~\ref{appendixcommutatorestimates}.  We give here only a formal
  % proof with the main ideas, assuming that each commutator is of size
  % $\mathrm O(\eps)$ (which is what happens in practice).}
  We write
  \begin{equation*}
    (\pi_+^\eps + \eps \pi^{1,\eps}) H_\eps  = H_\eps \pi_+^\eps +[\pi_+^\eps, H_\eps ]+ \eps \pi^{1,\eps} H_\eps= 
    \eps \Lambda_{-+}^\eps + \eps \Lambda_{+-}^\eps + \lambda_{+}^\eps \pi_+^\eps + \eps \pi^{1,\eps} H_\eps + \mathrm O(\eps^2).
  \end{equation*}
  where we have used (see~\eqref{def:Lambda}) $[\pi^\eps_+,H_\eps]=\eps
  \Lambda_{+-}^\eps+\eps \Lambda_{-+}^\eps + \mathrm O(\eps^2)$.
  % \footnote{ Clo: I obtain $\pi_+ H= \pi_+ H\pi_- + \pi_+H\pi_+=\eps
  % \Lambda_{+-}^\eps +\pi^\eps_+\lambda^\eps_+\pi^\eps_+ $. . Moreover
  % $\lambda^\eps_+-\lambda^\eps_-$ commutes with $\pi^\eps_+$ and
  % $\pi^\eps_-$, but } The Lemma~\ref{commutatorpiH} gives
  We obtain
  \begin{multline*}
    (\pi_+^\eps + \eps \pi^{1,\eps}) H_\eps= (\lambda_+^\eps + \eps \Lambda_{-+}^\eps) \pi_+^\eps + \eps (\pi^{1,\eps} H_\eps + \Lambda_{+-}^\eps) + \mathrm O(\eps^2) \\
    = (\lambda_+^\eps + \eps \Lambda_{-+}^\eps-\eps \Lambda_{+-}^\eps) (\pi_+^\eps + \eps \pi^{1,\eps}) \\+ \eps (\pi^{1,\eps} H_\eps  -\lambda_+^\eps \pi^{1,\eps} + \Lambda_{+-}^\eps) -\eps^2 (\Lambda_{-+}^\eps -\Lambda_{+-}^\eps )\pi^{1,\eps}  + \mathrm O(\eps^2)
  \end{multline*}
  where we have used $\Lambda_{+-}^\eps \pi^\eps_+=0$ and $\Lambda_{-+}^\eps
  \pi^\eps_-=0$.  We conclude by Lemma~\ref{pi^1Hcommutator}.
\end{proof}

We now want to link the Hamiltonian $\lambda_+^\eps + \eps \Lambda_{-+}^\eps$
to a classical evolution. In the spirit of drawing correspondence between
quantum observables and classical ones, and in view of Remark~\ref{rem:dpi},
we define, for $u \in L^2(\R^3)$,
\begin{multline}
  \label{def:Lambda(u)}
  \Lambda_{\pm \mp}(u) :=  \mathfrak g \sqrt{2} \mathrm{Re}\langle u, i \Delta g \rangle  (1+2\mathfrak g^2 (\mathrm{Re}\langle u,g \rangle)^2)^{-1/2} \pi_\pm(u) \sigma_x \pi_\mp(u) \\
  =\mathfrak g \sqrt{2} \mathrm{Re}\langle u, i \Delta g \rangle \pi_\pm(u) \frac{\partial \pi_+(u)}{\partial(\sqrt{2}\mathrm{Re}\langle u,g \rangle)} \pi_\mp(u).
\end{multline}
The next lemma is devoted to the construction of an operator that will play
the role of parallel transport, as explained in Chapter 14
of~\cite{Combescure_Robert_book} (where it is done for matrix-valued
Hamiltonians and coherent states in $L^2(\R^d)$). This operator will encode
the movement generated by the evolution inside the eigenspace of each
mode.
% COMMENT ON PARALLEL TRANSPORT
\begin{lem} \label{propertiesR_j} The equation
  \begin{equation} \label{equationR_j}
    \begin{cases}
      i \partial_t \mathcal R_\pm(t) = (\Lambda_{\pm\mp}(u_\pm(t))-\Lambda_{\mp \pm}(u_\pm(t))) \mathcal R_\pm(t), \\
      \mathcal R_\pm(0) = I_2
    \end{cases}
  \end{equation}
  admits a unique unitary solution $\mathcal R_\pm \in \mathscr C^1(\R,
  M_2(\C))$. Furthermore, the following intertwining relation holds:
  \begin{equation} \label{intertwining} \mathcal R_\pm(t) \pi_\pm(u_0) =
    \pi_\pm(u_\pm(t)) \mathcal R_\pm(t).
  \end{equation}
\end{lem}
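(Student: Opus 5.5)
The plan is to treat \eqref{equationR_j} as a linear ODE in $M_2(\C)$ with a continuous, anti-Hermitian generator (after multiplication by $-i$), and then verify the intertwining relation by a uniqueness argument. Write $M_\pm(t):=\Lambda_{\pm\mp}(u_\pm(t))-\Lambda_{\mp\pm}(u_\pm(t))$.

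\emph{Continuity.} First check that $t\mapsto M_\pm(t)$ is continuous. Since $u_\pm\in\mathscr C(\R,L^2)$ (taken for granted, as in the rest of this section) and $g,\Delta g\in\mathscr S$, the scalars $\mathrm{Re}\langle u_\pm(t),g\rangle$ and $\mathrm{Re}\langle u_\pm(t),i\Delta g\rangle$ are continuous in $t$. The function $x\mapsto(1+2\mathfrak g^2x^2)^{-1/2}$ is smooth, and $\pi_\pm(u)$ is a smooth function of $\mathrm{Re}\langle u,g\rangle$. By \eqref{def:Lambda(u)}, $M_\pm$ is therefore continuous. The Cauchy--Lipschitz theorem for linear ODEs then gives a unique global solution $\mathcal R_\pm\in\mathscr C^1(\R,M_2(\C))$.

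\emph{Unitarity.} Next show that $M_\pm(t)$ is Hermitian. Since $\pi_\pm(u)$ and $\sigma_x$ are Hermitian, we have $(\pi_+\sigma_x\pi_-)^*=\pi_-\sigma_x\pi_+$, so $\Lambda_{\pm\mp}(u)^*=\Lambda_{\mp\pm}(u)$. The sign in $M_\pm$ thus matters: the real prefactor multiplying $\pi_+\sigma_x\pi_- - \pi_-\sigma_x\pi_+$ gives an anti-Hermitian matrix. This is the one point requiring care, and I expect it to be the main obstacle, namely tracking the factor $i$ consistently with the definition of $\Lambda^\eps$ in \eqref{def:Lambda}, which carries the factor $\frac{1}{2i}$. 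The classical symbol \eqref{def:Lambda(u)} should be read as the principal symbol of $i\Lambda^\eps$ up to a real constant, or equivalently one should check that the generator $-iM_\pm$ is anti-Hermitian. Once $M_\pm$ is identified as Hermitian (after absorbing this $i$), we get
\[
\partial_t(\mathcal R_\pm^*\mathcal R_\pm)=i\mathcal R_\pm^*(M_\pm^*-M_\pm)\mathcal R_\pm=0,
\]
so $\mathcal R_\pm(t)$ is unitary.

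\emph{Intertwining.} Set $P(t):=\pi_\pm(u_\pm(t))$. Both $Q_1(t):=\mathcal R_\pm(t)\pi_\pm(u_0)\mathcal R_\pm(t)^*$ and $P(t)$ equal $\pi_\pm(u_0)$ at $t=0$. The first satisfies $i\dot Q_1=[M_\pm,Q_1]$, so it suffices to show that $P$ solves the same linear equation, $i\dot P=[M_\pm,P]$; uniqueness then gives $Q_1=P$, which is \eqref{intertwining}.

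To compute $\dot P$, use the chain rule:
\[
\dot P=\frac{\partial\pi_\pm}{\partial x}\,\dot x,\qquad x=\sqrt2\,\mathrm{Re}\langle u_\pm,g\rangle .
\]
From \eqref{eq:1}, $\dot x=\sqrt2\,\mathrm{Re}\langle -i\partial_t u_\pm\cdot i,g\rangle$, in which the nonlinear term drops out because it is a real multiple of $g$ and contributes $\mathrm{Re}\langle i c\,g,g\rangle=0$. This leaves $\dot x=\sqrt2\,\mathrm{Re}\langle u_\pm,i\Delta g\rangle$ up to sign conventions. For a projector family one has $\pi\,\partial\pi\,\pi=0$, so $\partial\pi=\pi\,\partial\pi\,\pi^\perp+\pi^\perp\partial\pi\,\pi$. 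Hence, by the second line of \eqref{def:Lambda(u)},
\[
\dot P=\Lambda_{+-}+\Lambda_{-+},
\]
with the prefactor $\mathfrak g$ consistent through $\partial_x\pi_+=\mathfrak g(\ldots)$. Finally, a direct check gives
\[
[\Lambda_{\pm\mp}-\Lambda_{\mp\pm},\pi_\pm]=-(\Lambda_{\pm\mp}+\Lambda_{\mp\pm}),
\]
using $\pi_\pm\Lambda_{\pm\mp}=\Lambda_{\pm\mp}$ and $\Lambda_{\pm\mp}\pi_\pm=0$. This gives $i\dot P=[M_\pm,P]$ up to the same factor-of-$i$ bookkeeping as in the unitarity step. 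I would fix conventions once there, so that both identities hold simultaneously; this is the real content of the lemma.
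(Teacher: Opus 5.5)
Your route is the paper's route: Cauchy--Lipschitz, then $\partial_t(\mathcal R_\pm^*\mathcal R_\pm)=0$, then intertwining by comparing derivatives. Your uniqueness argument for $i\dot Q=[M_\pm,Q]$ is equivalent to the paper's check that $\mathcal R_+^*\pi_+(u_+(t))\mathcal R_+$ is constant.

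Your suspicion about Hermiticity is justified, and the paper's proof passes over it silently. With the real prefactor in \eqref{def:Lambda(u)} one has $\Lambda_{+-}(u)^*=\Lambda_{-+}(u)$, so $M_\pm$ is anti-Hermitian and $\partial_t(\mathcal R_\pm^*\mathcal R_\pm)=-2i\,\mathcal R_\pm^*M_\pm\mathcal R_\pm$. For the $+$ mode with $\mathrm{Re}\langle\underline u,g\rangle_2=0$, this equals $2c\,\sigma_y$ at $t=0$, where $\sigma_y$ is the second Pauli matrix and $c=\sqrt2\,\mathfrak g\,\mathrm{Re}\langle\underline u,i\Delta g\rangle_2$. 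So whenever $c\neq0$, $\mathcal R_+$ is not unitary as the statement is literally written.

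The gap is that you never carry out the bookkeeping you yourself call the real content, and the ingredients you do write down are wrong.
\begin{itemize}
\item The sign of $\dot x$ is forced, not a convention. With $x=\sqrt2\,\mathrm{Re}\langle u_\pm(t),g\rangle_2$, the nonlinear term does drop out, but $\dot x=\sqrt2\,\mathrm{Re}\langle i\Delta u_\pm,g\rangle_2=-\sqrt2\,\mathrm{Re}\langle u_\pm,i\Delta g\rangle_2$.
\item One has $\pi_+\partial_x\pi_+\pi_-=\frac{\mathfrak g}{2}(1+\mathfrak g^2x^2)^{-1/2}\pi_+\sigma_x\pi_-$. Using the first line of \eqref{def:Lambda(u)} (the two lines are not equal; they differ by a factor $2/\mathfrak g$), this gives
\[
\frac{d}{dt}\pi_+(u_\pm(t))=-\frac12\bigl(\Lambda_{+-}+\Lambda_{-+}\bigr)(u_\pm(t))\;,\qquad \frac{d}{dt}\pi_-(u_-(t))=+\frac12\bigl(\Lambda_{+-}+\Lambda_{-+}\bigr)(u_-(t))\;.
\]
\end{itemize}
Your uniform claim $\dot P=\Lambda_{+-}+\Lambda_{-+}$ misses the factor and, more importantly, the sign flip between the two modes.

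That sign flip is why the step cannot be closed by "fixing conventions once". Take a Hermitian $M$ that is off-diagonal with respect to $\pi_\pm$, as $\Lambda_{+-}-\Lambda_{-+}$ is. For such $M$, intertwining is equivalent to $i\dot\pi=[M,\pi]$, and this determines $M$ uniquely: it is Kato's generator $i[\dot\pi_+,\pi_+]=i[\dot\pi_-,\pi_-]$. By the formulas above it equals $\frac i2(\Lambda_{+-}-\Lambda_{-+})(u_\pm(t))$, with the same sign for both modes.

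Your identity $[\Lambda_{\pm\mp}-\Lambda_{\mp\pm},\pi_\pm]=-(\Lambda_{+-}+\Lambda_{-+})$ is correct. But the stated generator is $+(\Lambda_{+-}-\Lambda_{-+})$ for the $+$ mode and $-(\Lambda_{+-}-\Lambda_{-+})$ for the $-$ mode. So if you renormalize $\Lambda_{\pm\mp}(u)$ by a constant $\kappa$, the $+$ case forces $\kappa=\frac i2$ and the $-$ case forces $\kappa=-\frac i2$. These are contradictory unless one of the projector families $\pi_\pm(u_\pm(t))$ is constant in time.

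The statement itself must therefore be amended before your argument can work. Take $\Lambda_{\pm\mp}(u)$ to be the actual classical symbol of $\Lambda^\eps_{\pm\mp}$ from \eqref{def:Lambda}, that is, $\frac{1}{2i}$ times the first line of \eqref{def:Lambda(u)}. Then the correct equation is $i\partial_t\mathcal R_\pm=(\Lambda_{-+}-\Lambda_{+-})(u_\pm(t))\,\mathcal R_\pm$ for both signs. For the $-$ mode this is the stated equation; for the $+$ mode it is exactly the equation used in the proof of Lemma~\ref{adiabaticprop}. With this generator your three steps go through verbatim. The paper's own proof does not settle the issue either: it obtains $i\partial_t\pi_+(u_+(t))=-B_+$ only by using $\dot x=+\sqrt2\,\mathrm{Re}\langle u_+,i\Delta g\rangle_2$.
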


\begin{proof}
  We consider only the $+$ case for clarity, as the $-$ case is handled
  similarly. Global well-posedness is ensured by the Cauchy-Lipschitz theorem
  and the continuity of the map $t\in \R \mapsto u_+(t) \in L^2$. The
  unitarity of $\mathcal R_\pm(t)$ follows from observing that $\mathcal
  R_\pm(0)^* \mathcal R_\pm(0) = I_2$, and that the time derivative of $t
  \mapsto \mathcal R_\pm(t)^* \mathcal R_\pm(t)$ is identically zero. For the
  intertwining relation, one again checks that the relation holds at $t=0$,
  and checks that the derivatives are equal:
  \begin{align*}
    i \partial_t ( \mathcal R_+(t)^* \pi_+(u_+(t)) \mathcal R_+(t)) &= \mathcal R_+(t)^* ( \underbrace{\Lambda_{-+}(u_+(t)) \pi_+(u_+(t) )+ \pi_+(u_+(t)) \Lambda_{+-}(u_+(t))}_{= \Lambda_{-+}(u_+(t))+\Lambda_{+-}(u_+(t))= :B_+(u_+(t)) } ) \mathcal R_+(t) \\
    &+ \mathcal R_+(t)^* i \partial_t(\pi_+(u_+(t)) ) \mathcal R_+(t),
  \end{align*}
  and by Remark~\ref{rem:dpi}
$$
i \partial_t(\pi_+(u_+(t)) ) = - \frac{1}{i} \sqrt{2} \mathrm{Re} \langle
u_+(t), i \Delta g \rangle \frac{\partial \pi_+(u)}{\partial (\sqrt{2}
  \mathrm{Re} \langle u,g \rangle)} (u_+(t)) = -B_+(u_+(t)),
$$
which concludes the proof.
\end{proof}

\subsection{Adiabatic propagation in the modes}
\label{sec:adiab-prop-modes}

Remark~\ref{rem:diag} shows that we can exchange the full evolution governed
by $H_\eps$, in the eigenspace of $\pi_\pm^\eps$, by the evolution yielded by
the operator $\lambda_\pm^\eps+\eps
(\Lambda^\eps_{-+}-\Lambda^\eps_{-+})$. The role of $\mathcal R_\pm(t)$ will
be to counter the action of the subprincipal term and to reduce ourselves to
% completely we can now exchange the full evolution governed by $H_\eps$ in
% the eigenspace of $\pi_\pm^\eps$, by
the evolution generated by $\lambda_\pm^\eps$ alone.
% , at the cost of the introduction of $\mathcal R_\pm(t)$.
For this purpose, we first need to be able to ``exchange'' the symbols
$\Lambda_{-+}^\eps$ with
$\Lambda_{-+}(u_\pm(t))$. Proposition~\ref{propfieldexpansion} proves this is
possible, as it shows that the state $e^{-i \frac{t}{\varepsilon}
  \lambda_\pm^\eps}e^{\frac{1}{\varepsilon}(a^{*}(\underline{u})-a(\underline{u}))}\Omega$
is effectively concentrated around the classical configuration
$u_\pm(t)$. Beforehand, let us prove a technical result that justifies the
existence of a two-parameter unitary group on the Fock space, that is crucial
to approximate the action of $e^{-i \frac{t}{\varepsilon} \lambda_\pm^\eps}$
on the vacuum vector.

\begin{lem} \label{equationU_pm} Consider the time dependent
  operator
    $$
    A_\pm(t)_{\varepsilon} = \di \Gamma_\eps(- \Delta) \pm \frac{\mathfrak
      g^2}{2} \frac{ \phi_\eps(g)^2}{(1+2\mathfrak g^2 \mathrm{Re}\langle
      u_\pm(t),g\rangle^2_2)^{3/2}}\;,
    $$
    seen as a bounded operator from $\mathscr D_+ := D((\di
    \Gamma_\varepsilon(-\Delta)+1)^{1/2})$ to $\mathscr D_- := D((\di
    \Gamma_\varepsilon(-\Delta)+1)^{-1/2})$.\footnote{More precisely, the sets
      $\mathscr D_\pm$ are the completions of $D((\di
      \Gamma_\varepsilon(-\Delta)+1)^{\pm1/2})$ with respect to the inner
      product
    $$
    \langle \Psi,\Phi\rangle_{\mathscr D_\pm} := \langle (\di
    \Gamma_\varepsilon(-\Delta)+1)^{\pm 1/2} \Psi, (\di
    \Gamma_\varepsilon(-\Delta)+1)^{\pm 1/2} \Phi \rangle.
    $$
  }

  Then, the non autonomous Schrödinger Cauchy problem
  \begin{equation}
    \begin{cases}
      i \eps \partial_t \Phi(t) = A_\pm (t)_{\varepsilon} \Phi(t) \\
      \Phi(0) = \Phi_0 \in \mathscr F(L^2)
    \end{cases}
  \end{equation}
  admits a unique solution implemented by the two-parameter unitary group
  $(t,s) \in \R^2 \mapsto U_\pm(t,s)_{\varepsilon}$. In other words, the
  map $(t,s) \in \R^2 \mapsto U_\pm(t,s)_{\varepsilon}$ satisfies:
  \begin{itemize}
  \item[(i)] For every $(t,s,r) \in \R^3$, $U_\pm(t,s)_{\varepsilon}$ is
    unitary, $U_\pm(t,t)_{\varepsilon} = \mathds{1}_{\mathscr F(L^2)}$, and
    $U_\pm(t,s)_{\varepsilon}U_\pm(s,r)_{\varepsilon} =
    U_\pm(t,r)_{\varepsilon}$.
  \item[(ii)] For all $s \in \R$, the map $ U_\pm(\cdot,s)_{\varepsilon}$
    belongs to $ \mathscr C^0 (\R, \mathscr B(\mathscr D_+)) \cap \mathscr
    C^1 (\R, \mathscr B(\mathscr D_+,\mathscr D_-))$ and for all $\Phi_0 \in
    \mathscr D_+$, and all $t \in \R$,
    \begin{align*}
      i \eps \partial_t (U_\pm(t,s)_{\varepsilon}\Phi_0) &= A_\pm(t)_{\varepsilon} U_\pm(t,s)_{\varepsilon} \Phi_0\;
      %\\
      \;\;\mbox{and}\;\;\;
      i \eps \partial_t (U_\pm(s,t)_{\varepsilon}\Phi_0) = - U_\pm(s,t)_{\varepsilon}A_\pm(s)_{\varepsilon} \Phi_0\;.
    \end{align*}
  \end{itemize}
  Furthermore, the following bound holds:
    $$
    \norm{(\di \Gamma_\varepsilon(-\Delta)+\varepsilon)^{1/2}
      U_\pm(t,s)_\varepsilon \Phi_0} \leq e^{C \abs{t-s}} \norm{(\di
      \Gamma_\varepsilon(-\Delta)+\varepsilon)^{1/2} \Phi_0}.
    $$
  \end{lem}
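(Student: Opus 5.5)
The plan is to treat $A_\pm(t)_\varepsilon$ as a time-dependent perturbation of the free field energy. Write
\[
A_\pm(t)_\varepsilon=\mathrm{d}\Gamma_\varepsilon(-\Delta)+c_\pm(t)\,\phi_\varepsilon(g)^2,\qquad c_\pm(t)=\pm\frac{\mathfrak g^2}{2}\bigl(1+2\mathfrak g^2\,\mathrm{Re}\langle u_\pm(t),g\rangle_2^2\bigr)^{-3/2}.
\]
The coefficient $c_\pm$ is real, bounded by $\mathfrak g^2/2$, and continuous in $t$, because $t\mapsto u_\pm(t)\in L^2$ is continuous (this is taken for granted as in the rest of the section).

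The first step is to show that $\phi_\varepsilon(g)^2$ is form-bounded with respect to $\mathrm{d}\Gamma_\varepsilon(-\Delta)+\varepsilon$. Since $g\in\mathscr S$, we have $g\in\dot H^{-1}\cap L^2$, and the standard estimate gives
\[
\lVert a_\varepsilon(g)\Psi\rVert\le\lVert(-\Delta)^{-1/2}g\rVert_2\,\lVert\mathrm{d}\Gamma_\varepsilon(-\Delta)^{1/2}\Psi\rVert .
\]
Together with $a_\varepsilon^*a_\varepsilon=a_\varepsilon a_\varepsilon^*-\varepsilon\lVert g\rVert_2^2$, this yields
\[
\lvert\langle\Psi,\phi_\varepsilon(g)^2\Psi\rangle\rvert\le C\,\langle\Psi,(\mathrm{d}\Gamma_\varepsilon(-\Delta)+\varepsilon)\Psi\rangle,
\]
with $C$ independent of $\varepsilon$. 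For $\mathfrak g$ small the relative bound is $<1$. Then $A_\pm(t)_\varepsilon$ is a bounded map $\mathscr D_+\to\mathscr D_-$, and the forms of $A_\pm(t)_\varepsilon+M$ are uniformly coercive on $\mathscr D_+$ for a suitable constant $M$. I must check the domain conventions: $\mathscr D_\pm$ are defined with $+1$ while the final bound uses $+\varepsilon$. These give equivalent norms for fixed $\varepsilon$, but constants uniform in $\varepsilon$ require working with $\mathrm{d}\Gamma_\varepsilon(-\Delta)+\varepsilon$ throughout.

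The second step is existence of the propagator. A clean route is a Kato-type theorem for non-autonomous problems in the scale $\mathscr D_+\subset\mathscr F\subset\mathscr D_-$ (Kato's hyperbolic theory, or Lions' variational framework with symmetric forms). The needed hypotheses are uniform coercivity, $t\mapsto A_\pm(t)_\varepsilon\in\mathscr B(\mathscr D_+,\mathscr D_-)$ continuous and in fact of bounded variation, and symmetry. The last two hold because the $t$-dependence sits in the scalar $c_\pm(t)$, which is $C^1$ since $\mathrm{Re}\langle u_\pm(t),g\rangle$ is differentiable when $g\in\mathscr S$ via the equation. Unitarity follows from symmetry, uniqueness from energy estimates, and the group law and the two differentiation formulas in (ii) are the standard output of that theorem.

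Alternatively, one can build the propagator explicitly. The operator is quadratic, so $U_\pm(t,s)_\varepsilon$ is a Bogoliubov implementation of the classical linear flow
\[
i\partial_t v=-\Delta v+2c_\pm(t)\,\mathrm{Re}\langle v,g\rangle\,g .
\]
This flow is a rank-one perturbation of the free flow and is symplectic; Shale's condition holds since the perturbation is Hilbert--Schmidt. I would mention this only as a cross-check.

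The third step is the energy bound. Set $E(t)=\langle\Phi(t),(\mathrm{d}\Gamma_\varepsilon(-\Delta)+\varepsilon)\Phi(t)\rangle$. It is more convenient to use the modified energy $\tilde E(t)=\langle\Phi(t),(A_\pm(t)_\varepsilon+M\varepsilon)\Phi(t)\rangle$, whose time derivative is only
\[
\dot c_\pm(t)\,\langle\Phi,\phi_\varepsilon(g)^2\Phi\rangle ,
\]
with no factor $1/\varepsilon$ because the $i\varepsilon\partial_t$ terms cancel by symmetry. By form-boundedness this is bounded by $C\tilde E(t)$. Grönwall then gives $\tilde E(t)\le e^{C\lvert t-s\rvert}\tilde E(s)$, and the equivalence $\tilde E\simeq E$ converts this to the stated bound, up to a constant which can be absorbed for $\mathfrak g$ small or folded into the exponent. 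This requires $\dot c_\pm$ to be bounded uniformly in $t$.

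The main obstacle I expect is the exact form of the stated inequality, which has constant $1$ in front of the exponential, and keeping $\varepsilon$-uniformity. If the equivalence constants between $\tilde E$ and $E$ cannot be avoided, I would instead differentiate $E$ directly. Its derivative involves the commutator
\[
\tfrac1\varepsilon\bigl[\mathrm{d}\Gamma_\varepsilon(-\Delta),\phi_\varepsilon(g)^2\bigr],
\]
which is $\mathrm O(1)$ and equals a symmetrized product of $\phi_\varepsilon(i\Delta g)$ and $\phi_\varepsilon(g)$ by \eqref{commutatordiGammaphi}. Each factor is bounded relative to $(\mathrm{d}\Gamma_\varepsilon(-\Delta)+\varepsilon)^{1/2}$ since $g,\Delta g\in\dot H^{-1}\cap L^2$. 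This gives $\lvert\dot E\rvert\le CE$ directly, and hence the bound with constant exactly $1$.
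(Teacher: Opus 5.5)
Your proof is correct. For the energy estimate, your fallback is exactly the paper's mechanism: differentiate $E(t)=\langle\Phi(t),(\mathrm{d}\Gamma_\varepsilon(-\Delta)+\varepsilon)\Phi(t)\rangle$ and use that $\tfrac1\varepsilon[\mathrm{d}\Gamma_\varepsilon(-\Delta),\phi_\varepsilon(g)^2]$ is a symmetrized product of $\phi_\varepsilon(i\Delta g)$ and $\phi_\varepsilon(g)$.

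The genuine difference is in how you get existence. The paper applies an abstract non-autonomous result \citep[Cor.~C.4]{ammari2012propagation}. Its hypotheses are:
\begin{itemize}
\item essential self-adjointness of each $A_\pm(t)_\varepsilon$ on $\mathscr F_{\mathrm{fin}}(-\Delta)$;
\item mere continuity of $t\mapsto A_\pm(t)_\varepsilon\in\mathscr B(\mathscr D_+,\mathscr D_-)$;
\item a bound on $[\mathrm{d}\Gamma_\varepsilon(-\Delta)+\varepsilon,A_\pm(t)_\varepsilon]$ in $\mathscr B(\mathscr D_+,\mathscr D_-)$, carrying an explicit factor $\varepsilon$.
\end{itemize}
That single commutator check delivers the propagator, properties (i)--(ii), and the bound with prefactor $1$ all at once. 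Your Kisy\'nski/Kato--Lions route needs two things the paper's route does not:
\begin{itemize}
\item \emph{Uniform coercivity.} This requires a smallness condition of the type $\mathfrak g^2\norm{(-\Delta)^{-1/2}g}_2^2<1$. The coefficient of $-\phi_\varepsilon(g)^2$ in $A_-(t)_\varepsilon$ can reach $\mathfrak g^2/2$, and beyond that threshold $A_-(t)_\varepsilon$ is not even semibounded.
\item \emph{$C^1$ dependence of $c_\pm$ with $\dot c_\pm$ bounded.} You get this from the equation together with $\sup_{t\geq0}\norm{u_\pm(t)}_2<\infty$. That supremum is not a conservation law here, since the flow is not gauge invariant; it comes from the Strichartz/scattering bounds of Proposition~\ref{globalwellposedness}.
\end{itemize}
Both restrictions are harmless in the paper's regime $\mathfrak g<\mathfrak g_0$. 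Still, since you need the commutator estimate anyway for the stated bound, the paper's route is more economical, and it does not tie this lemma to semiboundedness or to time-differentiability.

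Two smaller points.
\begin{enumerate}
\item Your modified-energy Gr\"onwall argument only yields $E(t)\le K e^{C\abs{t-s}}E(s)$ with $K=(M+\theta)/(1-\theta)>1$, where $\theta<1$ is the relative form bound. Such a $K$ can neither be absorbed for small $\mathfrak g$ nor folded into the exponent, because the latter fails as $t\to s$. So the direct commutator argument is not optional; it is what the statement as written requires.
\item Differentiating $E(t)$ is only formal for $\Phi_0\in\mathscr D_+$. It has to be justified on a more regular subspace, or through a regularization of $S=\mathrm{d}\Gamma_\varepsilon(-\Delta)+\varepsilon$ such as $S(1+\delta S)^{-1}$. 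This is precisely what the abstract corollary handles in the paper.
\end{enumerate}
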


\begin{proof}
  The proof is an application of \cite[Corollary
  C.4]{ammari2012propagation}. Firstly, observe that for all $t\in \mathbb{R}$,
  $A_{\pm}(t)_{\varepsilon}$ is essentially self-adjoint on $\mathscr{F}_{\mathrm{fin}}(-\Delta)$
  \citep[see, \emph{e.g.}][Nelson's commutator theorem could also be
  used]{falconi2015mpag}; then we use the continuity of $t \in \R \mapsto A_\pm(t)_{\varepsilon}
  \in \mathscr B(\mathscr D_+,\mathscr D_-)$, which follows from the continuity
  of $f :t \in \R \mapsto \left( 1+ 2 \mathfrak g^2 \mathrm{Re}\langle u_\pm(t),g\rangle^2_2
  \right)^{-3/2} \in \R$, by writing
  \begin{multline*}
    \norm{A_\pm(t)_{\varepsilon} - A_\pm(s)_{\varepsilon}}_{\mathscr B(\mathscr D_+, \mathscr D_-)} \leq \abs{f(t)-f(s)} \norm{\phi_\eps(g)^2}_{\mathscr B(\mathscr D_+, \mathscr D_-)} %\\
    \lesssim \eps \abs{f(t)-f(s)} \norm{(-\Delta)^{-1/2} g}^2_{2};
  \end{multline*}
  where the bound $\norm{\phi_\eps(g)^2}_{\mathscr B(\mathscr D_+, \mathscr
    D_-)} \lesssim \eps \norm{(-\Delta)^{-1/2}g}_{2}^2$ is a consequence of
  the standard estimate
    $$
    \norm{ \phi_\eps(g) \Phi} \lesssim \norm{(-\Delta)^{-1/2} g}_{2}
    \norm{(\di \Gamma_\eps(-\Delta)+\eps)^{1/2} \Phi},
    $$
    that can be found, \emph{e.g.}, in \cite[Theorem
    5.16]{arai2018analysis}. It remains to estimate the following commutator,
    using \eqref{commutatordiGammaphi},
    \begin{align*}
      [(\di \Gamma_\varepsilon(-\Delta)+\varepsilon), A_\pm(t)_{\varepsilon}] &= \pm \frac{1}{
        2  (1+ 2 \mathfrak g^2 \mathrm{Re}\langle u_\pm(t),g\rangle^2)^{3/2}} [\di \Gamma_\varepsilon(-\Delta), \phi_\eps(g)^2] \\
      &= \mp \frac{\varepsilon}{2 (1+ 2 \mathfrak g^2 \mathrm{Re}\langle u_\pm(t),g\rangle^2)^{3/2}} (\phi_\varepsilon(g) \phi_\varepsilon( \Delta g) + \phi_\varepsilon( \Delta g) \phi_\varepsilon (g))\;,
    \end{align*}
    thus obtaining the desired bound
    $$
    \norm{[(\di \Gamma_\varepsilon(-\Delta)+\varepsilon),
      A_\pm(t)_{\varepsilon}]}_{\mathscr B(\mathscr D_+,\mathscr D_-)}
    \lesssim \norm{(-\Delta)^{-1/2} g}_{2} \norm{(-\Delta)^{3/2} g}_{2}.
    $$
\end{proof}
We can now describe the so-called propagation of chaos, along each mode, for
the Fock-space coherent wavepackets: the coherent structure is preserved
along the dynamics induced by $\lambda_{\pm}^{\varepsilon}$.

\begin{prop}[Propagation of chaos in the modes]
  \label{propfieldexpansion}
  Let $\underline{u} \in L^2(\R^3)$ and consider the unique global solutions
  $u_\pm(t)$ to the Cauchy problems \eqref{eq:1}. Then there exists $C>0$
  (depending only on some Schwartz seminorms of $g$), such that for every $t
  \in \R$,
    $$
    \norm{ e^{-i \frac{t}{\varepsilon} \lambda_\pm^\eps/_\eps}\lvert
      \underline{u}_{\varepsilon}\rangle - e^{\frac{i}{\varepsilon} S_\pm(t)}
      \lvert \widetilde{u}_{\pm}(t)_{\varepsilon}\rangle} \leq C \mathfrak
    g^3\, \sqrt{\eps}\, {\rm exp}\left(C\,\mathfrak g^2 \abs{t}\right)\;;
    % \footnote{Keeping track of the $\mathfrak g$ dependence I find a
    % slightly different (better) estimate}
    $$
    where the action $S_{\pm}(t)$ and the squeezed coherent vector $\lvert
    \widetilde{u}_{\pm}(t)_{\varepsilon}\rangle$ are given, respectively, by
    \eqref{eq:2} and \eqref{eq:3}.
    % \footnote{R: I tried to clarify the proof, detailing the missing
    % estimates, and changed a -1 into a +1 in the defn of S(t) but I'm not
    % 100\% sure. Clo: OK, I will have a look}
  \end{prop}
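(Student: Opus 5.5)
The argument is the standard Hepp/Ginibre--Velo coherent-state method: conjugate by the Weyl operator centered at $u_\pm(t)$, expand $\lambda^\eps_\pm$ around the classical trajectory, and close with a Duhamel estimate. Write $W(u):=e^{\frac{1}{\eps}(a^*_\eps(u)-a_\eps(u))}$, so that $\lvert \underline{u}_\eps\rangle=W(\underline{u})\Omega$. Recall that
\[
W(u)^*a_\eps(f)W(u)=a_\eps(f)+\langle f,u\rangle_2,\qquad W(u)^*\phi_\eps(g)W(u)=\phi_\eps(g)+\sqrt2\,\mathrm{Re}\langle u,g\rangle_2 .
\]
We compare $\Phi(t):=W(u_\pm(t))^*e^{-i\frac{t}{\eps}\lambda_\pm^\eps}W(\underline{u})\Omega$ with $e^{\frac i\eps S_\pm(t)}U_\pm(t,0)_\eps\Omega$. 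Differentiating, $\Phi$ solves $i\eps\partial_t\Phi=\mathcal{L}(t)\Phi$ with
\[
\mathcal{L}(t)=W(u_\pm(t))^*\lambda_\pm^\eps W(u_\pm(t))-i\eps\,W(u_\pm(t))^*\partial_tW(u_\pm(t)).
\]
The second term equals $-\bigl(a^*_\eps(\partial_tu_\pm)\cdot i\ \text{part}\bigr)$ plus a scalar; explicitly, it is linear in $a_\eps,a^*_\eps$ with coefficient $i\partial_tu_\pm$, plus the scalar $\mathrm{Im}\langle u_\pm,\partial_t u_\pm\rangle$.

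For the first term, $\mathrm{d}\Gamma_\eps(-\Delta)$ transforms into $\mathrm{d}\Gamma_\eps(-\Delta)+$ a linear term in $-\Delta u_\pm$ $+\langle u_\pm,-\Delta u_\pm\rangle$. The nonlinear part $\pm F(\phi_\eps(g)+r(t))$, with $F(x)=\sqrt{1+\mathfrak g^2x^2}$ and $r(t)=\sqrt2\,\mathrm{Re}\langle u_\pm(t),g\rangle_2$, is Taylor expanded in the operator $\phi_\eps(g)$ via functional calculus (it is a function of a single self-adjoint operator):
\[
F(r+\phi)=F(r)+F'(r)\phi+\tfrac12F''(r)\phi^2+R_3,\qquad \lvert R_3\rvert\le \tfrac16\sup\lvert F'''\rvert\,\lvert\phi\rvert^3 .
\]
Here $\sup\lvert F'''\rvert\lesssim\mathfrak g^3$ and $\tfrac12F''(r)=\frac{\mathfrak g^2}{2(1+\mathfrak g^2r^2)^{3/2}}$, which is exactly the quadratic part of $A_\pm(t)_\eps$ (the constants match with $r^2=2\,\mathrm{Re}\langle u_\pm,g\rangle^2$). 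The linear terms cancel precisely because $u_\pm$ solves \eqref{eq:1}: indeed $F'(r)\,g/\sqrt2$ reproduces the nonlinearity. The scalar terms combine into the phase, and one checks that they give $S_\pm(t)$ as in \eqref{eq:2}, i.e. $\pm F(r)-\frac{\mathfrak g^2 r^2}{2F(r)}$ minus the kinetic contribution, which cancels against $\mathrm{Im}\langle u,\partial_tu\rangle$. Thus
\[
\mathcal{L}(t)=A_\pm(t)_\eps+\partial_tS_\pm\text{-scalar}\pm R_3(t).
\]

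Duhamel's formula, together with the unitarity of $U_\pm$ from Lemma~\ref{equationU_pm}, then gives
\[
\lVert\Phi(t)-e^{\frac i\eps S_\pm}U_\pm(t,0)_\eps\Omega\rVert\le\frac1\eps\int_0^t\lVert R_3(s)U_\pm(s,0)_\eps\Omega\rVert\,ds .
\]
Applying $W(u_\pm(t))$ turns the left-hand side into the quantity in the statement, because $W(u_\pm(t))U_\pm(t,0)_\eps\Omega=\lvert\widetilde u_\pm(t)_\eps\rangle$ by \eqref{eq:3}.

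The main obstacle is bounding $\lVert\phi_\eps(g)^3U_\pm(s,0)_\eps\Omega\rVert\lesssim\eps^{3/2}e^{C\mathfrak g^2s}$. Given that bound, the integrand is $\lesssim\eps^{-1}\mathfrak g^3\eps^{3/2}e^{C\mathfrak g^2s}$, which integrates to the claimed estimate $C\mathfrak g^3\sqrt\eps\,e^{C\mathfrak g^2\lvert t\rvert}$. To get the bound, write $\phi_\eps=\sqrt\eps\,\phi_1$, which reduces it to an $\eps$-uniform bound on $\lVert(N+1)^{3/2}U_\pm(s,0)_\eps\Omega\rVert$, where $N$ is the unscaled number operator, since $\lVert\phi_1(g)^3\Psi\rVert\lesssim\lVert g\rVert^3\lVert(N+1)^{3/2}\Psi\rVert$. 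Because $A_\pm$ is quadratic, conjugating by $\eps$ shows that $U_\pm(t,s)_\eps$ is independent of $\eps$ in unscaled variables; it is a Bogoliubov-type dynamics. I would then do a Grönwall estimate on $\langle\Psi,(N+1)^3\Psi\rangle$. The commutator $[N,\phi_1(g)^2]$ is controlled by $(N+1)$ with coefficient $\mathfrak g^2\lVert g\rVert^2$, so $\frac d{dt}\langle(N+1)^k\rangle\lesssim\mathfrak g^2\langle(N+1)^k\rangle$, with $\mathrm{d}\Gamma(-\Delta)$ commuting with $N$. This yields the rate $e^{C\mathfrak g^2t}$. A regularization argument (truncating $N$, or working on $\mathscr F_{\mathrm{fin}}(-\Delta)$) would be needed to justify these differentiations.
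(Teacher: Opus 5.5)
Your proposal is correct and follows the paper's proof essentially step by step. The shared steps are: conjugation by the Weyl operator along $u_\pm(t)$; a third-order functional-calculus Taylor expansion of $\sqrt{1+\mathfrak g^2x^2}$ in $\phi_\eps(g)$, whose constant, linear and quadratic terms are absorbed by $S_\pm$, by \eqref{eq:1} and by $A_\pm(t)_\eps$ respectively; and a Duhamel bound on the cubic remainder, using a Gr\"onwall estimate on the third moment of the number operator along $U_\pm(t,0)_\eps$. Your remark that $U_\pm(t,s)_\eps$ is $\eps$-independent in unscaled variables is the same fact the paper uses as $\lVert(\mathrm{d}\Gamma_\eps(1)+\eps)^{3/2}\Omega\rVert=\eps^{3/2}$.

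One caveat, which the paper has too: integrating $\eps^{-1}\mathfrak g^3\eps^{3/2}e^{C\mathfrak g^2 s}$ over $[0,\lvert t\rvert]$ gives $C\mathfrak g^3\sqrt{\eps}\,\lvert t\rvert\,e^{C\mathfrak g^2\lvert t\rvert}$, or equivalently $\mathfrak g\sqrt{\eps}\,e^{C\mathfrak g^2\lvert t\rvert}$, not the stated $\mathfrak g^3$ prefactor with a $\mathfrak g$-independent $C$. This does not affect the later use in Proposition~\ref{prop:3}.
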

  \begin{remark*}
    For convenience, we adopt the following notation:
    \begin{equation}
      \label{def:squeezeWP}
      \Psi^\eps_{\mathrm{swp},\pm}(t) := e^{\frac i\eps S_\pm(t)} \lvert \widetilde{u}_{\pm}(t)_{\varepsilon}\rangle\;.
    \end{equation}
    Observe that Proposition~\ref{propfieldexpansion} implies
    \begin{equation*}
      \Psi^\eps_{\mathrm{swp},\pm}(t)={\rm e}^{-i\frac t\eps
        \lambda^\eps_\pm} \Psi^\eps_{\mathrm{swp},\pm}(0) + \mathrm O(\sqrt \eps)
      \;\;\mbox{in}\;\; \Hi.
    \end{equation*}
  \end{remark*}
  \begin{proof}
    Using the shorthand notation $f_{\mathfrak g}(x) := \sqrt{1+\mathfrak g^2
      x^2}$:
    \begin{multline*}
      (*):= i \eps \frac{\di}{\di t} \left( e^{i \frac{t}{\varepsilon} \lambda_\pm^\eps}e^{\frac i\eps S_\pm(t)} \lvert \widetilde{u}_{\pm}(t)_{\varepsilon}\rangle\right) \\
      =  e^{i \frac{t}{\varepsilon} \lambda_\pm^\eps}  e^{\frac{i}{\varepsilon} S_\pm(t)} \left[ -\di \Gamma_\eps(- \Delta)     \mp f_{\mathfrak g}( \phi_\eps(g)) -\dot{S}_\pm(t) \right] \lvert \widetilde{u}_{\pm}(t)_{\varepsilon}\rangle+ e^{i \frac{t}{\varepsilon} \lambda_\pm^\eps}  e^{\frac{i}{\varepsilon} S_\pm(t)} e^{\frac{1}{\varepsilon}(a_{\varepsilon}^{*}(u_{\pm}(t))-a_{\varepsilon}(u_{\pm}(t)))}\\ \left[\mathrm{Re} \langle u_\pm(t), i\partial_t u_\pm(t) \rangle + \sqrt{2} \phi_\eps(i \partial_t u_\pm(t)) +  A_\pm(t)_{\varepsilon} \right] U_{\pm}(t,0)_{\varepsilon} \Omega.
    \end{multline*}
    Now, let us conjugate with the unitary
    $e^{\frac{1}{\varepsilon}(a^{*}_{\varepsilon}(u_{\pm}(t))-a_{\varepsilon}(u_{\pm}(t)))}$ to obtain
    \begin{gather*}
      \begin{multlined}
        e^{\frac{1}{\varepsilon}(a_{\varepsilon}(u_{\pm}(t))-a_{\varepsilon}^{*}(u_{\pm}(t)))}\, \di
        \Gamma_\eps(- \Delta)\,
        e^{\frac{1}{\varepsilon}(a^{*}_{\varepsilon}(u_{\pm}(t))-a_{\varepsilon}(u_{\pm}(t)))} = \di
        \Gamma_\eps(- \Delta) +\sqrt{2} \phi_\eps(- \Delta u_\pm(t))
        \\+\langle u_\pm(t) , - \Delta u_\pm(t) \rangle
      \end{multlined}\;, \\
      e^{\frac{1}{\varepsilon}(a_{\varepsilon}(u_{\pm}(t))-a^{*}_{\varepsilon}(u_{\pm}(t)))}\:
      f_{\mathfrak g}(\phi_\eps(g))\:
      e^{\frac{1}{\varepsilon}(a^{*}_{\varepsilon}(u_{\pm}(t))-a_{\varepsilon}(u_{\pm}(t)))} =
      f_{\mathfrak g}\left(\phi_\eps(g)+\sqrt{2}\mathrm{Re}\langle
        u_\pm(t),g\rangle \right)\;.
    \end{gather*}
    The second equality holds observing that the left-hand side is a positive
    self-adjoint operator, so that we can take the square root of its
    square. Since its square is a polynomial function in $\phi_\eps(g)$, the
    commutation rule follows.  The derivative $(*)$ thus becomes
    \[(*)=e^{i \frac{t}{\varepsilon}t \lambda_\pm^\eps}
      e^{\frac{i}{\varepsilon}
        S_\pm(t)}e^{\frac{1}{\varepsilon}(a^{*}_{\varepsilon}(u_{\pm}(t))-a_{\varepsilon}(u_{\pm}(t)))}
      \Theta^\eps (t)U_\pm(t,0)_{\varepsilon} \Omega\;, \] with
    \begin{multline*}
      \Theta^\eps (t):=-\dot{S}_\pm(t)+\mathrm{Re}\langle u_\pm(t),(i\partial_t +\Delta) u_\pm(t)\rangle +\sqrt 2 \phi_\eps((i\partial_t+\Delta )u_\pm(t))\\
      \;\;  \mp f_{\mathfrak g}(\sqrt{2}\mathrm{Re}\langle u_\pm(t),g\rangle+\phi_\eps(g))+\frac{\phi_\eps(g)^2}{2} f_{\mathfrak g}^{(2)} \left(\sqrt 2 \mathrm{Re}\langle u_\pm(t),g\rangle \right)
      %\\
      -\di \Gamma_\eps(-\Delta) +A_\pm(t)_{\varepsilon}\;.
    \end{multline*}
    Now, we Taylor expand $f_{\mathfrak g}$ around the real number $\phi_\pm
    :=\sqrt{2}\mathrm{Re}\langle u_\pm(t),g\rangle $, and using functional
    calculus we obtain
    \begin{multline*}
      f_{\mathfrak g}(\phi_\pm+\phi_\eps(g)) = f_{\mathfrak g}(\phi_\pm) +  \phi_\eps(g) f_{\mathfrak g}'(\phi_\pm) +\phi_\eps(g)^2 \frac{f_{\mathfrak g}^{(2)}(\phi_\pm)}{2}
      \\
      + \int_0^{\phi_\eps(g)} \int_0^{\phi_1} \int_0^{\phi_2} f_{\mathfrak g}^{(3)}(\phi_\pm+\phi_3) \di \phi_3 \di \phi_2 \di \phi_1\:,
    \end{multline*}
    with
    \[
      f_{\mathfrak g}(\phi_\pm)=\sqrt{1+2\mathfrak g^2 (\mathrm{Re}\langle
        u_\pm(t),g\rangle)^2}\;\;\mbox{and} \;\; f_{\mathfrak
        g}'(\phi_\pm)=\sqrt 2 \mathfrak g^2 \frac{\mathrm{Re}\langle
        u_\pm(t),g\rangle}{\sqrt{1+2\mathfrak g^2( \mathrm{Re}\langle
          u_\pm(t),g\rangle)^2}}\;.
    \]
    Observe that $u_\pm(t)$ solves \eqref{eq:1}, and by definition of
    $f_{\mathfrak g}$, this yields
    \begin{equation} \label{PDErewritten} \sqrt{2} (i \partial_t +
      \Delta)u_\pm(t) = \pm f_{\mathfrak g}' (\phi_\pm)g\;.
    \end{equation}
    Hence,
    \[
      \mathrm{Re} \langle u_\pm(t), (i \partial_t + \Delta ) u_\pm(t)\rangle
      = \pm \frac 1{\sqrt 2} f'_\mathfrak g (\sqrt{2} \mathrm{Re} \langle
      u_\pm(t), g \rangle ) \mathrm{Re}\langle u_\pm(t), g\rangle.
    \]
    Let us now go back to $\Theta^\eps(t)$, and consider its constant terms;
    by definition of $S_\pm(t)$, we have
    \begin{multline*}
      \mathrm{Re} \langle u_\pm(t),  (i \partial_t + \Delta )u_\pm(t) \rangle \mp f_{\mathfrak g} (\sqrt{2} \mathrm{Re} \langle u_\pm(t), g \rangle ) \\
      =\mp (1+2\mathfrak g^2( \mathrm{Re}\langle u_\pm(t),g\rangle)^2)^{-1/2} (\mathfrak g^2( \mathrm{Re}\langle u_\pm(t),g\rangle)^2 +1)= \dot{S}_\pm(t)\;,
    \end{multline*}
    and thus the constant terms cancel each other out.  The first-order terms
    also vanish, using \eqref{PDErewritten}:
$$
\sqrt{2} \phi_\eps ((i \partial_t + \Delta)u_\pm(t)) \mp \phi_\eps(g)
f_{\mathfrak g}'(\phi_\pm) = 0\;.
$$
Finally, the second-order terms vanish by definition of
$A_\pm(t)_{\varepsilon}$:
$$
\mp \frac{\phi_\eps(g)^2}{2} f_{\mathfrak g}^{(2)} (\phi_\pm) - \di
\Gamma_\eps(-\Delta) + A_\pm (t)_{\varepsilon} = 0\; .
$$
Therefore, it suffices to estimate the norm of the integral remainder
    $$
    \int_0^{\phi_\eps(g)} \int_0^{\phi_1} \int_0^{\phi_2} f_{\mathfrak
      g}^{(3)}(\phi_\pm+\phi_3) \di \phi_3 \di \phi_2 \di \phi_1
    U_\pm(t,0)_{\varepsilon} \Omega.
    $$
    Using functional calculus and by the standard estimate
    $\sup_{x\in\R}\abs{f^{(3)}(x)} \leq 3 \mathfrak g^3$, one obtains
    \begin{align*}
      \norm{\int_0^{\phi_\eps(g)} \int_0^{\phi_1} \int_0^{\phi_2} f^{(3)}(\phi_\pm+\mathfrak g\phi_3) \di \phi_3 \di \phi_2 \di \phi_1 U_\pm(t,0)_{\varepsilon} \Omega} &\leq \mathfrak g^3\norm{\frac{\phi_\eps(g)^3}{2} U_\pm(t,0)_{\varepsilon} \Omega} \\
      &\leq C(\norm{g}_{2}) \mathfrak g^3 \norm{(\mathrm{d}\Gamma_\varepsilon(1)+\varepsilon)^{3/2} U_\pm(t,0)_{\varepsilon} \Omega}
    \end{align*}
    where we used three times the well-known bound
    $$
    \norm{\phi_\eps(g) \Phi} \leq \sqrt{2} \norm{g}_{2}
    \norm{(\mathrm{d}\Gamma_{\varepsilon}(1)+\eps)^{1/2} \Phi}, \Phi \in
    D((\mathrm{d}\Gamma_{\varepsilon}(1)+\eps)^{1/2}).
    $$
    Let us conclude by proving that
    \begin{equation} \label{claim}
      \norm{(\mathrm{d}\Gamma_\varepsilon(1)+\varepsilon)^{3/2}
        U_\pm(t,0)_{\varepsilon} \Omega} \leq C \mathfrak g^3 \exp \left(
        C\mathfrak g^2 \abs{t} \right) \lVert
      (\mathrm{d}\Gamma_\varepsilon(1)+\varepsilon)^{3/2}\Omega \rVert_{}^{}=
      C \mathfrak g^3 \exp \left( C\mathfrak g^2 \abs{t}
      \right)\varepsilon^{3/2}\;;
    \end{equation}
    with $C = C(\norm{g}_{2})$: indeed, by integrating $(*)$ and dividing by
    $i\eps$, we use this bound to obtain
    \begin{equation*}
      \norm{ e^{-\frac{i}{\varepsilon} t \lambda_\pm^\eps} \lvert \underline{u}_{\varepsilon}\rangle - e^{\frac{i}{\varepsilon} S_\pm(t)}\lvert \widetilde{u}_{\pm}(t)_{\varepsilon}\rangle} \leq C \mathfrak g^3\eps^{1/2} e^{C \mathfrak g^2 \abs{t}}.
    \end{equation*}
    To prove \eqref{claim}, let us differentiate to obtain
    \begin{multline*}
      i \eps \frac{\di}{\di t} \norm{(\mathrm{d}\Gamma_\varepsilon(1)+\varepsilon)^{3/2} U_\pm(t,0)_{\varepsilon} \Omega}^2 = i \eps \frac{\di}{\di t} \langle \Omega, U_\pm(0,t)_{\varepsilon} (\mathrm{d}\Gamma_{\varepsilon}(1)+\varepsilon)^3 U_\pm(t,0)_{\varepsilon} \Omega \rangle \\
      = \langle \Omega, U_\pm(0,t)_{\varepsilon} [(\mathrm{d}\Gamma_{\varepsilon}(1)+\varepsilon)^3,A_\pm(t)_{\varepsilon}] U_\pm(t,0)_{\varepsilon}\Omega \rangle\; .
    \end{multline*}
    The commutator is given by
$$
[(\mathrm{d}\Gamma_{\varepsilon}(1)+1)^3,A_\pm(t)_{\varepsilon}] = \pm
\frac{\mathfrak g^2}{2(1+2\mathfrak g^2 \mathrm{Re} \langle
  u_\pm(t),g\rangle^2)^{3/2}}[(\mathrm{d}\Gamma_{\varepsilon}(1)+\varepsilon)^3,\phi_\eps(g)^2]\;.
$$
The pre-factor can be estimated as
$$
\abs{ \frac{\mathfrak g^2}{2(1+2\mathfrak g^2 \mathrm{Re} \langle
    u_\pm(t),g\rangle^2)^{3/2}}} \leq \mathfrak g^2,
$$
while the last commutator is explicit: denoting $$C_\eps =
[\mathrm{d}\Gamma_{\varepsilon}(1),\phi_\eps(g)^2] = -i\varepsilon
(\phi_\eps(g) \phi_\eps(ig) + \phi_\eps(ig) \phi_\eps(g)),$$ we find
$$
[(\mathrm{d}\Gamma_{\varepsilon}(1)+\varepsilon)^3,\phi_\eps(g)^2] =
(\mathrm{d}\Gamma_{\varepsilon}(1)+\varepsilon)^2 C_\eps +
(\mathrm{d}\Gamma_{\varepsilon}(1)+\varepsilon) C_\eps
(\mathrm{d}\Gamma_{\varepsilon}(1)+1) + C_\eps
(\mathrm{d}\Gamma_{\varepsilon}(1)+\varepsilon)^2\;,
$$
which is bounded by $C \eps \norm{g}^2_{L^2}
(\mathrm{d}\Gamma_{\varepsilon}(1)+\varepsilon)^3$. Therefore,
\begin{equation*}
  \frac{\di}{\di t} \norm{(\mathrm{d}\Gamma_{\varepsilon}(1)+\varepsilon)^{3/2} U_\pm(t,0)_{\varepsilon} \Omega}^2 \leq C(\norm{g}_{2}) \mathfrak g^2 \norm{(\mathrm{d}\Gamma_{\varepsilon}(1)+\varepsilon)^{3/2} U_\pm(t,0)_{\varepsilon} \Omega}\;,
\end{equation*}
allowing us to conclude using Gronwall's inequality.
\end{proof}

\subsection{Proof of Proposition~\ref{prop:3}}
\label{sec:proof-proposition}

Let us conclude this section by proving the finite-time semiclassical limit,
in the strong form of the norm bound of Proposition~\ref{prop:3}. In order to
do that, let us first prove the following lemma.

\begin{lemma} \label{adiabaticprop} Let $\psi \in \C^2$; then there exists $C>0$
  such that for every $t \in \R$,
$$
\norm{ e^{-i \frac{t}{\varepsilon} H_\eps} \psi \otimes \lvert
  \underline{u}_{\varepsilon}\rangle - \sum_\pm \mathcal R_\pm(t) \pi_\pm^0
  \psi \otimes e^{- i \frac{t}{\varepsilon} \lambda_\pm^\eps}  \lvert
\underline{u}_{\varepsilon}\rangle}\leq C \,{\eps}\, e^{C\abs{t}}.
$$
\end{lemma}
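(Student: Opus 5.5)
We argue by a Duhamel comparison between the true evolution $\Psi_\eps(t)=e^{-i\frac t\eps H_\eps}\psi\otimes\lvert\underline u_\eps\rangle$ and the ansatz
\[
\Phi_\eps(t):=\sum_\pm \mathcal R_\pm(t)\pi^0_\pm\psi\otimes e^{-i\frac t\eps\lambda^\eps_\pm}\lvert\underline u_\eps\rangle .
\]
Since $e^{-i\frac t\eps H_\eps}$ is unitary,
\[
\norm{\Psi_\eps(t)-\Phi_\eps(t)}\le \frac1\eps\int_0^t\norm{\bigl(i\eps\partial_s-H_\eps\bigr)\Phi_\eps(s)}\,\di s .
\]
It therefore suffices to show that the residual $(i\eps\partial_s-H_\eps)\Phi_\eps(s)$ is $\mathrm O(\eps^2e^{Cs})$. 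Dividing by $\eps$ and integrating then gives $C\eps e^{C|t|}$.

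We compute the residual mode by mode. Differentiating gives
\[
i\eps\partial_s\Phi^\pm_\eps=\eps\,(i\partial_s\mathcal R_\pm)\pi^0_\pm\psi\otimes(\cdots)+\mathcal R_\pm\pi^0_\pm\psi\otimes\lambda^\eps_\pm e^{-i\frac s\eps\lambda^\eps_\pm}\lvert\underline u_\eps\rangle .
\]
We then split $H_\eps$ using the superadiabatic structure of Lemma~\ref{lem:superadiproj} and Remark~\ref{rem:diag}. On the range of $\pi^\eps_\pm\pm\eps\pi^{1,\eps}$, $H_\eps$ acts as $\lambda^\eps_\pm\pm\eps(\Lambda^\eps_{+-}-\Lambda^\eps_{-+})$ up to $\mathrm O(\eps^2)$. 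We write $\Phi^\pm_\eps=(\pi^\eps_\pm\pm\eps\pi^{1,\eps})\Phi^\pm_\eps+r_\pm$, where the correction $r_\pm$ comes from two sources:
\begin{itemize}
\item the defect $(1-\pi^\eps_\pm)\mathcal R_\pm\pi^0_\pm\psi$, which by the intertwining relation \eqref{intertwining} equals $(\pi_\pm(u_\pm(s))-\pi^\eps_\pm)\mathcal R_\pm\pi^0_\pm\psi$;
\item the $\eps\pi^{1,\eps}$ term.
\end{itemize}
After this splitting, the leading $\lambda^\eps_\pm$ terms cancel. The $\mathrm O(\eps)$ terms are of two kinds: $\eps(i\partial_s\mathcal R_\pm)=\eps(\Lambda_{\pm\mp}-\Lambda_{\mp\pm})(u_\pm(s))\mathcal R_\pm$ from \eqref{equationR_j}, and $\pm\eps(\Lambda^\eps_{+-}-\Lambda^\eps_{-+})$ from the effective Hamiltonian. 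These cancel up to replacing the quantum symbols $\Lambda^\eps_{\pm\mp}$ by the classical ones $\Lambda_{\pm\mp}(u_\pm(s))$.

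The remaining errors all involve functions of $\phi_\eps(g)$ and $\phi_\eps(i\Delta g)$ acting on the field state. These include the difference $\Lambda^\eps-\Lambda(u_\pm(s))$, the difference $\pi^\eps_\pm-\pi_\pm(u_\pm(s))$, and $H_\eps$ applied to the first-order projector defect. To control them we use Proposition~\ref{propfieldexpansion}: $e^{-i\frac s\eps\lambda^\eps_\pm}\lvert\underline u_\eps\rangle$ is, up to $\mathrm O(\sqrt\eps)$, a squeezed coherent state centred at $u_\pm(s)$. Conjugating by the Weyl operator turns $F(\phi_\eps(g))$ into $F(\sqrt2\mathrm{Re}\langle u_\pm,g\rangle+\phi_\eps(g))$. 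A Taylor expansion then shows that the zeroth-order term matches the classical symbol. The first-order term is linear in $\phi_\eps$, so its norm on $U_\pm(s,0)_\eps\Omega$ is $\mathrm O(\sqrt\eps)$ by the moment bound \eqref{claim}; the higher-order terms are $\mathrm O(\eps)$, all with growth $e^{Cs}$.

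This naive count yields residual $\eps\cdot\sqrt\eps$, hence a final bound of $\sqrt\eps$ rather than the claimed $\eps$. I expect this gap to be the main obstacle. To close it, I would first try to absorb the first-order (linear in $\phi_\eps$) fluctuation into a first-order corrector of the ansatz, in the spirit of Combescure--Robert Chapter~14: add to $\Phi_\eps$ a term $\eps^{1/2}$ times a polynomial in $a^*_\eps$ applied to the squeezed state. If the corrector is itself $\mathrm O(\eps)$ in norm, it can be dropped at the end. Alternatively, I would exploit that the linear terms have vanishing expectation in the coherent state and are off-diagonal, gaining a factor $\eps$ after an integration by parts in time. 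If neither works, the honest bound from this argument is $C\sqrt\eps e^{C|t|}$, which still suffices for Proposition~\ref{prop:3}.

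The commutator bounds needed to justify the $\mathrm O(\eps^2)$ terms in Lemma~\ref{lem:superadiproj} on these states follow from the moment estimates \eqref{claim} together with Lemma~\ref{equationU_pm}, which give domain control in $D(\di\Gamma_\eps(-\Delta))$ uniformly with growth $e^{Cs}$.
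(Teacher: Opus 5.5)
Your suspicion about the rate is right, and it holds more strongly than you suggest. The bound $C\eps e^{C\abs{t}}$ is false as stated, so neither a corrector of norm $\mathrm{O}(\eps)$ nor an integration by parts can produce it.

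Take $\psi=\lvert+\rangle$, a regular $\underline u$ with $\mathrm{Re}\langle\underline u,g\rangle_2=0$ (so $\pi^0_\pm=\lvert\pm\rangle\langle\pm\rvert$), and $t=\pi\eps/2$. After conjugation by the Weyl operator, the evolution acts on $\lvert+\rangle\otimes\Omega$ for a time $t/\eps=\pi/2$ of order one. The generator is $\sigma_z+\mathfrak g\sigma_x\phi_\eps(g)+\di\Gamma_\eps(-\Delta)+\sqrt2\phi_\eps(-\Delta\underline u)$ plus a constant. Every field term is $\mathrm{O}(\sqrt\eps)$ on $\Omega$, so the first-order Dyson expansion is exact up to $\mathrm{O}(\eps)$, and only $\mathfrak g\sigma_x\phi_\eps(g)$ flips the spin. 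Hence the $\lvert-\rangle$-component of $\Psi_\eps(t)$ is, up to a phase and $\mathrm{O}(\eps)$, $\mathfrak g\sin(t/\eps)$ times the Weyl translate of $\phi_\eps(g)\Omega$. Its norm is $\mathfrak g\norm{g}_2\sqrt{\eps/2}+\mathrm{O}(\eps)$. The $\lvert-\rangle$-component of your $\Phi_\eps(t)$ is only $\mathrm{O}(\eps)$, since $\mathcal R_+(t)=I_2+\mathrm{O}(t)$.

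The mechanism is that $\psi\otimes\lvert\underline u_\eps\rangle$ has a piece of size $\sim\mathfrak g\sqrt\eps$ in the range of $\pi^\eps_-$. That piece rotates with relative phase $e^{2it/\eps}$ and is not followed by the ansatz. Any corrector capturing it has norm $\sim\sqrt\eps$ and cannot be discarded. The correct rate is $\sqrt\eps\,e^{C\abs{t}}$, which is all Proposition~\ref{prop:3} needs.

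The paper's proof reaches $\eps$ only through two unjustified steps:
\begin{itemize}
\item It asserts $X_\eps(0)=0$, whereas $X_\eps(0)=(\pi^\eps_+-\pi_+(\underline u))\psi\otimes\lvert\underline u_\eps\rangle+\mathrm{O}(\eps)$ is of order $\sqrt\eps$.
\item It claims an $\mathrm{O}(\eps^2)$ error when replacing $\Lambda^\eps_{\pm\mp}$ by $\Lambda_{\pm\mp}(u_+(t))$, whereas Lemma~\ref{lemmaprojectorweyl} only gives $\eps\cdot\mathrm{O}(\sqrt\eps)$.
\end{itemize}

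Your fallback $\sqrt\eps$ bound does not follow from your scheme either, because the residual count is wrong. The $\di\Gamma_\eps(-\Delta)$ terms cancel exactly, and $\lambda^\eps_\pm-H_\eps=\pm2\sqrt{1+\mathfrak g^2\phi_\eps(g)^2}\,\pi^\eps_\mp$, so
\[
(i\eps\partial_s-H_\eps)\Phi^\pm_\eps(s)=\eps\,(i\partial_s\mathcal R_\pm(s))\pi^0_\pm\psi\otimes e^{-i\frac{s}{\eps}\lambda^\eps_\pm}\lvert\underline u_\eps\rangle\pm2\sqrt{1+\mathfrak g^2\phi_\eps(g)^2}\,\pi^\eps_\mp\Phi^\pm_\eps(s)\;.
\]
The last term is exactly what your defect $r_\pm$ produces. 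It lies in the opposite band, so the $\lambda^\eps_\pm$ terms do \emph{not} cancel on it. Its norm is of order $\mathfrak g\norm{g}_2\sqrt\eps$ (already at $s=0$), with no factor $\eps$ in front. Your Duhamel bound $\frac1\eps\int_0^t\norm{\cdot}$ is then of order $\abs{t}/\sqrt\eps$.

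The missing idea is to keep the off-band part out of the Duhamel integral, which is what the paper's $X_\eps$ does. Compare the superadiabatically projected exact solution $(\pi^\eps_\pm\pm\eps\pi^{1,\eps})\Psi_\eps(t)$ with $\Phi^\pm_\eps(t)$; the two differences add up to $\Psi_\eps(t)-\Phi_\eps(t)$. Lemma~\ref{lem:superadiproj} puts the effective generator in front of each difference. The only genuine source is the symbol replacement, of size $\eps\cdot\mathrm{O}(\sqrt\eps\,e^{C\abs{t}})$ by Lemma~\ref{lemmaprojectorweyl} and Proposition~\ref{propfieldexpansion}. With initial value $\mathrm{O}(\sqrt\eps)$, this gives $C\sqrt\eps\,e^{C\abs{t}}$.

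Integrating the $\pi^\eps_\mp$ term by parts in time via the gap $\lambda^\eps_+-\lambda^\eps_-\geq2$ also works. Its boundary terms are of size $\sqrt\eps$, though: they reproduce the true discrepancy rather than improve on it.
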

\begin{proof}
  Without loss of generality, we may assume that $\pi_+^0 \psi =
  \pi_+(\underline{u})\psi=\psi$. We make use of the superadiabatic
  projectors and consider
  \[
    X_\eps(t)= (\pi_+^\eps + \eps \pi_+^{1,\eps}) e^{-i \frac{t}{\varepsilon}
      H_\eps}\psi \otimes \lvert \underline{u}_{\varepsilon}\rangle -
    \mathcal R_+(t) \pi_+^0 \psi \otimes e^{- i \frac{t}{\varepsilon}
      \lambda_+^\eps} \lvert \underline{u}_{\varepsilon}\rangle\; .
  \]
  Then $X_{\varepsilon}(0)=0$, and, in $\mathscr{H}$,
  \begin{multline*}
    i \eps \partial_t X_{\varepsilon}(t)  = (\pi_+^\eps + \eps \pi_+^{1,\eps}) H_\eps e^{-i \frac{t}{\varepsilon} H_\eps}   \psi \otimes \lvert \underline{u}_{\varepsilon}\rangle\\
    -(\lambda_{+}^\eps+\eps (\Lambda_{-+}(u_+(t)) - \Lambda_{+-}(u_+(t))))
    \mathcal R_+(t) \pi_+^0 \psi \otimes 
    e^{- i \frac{t}{\varepsilon} \lambda_+^\eps} \lvert \underline{u}_{\varepsilon}\rangle \\
    = (\lambda_+^\eps + \eps \Lambda_{-+}^\eps-\eps \Lambda_{+-}^\eps) (\pi_+^\eps + \eps \pi_+^{1,\eps}) e^{-i \frac{t}{\varepsilon} H_\eps}  \psi \otimes \lvert \underline{u}_{\varepsilon}\rangle \\
    -  (\lambda_+^\eps + \eps \Lambda_{-+}^\eps-\eps \Lambda_{+-}^\eps) \mathcal R_+(t) \pi_+^0\psi \otimes e^{- i \frac{t}{\varepsilon} \lambda_+^\eps}\lvert \underline{u}_{\varepsilon}\rangle   + \mathrm O(\eps^2)\\
    =  (\lambda_+^\eps + \eps \Lambda_{-+}^\eps-\eps \Lambda_{+-}^\eps) X_{\varepsilon}(t)  + \mathrm O(\eps^2)\;.
  \end{multline*}
  Here we used Lemmas \ref{lem:superadiproj} and
  \ref{lemmaprojectorweyl}. Therefore, by the fundamental theorem of
  calculus, together with Gronwall's lemma, we obtain the expected bound.
\end{proof}
We can now complete the proof of Proposition~\ref{prop:3}.
\begin{proof}[Proof of Proposition~\ref{prop:3}]
  Lemma~\ref{adiabaticprop} yields, on $\mathscr{H}$,
  \[
    e^{-i \frac{t}{\varepsilon} H_\eps} \psi \otimes \lvert \underline{u}_{\varepsilon}\rangle = \sum_\pm \mathcal R_\pm(t)
    \pi_\pm^0\psi \otimes e^{- i \frac{t}{\varepsilon} \lambda_\pm^\eps} \lvert \underline{u}_{\varepsilon}\rangle +\mathrm
    O(\eps\, e^{C|t|})\; ,
  \]
  while Proposition~\ref{propfieldexpansion} gives
  \begin{equation*}
    e^{-i \frac{t}{\varepsilon} H_\eps}\psi \otimes \lvert \underline{u}_{\varepsilon}\rangle = \sum_\pm \mathcal R_\pm(t) \pi_\pm^0\psi \otimes e^{\frac{i}{\varepsilon} S_\pm(t)}\lvert \widetilde{u}_{\pm}(t)_{\varepsilon}\rangle+\mathrm O(\eps \,e^{C|t|})+\mathrm O(\mathfrak g^3\sqrt\eps\, e^{C\mathfrak g^2 |t|})\; .
  \end{equation*}
  Let us remark that the condition on $\mathfrak{g}$, namely that there results holds
  only for $0<\mathfrak g<\mathfrak{g}_0$, is used only to guarantee the well-posedness
  (and separation) of the classical solutions $u_{\pm}(t)$; if one is able to
  prove existence (and separation) for arbitrary values of $\mathfrak{g}$, then this
  bound holds as well.
\end{proof}

\section{Long time effects and semiclassical accuracy of the measurement
  process}
\label{sec:longtime}

In this section, we first prove the existence of the functions $u_\pm(t)$ that
we have used in the preceding section, and the study their scattering
properties. We take advantage of these properties to establish
Theorems~\ref{thm:3} and~\ref{thm:4}.

\subsection{Global well-posedness, scattering, and asymptotic splitting of
  trajectories}
\label{sec:glob-well-posedn}

We start with a proof of the well-posedness and scattering properties of the
Cauchy problem~\eqref{eq:1}. To effectively study the system on long time
scales, we need to understand the behavior of the term $\mathrm{Re} \langle
u_\pm(t),g\rangle_2$ appearing in the eigenprojectors and other classical
quantities. For this purpose, we establish the existence of scattering for
some $L^p$ spaces with $p$ sufficiently large to obtain an explicit decay, by
means of dispersive estimates.

\begin{prop}
  \label{globalwellposedness}

  There exists $\mathfrak g_0>0$ such that for every $0<\mathfrak g<\mathfrak{g}_0$, the
  Cauchy problem~\eqref{eq:1}
  % \begin{equation}
  %   i \partial_t{u}_\pm(t) = \nabla_{\overline{u}} \lambda_{\pm} (u^\pm(t)) = -\Delta u_\pm(t) \pm \mathfrak
  %   g^2 \frac{\mathrm{Re}\langle u_\pm(t),g \rangle g}{\sqrt{1+2 \mathfrak g^2
  %   \mathrm{Re} \langle u_\pm(t),g\rangle^2}}, \ \ \ u_\pm(0) = u_0 \in L^2
  % \end{equation}
  is globally well-posed in $\mathscr C(\R_+,L^2) \cap
  L^2(\R_+,L^6)$. Furthermore, there is scattering in $L^2 \cap L^\infty$, that is,
  for all $\underline{u}\in L^2$ there exists
  \begin{equation*}
    u^{\infty}_{\pm}:= \lim_{t\to +\infty}e^{-it\Delta}u_{\pm}(t)\;,
  \end{equation*}
  and the limit can be taken in any $L^p$ norm-topology, with $2\leq p\leq +\infty$. It
  follows that $u_{\pm}(t)$ can be seen as the unique solution of the ``Cauchy
  problem at infinity''
  \begin{equation}
    \label{eq:4}
    u_{\pm}(t)= e^{it\Delta}u_{\pm}^{\infty}\mp i\mathfrak{g}^2\int_{+\infty}^t e^{i(t-s)\Delta}\frac{\mathrm{Re}\langle u_\pm(s),g \rangle_2 \:g}{\sqrt{1+2 \mathfrak g^2 \mathrm{Re} \langle
        u_\pm(s),g\rangle^2_2}}\, \di s\;.
  \end{equation}
\end{prop}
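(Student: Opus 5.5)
The nonlinearity is rank one, $F_\pm(u)=\pm\mathfrak g^2 h(\mathrm{Re}\langle u,g\rangle_2)\,g$ with $h(x)=x(1+2\mathfrak g^2x^2)^{-1/2}$. Since $h$ is globally Lipschitz (constant at most $1$) and bounded by $|x|$, and $g\in\mathscr S$, the map $u\mapsto F_\pm(u)$ is globally Lipschitz on $L^2$ with constant $\mathfrak g^2\|g\|_2^2$. Global well-posedness in $\mathscr C(\R_+,L^2)$ then follows from a standard fixed point on the Duhamel formulation
\begin{equation*}
u_\pm(t)=e^{it\Delta}\underline u\mp i\mathfrak g^2\int_0^t e^{i(t-s)\Delta}h(\mathrm{Re}\langle u_\pm(s),g\rangle_2)\,g\,\di s,
\end{equation*}
iterated on time intervals of fixed length, since the Lipschitz constant does not depend on the solution. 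Membership in $L^2(\R_+,L^6)$ follows from the Strichartz estimate for the free propagator, applied to the forcing term with the dual endpoint pair, provided the forcing lies in $L^2_tL^{6/5}_x$. That in turn reduces to a bound on the scalar function $a_\pm(t):=\mathrm{Re}\langle u_\pm(t),g\rangle_2$ in $L^2(\R_+)$. So the whole problem collapses onto global-in-time control of $a_\pm$.

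To control $a_\pm$, I pair the Duhamel formula with $g$:
\begin{equation*}
a_\pm(t)=\mathrm{Re}\langle e^{it\Delta}\underline u,g\rangle_2\mp\mathfrak g^2\int_0^t \mathrm{Re}\bigl(i\langle e^{i(t-s)\Delta}g,g\rangle_2\bigr)\,h(a_\pm(s))\,\di s.
\end{equation*}
This is a scalar Volterra equation. By the dispersive estimate, the kernel $k(\tau)=\langle e^{i\tau\Delta}g,g\rangle_2$ satisfies $|k(\tau)|\lesssim\min(1,|\tau|^{-3/2})\|g\|_1^2$, so $k\in L^1(\R_+)$. Young's inequality, together with $|h(x)|\le|x|$, gives
\begin{equation*}
\|a_\pm\|_{L^p(0,T)}\le\|\mathrm{Re}\langle e^{i\cdot\Delta}\underline u,g\rangle_2\|_{L^p}+\mathfrak g^2\|k\|_{L^1}\|a_\pm\|_{L^p(0,T)}.
\end{equation*}
For $\mathfrak g^2\|k\|_1<1/2$ (this fixes $\mathfrak g_0$) the last term is absorbed uniformly in $T$. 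The free term lies in $L^2\cap L^\infty(\R_+)$ for $\underline u\in L^2$, by Strichartz/Kato smoothing: $t\mapsto\langle e^{it\Delta}\underline u,g\rangle$ is in $L^2_t$ because $\|\langle e^{it\Delta}\underline u,g\rangle\|_{L^2_t}\le\|e^{it\Delta}\underline u\|_{L^2_tL^6_x}\|g\|_{L^\infty_tL^{6/5}_x}$. Hence $a_\pm\in L^2\cap L^\infty(\R_+)$, and the Strichartz bound closes.

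For scattering I set $u_\pm^\infty:=\underline u\mp i\mathfrak g^2\int_0^\infty e^{-is\Delta}h(a_\pm(s))g\,\di s$. In $L^2$, the integral converges as $t\to\infty$ by the dual Strichartz estimate applied to $\mathbf 1_{[t,\infty)}h(a_\pm)g\in L^2_tL^{6/5}_x$, whose norm is bounded by $\|a_\pm\|_{L^2(t,\infty)}\|g\|_{6/5}\to0$. This yields the $L^2$ limit of $e^{-it\Delta}u_\pm(t)$.

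The $L^\infty$ (hence all $L^p$, $2\le p\le\infty$, by interpolation) statement is the main obstacle, since $e^{-it\Delta}$ is not bounded on $L^\infty$. I read it as $\|u_\pm(t)-e^{it\Delta}u_\pm^\infty\|_{L^p}\to0$. The difference equals $\pm i\mathfrak g^2\int_t^\infty e^{i(t-s)\Delta}h(a_\pm(s))g\,\di s$. Its $L^\infty$ norm is bounded by
\begin{equation*}
\mathfrak g^2\int_t^\infty\min\bigl(\|\hat g\|_1,|t-s|^{-3/2}\|g\|_1\bigr)|a_\pm(s)|\,\di s,
\end{equation*}
which is at most $\|k'\|_{L^2}\|a_\pm\|_{L^2(t,\infty)}\to0$ with an $L^1\cap L^2$ kernel. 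Finally, \eqref{eq:4} follows by subtracting the definition of $u_\pm^\infty$ from the Duhamel formula. Uniqueness for the problem at infinity comes from the same contraction argument on $[T,\infty)$ in the norm $\|a\|_{L^2\cap L^\infty(T,\infty)}$, again using $\mathfrak g<\mathfrak g_0$.
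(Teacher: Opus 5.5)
\textbf{The gap: the $L^p$ part of the scattering claim.} The proposition asserts $\|e^{-it\Delta}u_\pm(t)-u_\pm^\infty\|_p\to0$ for every $2\le p\le\infty$. This is also the form used later: Lemma~\ref{lemma:5} needs $\|u_\pm^\infty-e^{-it\Delta}u_\pm(t)\|_\infty$ with the rate $t^{-1/2}$. You instead prove $\|u_\pm(t)-e^{it\Delta}u_\pm^\infty\|_p\to0$. For $p\neq2$ neither statement implies the other in general, since $e^{\pm it\Delta}$ is not uniformly bounded on $L^p$. So as written the stated convergence is not established.

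The obstacle you cite is not real. You never need $e^{-it\Delta}$ to act on $u_\pm(t)$ in $L^\infty$, only on the fixed function $g\in L^1$. From your own definitions,
\[
e^{-it\Delta}u_\pm(t)-u_\pm^\infty=\pm i\mathfrak g^2\int_t^\infty e^{-is\Delta}h\bigl(a_\pm(s)\bigr)\,g\,\di s .
\]
The dispersive estimate then gives
\[
\|e^{-it\Delta}u_\pm(t)-u_\pm^\infty\|_\infty\le\mathfrak g^2\|a_\pm\|_{L^\infty}\|g\|_1\int_t^\infty(4\pi s)^{-3/2}\,\di s=\mathrm O(t^{-1/2}).
\]
Interpolating with your $L^2$ convergence covers all $2\le p\le\infty$. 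This is the paper's argument: it differentiates $e^{-it\Delta}u_\pm(t)$ and uses \eqref{eq:13}, with the pointwise bound $\mathfrak g^2|h|\le\mathfrak g/\sqrt2$ in place of your bound on $a_\pm$.

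\textbf{The rest is correct but takes a different route.} The paper closes the global $L^2_tL^6_x$ bound directly by Strichartz. It uses $|\mathrm{Re}\langle u,g\rangle_2|\le\|u\|_6\|g\|_{6/5}$ and absorbs the nonlinear term when $C\mathfrak g^2\|g\|_{6/5}^2<1$; it is terse about the $\mathscr C(\R_+,L^2)$ theory. You proceed in two steps instead:
\begin{itemize}
\item You get global $\mathscr C(\R_+,L^2)$ solutions with no smallness, because $h'(x)=(1+2\mathfrak g^2x^2)^{-3/2}\in(0,1]$ makes the nonlinearity globally Lipschitz.
\item You then close a scalar Volterra inequality for $a_\pm$ with the $L^1$ kernel $\langle e^{i\tau\Delta}g,g\rangle_2$. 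The paper uses this device only later, for the decay of $\mathrm{Re}\langle u_\pm(t),g\rangle_2$ in Lemma~\ref{lemma:3}.
\end{itemize}
Your route shows that smallness of $\mathfrak g$ is needed only for global-in-time control, not for existence. It also gives $a_\pm\in L^\infty(\R_+)$ for free. Finally, it provides uniqueness for the problem at infinity \eqref{eq:4} in the class $a_\pm\in L^2(\R_+)$, which the paper's proof does not address.
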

\begin{remark*}
  Observe that equation ~\eqref{eq:1} also reads
  \begin{equation*}
    i \partial_t{u}_\pm(t) = \nabla_{\overline{u}} \lambda_{\pm} (u^\pm(t))\;,
  \end{equation*}
  \emph{i.e.}\ it is the Hamilton equation associated with the classical
  Hamiltonian function $\lambda_{\pm}(\cdot )$. Using Strichartz estimates, one should be
  able to further prove that the Cauchy problem at infinity \eqref{eq:4} is
  globally well-posed on some suitable asymptotic space; this however goes
  beyond the scope of this paper.
\end{remark*}
\begin{proof}
  We first establish local well-posedness using a standard fixed-point
  argument and Strichartz estimates~\cite[Chapter~8]{BCD_book}. The solution
  is to be understood in the Duhamel sense, that is
    $$
    u_\pm(t) = e^{i t \Delta} \underline{u} \mp i \mathfrak g^2 \int_0 ^t e^{i(t-s)\Delta}
    \frac{\mathrm{Re}\langle u_\pm(s),g \rangle_2 \:g}{\sqrt{1+2 \mathfrak g^2 \mathrm{Re} \langle
        u_\pm(s),g\rangle^2_2}}\, \di s\;.
    $$
    Let us consider the (Schrödinger-)admissible pair $(2,6)$ in dimension
    $3$.
    \begin{align*}
      \norm{u_\pm}_{L^2([0,T],L^6)} &\lesssim \norm{\underline{u}}_2 + \mathfrak g^2 \norm{ \frac{\mathrm{Re}\langle u_\pm(s),g \rangle_2\, g}{\sqrt{1+2 \mathfrak g^2 \mathrm{Re} \langle u_\pm(s),g\rangle^2_2}}}_{L^{2}([0,T],L^{\frac{6}{5}})} \\
      & \lesssim \norm{\underline{u}}_{2} + \mathfrak g^2 \norm{ \frac{\mathrm{Re}\langle u_\pm(s),g \rangle_2 }{\sqrt{1+2\mathfrak g^2  \mathrm{Re} \langle u_\pm(s),g\rangle^2_2}}}_{L^2([0,T])} \norm{g}_{{\frac{6}{5}}} \\
      &\lesssim \norm{\underline{u}}_{2} + \mathfrak g^2 \norm{ \norm{u_\pm(s)}_{6} }_{L^2([0,T])} \norm{g}_{{\frac{6}{5}}}^2,    
    \end{align*}
    so that
$$
\norm{u}_{L^2([0,T],L^6)} \leq \frac{C \norm{\underline{u}}_{2}}{1-C\mathfrak g^2
  \norm{g}_{{\frac{6}{5}}}^2},
$$
where $C$ is some Strichartz constant. Therefore, $u \in L^2(\R_+,L^6)$ when
$\mathfrak g$ is sufficiently small. Using again Strichartz estimates, we
prove that $u \in \mathscr C(\R_+,L^2)$ and the integral term
$$
I_t = \mathfrak g^2 \int_0 ^t e^{-i s \Delta} \frac{\mathrm{Re}\langle u_\pm(s),g \rangle_2\,
  g}{\sqrt{1+2 \mathfrak g^2 \mathrm{Re} \langle u_\pm(s),g\rangle^2_2}}\, \di s
$$
defines a Cauchy sequence in $L^2$ and therefore $\underset{t \rightarrow + \infty} \lim
I_t$ exists and there is scattering in $L^2$. Now, compute
$$
i \partial_t (e^{-i t \Delta}u_\pm(t)) =\mathfrak g^2 \frac{\mathrm{Re}\langle  u_\pm(t),
   g \rangle_2 }{\sqrt{1+2 \mathfrak g^2 \mathrm{Re} \langle  u_\pm(t),
     g\rangle^2_2}} e^{-i t \Delta} g.$$ It follows that for every $p >6$,
 \begin{equation}
   \label{eq:13}
   \norm{\partial_t (e^{-i t \Delta} u_\pm(t))}_{{p}} \leq \mathfrak g \norm{e^{-i t \Delta}
  g}_{{p}} \leq {\mathfrak g} (4 \pi \abs{t})^{-3/2(1/p'-1/p)} \norm{g}_{{p'}}
\in L^1([1,+\infty))\;,
 \end{equation}
% \footnote{I have turned $\mathfrak g$ into $\mathfrak g^2$. R: for me one
% $\mathfrak g$ is compensated by the denominator so only one is left. Clo:
% Yes ! went back}
since $3/2(1/p'-1/p) > 1$ for $p>6$. Therefore, $(e^{-i t \Delta} u_\pm(t))_{t\geq 0}$
is a Cauchy sequence in $L^{p}$ and scattering also holds in $L^{p}$ with the
same limit. Hölder's inequality implies that the limit also exists in $L^p$
for $2 \leq p \leq 6$, which concludes the proof.
\end{proof}

We must now ensure that the trajectories asymptotically split in long time,
to allow for a sharp detection. It turns out that we are able to
quantitatively estimate the gap between the asymptotic $+$ and $-$ states in
some suitable $L^p$ spaces.
\begin{prop}[(Asymptotic) Splitting of trajectories]
  \label{asymptotic splitting}
  Let $g \in \mathscr{S}$, and let $\underline{u}\in L^2$ be such that there exists $t\in \mathbb{R}$
  so that $\mathrm{Re}\langle e^{it\Delta}\underline{u},g\rangle _2\neq 0$.
  Then there exist $C(\underline{u},g), C(t,\underline{u},g)>0$ and
  $\mathfrak g_0 = \mathfrak g_0(\underline{u},g) >0$, such that for every
  $0<\mathfrak g< \mathfrak{g}_0$ and $t\in \R$, the following lower bounds hold:
  \begin{align*}
    \norm{u_+(t)-u_-(t)}_{2} &\geq C(t,\underline{u},g) \mathfrak g^2\;\;\mbox{and}\;\;
    \norm{u_+^\infty - u_- ^\infty}_{{2}} \geq C(\underline{u},g) \mathfrak g^2,
  \end{align*}
  with $\lim_{t\to \infty}C(t,\underline{u},g) > 0$.
\end{prop}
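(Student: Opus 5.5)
We compare the two Duhamel formulations directly. Writing $N_{\mathfrak g}(x):= x/\sqrt{1+2\mathfrak g^2 x^2}$ and $r_\pm(s):=\mathrm{Re}\langle u_\pm(s),g\rangle_2$, Proposition~\ref{globalwellposedness} gives
\begin{equation*}
  u_+(t)-u_-(t)= -i\mathfrak g^2\int_0^t e^{i(t-s)\Delta}\bigl(N_{\mathfrak g}(r_+(s))+N_{\mathfrak g}(r_-(s))\bigr)\,g\,\mathrm{d}s\;.
\end{equation*}
Note the plus sign: the two nonlinearities enter with opposite signs, so the difference is driven by the \emph{sum} of the two nonlinear terms. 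Our plan is to show that this sum is, at leading order, $2\,\mathrm{Re}\langle e^{is\Delta}\underline u,g\rangle_2\,g$, which is nonzero by assumption. The correction is then of order $\mathfrak g^2$ relative to this main term.

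\textbf{Step 1: linear approximation.} From the Duhamel formula and the Strichartz bound $\lVert u_\pm\rVert_{L^2_tL^6}\lesssim \lVert\underline u\rVert_2$ established in the proof of Proposition~\ref{globalwellposedness}, we get
\begin{equation*}
  \lVert u_\pm-e^{it\Delta}\underline u\rVert_{L^\infty_t L^2\cap L^2_tL^6}\lesssim \mathfrak g^2\lVert g\rVert_{6/5}^2\lVert\underline u\rVert_2\;.
\end{equation*}
Hence $r_\pm(s)=r_0(s)+\rho_\pm(s)$, with $r_0(s):=\mathrm{Re}\langle e^{is\Delta}\underline u,g\rangle_2$ and $\lVert\rho_\pm\rVert_{L^2_s}\lesssim\mathfrak g^2$. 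Since $|N_{\mathfrak g}(x)-x|\lesssim \mathfrak g^2|x|^3$, and since $r_\pm$ is bounded by $\lVert\underline u\rVert_2\lVert g\rVert_2$ and lies in $L^2_s$, it follows that $N_{\mathfrak g}(r_+)+N_{\mathfrak g}(r_-)=2r_0+e(s)$ with $\lVert e\rVert_{L^2_s}\lesssim\mathfrak g^2$.

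\textbf{Step 2: lower bound at finite time.} Define the linear map $L_t:h\mapsto\int_0^t e^{-is\Delta}h(s)\,g\,\mathrm{d}s$. Then
\begin{equation*}
  \lVert u_+(t)-u_-(t)\rVert_2=\mathfrak g^2\lVert L_t(2r_0+e)\rVert_2\geq \mathfrak g^2\bigl(2\lVert L_t r_0\rVert_2-\lVert L_t e\rVert_2\bigr)\;.
\end{equation*}
By the dual Strichartz estimate (pair $(2,6)$), $\lVert L_t e\rVert_2\lesssim\lVert g\rVert_{6/5}\lVert e\rVert_{L^2_s}\lesssim\mathfrak g^2$. For the main term we test against $g$:
\begin{equation*}
  \langle g,L_t r_0\rangle=\int_0^t r_0(s)\langle e^{is\Delta}g,g\rangle\,\mathrm{d}s\;.
\end{equation*}
A cleaner choice is to test against $e^{-it'\Delta}$-type test functions, or simply to observe that $L_t r_0=0$ would force $\int_0^t r_0(s)\,e^{-is\Delta}g\,\mathrm{d}s=0$ in $L^2$. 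Taking the real inner product of this identity with $\int_0^t r_0(\tau)e^{-i\tau\Delta}g\,\mathrm{d}\tau$ expresses $\lVert L_tr_0\rVert_2^2$ as the quadratic form $\iint r_0(s)r_0(\tau)\,\mathrm{Re}\langle e^{-is\Delta}g,e^{-i\tau\Delta}g\rangle\,\mathrm{d}s\,\mathrm{d}\tau$. This is a positive semidefinite form, and its kernel is $\widehat{|\hat g|^2}$ evaluated at $|\xi|^2(s-\tau)$. Its Fourier transform in the time variable is the pushforward of $|\hat g|^2$ under $\xi\mapsto|\xi|^2$, which has full support on $\mathbb R_+$ when $g\neq0$, so the form is strictly positive on nonzero $r_0$. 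The function $r_0$ is real-analytic in $s$ (since $g\in\mathscr S$) and not identically zero, so it does not vanish identically on any interval. Hence $c(t):=\lVert L_tr_0\rVert_2>0$ for $t>0$, and $c(t)$ is nondecreasing in a suitable sense and converges, by dispersive decay of $r_0$, to $c(\infty)=\lVert L_\infty r_0\rVert_2>0$. Choosing $\mathfrak g_0$ small enough that the $\mathrm O(\mathfrak g^2)$ error is at most $c(t)$ gives $C(t,\underline u,g)\approx c(t)$, with a positive limit as $t\to\infty$.

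\textbf{Step 3: asymptotic states.} Since $e^{-it\Delta}$ is unitary, $\lVert u_+^\infty-u_-^\infty\rVert_2=\lim_{t\to\infty}\lVert u_+(t)-u_-(t)\rVert_2$. All the Step 2 bounds are uniform in $t$ because they involve $L^2_s(\mathbb R_+)$ norms, so the second inequality follows with $C(\underline u,g)\approx c(\infty)$.

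The main obstacle is the strict positivity $\lVert L_\infty r_0\rVert_2>0$, that is, the injectivity of $h\mapsto\int h(s)e^{-is\Delta}g\,\mathrm{d}s$ on the relevant real-valued $h$. We would first try to prove it via the Fourier-side argument above: the map becomes $\hat g(\xi)\,\check h(|\xi|^2)$, and this vanishes only if $\check h=0$ on the range of $|\xi|^2$ restricted to the support of $\hat g$. We would then use analyticity of $r_0$, and of its transform, to conclude that $h=0$, contradicting $r_0\not\equiv0$.
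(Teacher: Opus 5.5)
\textbf{Gap in the nondegeneracy of $L_tr_0$.} Your reduction is sound and matches the paper's in other notation. The identity $u_+(t)-u_-(t)=-i\mathfrak g^2e^{it\Delta}L_t(2r_0+e)$, with $\lVert L_te\rVert_2\lesssim\mathfrak g^2$ uniformly in $t$, is exactly the paper's splitting $u^0+\mathfrak g^2u^1_\pm+R_{\mathfrak g,\pm}$ together with its Step~2, since $e^{-it\Delta}(u^1_+-u^1_-)(t)=-2iL_tr_0$. Your Step~3 is also fine. The gap is the step you yourself flag as the main obstacle: you never establish that $L_tr_0\neq0$.

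\emph{The analyticity claim is false.} Write $\langle e^{is\Delta}\underline u,g\rangle_2=\int_0^\infty e^{is\lambda}\,\mathrm{d}m(\lambda)$, where $m$ is the pushforward of $\overline{\hat{\underline u}}\,\hat g\,\mathrm{d}\xi$ under $\xi\mapsto\lvert\xi\rvert^2$. The Schwartz decay of $\hat g$ only makes $r_0$ of class $\mathscr C^\infty$; analyticity would need exponential decay of $m$, as for Gaussian $\hat g$.

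\emph{Positivity of $c(t)$ for all $t>0$ is false.} Take $\hat g(\xi)=e^{-(1+\lvert\xi\rvert^2)^{1/4}}$ and prescribe $m=\pi^{-1}\hat\chi\,\mathbf 1_{\mathbb R_+}\mathrm{d}\lambda$, with $\chi$ a nonzero real Gevrey-class-$2$ bump supported in $(2,3)$, so that $\lvert\hat\chi(\lambda)\rvert\lesssim e^{-c\lvert\lambda\rvert^{1/2}}$. This gives $\underline u\in L^2$ with $r_0=\chi$. Then $e^{it\Delta}\underline u$ solves \eqref{eq:1} for both signs on $[0,2]$, so $u_+=u_-$ there by uniqueness. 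Only times $t\geq t_1$, for some $t_1>0$ with $r_0(t_1)\neq0$, are accessible. The condition $t_1>0$ matters: a bump in $(-3,-2)$ even gives $u_+^\infty=u_-^\infty$. The paper's Step~1 glosses over the same point, since its contradiction needs the hypothesis time to lie in $[0,t]$; only $t_\varepsilon\to\infty$ is used later.

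\emph{The limit $c(\infty)>0$ is asserted, not proved.} On the Fourier side you would need boundary uniqueness for Hardy-space functions: $\check h$ with $h=r_0\mathbf 1_{\mathbb R_+}\in L^2$ cannot vanish on a set of positive measure unless $h=0$. Analyticity of $r_0$ does not give this, and your claim that $c$ is ``nondecreasing in a suitable sense'' is left unjustified.

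\textbf{The missing idea.} The paper's fix is to pair with $\underline u$. Since $r_0$ is real and $r_0(s)=\mathrm{Re}\langle\underline u,e^{-is\Delta}g\rangle_2$,
\[
\mathrm{Re}\,\langle\underline u,L_tr_0\rangle_2=\int_0^t r_0(s)^2\,\mathrm{d}s\;,\qquad\text{hence}\qquad \lVert L_tr_0\rVert_2\geq\lVert\underline u\rVert_2^{-1}\int_0^t r_0(s)^2\,\mathrm{d}s\;.
\]
The paper takes an imaginary part only because of the factor $\mp i$ in $e^{-it\Delta}u^1_\pm$. This bound has three consequences:
\begin{itemize}
\item it is nondecreasing in $t$ and strictly positive for $t\geq t_1$;
\item it passes to $t=\infty$, since $L_tr_0\to L_\infty r_0$ in $L^2$ by dual Strichartz;
\item it makes $\mathfrak g_0$ uniform in $t\geq t_1$, by requiring $C\mathfrak g_0^2\leq\lVert\underline u\rVert_2^{-1}\int_0^{t_1}r_0^2$, whereas your $\mathfrak g_0$ depends on $t$ through $c(t)$.
\end{itemize}
Replace the Fourier and analyticity discussion with this one line, and your Steps~1--3 prove the result for $t\geq t_1$ and at $t=\infty$.
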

\begin{proof}
  We expand the solution $u_\pm(t)$ as a Taylor series in terms of $\mathfrak
  g$:
    $$
    u_\pm(t) = u^0(t) + \mathfrak g^2 u_\pm^1 (t) +R_{\mathfrak g,\pm}(t),
    $$
    where $u^0(t)$ and $u^1_\pm(t)$ are solutions to the Cauchy problems
    $$
    \begin{cases}i \partial_t u^0 = - \Delta u^0 \\
      u^0(0)=\underline{u}
    \end{cases} \ \ \ \ \ \mathrm{and} \ \ \ \ \
    \begin{cases} i \partial_t u^1_\pm = - \Delta u^1_\pm \pm \mathrm{Re}\, \langle u^0(s), g \rangle_2\, g \\
      u^1(0)=0\end{cases}
    $$
    \textbf{Step 1.} We first show that $u^0(t), u^1_\pm(t)$ (and hence
    $R_{\mathfrak g,\pm}(t)$) scatter, and that there is splitting at first
    order (in $\mathfrak g^2$), both at finite time and asymptotically:
    \begin{align*}
      &u_+^1(t) \neq u_-^1(t) \text{ for every } t>0, \text{ and}\underset{t \rightarrow +\infty} \lim e^{-i t \Delta} u^1_+(t) \neq \underset{t \rightarrow +\infty} \lim e^{-i t \Delta} u^1_-(t).
    \end{align*}
    The existence of scattering for $u^0(t)$ is obvious, and for $u^1_\pm(t)$
    it suffices to remark that $u^1_\pm(t)$ is the solution of a Schrödinger
    equation with a constant source term that belongs to a Strichartz
    space. Regarding the splitting, we write, using the Duhamel formula for
    $t\in \R_+$:
$$
e^{-i t \Delta} u^1_\pm(t) = \mp i \int_0^t e^{-i s \Delta} \mathrm{Re}\,
\langle u^0(s),g \rangle_2\, g\, \di s = \mp i \int_0^t \mathrm{Re}\, \langle \underline{u},
e^{-i s \Delta} g \rangle_2\, e^{-i s \Delta} g\, \di s.
$$
Therefore, splitting occurs if and only if the integral on the right-hand
side is nonzero. We argue by contradiction and suppose that it vanishes. By
taking the scalar product with $\underline{u}$ and then the imaginary part,
one obtains
$$
\int_0^t \left( \mathrm{Re} \langle \underline{u}, e^{-i s \Delta} g \rangle_2 \right)^2
\di s = 0,
$$
so that $ \mathrm{Re} \langle \underline{u}, e^{-i s \Delta} g \rangle_2 =0$ almost everywhere
on $[0,t]$, and hence everywhere by continuity. This contradicts the
assumption that $\mathrm{Re}\, \langle e^{it\Delta}\underline{u} , g \rangle_2\neq 0$ for some
$t>0$. Therefore, for $t>0$,
$$
\norm{u_+^1(t) - u_-^1(t)}_{2} = \norm{e^{-i t \Delta} u^1_+(t)-e^{-i t \Delta}
  u^1_-(t)}_{2} >0.
$$
For the scattering states, the same argument allows to conclude, since
$$
\int_0^\infty \mathrm{Re}\, \langle \underline{u}, e^{-i s \Delta} g \rangle_2 e^{-i s
  \Delta} g\, \di s \neq 0,
$$
thanks again to the assumption on $\underline{u}$.

\medskip

\noindent \textbf{Step 2.} There exists $C>0$ such that for every $t \in \R$,
$$
\norm{e^{i t \Delta} R_{\mathfrak g,\pm}(t)}_{{2}} = \norm{R_{\mathfrak
    g,\pm}(t)}_{{2}} \leq C \mathfrak g^4.$$ $R_{\mathfrak g,\pm}(t)$ is a
solution to
$$
i \partial_t R_{\mathfrak g,\pm}(t) = - \Delta R_{\mathfrak g,\pm}(t) \pm
\mathfrak g^2 \underbrace{\left( \frac{\mathrm{Re}\langle u(t),g\rangle_2
    }{\sqrt{1+2 \mathfrak g^2 \mathrm{Re}\langle u(t),g\rangle^2_2}} -
    \mathrm{Re}\langle u^0(t),g\rangle \right)}_{=: F(t)}g\;.
$$
It is easy to check that
\begin{multline*}
  \abs{F(t)} \leq \abs{\mathrm{Re} \langle u(t)-u^0(t), g \rangle_2} + \mathfrak g^2 \abs{\mathrm{Re} \langle u^0(t),g \rangle_2}^3. \\
  \leq \mathfrak g^2 \left( \abs{ \langle u_\pm^1(t),g \rangle_2 } + \abs{ \langle u^0(t),g \rangle_2 }^3 \right) + \norm{R_{\mathfrak g,\pm}(t)}_{{6}} \norm{g}_{{\frac{6}{5}}}.
\end{multline*}
We notice that $\abs{ \langle u_\pm^1(t),g \rangle_2 } + \abs{ \langle u^0(t),g
  \rangle_2 }^3 \in L^2 (\R_+)$. Indeed, one first has
$$
\norm{\abs{ \langle u^0(t),g \rangle_2 }^3}_{L^2(\R_+)} \leq \norm{e^{i t \Delta} u_0}_{L^6(\R_+,
  L^{\frac{18}{7}})}^3 \norm{g}_{{\frac{18}{11}}}^3 \leq C \norm{\underline{u}}_{2}^3
\norm{g}_{{\frac{18}{11}}}^3.
$$
Secondly,
\begin{align*}
  \norm{\abs{ \langle u_\pm^1(t),g \rangle_2 }}_{L^2(\R_+)} &\leq \norm{u_\pm^1}_{L^2(\R_+,L^6)} \norm{g}_{{\frac{6}{5}}}
  \\& \leq\norm{\int_0^t e^{i(t-s) \Delta} \mathrm{Re}\langle u^0(s),g\rangle_2\, g\,  \di s }_{L^2(\R_+,L^6)} \norm{g}_{\frac{6}{5}} 
  \\&\leq \norm{\mathrm{Re}\langle u^0(s),g\rangle_2\, g   }_{L^2(\R_+,L^{\frac{6}{5}})} \norm{g}_{\frac{6}{5}} \\
  &\leq \norm{e^{i t \Delta}\underline{u}}_{L^2(\R_+,L^6)} \norm{g}_{\frac{6}{5}}^3 \\
  &\leq C \norm{\underline{u}}_2\norm{g}_{\frac{6}{5}}^3\;.
\end{align*}
Therefore, since $R_{\mathfrak g,\pm}(0) = 0$, using Strichartz estimates for
the Duhamel formula one obtains:
\begin{align*}
  \norm{R_{\mathfrak g,\pm}}_{L^2([0,t],L^{6})} &\leq \mathfrak g^2 \norm{\int_0^t e^{i(t- s)\Delta}F(s)  g \di s}_{L^2([0,t],L^6)} \\
  &\leq \mathfrak g^2 \norm{ F}_{L^2([0,t])} \norm{g}_{\frac{6}{5}} 
  \\
  &\leq C \mathfrak g^4 + C \mathfrak g^2\norm{R_{\mathfrak g,\pm}}_{L^2([0,t],L^6)}  \norm{g}_{\frac{6}{5}}^2,
\end{align*}
so that $R_{\mathfrak g,\pm} \in L^2(\R_+,L^6)$ for $\mathfrak g$ small enough,
and the existence of the scattering state $R_{\mathfrak g,\pm}^\infty$ in $L^2$
follows. Finally, by one final application of Strichartz estimates, the
following bounds hold:
$$
\norm{R_{\mathfrak g,\pm}(t)}_{2} + \norm{R_{\mathfrak g,\pm}^\infty}_{2}
\leq C \mathfrak g^4.
$$
\textbf{Step 3.}  We first work at a fixed time $t>0$. Then, for $\mathfrak
g$ small enough,
$$
\norm{u_+(t) - u_-(t)}_{{2}} \geq \mathfrak g^2 \norm{u_+^1 (t)-u_-^1
  (t)}_{{2}} - C \mathfrak g^4 \geq \frac{\mathfrak g^2}{2}
\norm{u_+^1(t)-u_-^1 (t)}_{2}.
$$
For the asymptotic splitting, taking the limit in $L^2$ of $e^{-i t \Delta}
u(t) = e^{-i t \Delta} (u^0(t) + \mathfrak g^2 u^1_\pm(t) + R_{\mathfrak
  g,_\pm}(t))$, we obtain, since scattering holds for each individual
component,
$$
u_\pm^\infty = \underline{u} + \mathfrak g^2 u^{1,\infty}_\pm + R_{\mathfrak
  g,_\pm}^\infty,
$$
with $\norm{R_{\mathfrak g,\pm}^\infty}_{2} \leq C \mathfrak g^4$, by the
uniform estimate of Step 2. Therefore,
$$
\norm{u_+^\infty - u_-^\infty}_{{2}} \geq \mathfrak g^2
\norm{u_+^{1,\infty}-u_-^{1,\infty}}_{{2}} - C \mathfrak g^4 \geq
\frac{\mathfrak g^2}{2} \norm{u_+^{1,\infty}-u_-^{1,\infty}}_{{2}}.
$$
\end{proof}

\subsection{Proof of Theorems~\ref{thm:3} and~\ref{thm:4}}
\label{sec:proof-theor-refthm:2}

We are now in a position to prove our main Theorems~~\ref{thm:3}
and~\ref{thm:4}. As a matter of fact, by definition of the distances
$d_{K,\xi}(\Gamma_{\varepsilon},\mathfrak{m})$ and of the convergence in the sense of Fourier transforms
(Definition~\ref{def:8}), it suffices to prove Theorem~\ref{thm:4}, as it
implies, in turn, Theorem~\ref{thm:3}. The proof is split in a number of
steps. The first and easiest concerns the bound for the convergence of the
initial coherent state $\varrho\otimes \lvert \underline{u}_{\varepsilon}\rangle\langle \underline{u}_{\varepsilon}\rvert$ to the
state-valued measure $\varrho \, \mathrm{d}\mu(x)$.

\begin{lemma}
  \label{lemma:1}
  Let $\varrho$ be a state on $\mathscr{H}_{\mathcal{S}}$, and $\underline{u}\in L^2$. Then for all
  compact operator $K$ on $\mathscr{H}_{\mathcal{S}}$ and $\xi\in L^2$,
  \begin{equation*}
    d_{K,\xi}\Bigl(\:\varrho\otimes \lvert \underline{u}_{\varepsilon}\rangle\langle \underline{u}_{\varepsilon}\rvert\;,\; \varrho\, \mathrm{d}\delta_{\underline{u}}\:\Bigr)\leq \frac{1}{2}\lVert K  \rVert_{}^{}\, \varepsilon\;.
  \end{equation*}
\end{lemma}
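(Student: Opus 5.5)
The estimate comes from computing both Fourier transforms exactly at $s=0$. By definition $d_{K,\xi}$ uses $\hat\Gamma_\varepsilon(\xi,0)$, so the phase $e^{i\varepsilon s}$ equals $1$. For the product state $\Gamma_\varepsilon=\varrho\otimes\lvert \underline{u}_{\varepsilon}\rangle\langle \underline{u}_{\varepsilon}\rvert$ the partial trace over the Fock space factorizes:
\begin{equation*}
  \hat\Gamma_\varepsilon(\xi,0)=\varrho\,\langle \underline{u}_\varepsilon, e^{i\phi_\varepsilon(\xi)}\underline{u}_\varepsilon\rangle_{\mathscr F}\;.
\end{equation*}
On the classical side, $\hat{\mathfrak m}(\xi,0)=\varrho\, e^{i\sqrt2\,\mathrm{Re}\langle\xi,\underline{u}\rangle_2}$. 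Hence
\begin{equation*}
  d_{K,\xi}=\lvert\mathrm{tr}(\varrho K)\rvert\,\bigl\lvert\langle \underline{u}_\varepsilon, e^{i\phi_\varepsilon(\xi)}\underline{u}_\varepsilon\rangle-e^{i\sqrt2\,\mathrm{Re}\langle\xi,\underline{u}\rangle_2}\bigr\rvert\;,
\end{equation*}
and since $\varrho$ is a state, $\lvert\mathrm{tr}(\varrho K)\rvert\le\lVert K\rVert$.

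The next step evaluates the coherent-state expectation of the Weyl operator. Write $|\underline{u}_\varepsilon\rangle=W\Omega$ with $W=e^{\frac1\varepsilon(a^*_\varepsilon(\underline u)-a_\varepsilon(\underline u))}$. Conjugating the field operator gives
\begin{equation*}
  W^*\phi_\varepsilon(\xi)W=\phi_\varepsilon(\xi)+\sqrt2\,\mathrm{Re}\langle\xi,\underline{u}\rangle_2\;,
\end{equation*}
which is the same shift used in the proof of Proposition~\ref{propfieldexpansion}. Therefore
\begin{equation*}
  \langle \underline{u}_\varepsilon, e^{i\phi_\varepsilon(\xi)}\underline{u}_\varepsilon\rangle=e^{i\sqrt2\,\mathrm{Re}\langle\xi,\underline{u}\rangle_2}\,\langle\Omega,e^{i\phi_\varepsilon(\xi)}\Omega\rangle\;.
\end{equation*}

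The vacuum expectation follows from the Baker--Campbell--Hausdorff formula with the commutator $[a_\varepsilon(\xi),a^*_\varepsilon(\xi)]=\varepsilon\lVert\xi\rVert_2^2$:
\begin{equation*}
  e^{i\phi_\varepsilon(\xi)}=e^{\frac{i}{\sqrt2}a^*_\varepsilon(\xi)}\,e^{\frac{i}{\sqrt2}a_\varepsilon(\xi)}\,e^{-\frac{\varepsilon}{4}\lVert\xi\rVert_2^2}\;,
\end{equation*}
so $\langle\Omega,e^{i\phi_\varepsilon(\xi)}\Omega\rangle=e^{-\varepsilon\lVert\xi\rVert_2^2/4}$. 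Substituting back, the oscillating phases cancel in modulus and
\begin{equation*}
  d_{K,\xi}\le\lVert K\rVert\,\bigl(1-e^{-\varepsilon\lVert\xi\rVert_2^2/4}\bigr)\le\tfrac14\lVert K\rVert\,\lVert\xi\rVert_2^2\,\varepsilon\;,
\end{equation*}
using $1-e^{-x}\le x$ for $x\ge0$.

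This yields a bound of order $\lVert K\rVert\,\varepsilon$, but the stated constant $\tfrac12\lVert K\rVert\varepsilon$ carries no factor of $\lVert\xi\rVert_2$. The only real obstacle is therefore reconciling constants and conventions, not the mathematics. Either the $\xi$-dependence is meant to be absorbed, since in Theorem~\ref{thm:4} the supremum is over bounded $\lVert\xi\rVert_{H^1}$, or the paper uses a different normalization of $\phi_\varepsilon$ or of the Weyl exponent. I would state the bound as $\tfrac14\lVert K\rVert\,\lVert\xi\rVert_2^2\,\varepsilon$, which is at most $\tfrac12\lVert K\rVert\varepsilon$ whenever $\lVert\xi\rVert_2^2\le 2$, and track this $\lVert\xi\rVert_2^2$ factor explicitly into Theorem~\ref{thm:4}. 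The computation requires only $\xi\in L^2$ and $\underline u\in L^2$.
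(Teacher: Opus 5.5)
Your argument is correct and is the paper's argument: factor the Fock trace, use the Weyl shift $W^*\phi_\varepsilon(\xi)W=\phi_\varepsilon(\xi)+\sqrt2\,\mathrm{Re}\langle\xi,\underline{u}\rangle_2$ to reduce to the vacuum Gaussian, and bound $1-e^{-c\varepsilon\lVert\xi\rVert_2^2}\le c\,\varepsilon\lVert\xi\rVert_2^2$. Your diagnosis of the constant is also right. The paper's own proof ends with $\tfrac12\lVert K\rVert\lVert\xi\rVert_2^2\,\varepsilon$, so the printed statement has simply dropped the factor $\lVert\xi\rVert_2^2$; no normalization can remove it, since for $K=\mathds{1}$ the distance tends to $1$ as $\lVert\xi\rVert_2\to\infty$. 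Your exponent $\varepsilon\lVert\xi\rVert_2^2/4$ is the correct one for $\phi_\varepsilon=\frac{1}{\sqrt2}(a^*_\varepsilon+a_\varepsilon)$, whereas the paper writes $e^{-\frac{\varepsilon}{2}\lVert\xi\rVert_2^2}$, which only changes the constant.
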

\begin{proof}
  The noncommutative Fourier transform of the coherent state is explicit:
  \begin{equation*}
    \widehat{\lvert \underline{u}_{\varepsilon}\rangle\langle \underline{u}_{\varepsilon}\rvert}(\xi,0)= e^{- \frac{\varepsilon}{2}\lVert \xi  \rVert_2^2} \,e^{i\sqrt{2} \mathrm{Re}\, \langle \xi  , \underline{u} \rangle_2}\;;
  \end{equation*}
  hence it follows that
  \begin{equation*}
    d_{K,\xi}\Bigl(\:\varrho\otimes \lvert \underline{u}_{\varepsilon}\rangle\langle \underline{u}_{\varepsilon}\rvert\;,\; \varrho\, \mathrm{d}\delta_{\underline{u}}\:\Bigr)= \mathrm{tr}_{\mathscr{H}_{\mathcal{S}}}(K\varrho)\,e^{i\sqrt{2} \mathrm{Re}\, \langle \xi  , \underline{u} \rangle_2}\, \Bigl\lvert e^{- \frac{\varepsilon}{2}\lVert \xi  \rVert_2^2}-1\Bigr\rvert\leq  \frac{1}{2}\lVert K  \rVert_{}^{}\lVert \xi  \rVert_2^2\,\varepsilon \;.
  \end{equation*}
\end{proof}
In order to prove a bound for the final state, let us split the distance as
follows. For simplicity, we take $\varrho=\lvert\psi\rangle\langle\psi\rvert$, with $\psi=\alpha_+\lvert+\rangle+\alpha_-\lvert-\rangle$, as the
general case follows easily. Also, let us use the notation
\begin{equation*}
  \Psi_{\varepsilon}(t_{\varepsilon}):= U_{\varepsilon}(\psi\otimes \lvert \underline{u}_{\varepsilon}\rangle)=e^{-i \frac{t_{\varepsilon}}{\varepsilon}H_{\varepsilon}}(\psi\otimes \lvert \underline{u}_{\varepsilon}\rangle)\;.
\end{equation*}
\begin{multline}
  \label{eq:5}
  d_{K,\xi}\Bigl(\:U_{\varepsilon}(\lvert\psi\rangle\langle\psi\rvert\otimes \lvert \underline{u}_{\varepsilon}\rangle\langle \underline{u}_{\varepsilon}\rvert)U_{\varepsilon}^{*}\;,\; \alpha_+\,\lvert+\rangle\langle+\rvert\,\delta_{u_+^{\infty}}\:+\: \alpha_-\,\lvert-\rangle\langle-\rvert\,\delta_{u_-^{\infty}}\:\Bigr)\\\leq D_{1}(\varepsilon,t_{\varepsilon})+ D_2(\varepsilon,t_{\varepsilon})+D_3(\varepsilon,t_{\varepsilon})+D_4(t_{\varepsilon})+D_5(t_{\varepsilon})\;,
\end{multline}
where $D_1$ encodes the wave packet approximation:
\begin{multline}
  \label{eq:6}
  D_1(\varepsilon,t_{\varepsilon}):= \Bigl\langle \Psi_{\varepsilon}(t_{\varepsilon}) -\mspace{-10mu} \sum_{j\in \{+,-\}}^{}\mspace{-10mu}\alpha_j\,\pi_j^{t_{\varepsilon}}\,\mathcal{R}_j(t_{\varepsilon})\,\lvert j \rangle\:\otimes\: \Psi^{\varepsilon}_{\mathrm{swp},j}(t_{\varepsilon})\:,\: K\otimes e^{i\phi_{\varepsilon}(e^{it_{\varepsilon}\Delta}\xi)} \Psi_{\varepsilon}(t_{\varepsilon}) \Bigr\rangle_{\mathbb{C}^2\otimes \mathscr{F}(L^2)}\\
  + \Bigl\langle \Psi_{\varepsilon}(t_{\varepsilon})\:,\: K\otimes e^{i\phi_{\varepsilon}(e^{it_{\varepsilon}\Delta}\xi)} \Bigl(\Psi_{\varepsilon}(t_{\varepsilon}) -\mspace{-10mu} \sum_{j\in \{+,-\}}^{}\mspace{-10mu}\alpha_j\,\pi_j^{t_{\varepsilon}}\,\mathcal{R}_j(t_{\varepsilon})\,\lvert j \rangle\:\otimes\: \Psi^{\varepsilon}_{\mathrm{swp},j}(t_{\varepsilon})\Bigr) \Bigr\rangle_{\mathbb{C}^2\otimes \mathscr{F}(L^2)}\;;
\end{multline}
$D_2$ (respectively, $D_3$) takes into account the quantum/classical
approximation of the diagonal (respectively, off-diagonal) components of the
spin-observable:
\begin{multline}
  \label{eq:7}
  D_2(\varepsilon,t_{\varepsilon}):= \sum_{j\in \{+,-\}}^{}\lvert \alpha_j  \rvert_{}^2\Bigl\lvert\langle j  \vert\mathcal{R}_j^{*}(t_{\varepsilon})K\mathcal{R}_{j}(t_{\varepsilon})\rvert j  \rangle_{\mathbb{C}^2}\Bigr\rvert\\\Bigl\lvert\Bigl\langle \Psi^{\varepsilon}_{\mathrm{swp},j}(t_{\varepsilon})  ,  \Bigl(e^{i\phi_{\varepsilon}(e^{it_{\varepsilon}\Delta}\xi)}-e^{i \sqrt{2} \mathrm{Re}\, \langle e^{it_{\varepsilon}\Delta}\xi  , u_j(t_{\varepsilon}) \rangle_2}\Bigr)\Psi^{\varepsilon}_{\mathrm{swp},j}(t_{\varepsilon})\Bigr\rangle_{\mathscr{F}(L^2)}\Bigr\rvert\;,
\end{multline}
\begin{multline}
  \label{eq:10}
  D_3(\varepsilon,t_{\varepsilon}):= \sum_{\substack{j,j'\in \{+,-\}\\j\neq j'}}^{}\Bigl\lvert\bar{\alpha}_j\alpha_{j'}\langle j  \vert\mathcal{R}_j^{*}(t_{\varepsilon})K\mathcal{R}_{j'}(t_{\varepsilon})\rvert j'  \rangle_{\mathbb{C}^2}\Bigr\rvert\\\Bigl\lvert\Bigl\langle \Psi^{\varepsilon}_{\mathrm{swp},j}(t_{\varepsilon})  , \, e^{i\phi_{\varepsilon}(e^{it_{\varepsilon}\Delta}\xi)}\,\Psi^{\varepsilon}_{\mathrm{swp},j'}(t_{\varepsilon})\Bigr\rangle_{\mathscr{F}(L^2)}\Bigr\rvert\;;
\end{multline}
finally, $D_4$ and $D_5$ encode the effect of the scattering process:
\begin{gather}
  \label{eq:8}
  D_4(t_{\varepsilon}):= \sum_{j\in \{+,-\}}^{}\lvert \alpha_j  \rvert_{}^2\Bigl\lvert\langle j  \vert\mathcal{R}_j^{*}(t_{\varepsilon})K\mathcal{R}_{j}(t_{\varepsilon})\rvert j  \rangle_{\mathbb{C}^2}\Bigr\rvert\: \Bigl\lvert e^{i \sqrt{2} \mathrm{Re}\, \langle \xi  , e^{-it_{\varepsilon}\Delta}u_j(t_{\varepsilon}) \rangle_2}- e^{i \sqrt{2} \mathrm{Re}\, \langle \xi  , u_j^{\infty} \rangle_2}\Bigr\rvert\;,\\
  \label{eq:9}
  D_5(t_{\varepsilon}):= \sum_{j\in \{+,-\}}^{}\lvert \alpha_j  \rvert_{}^2 \Bigl\lvert\langle j  \vert\mathcal{R}_j^{*}(t_{\varepsilon})\pi^{t_{\varepsilon}}_j K\pi^{t_{\varepsilon}}_{j}\mathcal{R}_{j}(t_{\varepsilon})- K\rvert j  \rangle_{\mathbb{C}^2}\Bigr\rvert\;.
\end{gather}
Let us focus momentarily forget about $D_1$, since it can be bound using
Proposition~\ref{prop:3}, and focus on the other terms.
\begin{lemma}
  \label{lemma:2}
  There exists constants $C_3,C_4>0$ such that
  \begin{equation*}
    D_2(\varepsilon,t_{\varepsilon})\leq C_3 \lVert K  \rVert_{}^{} \bigl(\lVert \xi  \rVert_{H^1}^{}+\lVert \xi  \rVert_{H^1}^2\bigr)\,\sqrt{\varepsilon}\,e^{C_4 t_{\varepsilon}}\;.
  \end{equation*}
\end{lemma}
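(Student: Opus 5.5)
The strategy is to reduce $D_2$ to an expectation in a squeezed vacuum, using that $\Psi^{\varepsilon}_{\mathrm{swp},j}(t)$ is a Weyl-translated vector. The prefactor is harmless: $\mathcal{R}_j$ is unitary by Lemma~\ref{propertiesR_j}, so $\lvert\langle j\vert\mathcal{R}_j^{*}K\mathcal{R}_j\rvert j\rangle\rvert\le\lVert K\rVert$, and $\sum_j\lvert\alpha_j\rvert^2=1$. It therefore suffices to bound, uniformly in $j$,
\begin{equation*}
  \Bigl\lvert\Bigl\langle U_j(t,0)_\varepsilon\Omega\,,\,\Bigl(W^{*}e^{i\phi_\varepsilon(\eta)}W-e^{i\sqrt{2}\,\mathrm{Re}\langle\eta,u_j(t)\rangle_2}\Bigr)U_j(t,0)_\varepsilon\Omega\Bigr\rangle\Bigr\rvert\;,
\end{equation*}
where $\eta=e^{it\Delta}\xi$ and $W=e^{\frac1\varepsilon(a^*_\varepsilon(u_j(t))-a_\varepsilon(u_j(t)))}$. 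The phase $e^{\frac i\varepsilon S_j}$ cancels in the quadratic form.

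Next I would apply the Weyl relation. Conjugating by $W$ shifts $\phi_\varepsilon(\eta)\mapsto\phi_\varepsilon(\eta)+\sqrt{2}\,\mathrm{Re}\langle\eta,u_j(t)\rangle_2$, which is the same computation as in the proof of Proposition~\ref{propfieldexpansion}. The classical phase then factors out, and the quantity is bounded by
\begin{equation*}
  \bigl\lvert\langle\Phi_t,(e^{i\phi_\varepsilon(\eta)}-1)\Phi_t\rangle\bigr\rvert\le\lVert\phi_\varepsilon(\eta)\Phi_t\rVert\;,\qquad \Phi_t:=U_j(t,0)_\varepsilon\Omega\;,
\end{equation*}
using $\lvert e^{ix}-1\rvert\le\lvert x\rvert$ and functional calculus. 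Everything now rests on showing that the squeezed vacuum $\Phi_t$ has field fluctuations of size $\sqrt{\varepsilon}$, with at most exponential growth in time.

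For this I would use the standard bound $\lVert\phi_\varepsilon(\eta)\Phi\rVert\le\sqrt2\lVert\eta\rVert_2\lVert(\mathrm{d}\Gamma_\varepsilon(1)+\varepsilon)^{1/2}\Phi\rVert$, together with $\lVert\eta\rVert_2=\lVert\xi\rVert_2\le\lVert\xi\rVert_{H^1}$. It remains to show $\lVert(\mathrm{d}\Gamma_\varepsilon(1)+\varepsilon)^{1/2}\Phi_t\rVert\le C\sqrt\varepsilon\,e^{Ct}$. I would prove this by the Gronwall argument of \eqref{claim}, now with power $1$ instead of $3$. The commutator $[\mathrm{d}\Gamma_\varepsilon(1),A_j(t)_\varepsilon]$ equals $\mathfrak g^2$ times a bounded coefficient times $C_\varepsilon=-i\varepsilon(\phi_\varepsilon(g)\phi_\varepsilon(ig)+\phi_\varepsilon(ig)\phi_\varepsilon(g))$. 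This is bounded by $C\varepsilon\lVert g\rVert_2^2(\mathrm{d}\Gamma_\varepsilon(1)+\varepsilon)$, so after dividing by $\varepsilon$ the derivative of $\langle\Phi_t,(\mathrm{d}\Gamma_\varepsilon(1)+\varepsilon)\Phi_t\rangle$ is bounded by $C\mathfrak g^2$ times itself. The initial value is $\varepsilon$.

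If one instead prefers the route through $(\mathrm{d}\Gamma_\varepsilon(-\Delta)+\varepsilon)^{1/2}$ and the bound of Lemma~\ref{equationU_pm}, then $\eta$ must be controlled in a $\dot H^{1}$-type norm. This is where $\lVert\xi\rVert_{H^1}$ enters; $e^{it\Delta}$ is unitary on $H^1$, so there is no time loss. The quadratic term $\lVert\xi\rVert_{H^1}^2$ arises if one expands $e^{ix}-1-ix$ to second order, or when treating the $\pm$ symmetric product. Either way it is absorbed into the stated form.

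The main obstacle is the rigorous justification of the Gronwall step, namely the domain issues in differentiating $\langle\Phi_t,(\mathrm{d}\Gamma_\varepsilon(1)+\varepsilon)\Phi_t\rangle$. I would handle it as in Lemma~\ref{equationU_pm} via \cite[Corollary C.4]{ammari2012propagation}, first on $\mathscr D_+$ and then by density. Collecting the estimates gives $D_2\le C_3\lVert K\rVert(\lVert\xi\rVert_{H^1}+\lVert\xi\rVert_{H^1}^2)\sqrt\varepsilon\,e^{C_4t_\varepsilon}$.
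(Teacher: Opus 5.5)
Your proof is correct. After the step you share with the paper, it follows a genuinely different and simpler route.

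Both arguments begin the same way. They bound the spin prefactor by $\lVert K\rVert$ and use the Weyl shift to reduce $D_2$ to $\lvert\langle \Phi_t,(e^{i\phi_\varepsilon(\eta)}-1)\Phi_t\rangle\rvert$, with $\Phi_t=U_j(t_\varepsilon,0)_\varepsilon\Omega$ and $\eta=e^{it_\varepsilon\Delta}\xi$.

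\emph{The paper's route.} It argues dynamically:
\begin{itemize}
\item It differentiates $\lambda\mapsto\langle U_j(\lambda t_\varepsilon,0)_\varepsilon\Omega, e^{i\phi_\varepsilon(\eta)}U_j(\lambda t_\varepsilon,0)_\varepsilon\Omega\rangle$.
\item It computes the commutator of $A_j(t)_\varepsilon$ with the Weyl operator. This produces $\varepsilon\,\phi_\varepsilon(i\Delta\eta)$-type terms, a constant $\tfrac{\varepsilon^2}{2}\lVert\xi\rVert_{H^1}^2$, and terms in $\phi_\varepsilon(g)$.
\item It controls these with the kinetic-energy propagation bound of Lemma~\ref{equationU_pm}.
\end{itemize}
This is where $\lVert\xi\rVert_{H^1}+\lVert\xi\rVert_{H^1}^2$ comes from. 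The price is an extra factor $t_\varepsilon$, absorbed into the exponential, and norms such as $\lVert(-\Delta)^{-1/2}g\rVert_2$.

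\emph{Your route.} You use $\lvert e^{ix}-1\rvert\le\lvert x\rvert$ to reduce directly to the fluctuation $\lVert\phi_\varepsilon(\eta)\Phi_t\rVert$. You then control it with the number-operator Gronwall estimate, the power-one analogue of \eqref{claim}. This gives $\langle\Phi_t,(\mathrm{d}\Gamma_\varepsilon(1)+\varepsilon)\Phi_t\rangle\le\varepsilon\, e^{2\mathfrak{g}^2\lVert g\rVert_2^2 t}$, and hence
$D_2\le\sqrt{2}\,\lVert K\rVert\,\lVert\xi\rVert_2\,\sqrt{\varepsilon}\,e^{\mathfrak{g}^2\lVert g\rVert_2^2t_\varepsilon}$.
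This bound is sharper: it involves only $\lVert\xi\rVert_2$ and implies the stated one. The domain justification you flag is of the same kind as the one left implicit in the paper's \eqref{claim}, so you lose no rigor relative to the paper.

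One small correction to your aside on the $\mathrm{d}\Gamma_\varepsilon(-\Delta)$ route. Bounding $\phi_\varepsilon(\eta)$ by $(\mathrm{d}\Gamma_\varepsilon(-\Delta)+\varepsilon)^{1/2}$ would need $\lVert(-\Delta)^{-1/2}\eta\rVert_2$, not a $\dot H^{1}$ norm. In the paper, $H^1$ enters through the commutator with $\mathrm{d}\Gamma_\varepsilon(-\Delta)$. The quadratic term comes from the constant in that commutator, not from a second-order expansion of $e^{ix}$. Your main argument never uses that route, so this does not affect the proof.
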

\begin{proof}
  Let us start with $D_2$. Observe that
  \begin{multline*}
    D_2(\varepsilon,t_{\varepsilon})\leq \lVert K  \rVert_{}^{}\sum_{j\in \{+,-\}}^{}\Bigl\lvert\Bigl\langle \Psi^{\varepsilon}_{\mathrm{swp},j}(t_{\varepsilon})  ,  \Bigl(e^{i\phi_{\varepsilon}(e^{it_{\varepsilon}\Delta}\xi)}-e^{i \sqrt{2} \mathrm{Re}\, \langle e^{it_{\varepsilon}\Delta}\xi  , u_j(t_{\varepsilon}) \rangle_2}\Bigr)\Psi^{\varepsilon}_{\mathrm{swp},j}(t_{\varepsilon})\Bigr\rangle_{\mathscr{F}(L^2)}\Bigr\rvert\\\leq 2\lVert K  \rVert_{}^{}\sum_{j\in \{+,-\}}^{}\Bigl\lvert\Bigl\langle U_j(t_{\varepsilon},0)_{\varepsilon}\,\Omega  ,  \Bigl(e^{i\phi_{\varepsilon}(e^{it_{\varepsilon}\Delta}\xi)}-1\Bigr)U_j(t_{\varepsilon},0)_{\varepsilon}\,\Omega\Bigr\rangle_{\mathscr{F}(L^2)}\Bigr\rvert\;,
  \end{multline*}
  since by the well-known properties of exponential of creation and
  annihilation operators,
  \begin{multline*}
    \langle \Psi^{\varepsilon}_{\mathrm{swp},\pm}(t_{\varepsilon})  ,e^{i\phi_{\varepsilon}(e^{it_{\varepsilon}\Delta}\xi)}\Psi^{\varepsilon}_{\mathrm{swp},\pm}(t_{\varepsilon})  \rangle_{\mathscr{F}(L^2)}\\= e^{i \sqrt{2} \mathrm{Re}\, \langle e^{it_{\varepsilon}\Delta}\xi  , u_j(t_{\varepsilon}) \rangle_2} \,\langle U_{\pm}(t_{\varepsilon},0)_{\varepsilon}\,\Omega  ,  e^{i\phi_{\varepsilon}(e^{it_{\varepsilon}\Delta}\xi)}U_{\pm}(t_{\varepsilon},0)_{\varepsilon}\,\Omega\rangle_{\mathscr{F}(L^2)}\;.
  \end{multline*}
  Now, define
  \begin{equation*}
    G_{\varepsilon}(\lambda):= \langle U_{\pm}(\lambda t_{\varepsilon},0)_{\varepsilon}\,\Omega  ,  e^{i\phi_{\varepsilon}(e^{it_{\varepsilon}\Delta}\xi)}U_{\pm}(\lambda t_{\varepsilon},0)_{\varepsilon}\,\Omega\rangle_{\mathscr{F}(L^2)}\;.
  \end{equation*}
  We prove that
  \begin{equation*}
    \lvert G_{\varepsilon}(\lambda)  \rvert_{}^{}\leq C\bigl(\lVert \xi  \rVert_{H^1}^{}+\lVert \xi  \rVert_{H^1}^2\bigr)\,\sqrt{\varepsilon}\,e^{C_2 t_{\varepsilon}}
  \end{equation*}
  as follows:
  \begin{multline*}
    i\varepsilon \dot{G}_{\varepsilon}(\lambda) = - \langle U_{\pm}(\lambda t_{\varepsilon},0)_{\varepsilon}\,\Omega  ,  [A_{\pm}(t)_{\varepsilon}\,,\,e^{i\phi_{\varepsilon}(e^{it_{\varepsilon}\Delta}\xi)}]U_{\pm}(\lambda t_{\varepsilon},0)_{\varepsilon}\,\Omega\rangle_{\mathscr{F}(L^2)}\\=- \langle U_{\pm}(\lambda t_{\varepsilon},0)_{\varepsilon}\,\Omega  ,  e^{i\phi_{\varepsilon}(e^{it_{\varepsilon}\Delta}\xi)}\bigl(\phi_{\varepsilon}(-i\varepsilon e^{it_{\varepsilon}\Delta}\Delta\xi)+ \frac{\varepsilon^2}{2}\lVert \xi  \rVert_{H^1}^2 \\+2\phi_{\varepsilon}(g) \mathrm{Re}\langle i\varepsilon e^{it\Delta_{\varepsilon}}\xi  , g \rangle_2+(\mathrm{Re}\langle i\varepsilon e^{it\Delta_{\varepsilon}}\xi  , g \rangle_2)^2\bigr)U_{\pm}(\lambda t_{\varepsilon},0)_{\varepsilon}\,\Omega\rangle_{\mathscr{F}(L^2)}\;.
  \end{multline*}
  Now using the standard Fock space estimates to bound the field operators
  with $(\mathrm{d}\Gamma_{\varepsilon}(-\Delta)+\varepsilon)^{1/2}$, and then the bound at the end of
  Lemma~\ref{equationU_pm} for the propagation of
  $(\mathrm{d}\Gamma_{\varepsilon}(-\Delta)+\varepsilon)^{1/2}$ through $U_{\pm}(t,0)_{\varepsilon}$, one is able to
  conclude.
\end{proof}
Let us now turn our attention to $D_3(\varepsilon,t_{\varepsilon})$. The fact these cross terms
vanish in the semiclassical limit $\varepsilon\to 0$ is crucial for both sharpness and
projectiveness of the semiclassical measure scheme $\mathscr{M}_{\varepsilon\to 0}(\sigma_z)$. It is a
crucial consequence of the general semiclassical Proposition~\ref{prop:4};
the explicit bound on the accuracy on the other hand is obtained exploiting
the special structure of (squeezed) coherent states.
\begin{lemma}
  \label{lemma:4}
  For all $\varepsilon\in (0,1/2)$, there exists constants $C_5,C_6>0$ such that
  \begin{equation*}
    D_3(\varepsilon,t_{\varepsilon})\leq C_5 \lVert K  \rVert_{}^{} e^{- \frac{C_6}{\varepsilon}}\;.
  \end{equation*}
\end{lemma}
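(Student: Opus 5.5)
We bound the cross term directly, using the explicit Weyl-operator structure of the squeezed coherent vectors in \eqref{eq:3}. The prefactor $\lvert\bar\alpha_j\alpha_{j'}\langle j\vert\mathcal R_j^*K\mathcal R_{j'}\vert j'\rangle\rvert$ is bounded by $\lVert K\rVert$, because $\mathcal R_\pm$ are unitary by Lemma~\ref{propertiesR_j} and $\lvert\alpha_\pm\rvert\le1$. It therefore suffices to show that
\[
\Bigl\lvert\bigl\langle \Psi^\eps_{\mathrm{swp},+}(t_\eps), e^{i\phi_\eps(e^{it_\eps\Delta}\xi)}\Psi^\eps_{\mathrm{swp},-}(t_\eps)\bigr\rangle\Bigr\rvert\le C e^{-C_6/\eps}.
\]
The phases $e^{\frac i\eps S_\pm}$ have modulus one and can be discarded.

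Write $W(v):=e^{\frac1\eps(a^*_\eps(v)-a_\eps(v))}$, so that $e^{i\phi_\eps(\eta)}=W(i\eta/\sqrt2)$ up to normalisation conventions. Using the Weyl relations $W(v)^*W(w)=e^{\frac{i}{\eps}\mathrm{Im}\langle v,w\rangle}W(w-v)$, the cross term becomes, up to a unimodular factor,
\[
\bigl\langle U_+(t_\eps,0)_\eps\Omega,\; W\bigl(u_-(t_\eps)-u_+(t_\eps)+\tfrac{i}{\sqrt2}e^{it_\eps\Delta}\xi\bigr)\,U_-(t_\eps,0)_\eps\Omega\bigr\rangle .
\]
The vectors $U_\pm(t,0)_\eps\Omega$ are squeezed vacua, since they are generated by the quadratic Hamiltonians $A_\pm(t)_\eps$. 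In the rescaled variables they are therefore $\eps$-independent Gaussian (quasi-free) vectors: conjugating by the unitary dilation that maps $a_\eps$ to $\sqrt\eps\,a$, each $U_\pm(t,0)_\eps\Omega$ corresponds to a fixed Bogoliubov transform $\mathcal U_\pm(t)\Omega$, and $W(w)$ becomes the unscaled Weyl operator $W_1(w/\sqrt\eps)$. The matrix element of a Weyl operator between two quasi-free vectors is explicit: it is a Gaussian in its argument, of the form $c_{t}\exp(-Q_t(w)/(2\eps))$. Here $Q_t$ is a real quadratic form, and it is positive definite because the Bogoliubov maps are close to the identity for $t$ finite and $\mathfrak g$ small. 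Its coercivity constant degrades at most like $e^{-Ct}$, and the overlap prefactor satisfies $\lvert c_t\rvert\le1$.

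Put $w_\eps=u_-(t_\eps)-u_+(t_\eps)+\frac{i}{\sqrt2}e^{it_\eps\Delta}\xi$. The argument is complete once $Q_{t_\eps}(w_\eps)$ is bounded below by a positive constant uniformly in $\eps$. By Proposition~\ref{asymptotic splitting}, $\lVert u_+(t)-u_-(t)\rVert_2\ge C(t,\underline u,g)\mathfrak g^2$ with a strictly positive limit as $t\to\infty$, so $u_+(t_\eps)-u_-(t_\eps)$ stays uniformly away from zero. The difficulty is that $w_\eps$ also contains the $\xi$-shift, which could in principle cancel the splitting. I would handle this by estimating only the component of $w_\eps$ that is invisible to $e^{i\phi_\eps}$. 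Real and imaginary parts play conjugate roles, and the shift $\frac{i}{\sqrt2}e^{it_\eps\Delta}\xi$ only enters through $\mathrm{Im}$-type pairings, so the Gaussian decay comes from the symplectically complementary part of $u_+-u_-$.

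If that separation fails, the fallback is to apply the second bullet of Proposition~\ref{prop:4}, rerun quantitatively with Gaussian overlaps, to get decay in the semiclassically separated direction. Choosing $t_\eps=c\lvert\ln\eps\rvert$ with $c$ small keeps the factor $e^{Ct_\eps}$ from the Bogoliubov coercivity under control, and yields $e^{-C_6/\eps}$ after shrinking $C_6$.

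The main obstacle is precisely this lower bound on $Q_{t_\eps}(w_\eps)$: we need uniform nondegeneracy of the squeezing along logarithmic times, together with the noncancellation between the $\xi$-shift and the splitting $u_+-u_-$.
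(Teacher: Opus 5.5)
Your outline matches the paper's route: strip the coherent translations with the Weyl relations, reduce to a Gaussian matrix element between the squeezed vacua $U_\pm(t_\varepsilon,0)_\varepsilon\Omega$, and conclude with Proposition~\ref{asymptotic splitting}. The paper simply asserts $D_3\le C_5\lVert K\rVert e^{-\frac{C}{\varepsilon}\lVert u_+(t_\varepsilon)-u_-(t_\varepsilon)\rVert_2^2}$ from the squeezed-state computation.

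\textbf{The scaling error.} Your identification $e^{i\phi_\varepsilon(\eta)}=W(i\eta/\sqrt2)$ ``up to normalisation conventions'' is off by a factor $\varepsilon$, and this is what creates your ``main obstacle''. Since $a_\varepsilon$ is antilinear, $\frac{i}{\sqrt2}\bigl(a^{*}_{\varepsilon}(\eta)+a_{\varepsilon}(\eta)\bigr)=\frac1\varepsilon\bigl(a^{*}_{\varepsilon}(v)-a_{\varepsilon}(v)\bigr)$ with $v=\frac{i\varepsilon}{\sqrt2}\eta$. Hence $e^{i\phi_\varepsilon(\eta)}=W\bigl(\tfrac{i\varepsilon}{\sqrt2}\eta\bigr)$: the Fourier character translates phase space by only $\mathrm{O}(\varepsilon)$. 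With your scaling, even the diagonal coherent-state term would have modulus $e^{-\lVert\eta\rVert_2^2/(4\varepsilon)}\to0$, contradicting Lemma~\ref{lemma:1}.

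With the correct scaling, $w_\varepsilon=u_-(t_\varepsilon)-u_+(t_\varepsilon)+\frac{i\varepsilon}{\sqrt2}\eta$ with $\eta=e^{it_\varepsilon\Delta}\xi$. Then $\varepsilon^{-1}Q(w_\varepsilon)=\varepsilon^{-1}Q(u_--u_+)+\mathrm{O}(\lVert u_+-u_-\rVert_2\lVert\xi\rVert_2)+\mathrm{O}(\varepsilon\lVert\xi\rVert_2^2)$. In the unsqueezed case this reads $\lvert\langle W(u_+)\Omega,e^{i\phi_\varepsilon(\eta)}W(u_-)\Omega\rangle\rvert\le e^{\frac{1}{\sqrt2}\lVert u_+-u_-\rVert_2\lVert\xi\rVert_2}\,e^{-\lVert u_+-u_-\rVert_2^2/(2\varepsilon)}$. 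So the $\xi$-shift only costs a bounded factor, absorbed into $C_5$, and there is no cancellation to fight.

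Your proposed way around the cancellation would not have worked. Since $\xi\in\mathscr S(\mathbb R^3)$ is complex-valued, $\frac{i}{\sqrt2}e^{it\Delta}\xi$ is not confined to any ``$\mathrm{Im}$-type'' real subspace. In your scaling, $\xi$ close to $-i\sqrt2(u_+^\infty-u_-^\infty)$ would nearly cancel the splitting at large $t_\varepsilon$. The fallback via Proposition~\ref{prop:4} gives no rate at all.

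\textbf{The time-uniformity.} You posit that the Gaussian coercivity degrades like $e^{-Ct}$, and claim that $t_\varepsilon=c\lvert\ln\varepsilon\rvert$ plus ``shrinking $C_6$'' still gives $e^{-C_6/\varepsilon}$. It does not. A coercivity constant of order $e^{-Ct_\varepsilon}=\varepsilon^{Cc}$ only yields $\exp(-C'\varepsilon^{Cc-1})$. That is superpolynomially small, which is enough for Theorem~\ref{thm:4}, but it is not bounded by $C_5e^{-C_6/\varepsilon}$ for any fixed $C_6>0$.

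To get the stated bound, the squeezing of $U_\pm(t,0)_\varepsilon\Omega$ must be bounded uniformly in $t\ge0$. These vectors are $\varepsilon$-independent, since $A_\pm(t)_\varepsilon=\varepsilon A_\pm(t)_1$, so no dilation is involved. For $0<\mathfrak g<\mathfrak g_0$ the uniform bound holds. The underlying linear symplectic flow $i\partial_t v=-\Delta v\pm c_\pm(t)\,\mathrm{Re}\langle v,g\rangle_2\,g$, with $0<c_\pm(t)\le\mathfrak g^2$, is an $\mathrm{O}(\mathfrak g^2)$ perturbation of $e^{it\Delta}$ in operator norm, uniformly in $t$. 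This follows from the same $L^2_tL^6_x$ Strichartz argument as in Proposition~\ref{globalwellposedness}. So the real part of the Gaussian exponent is coercive with a $t$-independent constant; the $e^{C\lvert t-s\rvert}$ bound of Lemma~\ref{equationU_pm} is too crude for this.

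Combined with $\inf_{\varepsilon\in(0,1/2)}C(t_\varepsilon,\underline u,g)>0$ from Proposition~\ref{asymptotic splitting}, this gives $D_3\le C_5\lVert K\rVert e^{-C_6/\varepsilon}$. Here $C_5$ depends on $\lVert\xi\rVert_2$ and on $\sup_t\lVert u_+(t)-u_-(t)\rVert_2$.
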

\begin{proof}
  The exponential decay of $D_3$ can be proved by an explicit computation,
  thanks to the special structure of squeezed coherent states
  \citep[see][\textsection3.2 and \textsection8.5]{Combescure_Robert_book}, that yields
  \begin{equation*}
    D_3(\varepsilon,t_{\varepsilon})\leq C_5\lVert K  \rVert_{}^{} e^{- \frac{C}{\varepsilon}\lVert u_+(t_{\varepsilon})- u_-(t_{\varepsilon})  \rVert_2^2}\;.
  \end{equation*}
  Now, since we will choose $t_{\varepsilon}= \mathrm{O}(-\ln \varepsilon)$, by
  Proposition~\ref{asymptotic splitting} we have 
  \begin{equation*}
    \lVert u_+(t_{\varepsilon})- u_-(t_{\varepsilon})  \rVert_2^2\geq \inf_{\varepsilon\in (0,1/2)} C(t_{\varepsilon},\underline{u},g) \mathfrak{g}^2>0\;,
  \end{equation*}
  and the result thus follows (in fact, we only need separation from zero of
  $t_{\varepsilon}$ to have a strictly positive lower bound, so it holds in any
  interval $\varepsilon\in (0,\delta)$, $\delta<1$).

  Let us conclude by remarking how one can apply Proposition~\ref{prop:4} in
  this case to have the convergence as $\varepsilon\to 0$, albeit with no explicit bound
  on the accuracy: due to the $\varepsilon$-dependence in $t_{\varepsilon}$, that this
  convergence indeed holds might not be obvious at first sight. In fact, we
  can prove that the semiclassical measure of $e^{i
    \frac{t_{\varepsilon}}{\varepsilon}\mathrm{d}\Gamma_{\varepsilon}(-\Delta)}\lvert\Psi_{\mathrm{swp},\pm}(t_{\varepsilon})\rangle\langle
  \Psi_{\mathrm{swp},\pm}(t_{\varepsilon})\rvert e^{-i \frac{t_{\varepsilon}}{\varepsilon}\mathrm{d}\Gamma_{\varepsilon}(-\Delta)}$ is
  $\delta_{u_{\pm}^{\infty}}$:
  \begin{equation*}
    e^{i
    \frac{t_{\varepsilon}}{\varepsilon}\mathrm{d}\Gamma_{\varepsilon}(-\Delta)}\lvert\Psi_{\mathrm{swp},\pm}(t_{\varepsilon})\rangle\langle\Psi_{\mathrm{swp},\pm}(t_{\varepsilon})\rvert e^{-i
    \frac{t_{\varepsilon}}{\varepsilon}\mathrm{d}\Gamma_{\varepsilon}(-\Delta)}\xrightarrow[\varepsilon_n\to 0]{\mathcal{F}} \delta_{u_{\pm}^{\infty}}\;.
  \end{equation*}
  Once we prove this, the fact that the transition amplitude between $e^{i
    \frac{t}{\varepsilon}\mathrm{d}\Gamma_{\varepsilon}(-\Delta)}\Psi_{\mathrm{swp},+}(t_{\varepsilon})$ and $e^{i
    \frac{t}{\varepsilon}\mathrm{d}\Gamma_{\varepsilon}(-\Delta)}\Psi_{\mathrm{swp},-}(t_{\varepsilon})$ vanish is a
  direct consequence of the separation $u_-^{\infty}\neq u_+^{\infty}$ proved in
  Proposition~\ref{asymptotic splitting}, applying
  Proposition~\ref{prop:4}.

  To prove that $\delta_{u_{\pm}^{\infty}}$ is the semiclassical measure, from the proof of
  Lemma~\ref{lemma:2}, we have 
  \begin{multline*}
    \Bigl\lvert \langle \Psi^{\varepsilon}_{\mathrm{swp},\pm}(t_{\varepsilon})  ,e^{i\phi_{\varepsilon}(e^{it_{\varepsilon}\Delta}\xi)}\Psi^{\varepsilon}_{\mathrm{swp},\pm}(t_{\varepsilon})  \rangle_{\mathscr{F}(L^2)}- e^{i \sqrt{2} \mathrm{Re}\langle \xi  , u_{\pm}^{\infty} \rangle_2}\Bigr\rvert\\\leq \Bigl\lvert \langle \Psi^{\varepsilon}_{\mathrm{swp},\pm}(t_{\varepsilon})  ,e^{i\phi_{\varepsilon}(e^{it_{\varepsilon}\Delta}\xi)}\Psi^{\varepsilon}_{\mathrm{swp},\pm}(t_{\varepsilon})  \rangle_{\mathscr{F}(L^2)}- e^{i \sqrt{2} \mathrm{Re}\langle e^{it_{\varepsilon} \Delta}\xi  , u_{\pm}(t_{\varepsilon}) \rangle_2}\Bigr\rvert\\+\Bigl\lvert e^{i \sqrt{2} \mathrm{Re}\langle \xi  , e^{-it_{\varepsilon} \Delta}u_{\pm}(t_{\varepsilon}) \rangle_2} - e^{i \sqrt{2} \mathrm{Re}\langle \xi  , u_{\pm}^{\infty} \rangle_2}\Bigr\rvert \\
    \leq C_1 \lVert K  \rVert_{}^{}(\lVert \xi  \rVert_{H^1}^{}+ \lVert \xi  \rVert_{H^1}^2)\, \sqrt{\varepsilon}\, e^{C_2 t_{\varepsilon}}+ \Bigl\lvert 1 - e^{i \sqrt{2} \mathrm{Re}\langle \xi  , u_{\pm}^{\infty}-e^{-it_{\varepsilon} \Delta}u_{\pm}(t_{\varepsilon}) \rangle_2}\Bigr\rvert\;.
  \end{multline*}
  Therefore, by choosing $t_{\varepsilon}< \frac{-\ln \varepsilon}{2C_2+\delta}$, $\delta>0$, we get from
  Proposition~\ref{globalwellposedness} that for all $\xi\in \mathscr{S}$,
  \begin{equation*}
    \lim_{\varepsilon\to 0}\Bigl\lvert \langle \Psi^{\varepsilon}_{\mathrm{swp},\pm}(t_{\varepsilon})  ,e^{i\phi_{\varepsilon}(e^{it_{\varepsilon}\Delta}\xi)}\Psi^{\varepsilon}_{\mathrm{swp},\pm}(t_{\varepsilon})  \rangle_{\mathscr{F}(L^2)}- e^{i \sqrt{2} \mathrm{Re}\langle \xi  , u_{\pm}^{\infty} \rangle_2}\Bigr\rvert=0\;.
  \end{equation*}
\end{proof}
The last two terms, $D_4$ and $D_5$ are already quasi-classical, and only the
long time asymptotics is taken into account, and in fact the $\varepsilon$-dependence
is only through $t_{\varepsilon}$.
\begin{lemma}
  \label{lemma:5}
  There exist a constant $C_7>0$ such that
  \begin{equation*}
    D_4(t_{\varepsilon})\leq C_7 \lVert K  \rVert_{}^{}\lVert \xi  \rVert_1^{}\:t^{-\frac{1}{2}}_{\varepsilon}\;.
  \end{equation*}
\end{lemma}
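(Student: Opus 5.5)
We estimate $D_4$ term by term. Since $\mathcal{R}_j(t_\varepsilon)$ is unitary (Lemma~\ref{propertiesR_j}) and $\lvert\alpha_j\rvert\le 1$, the prefactor satisfies $\lvert \alpha_j\rvert^2\lvert\langle j\vert\mathcal{R}_j^*K\mathcal{R}_j\rvert j\rangle\rvert\le \lVert K\rVert$. For the phase difference we use $\lvert e^{ia}-e^{ib}\rvert\le \lvert a-b\rvert$, which gives
\begin{equation*}
  D_4(t_\varepsilon)\le \sqrt{2}\,\lVert K\rVert\sum_{j\in\{+,-\}}\bigl\lvert \mathrm{Re}\,\langle \xi , e^{-it_\varepsilon\Delta}u_j(t_\varepsilon)-u_j^\infty\rangle_2\bigr\rvert \le \sqrt{2}\,\lVert K\rVert\,\lVert\xi\rVert_1\sum_{j}\bigl\lVert e^{-it_\varepsilon\Delta}u_j(t_\varepsilon)-u_j^\infty\bigr\rVert_\infty .
\end{equation*}
Thus everything reduces to a quantitative $L^\infty$ scattering rate $\lVert e^{-it\Delta}u_\pm(t)-u_\pm^\infty\rVert_\infty\lesssim t^{-1/2}$. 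This fits Proposition~\ref{globalwellposedness}, which gives scattering in every $L^p$, $2\le p\le\infty$.

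For the rate we write, as in the proof of Proposition~\ref{globalwellposedness},
\begin{equation*}
  u_\pm^\infty-e^{-it\Delta}u_\pm(t)= \mp i\,\mathfrak{g}^2\int_t^{\infty} \frac{\mathrm{Re}\langle u_\pm(s),g\rangle_2}{\sqrt{1+2\mathfrak g^2\mathrm{Re}\langle u_\pm(s),g\rangle_2^2}}\, e^{-is\Delta}g\,\mathrm{d}s .
\end{equation*}
We bound the scalar factor by $\lvert\mathrm{Re}\langle u_\pm(s),g\rangle_2\rvert/(\sqrt2\,\mathfrak g)$, or simply by $\lvert\langle u_\pm(s),g\rangle\rvert$. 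We bound $\lVert e^{-is\Delta}g\rVert_\infty$ by the dispersive estimate $(4\pi s)^{-3/2}\lVert g\rVert_1$. Crudely bounding the scalar factor by $\tfrac{1}{\sqrt2\,\mathfrak g}$ alone then gives a rate $\int_t^\infty s^{-3/2}\,\mathrm{d}s\sim t^{-1/2}$, which is exactly the claimed power. This yields $D_4(t_\varepsilon)\le C_7\lVert K\rVert\lVert\xi\rVert_1 t_\varepsilon^{-1/2}$ with $C_7\sim \mathfrak g\,\lVert g\rVert_1$, and this crude bound suffices.

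The one genuine obstacle is integrability near $s=0$. This is harmless because we only need $t_\varepsilon\ge1$, which holds for $\varepsilon$ small since $t_\varepsilon=\mathrm{O}(-\ln\varepsilon)\to\infty$. The constant $C_7$ may be absorbed accordingly.

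The same argument with $\lVert e^{-is\Delta}g\rVert_{p}$ for $p>6$ would give faster decay in weaker norms. This also suggests a route to a better rate using the decay of $\langle u_\pm(s),g\rangle$ coming from $u_\pm\in L^2(\mathbb{R}_+,L^6)$ and the hypothesis $\underline u\in L^{6/5}$. We do not pursue it, since $t^{-1/2}$ is what the statement requires.
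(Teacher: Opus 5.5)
Your proof is correct and follows essentially the same route as the paper. The paper also bounds the phase difference by $\sqrt{2}\lVert\xi\rVert_1\lVert u_\pm^\infty-e^{-it_\varepsilon\Delta}u_\pm(t_\varepsilon)\rVert_\infty$, writes the difference as $\int_{t_\varepsilon}^\infty\partial_s(e^{-is\Delta}u_\pm(s))\,\mathrm{d}s$, and bounds the integrand by $\mathfrak{g}\lVert g\rVert_1(4\pi s)^{-3/2}$ via \eqref{eq:13}, which gives $t_\varepsilon^{-1/2}$.

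One correction: the ``obstacle near $s=0$'' does not exist, since the integral runs over $[t_\varepsilon,\infty)$ and $\int_t^\infty s^{-3/2}\,\mathrm{d}s=2t^{-1/2}$ for every $t>0$, so no restriction $t_\varepsilon\geq 1$ is needed.
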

\begin{proof}
  To bound $D_4$, we just need to bound
  \begin{multline*}
    \Bigl\lvert e^{i \sqrt{2} \mathrm{Re}\, \langle \xi  , e^{-it_{\varepsilon}\Delta}u_\pm(t_{\varepsilon}) \rangle_2} - e^{i \sqrt{2} \mathrm{Re}\, \langle \xi  , u_\pm^{\infty} \rangle_2}\Bigr\rvert\\= \Bigl\lvert 1- e^{i \sqrt{2} \mathrm{Re}\, \langle \xi  , u_\pm^{\infty}-e^{-it_{\varepsilon}\Delta}u_{\pm}(t_{\varepsilon}) \rangle_2}\Bigr\rvert\leq  \sqrt{2}\lVert \xi  \rVert_{1}\lVert u_{\pm}^{\infty}- e^{-it_{\varepsilon}\Delta}u_{\pm}(t_{\varepsilon})  \rVert_{\infty}\;, 
  \end{multline*}
  where we used Hölder's inequality (recall that $\xi\in \mathscr{S}(\mathbb{R}^3)$).  Now,
  recalling that $u_{\pm}^{\infty}$ is the $L^{\infty}$-limit point of
  $e^{-it_{\varepsilon}\Delta}u_{\pm}(t)$ as $t\to \infty$, we can write
  \begin{multline*}
    \Bigl\lVert u_{\pm}^{\infty}- e^{-it_{\varepsilon}\Delta}u_{\pm}(t_{\varepsilon})  \Bigr\rVert_{\infty}= \Bigl\lVert \int_{t_{\varepsilon}}^{\infty}\partial_s\bigl(e^{-is\Delta}u_{\pm}(s)\bigr)  \mathrm{d}s  \Bigr\rVert_{\infty}^{}\\\leq \int_{t_{\varepsilon}}^{\infty}\Bigl\lVert \partial_s\bigl(e^{-is\Delta}u_{\pm}(s)\bigr)  \Bigr\rVert_{\infty}^{}  \mathrm{d}s\leq \mathfrak{g}\lVert g  \rVert_1^{} \int_{t_{\varepsilon}}^{\infty}  (4\pi s)^{-\frac{3}{2}} \mathrm{d}s\leq C\: t_{\varepsilon}^{-\frac{1}{2}}\;,
  \end{multline*}
  where in the last inequality we used \eqref{eq:13}.
\end{proof}
\begin{lemma}
  \label{lemma:3}
  There exists a constant $C_8>0$ such that,
  \begin{equation*}
    D_5(t_{\varepsilon})\leq C_8 \lVert K  \rVert_{}^{}\: t_{\varepsilon}^{-1}\;.
  \end{equation*}
\end{lemma}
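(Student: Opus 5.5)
We bound each summand of $D_5$ separately, for $j=+$ (the case $j=-$ is identical). By the intertwining relation \eqref{intertwining} of Lemma~\ref{propertiesR_j}, $\pi^{t}_j\mathcal{R}_j(t)=\mathcal{R}_j(t)\pi_j^0$. The orthogonality assumption $\mathrm{Re}\langle\underline{u},g\rangle_2=0$ gives $\pi_j^0=\lvert j\rangle\langle j\rvert$. Hence
\begin{equation*}
\langle j\vert\mathcal{R}_j^{*}(t)\pi^{t}_jK\pi^{t}_j\mathcal{R}_j(t)\rvert j\rangle=\langle j\vert\mathcal{R}_j^{*}(t)K\mathcal{R}_j(t)\rvert j\rangle,
\end{equation*}
and the summand reduces to $\lvert\langle j\vert\mathcal{R}_j^*(t_\varepsilon)K\mathcal{R}_j(t_\varepsilon)-K\rvert j\rangle\rvert$. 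It would be zero if $\mathcal{R}_j(t_\varepsilon)\lvert j\rangle=e^{i\theta}\lvert j\rangle$. The whole task is therefore to show that $\mathcal{R}_j(t)\lvert j\rangle$ is, up to a phase, close to $\lvert j\rangle$ with error $O(t^{-1})$. Note that $\mathcal{R}_j(t)\lvert j\rangle$ spans the range of the rank-one projector $\pi_j(u_j(t))$.

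Write $\phi_j(t):=\sqrt2\,\mathrm{Re}\langle u_j(t),g\rangle_2$. The explicit formula
\begin{equation*}
\pi_\pm(u)=\tfrac12\Bigl(1\pm(\sigma_z+\mathfrak g\phi\sigma_x)/\sqrt{1+\mathfrak g^2\phi^2}\Bigr)
\end{equation*}
gives $\lVert\pi_j(u_j(t))-\lvert j\rangle\langle j\rvert\rVert\le C\mathfrak g\lvert\phi_j(t)\rvert$. The range of $\pi_j(u_j(t))$ is a line, so we may choose a phase $\theta(t)$ with $\lVert\mathcal{R}_j(t)\lvert j\rangle-e^{i\theta(t)}\lvert j\rangle\rVert\le C\mathfrak g\lvert\phi_j(t)\rvert$. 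Expanding $\langle j\vert\mathcal{R}_j^*K\mathcal{R}_j\rvert j\rangle-\langle j\vert K\rvert j\rangle$ to first order in this difference then gives
\begin{equation*}
\text{summand}\le 2\lVert K\rVert\,C\mathfrak g\lvert\phi_j(t_\varepsilon)\rvert.
\end{equation*}

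It remains to show $\lvert\mathrm{Re}\langle u_j(t),g\rangle_2\rvert\le Ct^{-1}$ for large $t$; this is the main step. We use the Duhamel formula of Proposition~\ref{globalwellposedness}:
\begin{equation*}
u_j(t)=e^{it\Delta}\underline u\mp i\mathfrak g^2\int_0^te^{i(t-s)\Delta}F(s)g\,\mathrm ds,
\end{equation*}
with $\lvert F(s)\rvert\le\lvert\mathrm{Re}\langle u_j(s),g\rangle\rvert$. Pairing with $g$ and using the dispersive bound $\lvert\langle e^{i\tau\Delta}g,g\rangle\rvert\le C\langle\tau\rangle^{-3/2}\lVert g\rVert_1^2$ (for $g\in\mathscr S$), the nonlinear part is bounded by
\begin{equation*}
\mathfrak g^2\int_0^t\langle t-s\rangle^{-3/2}\lvert F(s)\rvert\,\mathrm ds.
\end{equation*}
For the free part we need the extra hypothesis $\underline u\in L^{6/5}$. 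Dispersion gives $\lvert\langle e^{it\Delta}\underline u,g\rangle\rvert\le\lVert e^{it\Delta}\underline u\rVert_6\lVert g\rVert_{6/5}\le Ct^{-1}\lVert\underline u\rVert_{6/5}\lVert g\rVert_{6/5}$, since $\tfrac32(\tfrac56-\tfrac16)=1$. Setting $m(t):=\sup_{s\le t}\langle s\rangle\lvert\mathrm{Re}\langle u_j(s),g\rangle\rvert$, the convolution estimate $\int_0^t\langle t-s\rangle^{-3/2}\langle s\rangle^{-1}\,\mathrm ds\le C\langle t\rangle^{-1}$ yields $m(t)\le C+C\mathfrak g^2m(t)$. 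For $\mathfrak g<\mathfrak g_0$ small this bootstraps to $m(t)\le C$ uniformly in $t$; the quantity $m(t)$ is finite for each $t$ because $u_j\in\mathscr C(\mathbb{R}_+,L^2)$.

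Combining the two estimates, $D_5(t_\varepsilon)\le\sum_j\lvert\alpha_j\rvert^2\,C\mathfrak g\lVert K\rVert\,t_\varepsilon^{-1}\le C_8\lVert K\rVert t_\varepsilon^{-1}$. The only delicate point is closing the decay bootstrap, which relies on the smallness of $\mathfrak g$ and the $t^{-1}$ decay of the free flow in $L^{6/5}\to L^6$.
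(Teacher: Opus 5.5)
Your proof is correct and follows the paper's route. You first get the $O(t^{-1})$ decay of $\mathrm{Re}\langle u_\pm(t),g\rangle_2$ from the Duhamel formula, the $L^{6/5}\to L^6$ dispersive bound on the free part and a small-$\mathfrak g$ bootstrap, and then use the intertwining relation to show that $\mathcal R_\pm(t)\lvert\pm\rangle$ is $O(t^{-1})$-close to $\lvert\pm\rangle$ up to a phase. The differences are only in execution: you close the bootstrap with a weighted sup norm and the convolution bound $\int_0^t\langle t-s\rangle^{-3/2}\langle s\rangle^{-1}\,\mathrm{d}s\lesssim\langle t\rangle^{-1}$ rather than splitting at $t/2$ and using $\phi\in L^2(\mathbb{R}_+)$, and you use a rank-one phase argument where the paper computes the off-diagonal entries of $\mathcal R_\pm(t)$ and applies unitarity.
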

\begin{proof} We first show that 
\begin{equation}
   \label{limpipmt}
\pi_\pm^t = \lvert \pm \rangle \langle \pm \rvert + \mathrm{O} (t^{-1}).
\end{equation}
Since $\pi_\pm(u)$ is Lipschitz in $\mathrm{Re}\langle u,g \rangle_2$, it suffices to show
that
$$
\phi(t) := \mathrm{Re}\langle u_\pm(t),g \rangle =  \mathrm{O} (t^{-1})\; .
$$
Recall that since $u_\pm \in L^2(\R_+,L^6)$, $\phi \in L^2(\R_+)$. Using the Duhamel formula, 
$$
\phi(t) = \mathrm{Re}\langle e^{i t \Delta} \underline{u}, g \rangle \mp \mathfrak g^2 \int_0^t \mathrm{Re} \langle  i e^{i(t-s)\Delta} g ,g \rangle  \frac{\phi(s)}{\sqrt{1+2 \mathfrak g^2 \phi(s)^2}} \di s,
$$
from which follows the estimate
$$
\abs{\phi(t)} \leq \abs{\langle e^{i t \Delta} \underline{u}, g \rangle} + \mathfrak g^2 \int_0^t \abs{\langle e^{i(t-s)\Delta} g ,g \rangle} \abs{\phi(s)} \di s.
$$
We estimate the free term by
\begin{equation}
    \abs{\langle e^{i t \Delta} \underline{u}, g \rangle} \leq \norm{e^{i t \Delta} \underline{u}}_{6} \norm{g}_{\frac{6}{5}} \leq (4 \pi \abs{t})^{-1} \norm{ \underline{u}}_{\frac{6}{5}}  \norm{g}_{\frac{6}{5}}.
\end{equation}
We then split the integral between $[0,t/2]$ and $[t/2,t]$. The first one can
be estimated using the dispersive estimates of the Schrödinger group and the
square integrability of $\phi$:
\begin{equation}
\begin{aligned}
     \int_{0}^{t/2} \abs{\langle e^{i(t-s)\Delta} g ,g \rangle} \abs{\phi(s)} \di s &\leq \int_{0}^{t/2} (4 \pi \abs{t-s})^{-\frac{3}{2}} \norm{g}_{L^1}^2 \abs{\phi(s)} \di s \\
     &\leq (2 \pi t)^{-3/2} \norm{g}_{L^1}^2 \int_0^{t/2} \abs{\phi(s)} \di s \\
     &\leq (2 \pi t)^{-3/2}\norm{g}_{L^1}^2   t^{1/2}   \norm{\phi}_{L^2(\R_+)}  =  \, \mathrm{O}(t^{-1}).
\end{aligned}
\end{equation}
where from the first to second line we used that $t-s \geq t/2$ under the integral. For the second integral we use that $\mathfrak g$ is small to take it to the left-hand side. Indeed, for $t/2 \leq s \leq t$, assuming that $t \geq 1$,
$$
\abs{\phi(s)} \leq \frac{ \norm{(1+\,\cdot \,) \phi(\,\cdot \,) }_{L^\infty([1,t])} }{1+s  } \leq \frac{\norm{(1+\,\cdot \,)\phi(\,\cdot \,) }_{L^\infty([1,t])}}{1+t/2}\;.
$$
Therefore,
\begin{multline}
    \int_{t/2}^t  \abs{\langle e^{i(t-s)\Delta} g ,g \rangle} \abs{\phi(s)} \di s \leq \frac{\norm{(1+\,\cdot \,) \phi(\,\cdot \,) }_{L^\infty([1,t])}}{1+t/2} \int_{\R_+} \abs{\langle e^{i s \Delta}g , g \rangle} \di s \\
\leq C  \frac{\norm{(1+\,\cdot \,) \phi(\,\cdot \,) }_{L^\infty([1,t])}}{1+t/2}, 
\end{multline}
since 
$$
 \int_{\R_+} \abs{\langle e^{i s \Delta}g , g \rangle} \di s \leq \norm{g}_{2}^2 + \int_1^\infty (4 \pi s)^{-3/2} \di s \norm{g}_{1}^2 < \infty.
$$
Putting everything together, we obtain
$$
(1+t) \abs{\phi(t)} \leq C (1+t) t^{-1} + \mathfrak g^2 C \norm{ (1+\,\cdot \,) \phi(\,\cdot \,) }_{L^\infty([1,t])},
$$
from which we can infer that
$$
\norm{ (1+\,\cdot \,) \phi(\,\cdot \,) }_{L^\infty([1,t])} \leq \frac{C}{1-C\mathfrak g^2},
$$
and therefore $ (1+\,\cdot \,) \phi(\,\cdot \,) $ is bounded in $[1,\infty)$, proving the expected decay for $\phi$ and thus for \eqref{limpipmt}.
Next, we need to examine $\mathcal R_\pm(t)$ with components denoted (we do not
stress the $\pm$ dependence)
$$
\mathcal R_j(t) = \begin{pmatrix}
    a_t & b_t \\ c_t & d_t
\end{pmatrix}.
$$
From the property of intertwining \eqref{intertwining}, as well as the estimate \eqref{limpipmt} above, we obtain
$$
[\mathcal R_\pm(t), \lvert \pm \rangle \langle \pm \rvert] = \mathrm{O}(t^{-1})\;.
$$
Computing the explicit commutator in $M_2(\C)$ it follows that the off diagonal terms must be small: $b_t,c_t = \mathcal O(t^{-1})$. Therefore, $\mathcal R_\pm(t)$ takes the form
\begin{equation}
    \label{asymptoticR_pm}
    \mathcal R_\pm(t) = a_t \lvert + \rangle \langle + \rvert + d_t \lvert - \rangle \langle - \rvert + \mathrm{O}(t^{-1})\;.
\end{equation}
By unitarity,
$$
I_2 = \mathcal R_\pm(t)^* \mathcal R_\pm(t) = \begin{pmatrix}
     \abs{a_t}^2 + \abs{c_t}^2 & \ast \\ \ast & \abs{b_t}^2 + \abs{d_t}^2 
\end{pmatrix}\;,
$$
thus $\abs{a_t}^2 = 1 + \mathrm{O}(t^{-1})$ and the same goes for
$d_t$. We finally come back to $D_5(t_\eps)$. Using \eqref{limpipmt} together
with \eqref{asymptoticR_pm} yields, \emph{e.g.}\ for the $+$ term,
\begin{equation*}
     \langle + \vert\mathcal{R}_+^{*}(t_{\varepsilon})\pi^{t_{\varepsilon}}_+ K\pi^{t_{\varepsilon}}_{+}\mathcal{R}_{+}(t_{\varepsilon})\rvert +  \rangle_{\mathbb{C}^2}  = \abs{\alpha_{t_\eps}}^2 \langle + \lvert K \rvert + \rangle  + \mathrm{O}(\norm{K} t_\eps^{-1}) = \langle + \lvert K \rvert + \rangle + \mathrm{O}(\norm{K} t_\eps^{-1})\;,
\end{equation*}
so that 
$$
D_5(t_\eps) = \mathrm{O}(\norm{K} t_\eps^{-1})\;.
$$
\end{proof}

\begin{proof}[Proof of Theorem~\ref{thm:4}]
  We can now combine the bounds above with the bound for $D_1$,
  the latter being given by a direct application of Proposition~\ref{prop:3}:
  \begin{equation*}
    D_1(\varepsilon,t_{\varepsilon})\leq 2\lVert K  \rVert_{}^{}C_1\,\sqrt{\varepsilon}\, e^{C_2\,t_{\varepsilon}}\;.
  \end{equation*}
  Combining all the bounds together one obtains that, indeed,
  \begin{multline*}
    d_{K,\xi}\Bigl(\:U_{\varepsilon}(\lvert\psi\rangle\langle\psi\rvert\otimes \lvert \underline{u}_{\varepsilon}\rangle\langle \underline{u}_{\varepsilon}\rvert)U_{\varepsilon}^{*}\;,\; \alpha_+\,\lvert+\rangle\langle+\rvert\,\delta_{u_+^{\infty}}\:+\: \alpha_-\,\lvert-\rangle\langle-\rvert\,\delta_{u_-^{\infty}}\:\Bigr)\\\leq \lVert K  \rVert_{}^{}\Bigl(2C_1\,\sqrt{\varepsilon}\, e^{C_2\,t_{\varepsilon}}+ C_3\bigl(\lVert \xi  \rVert_{H^1}^{}+\lVert \xi  \rVert_{H^1}^2\bigr)\,\sqrt{\varepsilon}\,e^{C_4 t_{\varepsilon}}+C_5\,e^{-\frac{C_6}{\varepsilon}}+C_7\lVert \xi  \rVert_1^{}\,t_{\varepsilon}^{-\frac{1}{2}}+C_8\,t_{\varepsilon}^{-1}\Bigr)\;.
  \end{multline*}
  This bound yields both results that are stated in the theorem, choosing
  $t_{\varepsilon}= -\frac{1}{2C_2+2C_4}\ln \varepsilon$.
\end{proof}

  \appendix

  \section{Technical estimates and commutator
    expansions} \label{appendixcommutatorestimates}

  In this appendix, we provide some technical lemmas, and prove
  Proposition~\ref{prop:4}. Lemma~\ref{lemmaprojectorweyl} can be understood
  as a localisation result for a wave packet. The two other ones are related
  with the commutation of $\mathrm{d}\Gamma_\eps(-\Delta)$ with operators defined as
  functions of $\phi_\eps(g)$ by the functional calculus. Let us start with
  Proposition~\ref{prop:4}, that we recall here for convenience.
  \begin{proposition*}
  Let $\mathscr{K}$ be a separable Hilbert space, and let $\psi_{\varepsilon},\varphi_{\varepsilon}\in \mathscr{F}(\mathscr{K})$ be
  normalized and such that there exist $C>0$ and $\delta>0$ so that
  \begin{equation*}
    \langle \psi_{\varepsilon}  , \bigl(\mathrm{d}\Gamma_{\varepsilon}(1)+1)^{\delta}\psi_{\varepsilon} \rangle_{\mathscr{F}(\mathscr{K})}+ \langle \varphi_{\varepsilon}  , \bigl(\mathrm{d}\Gamma_{\varepsilon}(1)+1)^{\delta}\varphi_{\varepsilon} \rangle_{\mathscr{F}(\mathscr{K})}\leq C\;.
  \end{equation*}
  Furthermore, suppose that $\mu,\nu\in \mathscr{P}(\mathscr{K})$ so that
  \begin{gather*}
    \lvert\psi_{\varepsilon}\rangle\langle\psi_{\varepsilon}\rvert \xrightarrow[\varepsilon\to 0]{\mathcal{F}}\mu\;\;\mbox{and}\;\;
\lvert\varphi_{\varepsilon}\rangle\langle\varphi_{\varepsilon}\rvert \xrightarrow[\varepsilon\to 0]{\mathcal{F}}\nu\;.
  \end{gather*}
 If in addition $\mu\perp\nu$, then:
  \begin{itemize}
    \setlength{\itemsep}{1mm}
  \item For all $\alpha,\beta\in \mathbb{C}$, $\displaystyle{
      \lvert \alpha \psi_{\varepsilon}+\beta\varphi_{\varepsilon}\rangle\langle\alpha\psi_{\varepsilon}+\beta\varphi_{\varepsilon}\rvert\xrightarrow[\varepsilon\to 0]{\mathscr{F}} \lvert \alpha  \rvert_{}^2\mu+\lvert \beta  \rvert_{}^2\nu\;;
   }$
  \item For all $\xi\in \mathscr{K}$,
    $\displaystyle{
      \lim_{\varepsilon\to 0}\langle \varphi_{\varepsilon}  ,e^{i\phi_{\varepsilon}(\xi)} \psi_{\varepsilon} \rangle_{\mathscr{F}(\mathscr{K})}=0\;.}$
  \end{itemize}
\end{proposition*}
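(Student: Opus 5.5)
The plan is to reduce both items to the vanishing of the cross term, and to prove that vanishing by localizing the two states in phase space, in the spirit of \citep[Prop.\ 1.5]{gerard1997cpam}.

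\textbf{Step 1: reduction to the second item.} Expanding the rank-one projector gives
\begin{equation*}
  \langle \alpha\psi_\varepsilon+\beta\varphi_\varepsilon , e^{i\phi_\varepsilon(\xi)}(\alpha\psi_\varepsilon+\beta\varphi_\varepsilon)\rangle
  = |\alpha|^2\langle\psi_\varepsilon, e^{i\phi_\varepsilon(\xi)}\psi_\varepsilon\rangle+|\beta|^2\langle\varphi_\varepsilon, e^{i\phi_\varepsilon(\xi)}\varphi_\varepsilon\rangle+2\,\mathrm{Re}\text{-type cross terms}\;.
\end{equation*}
The diagonal terms converge to $|\alpha|^2\hat\mu+|\beta|^2\hat\nu$ by assumption. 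The cross terms are $\bar\alpha\beta\langle\psi_\varepsilon,e^{i\phi_\varepsilon(\xi)}\varphi_\varepsilon\rangle$ and its analogue with the roles of $\psi_\varepsilon,\varphi_\varepsilon$ exchanged. Since $(e^{i\phi_\varepsilon(\xi)})^*=e^{i\phi_\varepsilon(-\xi)}$, both are controlled by the second item. So only that item needs proof.

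\textbf{Step 2: a localizing bump.} Fix $\eta>0$. By mutual singularity there is a Borel set $A$ with $\mu(A^c)=0$ and $\nu(A)=0$. Using inner regularity of Radon measures on the separable Hilbert space $\mathscr{K}$, and then cylindrical approximation, choose:
\begin{itemize}
  \item a finite-dimensional subspace with orthogonal projection $P$;
  \item a smooth compactly supported function $\chi:P\mathscr{K}\to[0,1]$ such that $\int|1-\chi(Px)|^2\,d\mu<\eta$ and $\int|\chi(Px)|^2\,d\nu<\eta$.
\end{itemize}
The finite-dimensional reduction is possible because the pushforwards $P_\#\mu$ and $P_\#\nu$ become asymptotically singular as $P\uparrow 1$. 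This step is the main obstacle: two singular measures may project to non-singular ones for every fixed $P$. I would handle it by approximating the indicator of a compact $K\subset A$ (with $\mu(K)$ close to $1$ and $\nu$ of a neighbourhood of $K$ small) by cylindrical functions, and using the compactness of $K$ to find $P$ with $\sup_{K}\|(1-P)x\|$ small.

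\textbf{Step 3: quantization and the limit.} Let $\chi^{\mathrm{Wick}}_\varepsilon$ (or the Weyl quantization) of the cylindrical symbol $\chi\circ P$ be defined via the Fourier representation $\int\hat\chi(\zeta)\,e^{i\phi_\varepsilon(\zeta)}\,d\zeta$ on $P\mathscr{K}$. This operator is bounded uniformly in $\varepsilon$. Its commutator with $e^{i\phi_\varepsilon(\xi)}$ is $O(\varepsilon)$ in norm, by the Weyl relations and smoothness of $\chi$. The standard results of \citep{ammari2008} then give:
\begin{itemize}
  \item $\langle\psi_\varepsilon,|1-\chi_\varepsilon|^2\psi_\varepsilon\rangle\to\int|1-\chi|^2\,d\mu$ and $\langle\varphi_\varepsilon,|\chi_\varepsilon|^2\varphi_\varepsilon\rangle\to\int|\chi|^2\,d\nu$. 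Here the Fourier convergence suffices for cylindrical symbols, and the uniform moment bound with $\delta>0$ guarantees no mass is lost and that $\mu,\nu$ are probabilities.
\end{itemize}
Now split
\begin{equation*}
  \langle\varphi_\varepsilon,e^{i\phi_\varepsilon(\xi)}\psi_\varepsilon\rangle
  =\langle\varphi_\varepsilon,e^{i\phi_\varepsilon(\xi)}(1-\chi_\varepsilon)\psi_\varepsilon\rangle
  +\langle\chi_\varepsilon^*\varphi_\varepsilon,e^{i\phi_\varepsilon(\xi)}\psi_\varepsilon\rangle+O(\varepsilon)\;.
\end{equation*}
By Cauchy--Schwarz the limsup of the modulus is at most $2\sqrt{\eta}$ (the symbol calculus lets us replace $\chi_\varepsilon\chi_\varepsilon^*$ by the quantization of $|\chi|^2$ up to $o(1)$). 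Since $\eta$ is arbitrary, the cross term vanishes, which proves the second item and hence, by Step 1, the first.
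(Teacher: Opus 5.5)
Your argument is correct, but it takes a different route from the paper's.

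\textbf{The paper's route.} The paper does not localize at all. It uses the moment bound, in the form of a trace-norm estimate on $(\mathrm{d}\Gamma_\varepsilon(1)+1)^{\delta/2}\lvert\psi_\varepsilon\rangle\langle\varphi_\varepsilon\rvert(\mathrm{d}\Gamma_\varepsilon(1)+1)^{\delta/2}$, to apply the compactness result \citep[Prop.\ 6.4]{ammari2008} to the non-positive operator $\lvert\psi_\varepsilon\rangle\langle\varphi_\varepsilon\rvert$. This gives, along a subsequence, a complex Wigner measure $\eta_{\mathbb{C}}$ for the cross term. The limits of $\lvert\alpha\psi_\varepsilon+\beta\varphi_\varepsilon\rangle\langle\alpha\psi_\varepsilon+\beta\varphi_\varepsilon\rvert$ are positive for all $\alpha,\beta$, so the $2\times 2$ matrix of measures is positive on every Borel set. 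Hence $\lvert\eta_{\mathbb{C}}(\mathfrak{B})\rvert^2\le\mu(\mathfrak{B})\nu(\mathfrak{B})$, so $\eta_{\mathbb{C}}\ll\mu,\nu$, and $\mu\perp\nu$ forces $\eta_{\mathbb{C}}=0$ along every subsequence. This is short and structural; it even shows the cross measure is always absolutely continuous with respect to both diagonal ones. But it rests on the existence theory of complex Wigner measures, and that is where the moment bound enters.

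\textbf{Your route.} Your G\'erard-type cutoff argument only uses the Weyl relations on the finite-dimensional space $P\mathscr{K}$:
\begin{itemize}
\item the bound $\lVert\chi_\varepsilon\rVert\le\lVert\hat\chi\rVert_{L^1}$;
\item an $\mathrm{O}(\varepsilon)$ remainder between $\chi_\varepsilon^2$ and the quantization of $\chi^2$;
\item an $\mathrm{O}(\varepsilon)$ commutator with $e^{i\phi_\varepsilon(\xi)}$, valid for any $\xi\in\mathscr{K}$;
\item convergence of $\langle\psi_\varepsilon,\chi_\varepsilon\psi_\varepsilon\rangle$ by dominated convergence in $\zeta$.
\end{itemize}
It needs no subsequence extraction. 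Since $\mu,\nu$ are assumed to be Borel probabilities, it does not actually use the moment bound, so your remark about it is harmless but idle. It also gives the second item for all $\xi\in\mathscr{K}$ even if Fourier convergence is only known on a dense subspace, by choosing $P$ with range in that subspace.

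\textbf{What Step 2 still needs.} The price of your route is Step 2, which you should complete. Making $\sup_{K}\lVert(1-P)x\rVert$ small only controls $\mu$. The cylinder $\{x:\mathrm{dist}(Px,PK)<r\}$ is unbounded in the directions orthogonal to $P\mathscr{K}$ and can carry $\nu$-mass, so $P$ must also be chosen large depending on $\nu$. A clean fix:
\begin{itemize}
\item take $f=\max\{0,1-\mathrm{dist}(\cdot,K)/r\}$, with $r$ chosen so that $\nu(\{\mathrm{dist}(\cdot,K)<r\})$ is small;
\item note that $f\circ P_n\to f$ pointwise when $P_n\to 1$ strongly, hence $f\circ P_n\to f$ in $L^2(\mu+\nu)$ by dominated convergence;
\item replace $f|_{P_n\mathscr{K}}$ by a $C^\infty_c$ function close to it in $L^2\bigl((P_n)_\#(\mu+\nu)\bigr)$.
\end{itemize}
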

  \begin{proof}
    Firstly, by \citep[][Thm.\ 6.2]{ammari2008} observe that given
    $\psi_{\varepsilon},\varphi_{\varepsilon}$ satisfying the above bounds, there exist measures $\mu,\nu$
    such that the convergence in the sense of Fourier transforms holds, at
    least up to a subsequence extraction.
  Consider now the states
  \begin{equation*}
    \Gamma_{\varepsilon}(\alpha,\beta):= \lvert \alpha\psi_{\varepsilon}+\beta\varphi_{\varepsilon}\rangle\langle\alpha\psi_{\varepsilon}+\beta\varphi_{\varepsilon}\rvert\;,
  \end{equation*}
  as $\alpha,\beta\in \mathbb{C}$. Such states can be described by a projection-valued matrix
  \begin{equation*}
    M[\psi_{\varepsilon},\varphi_{\varepsilon}]=
    \begin{pmatrix}
      \lvert\psi_{\varepsilon}\rangle\langle\psi_{\varepsilon}\rvert & \lvert\psi_{\varepsilon}\rangle\langle\varphi_{\varepsilon}\rvert\\
      \lvert\varphi_{\varepsilon}\rangle\langle\psi_{\varepsilon}\rvert & \lvert\varphi_{\varepsilon}\rangle\langle\varphi_{\varepsilon}\rvert
    \end{pmatrix}\;,
  \end{equation*}
  since $\displaystyle{\Gamma_{\varepsilon}(\alpha,\beta)= \langle (\alpha,\beta)  , M[\psi_{\varepsilon},\varphi_{\varepsilon}](\alpha,\beta) \rangle_{\mathbb{C}^2}} $.
%  \begin{equation*}
%    \Gamma_{\varepsilon}(\alpha,\beta)= \langle (\alpha,\beta)  , M[\psi_{\varepsilon},\varphi_{\varepsilon}](\alpha,\beta) \rangle_{\mathbb{C}^2}\;.
 % \end{equation*}
  Furthermore, the state $\Gamma_{\varepsilon}$ and the ``complex states'' $\lvert\psi_{\varepsilon}\rangle\langle\varphi_{\varepsilon}\rvert$
  (non-positive trace class operators) satisfy:
  \begin{gather}
    \label{eq:11}
    \mathrm{tr}_{\mathscr{F}(\mathscr{K})}\bigl(\Gamma_{\varepsilon}(\mathrm{d}\Gamma_{\varepsilon}(1)+1)^{\delta}\bigr)\leq 2(\lvert \alpha  \rvert_{}^2+\lvert \beta  \rvert_{}^2)C\;,\\
    \label{eq:12}
    \mathrm{tr}_{\mathscr{F}(\mathscr{K})}\lvert(\mathrm{d}\Gamma_{\varepsilon}(1)+1)^{\delta/2}\lvert\psi_{\varepsilon}\rangle\langle\varphi_{\varepsilon}\rvert (\mathrm{d}\Gamma_{\varepsilon}(1)+1)^{\delta/2}\rvert\leq C\;.
  \end{gather}
  Let us start from the second inequality:
  \begin{multline*}
    \mathrm{tr}_{\mathscr{F}(\mathscr{K})}\lvert(\mathrm{d}\Gamma_{\varepsilon}(1)+1)^{\delta/2}\lvert\psi_{\varepsilon}\rangle\langle\varphi_{\varepsilon}\rvert (\mathrm{d}\Gamma_{\varepsilon}(1)+1)^{\delta/2}\rvert=  \sqrt{\langle \psi_{\varepsilon}  , (\mathrm{d}\Gamma_{\varepsilon}(1)+1)^{\delta}\psi_{\varepsilon} \rangle_{\mathscr{F}_{\mathscr{K}}}}\\ \mathrm{tr}_{\mathscr{F}(\mathscr{K})}\sqrt{(\mathrm{d}\Gamma_{\varepsilon}(1)+1)^{\delta/2}\lvert\varphi_{\varepsilon}\rangle\langle\varphi_{\varepsilon}\rvert (\mathrm{d}\Gamma_{\varepsilon}(1)+1)^{\delta/2}}\\= \sqrt{\langle \psi_{\varepsilon}  , (\mathrm{d}\Gamma_{\varepsilon}(1)+1)^{\delta}\psi_{\varepsilon} \rangle_{\mathscr{F}_{\mathscr{K}}}\langle \varphi_{\varepsilon}  , (\mathrm{d}\Gamma_{\varepsilon}(1)+1)^{\delta}\varphi_{\varepsilon} \rangle_{\mathscr{F}_{\mathscr{K}}}}\leq C\;.
  \end{multline*}
  The first is then a consequence of the latter and of the writing
  $\Gamma_{\varepsilon}(\alpha,\beta)= \langle (\alpha,\beta) , M[\psi_{\varepsilon},\varphi_{\varepsilon}](\alpha,\beta) \rangle_{\mathbb{C}^2}$.

  Thanks to \eqref{eq:12}, the complex state $\lvert\psi_{\varepsilon}\rangle\langle\varphi_{\varepsilon}\rvert$ satisfies the hypothesis of
  \citep[][Prop.\ 6.4]{ammari2008}, so there exists a sequence $\varepsilon_n\to 0$ and
  a complex measure $\eta_{\mathbb{C}}$ such that in the sense of Fourier transforms,
  \begin{equation*}
    M[\psi_{\varepsilon_n},\varphi_{\varepsilon_n}]\xrightarrow[\varepsilon_n\to 0]{\mathcal{F}}
    \begin{pmatrix}
      \mu&\eta_{\mathbb{C}}\\\bar{\eta}_{\mathbb{C}}&\nu
    \end{pmatrix}=: M[\mu,\nu,\eta_{\mathbb{C}}]\;;
  \end{equation*}
  that yields
  \begin{equation*}
    \Gamma_{\varepsilon_n}(\alpha,\beta)\xrightarrow[\varepsilon_n\to 0]{\mathcal{F}} \langle (\alpha,\beta)  , M[\mu,\nu,\eta_{\mathbb{C}}](\alpha,\beta) \rangle_{\mathbb{C}^2}\;.
  \end{equation*}
  However, in view of \eqref{eq:11}, all Wigner measures of $\Gamma_{\varepsilon}(\alpha,\beta)$ must
  be Borel Radon measures; hence for all Borel measurable $\mathfrak{B}\subseteq \mathscr{K}$,
  \begin{equation*}
   M[\mu,\nu,\eta_{\mathbb{C}}](\mathfrak{B}):=\begin{pmatrix}
      \mu(\mathfrak{B})&\eta_{\mathbb{C}}(\mathfrak{B})\\\bar{\eta}_{\mathbb{C}}(\mathfrak{B})&\nu(\mathfrak{B})
    \end{pmatrix}  
  \end{equation*}
  must be positive-definite, yielding
  \begin{equation*}
    \lvert \eta_{\mathbb{C}}(\mathfrak{B})  \rvert_{}^2\leq \mu(\mathfrak{B})\nu(\mathfrak{B})\;.
  \end{equation*}
  Hence, $\eta_{\mathbb{C}}\ll\mu$ and $\eta_{\mathbb{C}}\ll\nu$. Therefore, $\mu\perp\nu$ yields that $\eta_{\mathbb{C}}=0$,
  concluding the proof.% ;
  % this implies that:
  % \begin{itemize}
  % \item along all subsequences $\varepsilon_n\to 0$ of convergence in the sense of
  %   Fourier transforms,
  %   \begin{equation*}
  %     \Gamma_{\varepsilon_n}(\alpha,\beta)\xrightarrow[\varepsilon_n\to 0]{\mathcal{F}} \lvert \alpha  \rvert_{}^2\mu+\lvert \beta  \rvert_{}^2\nu\;,
  %   \end{equation*}
  %   and thus
  %   \begin{equation*}
  %     \Gamma_{\varepsilon}(\alpha,\beta)\xrightarrow[\varepsilon\to 0]{\mathcal{F}} \lvert \alpha  \rvert_{}^2\mu+\lvert \beta  \rvert_{}^2\nu\;;
  %   \end{equation*}
  % \item for all $\xi\in \mathscr{K}$,
  %   \begin{equation*}
  %     \lim_{\varepsilon\to 0} \langle \varphi_{\varepsilon}  , e^{i\phi_{\varepsilon}(\xi)}\psi_{\varepsilon} \rangle_{\mathscr{F}(\mathscr{K})}=0\;.
  %   \end{equation*}
  % \end{itemize}  
  \end{proof}

  \begin{lem} \label{lemmaprojectorweyl} For every $u \in L^2$ and $ \varphi_{\varepsilon} \in
    D((\mathrm{d}\Gamma_{\varepsilon}(1)+1)^{1/2})\subset \mathscr{F}(L^2)$,
$$
\norm{ \Bigl(\phi_\eps(g) - \sqrt{2}\,\mathrm{Re}\,\langle u,g \rangle_2\Bigr) e^{\frac{1}{\varepsilon}(a^{*}_{\varepsilon}(u)-a_{\varepsilon}(u))} \varphi_{\varepsilon}
} \leq \norm{g}_2 \norm{(\mathrm{d}\Gamma_\varepsilon(1)+\varepsilon)^{1/2} \varphi_{\varepsilon}}\; .
$$
With the same assumptions as above, for every $\ket{\psi} \in \C^2$ and with
$\varepsilon$-uniform omitted constants,
$$
\norm{ (\pi_\pm^\eps - \pi_\pm(u)) \ket{\psi} \otimes e^{\frac{1}{\varepsilon}(a^{*}_{\varepsilon}(u)-a_{\varepsilon}(u))} \varphi_{\varepsilon} }
\lesssim  \norm{g}_2 \norm{(\mathrm{d}\Gamma_{\varepsilon}(1)+\varepsilon)^{1/2} \varphi_{\varepsilon}}\;,
$$
and
$$
\norm{ (\Lambda_{-+}^\eps - \Lambda_{-+}(u)) \ket{\psi} \otimes
  e^{\frac{1}{\varepsilon}(a^{*}_{\varepsilon}(u)-a_{\varepsilon}(u))} \varphi_{\varepsilon}} \lesssim \norm{g}_{2} \|\Delta g\|_{2}
\norm{(\mathrm{d}\Gamma_{\varepsilon}(1)+\varepsilon)^{1/2} \varphi_{\varepsilon}}\;.
$$
\end{lem}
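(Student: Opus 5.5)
The plan is to reduce everything to the conjugation identity for Weyl operators,
\begin{equation*}
  e^{-\frac{1}{\varepsilon}(a^{*}_{\varepsilon}(u)-a_{\varepsilon}(u))}\,\phi_\eps(g)\,e^{\frac{1}{\varepsilon}(a^{*}_{\varepsilon}(u)-a_{\varepsilon}(u))}=\phi_\eps(g)+\sqrt{2}\,\mathrm{Re}\,\langle u,g\rangle_2\;,
\end{equation*}
which is the same identity already used in the proof of Proposition~\ref{propfieldexpansion}. Applying it, the first estimate becomes $\norm{e^{\frac{1}{\varepsilon}(a^{*}_{\varepsilon}(u)-a_{\varepsilon}(u))}\phi_\eps(g)\varphi_\eps}=\norm{\phi_\eps(g)\varphi_\eps}$ by unitarity. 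We then conclude with the standard number-operator bound $\norm{\phi_\eps(g)\Phi}\le \sqrt2\norm{g}_2\norm{(\mathrm{d}\Gamma_\eps(1)+\eps)^{1/2}\Phi}$. The constant $\sqrt2$ can be sharpened to $1$ by splitting $\phi_\eps=\frac{1}{\sqrt2}(a^*_\eps+a_\eps)$ and using $\norm{a_\eps(g)\Phi}\le\norm g_2\norm{\mathrm{d}\Gamma_\eps(1)^{1/2}\Phi}$ and $\norm{a_\eps^*(g)\Phi}\le\norm g_2\norm{(\mathrm{d}\Gamma_\eps(1)+\eps)^{1/2}\Phi}$.

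For the projectors, write $\pi_\pm^\eps=F_\pm(\phi_\eps(g))$ and $\pi_\pm(u)=F_\pm(\sqrt2\mathrm{Re}\langle u,g\rangle_2)$. Here
\begin{equation*}
  F_\pm(x)=\tfrac12\bigl(I_2\pm(\sigma_z+\mathfrak g x\sigma_x)(1+\mathfrak g^2x^2)^{-1/2}\bigr)
\end{equation*}
is a $M_2(\C)$-valued function of a real variable with uniformly bounded derivative, $\sup_x\norm{F_\pm'(x)}\lesssim\mathfrak g$. By joint functional calculus (the matrix part commutes with the scalar self-adjoint operator $\phi_\eps(g)$), we get the operator inequality $\norm{(F_\pm(\phi_\eps(g))-F_\pm(c))\Phi}\le \sup\norm{F'}\,\norm{(\phi_\eps(g)-c)\Phi}$ for any real $c$. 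This follows, for instance, from the spectral theorem: the scalar functions $|F_{ij}(x)-F_{ij}(c)|\le L|x-c|$. Taking $c=\sqrt2\mathrm{Re}\langle u,g\rangle_2$ and $\Phi=\ket\psi\otimes e^{\frac1\eps(a^*_\eps(u)-a_\eps(u))}\varphi_\eps$ reduces the second estimate to the first.

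For $\Lambda_{-+}$, use the explicit forms \eqref{def:Lambda} and \eqref{def:Lambda(u)}. Write
\begin{equation*}
  \Lambda^\eps_{-+}=\tfrac{\mathfrak g}{2i}\,\phi_\eps(i\Delta g)\,G(\phi_\eps(g))\;,\qquad G(x)=(1+\mathfrak g^2x^2)^{-1/2}F_-(x)\sigma_xF_+(x)\;,
\end{equation*}
with $G$ bounded and Lipschitz, and similarly for $\Lambda_{-+}(u)$ with the classical values. The prefactor normalization should be matched against \eqref{def:Lambda(u)}, where the same $\tfrac{1}{2i}\phi$ versus $\sqrt2\mathrm{Re}\langle u,i\Delta g\rangle$ correspondence appears. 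We then telescope:
\begin{equation*}
  \phi_\eps(i\Delta g)G(\phi_\eps(g))-c'G(c)=\bigl(\phi_\eps(i\Delta g)-c'\bigr)G(\phi_\eps(g))+c'\bigl(G(\phi_\eps(g))-G(c)\bigr)\;,
\end{equation*}
with $c'$ the classical value of $\phi_\eps(i\Delta g)$.

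The second piece is handled as above, with the $u$-dependent constant $|c'|$ absorbed. In the first piece, $G(\phi_\eps(g))$ does not commute with the field, and this is the main obstacle. I would handle it by conjugating with the Weyl operator, so that the piece becomes
\begin{equation*}
  \phi_\eps(i\Delta g)\,G\bigl(\phi_\eps(g)+c\bigr)\varphi_\eps\;.
\end{equation*}
Commuting $\phi_\eps(i\Delta g)$ past $G(\cdot)$ produces a commutator of size $\mathrm O(\eps)\norm{G'}\norm{\Delta g}\norm g$ by the CCR, since $[\phi_\eps(f),\phi_\eps(g)]=i\eps\,\mathrm{Im}\langle f,g\rangle$. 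What remains is $\norm{\phi_\eps(i\Delta g)\varphi_\eps}\lesssim\norm{\Delta g}_2\norm{(\mathrm{d}\Gamma_\eps(1)+\eps)^{1/2}\varphi_\eps}$. The leftover $\eps\norm{\varphi_\eps}$ is dominated by $\sqrt\eps\norm{(\mathrm{d}\Gamma_\eps(1)+\eps)^{1/2}\varphi_\eps}$, which gives the claimed bound with constants depending on $\norm g_2$, $\norm{\Delta g}_2$ and $u$ but not on $\eps$.
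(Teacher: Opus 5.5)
Your route is the paper's: Weyl conjugation plus unitarity for the first bound, Lipschitz functional calculus in $\phi_\varepsilon(g)$ for $\pi_\pm$, and a splitting for $\Lambda_{-+}$. One step, however, is false: the claim that $\sqrt2$ can be sharpened to $1$.

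Your own splitting gives $\frac{1}{\sqrt2}\bigl(\lVert g\rVert_2+\lVert g\rVert_2\bigr)=\sqrt2\,\lVert g\rVert_2$, and no argument can do better. The conjugation identity turns the left side of the first inequality into $\lVert\phi_\varepsilon(g)\varphi_\varepsilon\rVert$ for every $u$. So take $\varphi_\varepsilon=e^{\frac1\varepsilon(a^*_\varepsilon(\lambda g)-a_\varepsilon(\lambda g))}\Omega$ with $\lambda>0$; then
\[
\lVert\phi_\varepsilon(g)\varphi_\varepsilon\rVert^2=2\lambda^2\lVert g\rVert_2^4+\tfrac{\varepsilon}{2}\lVert g\rVert_2^2\;,\qquad \lVert g\rVert_2^2\,\bigl\lVert(\mathrm{d}\Gamma_\varepsilon(1)+\varepsilon)^{1/2}\varphi_\varepsilon\bigr\rVert^2=\lVert g\rVert_2^2\bigl(\lambda^2\lVert g\rVert_2^2+\varepsilon\bigr)\;.
\]
The ratio of these tends to $2$ as $\lambda\to\infty$, and it is already $5/4$ at $\lambda\lVert g\rVert_2=\sqrt\varepsilon$. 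So the first inequality is false as printed. The paper's proof asserts the constant $\lVert g\rVert_2$ without justification, while its proof of Proposition~\ref{propfieldexpansion} uses $\sqrt2\lVert g\rVert_2$.

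What you can prove is
$\lVert\phi_\varepsilon(g)\varphi_\varepsilon\rVert^2\le\lVert a_\varepsilon(g)\varphi_\varepsilon\rVert^2+\lVert a^*_\varepsilon(g)\varphi_\varepsilon\rVert^2\le\lVert g\rVert_2^2\,\langle\varphi_\varepsilon,(2\mathrm{d}\Gamma_\varepsilon(1)+\varepsilon)\varphi_\varepsilon\rangle$,
using $\lVert a^*_\varepsilon(g)\Phi\rVert^2=\lVert a_\varepsilon(g)\Phi\rVert^2+\varepsilon\lVert g\rVert_2^2\lVert\Phi\rVert^2$. This gives the bound with $\sqrt2\lVert g\rVert_2$. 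Drop the sharpening claim and state the first bound that way. The other two estimates are only up to constants, so nothing downstream changes.

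For the third estimate, do not leave the normalization as something that ``should be matched''. As printed, \eqref{def:Lambda(u)} carries $\mathfrak g\sqrt2\,\mathrm{Re}\langle u,i\Delta g\rangle_2$. The symbol of $\Lambda^\varepsilon_{-+}$ carries $\frac{\mathfrak g}{2i}\sqrt2\,\mathrm{Re}\langle u,i\Delta g\rangle_2$. So the ``same correspondence'' does not appear there.

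With the literal definition the estimate fails. Take $\varphi_\varepsilon=\Omega$, $\pi_+(u)\psi\neq0$ and $\mathrm{Re}\langle u,i\Delta g\rangle_2\neq0$: the left side stays of order one while the right side is $\mathrm{O}(\sqrt\varepsilon)$. Your argument proves the estimate for $\Lambda_{-+}(u)$ defined as the symbol, i.e.\ with the factor $\frac{1}{2i}$; say so explicitly. An imaginary prefactor is also what makes $\Lambda_{+-}(u)-\Lambda_{-+}(u)$ Hermitian, as the unitarity of $\mathcal R_\pm$ in Lemma~\ref{propertiesR_j} requires.

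With that convention your proof is complete and tighter than the paper's. Your splitting $\bigl(\phi_\varepsilon(i\Delta g)-c'\bigr)G(\phi_\varepsilon(g))+c'\bigl(G(\phi_\varepsilon(g))-G(c)\bigr)$ is exact. The paper instead uses $G(c)$ in the first term, and so drops the cross term $\bigl(\phi_\varepsilon(i\Delta g)-c'\bigr)\bigl(G(\phi_\varepsilon(g))-G(c)\bigr)$. Controlling that term needs exactly your commutator $[\phi_\varepsilon(i\Delta g),G(\phi_\varepsilon(g)+c)]=i\varepsilon\,\mathrm{Im}\langle i\Delta g,g\rangle_2\,G'(\phi_\varepsilon(g)+c)$.
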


\begin{proof}
  We have
  \begin{multline*}
    \norm{ (\phi_\eps(g) - \sqrt{2}\,\mathrm{Re}\,\langle u,g \rangle_2) e^{\frac{1}{\varepsilon}(a^{*}_{\varepsilon}(u)-a_{\varepsilon}(u))} \varphi_{\varepsilon} } \\=\norm{e^{\frac{1}{\varepsilon}(a_{\varepsilon}(u)-a^{*}_{\varepsilon}(u))}  (\phi_\eps(g) - \sqrt{2}\,\mathrm{Re}\,\langle u,g \rangle_2) e^{\frac{1}{\varepsilon}(a^{*}_{\varepsilon}(u)-a_{\varepsilon}(u))} \varphi_{\varepsilon} }\\
    = \norm{\phi_\eps(g) \varphi_{\varepsilon}}  \leq \norm{g}_{2} \norm{(\mathrm{d}\Gamma_\eps(1)+\eps)^{1/2} \varphi_{\varepsilon}}\;.
  \end{multline*}
  For the second bound, we simply use the fact that $\pi_\pm(u)$ is Lipschitz as
  a function of $\sqrt{2}\,\mathrm{Re}\langle u, g\rangle_2$. Therefore, by functional
  calculus one can extract the Lipschitz bound and use the result just above:
  \begin{multline*}
    \norm{ (\pi_\pm^\eps - \pi_\pm(u)) \ket{\psi} \otimes e^{\frac{1}{\varepsilon}(a^{*}_{\varepsilon}(u)-a_{\varepsilon}(u))} \varphi_{\varepsilon} } \\\lesssim \norm{ \ket{\psi} \otimes (\phi_\eps(g) - \sqrt{2}\,\mathrm{Re}\langle u,g \rangle_2) e^{\frac{1}{\varepsilon}(a^{*}_{\varepsilon}(u)-a_{\varepsilon}(u))} \varphi_{\varepsilon} } \lesssim \norm{g}_{2} \norm{(\mathrm{d}\Gamma_{\varepsilon}(1)+\varepsilon)^{1/2} \varphi_{\varepsilon}}\;.
  \end{multline*}
  For the last point, we use~\eqref{def:Lambda} and write
  \begin{multline*}
    \Lambda^\eps_{+-}-\Lambda_{-+}(u)= \frac{\mathfrak g}{2 i}  \phi_\eps(i \Delta g) (1+\mathfrak g^2\phi_\eps(g)^2)^{-1/2} \pi_+^\eps \sigma_x \pi_-^\eps\\
    - \mathfrak g \sqrt{2}\, \mathrm{Re}\,\langle u, i \Delta g \rangle_2  (1+2 \mathfrak  g^2(\mathrm{Re}\,\langle u,g \rangle_2)^2)^{-1/2} \pi_+(u) \sigma_x \pi_-(u)\\
    = \left(\frac{\mathfrak g}{2 i}  \phi_\eps(i \Delta g) -\mathfrak g \sqrt{2}\, \mathrm{Re}\,\langle u, i \Delta g \rangle_2 \right)\Bigl(1+2 \mathfrak  g^2(\mathrm{Re}\,\langle u,g \rangle_2)^2\Bigr)^{-1/2} \pi_+(u) \sigma_x \pi_-(u)\\
     +\mathfrak g \sqrt{2}\, \mathrm{Re}\,\langle u, i \Delta g \rangle_2\Bigl(
      (1+\mathfrak g^2\phi_\eps(g)^2)^{-1/2} \pi_+^\eps \sigma_x \pi_-^\eps
      \\-  (1+2 \mathfrak  g^2(\mathrm{Re}\,\langle u,g \rangle_2)^2)^{-1/2} \pi_+(u) \sigma_x \pi_-(u)
    \Bigr)
  \end{multline*}
  and we conclude as before.
\end{proof}

\begin{lem} \label{lemmacommutatorphiH} We have, on $D(\mathrm{d}\Gamma_{\varepsilon}(1))$,
  \begin{multline*}
    \frac{i}{\eps} \left[(1+\mathfrak g^2\phi_\eps(g)^2)^{-1/2}, \di \Gamma_\eps(- \Delta) \right] = \frac{\mathfrak g^2}{2}\Bigl(\phi_\eps(g) \phi_\eps(- i \Delta g) +  \phi_\eps(- i \Delta g) \phi_\eps(g) \Bigr) \frac{1}{(1+\mathfrak g^2\phi_\eps(g)^2)^{3/2}} \\
    -\frac{3 i \eps \mathfrak g^4}{2}\, \langle g, - \Delta g \rangle_2\:  \frac{\phi_\eps(g)^2}{(1+\mathfrak g^2\phi_\eps(g)^2)^{5/2}}\;.
  \end{multline*}
\end{lem}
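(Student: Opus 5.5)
The identity is a commutator of a bounded function of $\phi_\eps(g)$ with $\di\Gamma_\eps(-\Delta)$. The only input is the basic relation \eqref{commutatordiGammaphi},
\[
[\phi_\eps(g),\di\Gamma_\eps(-\Delta)]=\tfrac{\eps}{i}\phi_\eps(i\Delta g),
\]
together with the CCR
\[
[\phi_\eps(f),\phi_\eps(h)]=i\eps\,\mathrm{Im}\langle f,h\rangle_2 .
\]
I would write $F(x):=(1+\mathfrak g^2x^2)^{-1/2}$ and compute $[F(\phi_\eps(g)),\di\Gamma_\eps(-\Delta)]$ exactly, not merely to leading order. The exact form matters because the second term on the right-hand side is a genuine $\eps$-correction that comes from the nonvanishing commutator $[\phi_\eps(g),\phi_\eps(i\Delta g)]$.

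First, set $X:=\phi_\eps(g)$ and $D:=\di\Gamma_\eps(-\Delta)$. Using \eqref{commutatordiGammaphi}, $[X,D]=\frac{\eps}{i}Y$ with $Y:=\phi_\eps(i\Delta g)$. The CCR then give
\[
[X,Y]=i\eps\,\mathrm{Im}\langle g,i\Delta g\rangle_2=i\eps\langle g,\Delta g\rangle_2=: i\eps c,
\]
which is a scalar. The pair $(X,Y)$ therefore generates a Heisenberg algebra. The cleanest exact route is through resolvents. For $z\notin\R$,
\[
[(X-z)^{-1},D]=-(X-z)^{-1}[X,D](X-z)^{-1}=-\tfrac{\eps}{i}(X-z)^{-1}Y(X-z)^{-1}.
\]
Since $[X,Y]$ is central, $(X-z)^{-1}Y=Y(X-z)^{-1}-i\eps c\,(X-z)^{-2}$. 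This yields
\[
[(X-z)^{-1},D]=-\tfrac{\eps}{i}\Bigl(\tfrac12\{Y,(X-z)^{-2}\}+\text{(central term)}\,(X-z)^{-3}\Bigr).
\]
Symmetrising is preferable here, because it keeps the leading term in the form $\frac12(XY+YX)$ that appears in the claim. Equivalently, for any smooth $F$ one gets exactly
\[
[F(X),D]=-\tfrac{\eps}{i}\cdot\tfrac12\bigl(YF'(X)+F'(X)Y\bigr).
\]
This holds because $[Y,F'(X)]=-i\eps cF''(X)$ is central-type and cancels under symmetrisation. One can verify it on polynomials and extend via Helffer--Sjöstrand, or via Fourier transforms $e^{isX}$ using $e^{-isX}De^{isX}=D+s\eps Y\cdot(\text{const})+\dots$, since the Weyl-type conjugation is exact for the quadratic $D$.

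Next, I would substitute $F'(x)=-\mathfrak g^2x(1+\mathfrak g^2x^2)^{-3/2}$. To reach the stated form, with $\frac12(XY'+Y'X)$ ($Y'=\phi_\eps(-i\Delta g)=-Y$) placed to the left of $(1+\mathfrak g^2X^2)^{-3/2}$, I would move $Y$ through $(1+\mathfrak g^2X^2)^{-3/2}$. This uses
\[
[Y,G(X)]=-i\eps c\,G'(X), \qquad G(x)=(1+\mathfrak g^2x^2)^{-3/2}, \qquad G'(x)=-3\mathfrak g^2x(1+\mathfrak g^2x^2)^{-5/2}.
\]
Collecting the commutator corrections produces a term proportional to $\eps\,c\,\mathfrak g^4X^2(1+\mathfrak g^2X^2)^{-5/2}$. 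Multiplying through by $i/\eps$ and using $c=-\langle g,-\Delta g\rangle_2$ should give the stated coefficient $-\frac{3i\eps\mathfrak g^4}{2}\langle g,-\Delta g\rangle_2$.

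The main obstacle is the bookkeeping of these $\eps$-order ordering corrections and signs. To confirm the coefficient $3/2$ and the sign, I would first check everything on the truncated polynomial $F(x)=1-\frac{\mathfrak g^2}{2}x^2$ and the cubic term. Finally I would justify the identity on $D(\di\Gamma_\eps(1))$. This relies on the standard bounds $\|\phi_\eps(h)\Phi\|\lesssim\|h\|_2\|(\di\Gamma_\eps(1)+\eps)^{1/2}\Phi\|$, the boundedness of the functions of $X$, and $g\in\mathscr S$, so that $\Delta g\in L^2$. These ensure that every product appearing is well defined on that domain, and the formal computation is made rigorous by a density argument on $\mathscr F_{\mathrm{fin}}(-\Delta)$.
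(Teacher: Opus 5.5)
Your route is sound and genuinely different from the paper's, but the general identity you display has the wrong sign. It must read
\[
[F(X),D]=\frac{\varepsilon}{2i}\bigl(YF'(X)+F'(X)Y\bigr),
\]
as the case $F(x)=x$ shows, since it has to return $[X,D]=\frac{\varepsilon}{i}Y$. In your resolvent step, the minus sign of $-(X-z)^{-1}[X,D](X-z)^{-1}$ is absorbed by $F'(x)=-(x-z)^{-2}$. Also, there is in fact no $(X-z)^{-3}$ term: from $[(X-z)^{-k},Y]=-k\,i\varepsilon c\,(X-z)^{-k-1}$ one gets exactly $(X-z)^{-1}Y(X-z)^{-1}=\frac{1}{2}\bigl((X-z)^{-2}Y+Y(X-z)^{-2}\bigr)$. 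The precise content of your ``cancels under symmetrisation'' is the monomial identity $\sum_{k=0}^{n-1}X^kYX^{n-1-k}=\frac{n}{2}\bigl(X^{n-1}Y+YX^{n-1}\bigr)$. Carried through as written, your formula would produce exactly the negative of the claimed identity.

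With the sign fixed, set $Y'=\phi_\varepsilon(-i\Delta g)=-Y$ and $G(x)=(1+\mathfrak g^2x^2)^{-3/2}$. You get $\frac{i}{\varepsilon}[F(X),D]=\frac{\mathfrak g^2}{2}\bigl(Y'XG(X)+XG(X)Y'\bigr)$. A single reordering, $XG(X)Y'=XY'G(X)+i\varepsilon\langle g,-\Delta g\rangle_2\,XG'(X)$ with $XG'(X)=-3\mathfrak g^2X^2(1+\mathfrak g^2X^2)^{-5/2}$, then gives precisely the stated $-\frac{3i\varepsilon\mathfrak g^4}{2}\langle g,-\Delta g\rangle_2$ term.

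The paper argues differently. It uses the Gaussian representation $(1+\mathfrak g^2X^2)^{-1/2}=\pi^{-1/2}\int_{\R}e^{-(1+\mathfrak g^2X^2)s^2}\,\mathrm{d}s$. It then computes $[e^{-(1+\mathfrak g^2X^2)s^2},D]$ exactly through the conjugation $e^{-\mathfrak g^2s^2X^2}De^{\mathfrak g^2s^2X^2}-D$. That commutator series stops after two terms, because $[X^2,[X^2,D]]=-4\varepsilon^2\langle g,-\Delta g\rangle_2X^2$ commutes with $X^2$. The coefficients $\frac{1}{2}$ and $\frac{3}{2}$ come from the Gaussian moments $\pi^{-1/2}\int s^2e^{-as^2}\,\mathrm{d}s=\frac{1}{2}a^{-3/2}$ and $\pi^{-1/2}\int s^4e^{-as^2}\,\mathrm{d}s=\frac{3}{4}a^{-5/2}$, with Fubini arguments on spectral measures justifying the interchanges.

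That route is tailored to this particular $F$ but needs nothing beyond the Gaussian integral. Yours gives a formula valid for any $F$ reachable by Helffer--Sj\"ostrand or by the Weyl operators $e^{isX}$, which would also serve for the commutators $[\pi^\varepsilon_\pm,\mathrm{d}\Gamma_\varepsilon(-\Delta)]$ used elsewhere. It also makes clear that the $\mathrm{O}(\varepsilon)$ term is purely an ordering artefact, absent in the symmetric form. The price is the functional-calculus extension step: you must check that resolvents of $X$ (or $e^{isX}$) preserve the relevant number-operator domains, so that the products with $Y$ make sense before the density argument.
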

% \footnote{Clo: I have added the $\mathfrak g$ in the statement and the
% proof}% \footnote{C. The proof of this lemma is very smart ! Congratulations ! R: Thanks a lot !!}

    \begin{proof}
      We will make use of the following integral formula:
    $$
    (1+\phi^2)^{-1/2}= \frac{1}{\sqrt{\pi}} \int_\R e^{-(1+\phi^2)x^2} \di x.
    $$
    Let us now take $\Psi \in \mathscr F_\mathrm{fin}(L^2) \cap D(\di
    \Gamma_\eps(- \Delta))$, and denote $\Phi := \di \Gamma_\eps(- \Delta)
    \Psi$, so that
    \begin{equation*}
      \langle \Psi, [(1+\mathfrak g^2\phi_\eps(g)^2)^{-1/2}, \di \Gamma_\eps(- \Delta)] \Psi \rangle = 2 i \mathrm{Im} \langle \Psi, (1+\mathfrak g^2\phi_\eps(g)^2)^{-1/2} \Phi \rangle.
    \end{equation*}
    Let $\mu_{\Psi,\Phi}$ be the spectral measure of the observable $\mathfrak
    g\phi_\eps(g)$ associated with the states $\Psi$ and $\Phi$, and denote by $\phi$ the
    variable of the spectral measure. We obtain
    \begin{align*}
      \langle \Psi, [(1+\mathfrak g^2\phi_\eps(g)^2)^{-1/2}, \di \Gamma_\eps(- \Delta)] \Psi \rangle  
      &= 2 i \mathrm{Im} \int_\R (1+\phi^2)^{-1/2} \di \mu_{\Psi,\Phi}(\phi) \\
      &= 2 i \mathrm{Im} \iint_{\R^2} e^{-(1+\phi^2 )x^2} \frac{\di x}{\sqrt{\pi}} \di \mu_{\Psi,\Phi}(\phi)  \\
      &= \int_\R  2 i \mathrm{Im} \langle \Psi, e^{-(1+\mathfrak g^2\phi_\eps(g)^2 )x^2}  \Phi \rangle \frac{\di x}{\sqrt{\pi}} \\
      &= \int_\R \langle \Psi, [e^{-(1+\mathfrak g^2\phi_\eps(g)^2 )x^2} , \di \Gamma_\eps(-\Delta)] \Psi \rangle  \frac{\di x}{\sqrt{\pi}},
    \end{align*}
    where the exchange of integrals is justified since $e^{-(1+\phi)^2 x^2}
    \leq e^{-x^2}$ is $\di x \otimes \abs{ \mu_{\Psi,\Phi}
    }$-integrable. Indeed, $\mu_{\Psi,\Phi}$ is a finite measure because of
    polarization's identity and $\norm{\Psi \pm \Phi}^2 <\infty$. Now,
    differentiating twice the last commutator with respect to $x^2$, one
    checks that in the sense of quadratic forms,
    \begin{multline*}
      [e^{-(1+\mathfrak g^2\phi_\eps(g)^2 )x^2} , \di \Gamma_\eps(- \Delta)] =
      \left( e^{-\mathfrak g^2\phi_\eps(g)^2 x^2} \di \Gamma_\eps(- \Delta) e^{\mathfrak g^2\phi_\eps(g)^2 x^2} - \di \Gamma_\eps (-\Delta) \right) e^{-(1+\mathfrak g^2\phi_\eps(g)^2 )x^2}
      \\= \Bigl( - i \eps  \mathfrak g^2x^2  (\phi_\eps(g) \phi_\eps(- i \Delta g) +  \phi_\eps(- i \Delta g) \phi_\eps(g)) -2 x^4 \eps^2  \mathfrak g^4\langle g, - \Delta g \rangle_2 \phi_\eps(g)^2 \Bigr)
      % \\
      % &\qquad \qquad \qquad \times
      e^{-(1+\mathfrak g^2\phi_\eps(g)^2 )x^2}. 
    \end{multline*}
    Replacing this expression in the previous integral, one obtains, for the
    last term in the parentheses above, by applying Fubini's theorem again,
    \begin{multline*}
      \int_\R \langle \Psi, -2 x^4 \eps^2 \mathfrak g^4 \langle g, - \Delta g \rangle_2 \phi_\eps(g)^2 e^{-(1+\mathfrak g^2\phi_\eps(g)^2 )x^2} \Psi \rangle  \frac{\di x}{\sqrt{\pi}} 
      =- \frac{3}{2} \eps^2 \mathfrak g^4 \langle g, - \Delta g \rangle_2 \frac{\phi_\eps(g)^2}{(1+\mathfrak g^2\phi_\eps(g)^2)^{5/2}}\;.
    \end{multline*}
    For the first term, one needs to be just a bit more cautious. We transfer
    $\phi_\eps(g) \phi_\eps(- i \Delta g) + \phi_\eps(- i \Delta g)
    \phi_\eps(g)$ to the left, and then apply Fubini's theorem with the
    spectral measure of the observable $\phi_\eps(g)$ associated with the
    states $(\phi_\eps(g) \phi_\eps(- i \Delta g) + \phi_\eps(- i \Delta g)
    \phi_\eps(g))\Psi$ and $\Phi$, to obtain the wanted term
    \begin{align*}
      \int_\R &\langle \Psi, \left( -  i \eps \mathfrak g^2 x^2  (\phi_\eps(g) \phi_\eps(- i \Delta g) +  \phi_\eps(- i \Delta g) \phi_\eps(g)   \right)e^{-(1+\mathfrak g^2\phi_\eps(g)^2 )x^2} \Psi \rangle  \frac{\di x}{\sqrt{\pi}} \\
      &=- \frac{i \eps \mathfrak g^2}{2}  \bigl(\phi_\eps(g) \phi_\eps(- i \Delta g) +  \phi_\eps(- i \Delta g) \phi_\eps(g)\bigr) \frac{1}{(1+\mathfrak g^2\phi_\eps(g)^2)^{3/2}}\;.
    \end{align*}
  \end{proof}

  \begin{lem} \label{pi^1Hcommutator} We have, on $D(\di
    \Gamma_\eps(-\Delta))$,
    $$\pi^{1,\eps} H_\eps - \lambda_+^\eps \pi^{1,\eps}  = -\Lambda_ {+-}^\eps + \mathrm O(\eps).$$
  \end{lem}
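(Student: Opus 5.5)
The identity to prove is $\pi^{1,\eps}H_\eps-\lambda_+^\eps\pi^{1,\eps}=-\Lambda_{+-}^\eps+\mathrm O(\eps)$ on $D(\di\Gamma_\eps(-\Delta))$, where $\pi^{1,\eps}=(\Lambda_{+-}^\eps-\Lambda_{-+}^\eps)(\lambda_+^\eps-\lambda_-^\eps)^{-1}$. The strategy is to use that $\pi^{1,\eps}$ and everything in it are functions of $\phi_\eps(g)$, $\phi_\eps(i\Delta g)$ and Pauli matrices, while $H_\eps$ differs from these only through $\di\Gamma_\eps(-\Delta)$. Commutators of such functions with $\di\Gamma_\eps(-\Delta)$ are $\mathrm O(\eps)$; this is the content of Lemma~\ref{lemmacommutatorphiH} and the Remark after Lemma~\ref{commutatorpiH}. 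So, modulo $\mathrm O(\eps)$, one can treat $\phi_\eps(g)$ as a commuting scalar and reduce to a $2\times2$ matrix identity.

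\textbf{Step 1: split $H_\eps$.} Write $H_\eps=\di\Gamma_\eps(-\Delta)+M_\eps$ with $M_\eps=\sigma_z+\mathfrak g\sigma_x\phi_\eps(g)$. Write also $\lambda_\pm^\eps=\di\Gamma_\eps(-\Delta)\pm F_\eps$, where $F_\eps=\sqrt{1+\mathfrak g^2\phi_\eps(g)^2}$. Then
\[
\pi^{1,\eps}H_\eps-\lambda_+^\eps\pi^{1,\eps}=[\pi^{1,\eps},\di\Gamma_\eps(-\Delta)]+\pi^{1,\eps}M_\eps-F_\eps\pi^{1,\eps}.
\]
The commutator term is $\mathrm O(\eps)$ on regular vectors. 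To see this, note that $\pi^{1,\eps}$ is a product of bounded functions of $\phi_\eps(g)$, namely $F_\eps^{-1}$, $F_\eps^{-2}$ and $\pi_\pm^\eps$, with one factor $\phi_\eps(i\Delta g)$. Each bounded function commutes with $\di\Gamma_\eps(-\Delta)$ up to $\mathrm O(\eps)$ by the integral-formula argument of Lemma~\ref{lemmacommutatorphiH}. The field factor gives $[\phi_\eps(i\Delta g),\di\Gamma_\eps(-\Delta)]=\eps\,\phi_\eps(\cdot)$ by \eqref{commutatordiGammaphi}. The resulting field operators are relatively bounded by $(\di\Gamma_\eps(1)+\eps)^{1/2}$, which is controlled on the domain.

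\textbf{Step 2: the algebraic identity.} Using $M_\eps=F_\eps(\pi_+^\eps-\pi_-^\eps)$, the relation $[F_\eps,\pi_\pm^\eps]=0$, and the fact that $\phi_\eps(g)$ commutes with $\phi_\eps(i\Delta g)$ up to a scalar $\mathrm O(\eps)$ term (from $\mathrm{Im}\langle g,i\Delta g\rangle$), we may move $\pi_\pm^\eps$ past the scalar factor $\phi_\eps(i\Delta g)F_\eps^{-1}$ at cost $\mathrm O(\eps)$. Since $\Lambda_{+-}^\eps=\pi_+^\eps(\cdot)\pi_-^\eps$, this gives $\Lambda_{+-}^\eps M_\eps=-F_\eps\Lambda_{+-}^\eps+\mathrm O(\eps)$, and similarly $\Lambda_{-+}^\eps M_\eps=F_\eps\Lambda_{-+}^\eps+\mathrm O(\eps)$. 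Hence, with $(\lambda_+^\eps-\lambda_-^\eps)^{-1}=(2F_\eps)^{-1}$ commuting with everything modulo $\mathrm O(\eps)$,
\[
\pi^{1,\eps}M_\eps-F_\eps\pi^{1,\eps}=\frac{1}{2F_\eps}\bigl(-2F_\eps\Lambda_{+-}^\eps+0\cdot\Lambda_{-+}^\eps\bigr)+\mathrm O(\eps)=-\Lambda_{+-}^\eps+\mathrm O(\eps),
\]
which is the claim.

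\textbf{Main obstacle.} The difficulty is making these $\mathrm O(\eps)$ reorderings rigorous rather than formal. The factor $\phi_\eps(i\Delta g)$ is unbounded, so each error must be bounded on $D(\di\Gamma_\eps(-\Delta))$ via $\|\phi_\eps(f)\Psi\|\lesssim\|f\|_2\|(\di\Gamma_\eps(1)+\eps)^{1/2}\Psi\|$. The commutators of $\pi_\pm^\eps$ and $F_\eps^{-1}$ with $\phi_\eps(i\Delta g)$ also need explicit control. For these I would use the Gaussian-integral or Helffer–Sjöstrand representation of the functional calculus, exactly as in Lemma~\ref{lemmacommutatorphiH}, to write each commutator as $\eps$ times a bounded operator.
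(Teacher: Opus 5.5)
Your proposal is correct and follows essentially the same route as the paper. Both arguments rest on the off-diagonal structure of $\Lambda_{+-}^\eps$ and $\Lambda_{-+}^\eps$: $H_\eps$ acts as $\lambda_-^\eps$ to the right of $\Lambda_{+-}^\eps$ and as $\lambda_+^\eps$ to the right of $\Lambda_{-+}^\eps$, so the $\Lambda_{-+}^\eps$ contributions cancel. In both, the reordering errors are controlled by Lemma~\ref{lemmacommutatorphiH}, the Remark after Lemma~\ref{commutatorpiH}, and the scalar commutator $[\phi_\eps(i\Delta g),\phi_\eps(g)]=\mathrm O(\eps)$.

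The only difference is bookkeeping. The paper substitutes $H_\eps=\lambda_+^\eps\pi_+^\eps+\lambda_-^\eps\pi_-^\eps$ and commutes $\lambda_\pm^\eps$ past $\pi_\pm^\eps$ and $\Lambda_{\pm\mp}^\eps$ term by term. You instead peel off the single commutator $[\pi^{1,\eps},\di\Gamma_\eps(-\Delta)]$ at the start, so that what remains is exact algebra in $\phi_\eps(g)$ and the Pauli matrices up to the scalar CCR. That is a slightly cleaner organization of the same argument.
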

  \begin{proof}
    The relation $[\phi_\eps(i\Delta g), \phi_\eps(g)]=\mathrm O(\eps)$ in
    $\Hi$ and~\eqref{def:Lambda} implies that in $D(\di
    \Gamma_\eps(-\Delta))$
    \[
      \pi^{1,\eps} H_\eps = (\lambda^\eps_+-\lambda^\eps_-)^{-1}
      (\Lambda^\eps_{+-}-\Lambda^\eps_{-+}) H_\eps + \mathrm O(\eps).
    \]
    We now consider $(\Lambda_{+-}^\eps-\Lambda_{-+}^\eps) H_\eps -
    \lambda_{+}^\eps (\Lambda_{+-}^\eps-\Lambda_{-+}^\eps)$ and it is enough
    to show that it is equal to $(\lambda_-^\eps -\lambda_+^\eps)
    \Lambda_{+-}^\eps$ up to higher order terms in $\eps$.  For this, we use
    $H_\eps=\lambda^\eps_+\pi^\eps_++\lambda^\eps_-\pi^\eps_-$ and we first
    commute $\lambda^\eps_\pm$ with $\pi^\eps_\pm$, and then with
    $\Lambda^\eps_{\pm,\mp}$, writing for example
    \begin{multline*}
      \Lambda^\eps_{+-} \lambda^\eps_-\pi^\eps_- = 
      \Lambda^\eps_{+-}\pi^\eps_-\lambda^\eps_-\pi^\eps_-+  =  \Lambda^\eps_{+-}\lambda_-^\eps \pi^\eps_- + \Lambda^\eps_{+-}[\pi^\eps_- ,\lambda^\eps_-]\pi^\eps_-\\
      = \lambda^\eps_-\Lambda^\eps_{+-}+ [\Lambda^\eps_{+-},\lambda^\eps_- ]\pi^\eps_- + \Lambda^\eps_{+-}[\pi^\eps_-,\lambda^\eps_-]\pi^\eps_-\;.
    \end{multline*}
    Admitting that these commutators are $\mathrm O(\eps)$, one obtains in
    $D(\di \Gamma_\eps(-\Delta))$
    \begin{multline*}
      (\Lambda_{+-}^\eps-\Lambda_{-+}^\eps)  H_\eps - \lambda_{+}^\eps (\Lambda_{+-}^\eps-\Lambda_{-+}^\eps) \\= \lambda_-^\eps \Lambda_{+-}^\eps - \lambda_+^\eps \Lambda_{-+}^\eps - \lambda_+^\eps \Lambda_{+-}^\eps +\lambda_+^\eps \Lambda_{-+}^\eps +\mathrm O(\eps)= (\lambda_-^\eps -\lambda_+^\eps) \Lambda_{+-}^\eps + \mathrm O(\eps)\;,
    \end{multline*}
    and, multiplying by $(\lambda_+^\eps-\lambda_-^\eps)^{-1}$, we have the
    expected result.  It remains to verify that the commutators
    $[\pi^\eps_\pm,\lambda^\eps_+]$ and $[\Lambda_{\pm,\mp},\lambda^\eps_+]$
    are negligible, or, equivalently, that
    $[\pi^\eps_\pm,\mathrm{d}\Gamma_\eps(-\Delta)]$ and
    $[\Lambda_{\pm,\mp},\mathrm{d}\Gamma_\eps(-\Delta)]$ are negligible. This
    comes from Lemma~\ref{lemmacommutatorphiH} and
    Remark~\ref{rem:commutator}.
  \end{proof}

\bibliographystyle{abbrv} {\footnotesize\bibliography{biblio}}

\end{document}